\pdfoutput=1
\documentclass[11pt,notitlepage]{article}
\usepackage[utf8]{inputenc} 
\usepackage{geometry}
\usepackage{verbatim}
\usepackage{float}
\usepackage{amsmath}
\usepackage{amsfonts} 
\usepackage{setspace}
\usepackage[authoryear]{natbib}
\usepackage{hyperref}
\usepackage{etoolbox}
\makeatletter
\patchcmd{\NAT@citex}
  {\@citea\NAT@hyper@{\NAT@nmfmt{\NAT@nm}\hyper@natlinkbreak{\NAT@aysep\NAT@spacechar}{\@citeb\@extra@b@citeb}\NAT@date}}
  {\@citea\NAT@nmfmt{\NAT@nm}\NAT@aysep\NAT@spacechar\NAT@hyper@{\NAT@date}}
  {}{\PackageWarning{natbib}{year-only citation link patch 1 failed}}
\patchcmd{\NAT@citex}
  {\@citea\NAT@hyper@{\NAT@nmfmt{\NAT@nm}\hyper@natlinkbreak{\NAT@spacechar\NAT@@open\if*#1*\else#1\NAT@spacechar\fi}{\@citeb\@extra@b@citeb}\NAT@date}}
  {\@citea\NAT@nmfmt{\NAT@nm}\NAT@spacechar\NAT@@open\if*#1*\else#1\NAT@spacechar\fi\NAT@hyper@{\NAT@date}}
  {}{\PackageWarning{natbib}{year-only citation link patch 2 failed}}
\makeatother
\usepackage{xurl}
\usepackage{graphicx}
\usepackage{subcaption}
\usepackage{booktabs}
\usepackage{makecell}
\usepackage{changepage}
\usepackage{tikz}
\usetikzlibrary{positioning, calc, arrows.meta}
\usepackage{threeparttable}
\usepackage{adjustbox}
\usepackage{rotating}
\usepackage{pifont}
\newcommand{\cmark}{\ding{51}}
\usepackage{enumitem}
\usepackage{amsthm}

\theoremstyle{definition}
\newtheorem{prediction}{Prediction}

\usepackage{caption}
\usepackage{amssymb}

\usepackage{multicol}
\usepackage{pdflscape}
\usepackage{fancyhdr}
\usepackage{amsthm}

\newtheorem{lemma}{Lemma}
\newcommand\averagereactions {70.92}
\newcommand\feOneBlueCommentsMeanPct {0.8}
\newcommand\feOneBlueNegativeMeanPct {56}
\newcommand\feOneBlueOffensiveMeanPct {35}
\newcommand\feOneBlueReactionsMeanPct {10}
\newcommand\feOneBlueSupportiveReactionsMeanPct {9.5}
\newcommand\feOneCommentsApproxN {1,500}

\newcommand\feOneCommentsBlueVsRedCROnePText {<0.01}

\newcommand\feOneConservativeBlueVsRedCROnePText {<0.01}

\newcommand\feOneInformativeMaxMeanPct {11}
\newcommand\feOneInformativeMinPairwiseCROnePText {>0.10}

\newcommand\feOneNegativeBlueVsRedCROnePText {<0.01}
\newcommand\feOneOffensiveBlueContrastsCROnePText {<0.01}

\newcommand\feOnePostsN {150}
\newcommand\feOneReachApproxN {135,000}
\newcommand\feOneReactionsApproxN {12,000}

\newcommand\feOneReactionsBlueVsRedCROnePText {<0.01}

\newcommand\feOneRedCommentsMeanPct {1.3}
\newcommand\feOneRedSupportiveReactionsMeanPct {7}
\newcommand\feOneSharesApproxN {650}
\newcommand\feOneSupportiveCommentsMeanPct {0.3}
\newcommand\feOneSupportiveCommentsMinPairwiseCROnePText {>0.20}

\newcommand\feOneSupportiveReactionsBlueVsRedCROnePText {<0.01}
\newcommand\feOneUniqueLinkClicksApproxN {1,750}
\newcommand\feTwoAdCPMMeanUSD {9.6}
\newcommand\feTwoAdFrequencyMean {1.12}
\newcommand\feTwoAdReachApproxMeanN {14,600}
\newcommand\feTwoAdSpendMeanUSD {150}
\newcommand\feTwoAdsN {72}
\newcommand\feTwoAdsPerArmN {18}
\newcommand\feTwoAdsPerPostN {6}

\newcommand\feTwoAllAnyCROnePText {<0.01}

\newcommand\feTwoAllAnyEffectPP {0.065}
\newcommand\feTwoAllAnyRelativePct {13.5}
\newcommand\feTwoAllAnyRelativeWholePct {13}

\newcommand\feTwoAllControlMeanPct {0.485}
\newcommand\feTwoAllMeanPct {0.53}

\newcommand\feTwoAllMixedCROnePText {<0.01}

\newcommand\feTwoAllMixedEffectPP {0.062}
\newcommand\feTwoAllMixedRelativePct {12.7}

\newcommand\feTwoAllOpposeCROnePText {<0.01}

\newcommand\feTwoAllOpposeEffectPP {0.087}
\newcommand\feTwoAllOpposeRelativePct {18.0}

\newcommand\feTwoAllOpposeSupportEffectRatio {2}

\newcommand\feTwoAllOpposeVsSupportCROneP {0.059}

\newcommand\feTwoAllSupportCROnePText {<0.01}

\newcommand\feTwoAllSupportEffectPP {0.047}
\newcommand\feTwoAllSupportRelativePct {9.7}

\newcommand\feTwoArmMeanSharePct {25}

\newcommand\feTwoBlueAllAnyEffectPP {0.031}

\newcommand\feTwoBlueAllAnyRelativeWholePct {6}

\newcommand\feTwoBlueAllControlCoarseMeanPct {0.53}

\newcommand\feTwoBlueClustersN {24}

\newcommand\feTwoBlueInteractionsOpposeVsSupportCROnePText {<0.05}

\newcommand\feTwoClickCostOpposeDeclinePct {15}

\newcommand\feTwoClicksAnyCROnePText {<0.05}

\newcommand\feTwoClicksAnyEffectPP {0.016}

\newcommand\feTwoClicksAnyRelativeWholePct {7}

\newcommand\feTwoClicksControlMeanPct {0.228}
\newcommand\feTwoClicksMeanPct {0.24}

\newcommand\feTwoClicksMixedEffectPP {0.012}

\newcommand\feTwoClicksOpposeCROnePText {<0.01}

\newcommand\feTwoClicksOpposeEffectPP {0.034}
\newcommand\feTwoClicksOpposeRelativePct {14.8}
\newcommand\feTwoClicksOpposeRelativeWholePct {15}

\newcommand\feTwoClicksOpposeVsSupportCROnePText {<0.01}

\newcommand\feTwoClicksSupportEffectPP {0.003}

\newcommand\feTwoDisplayedToxicityMean {0.37}

\newcommand\feTwoExpansionsAnyCROnePText {<0.01}
\newcommand\feTwoExpansionsAnyEffectCoarsePP {0.05}
\newcommand\feTwoExpansionsAnyEffectPP {0.049}
\newcommand\feTwoExpansionsAnyRelativePct {19.2}

\newcommand\feTwoExpansionsControlCoarseMeanPct {0.25}
\newcommand\feTwoExpansionsControlMeanPct {0.254}
\newcommand\feTwoExpansionsMeanPct {0.29}
\newcommand\feTwoExpansionsMinPairwiseCROnePText {>0.90}

\newcommand\feTwoExpansionsMixedCROnePText {<0.01}

\newcommand\feTwoExpansionsMixedEffectPP {0.049}

\newcommand\feTwoExpansionsOpposeCROnePText {<0.01}

\newcommand\feTwoExpansionsOpposeEffectPP {0.048}

\newcommand\feTwoExpansionsSupportCROnePText {<0.01}

\newcommand\feTwoExpansionsSupportEffectPP {0.049}

\newcommand\feTwoFemaleSharePct {47.7}

\newcommand\feTwoInteractionsAnyEffectPP {0.003}

\newcommand\feTwoInteractionsClicksOpposeVsSupportCROnePText {<0.01}

\newcommand\feTwoInteractionsControlMeanPct {0.020}
\newcommand\feTwoInteractionsMeanPct {0.02}

\newcommand\feTwoInteractionsMixedContrastsCROnePText {<0.10}

\newcommand\feTwoInteractionsMixedEffectPP {0.002}

\newcommand\feTwoInteractionsOpposeCROnePText {<0.05}

\newcommand\feTwoInteractionsOpposeEffectPP {0.009}

\newcommand\feTwoInteractionsOpposeRelativeWholePct {43}

\newcommand\feTwoInteractionsOpposeVsSupportCROnePText {<0.01}

\newcommand\feTwoInteractionsSupportEffectPP {-0.003}

\newcommand\feTwoMaleSharePct {52.3}

\newcommand\feTwoPageViewsAnyCROnePText {<0.10}

\newcommand\feTwoPageViewsAnyEffectPP {0.016}
\newcommand\feTwoPageViewsAnyRelativePct {9.3}

\newcommand\feTwoPageViewsControlMeanPct {0.171}
\newcommand\feTwoPageViewsMeanPct {0.18}

\newcommand\feTwoPageViewsMixedEffectPP {0.017}

\newcommand\feTwoPageViewsOpposeCROnePText {<0.10}

\newcommand\feTwoPageViewsOpposeEffectPP {0.019}
\newcommand\feTwoPageViewsOpposeRelativePct {11.2}

\newcommand\feTwoPageViewsSupportEffectPP {0.011}

\newcommand\feTwoPostReachLowerBoundN {80,000}

\newcommand\feTwoReachApproxMillions {1}
\newcommand\feTwoReachMillions {1.05}
\newcommand\feTwoReachN {1,054,015}

\newcommand\feTwoRedAllAnyCROnePText {<0.01}

\newcommand\feTwoRedAllAnyEffectPP {0.096}

\newcommand\feTwoRedAllAnyRelativeWholePct {21}

\newcommand\feTwoRedAllControlCoarseMeanPct {0.45}

\newcommand\feTwoRedAllOpposeCROnePText {<0.01}
\newcommand\feTwoRedAllOpposeEffectCoarsePP {0.13}

\newcommand\feTwoRedClicksOpposeCROnePText {<0.05}
\newcommand\feTwoRedClicksOpposeContrastsCROnePText {<0.05}

\newcommand\feTwoRedClicksOpposeEffectPP {0.045}

\newcommand\feTwoRedClustersN {24}

\newcommand\feTwoRedInteractionsOpposeCROnePText {<0.01}
\newcommand\feTwoRedInteractionsOpposeContrastsCROnePText {<0.01}

\newcommand\feTwoRedInteractionsOpposeEffectPP {0.015}
\newcommand\feTwoRedInteractionsOpposeLevelRatio {2}

\newcommand\feTwoSelectedPostsN {12}
\newcommand\feTwoSeniorSharePct {7.5}
\newcommand\feTwoStrataN {18}

\newcommand\feTwoSwingAllAnyCROnePText {<0.01}

\newcommand\feTwoSwingAllAnyEffectPP {0.064}

\newcommand\feTwoSwingAllAnyRelativeWholePct {13}

\newcommand\feTwoSwingAllControlCoarseMeanPct {0.48}

\newcommand\feTwoSwingAllMaxEffectPP {0.07}
\newcommand\feTwoSwingAllMinEffectPP {0.05}

\newcommand\feTwoSwingClicksOpposeCROnePText {<0.05}

\newcommand\feTwoSwingClicksOpposeEffectPP {0.037}

\newcommand\feTwoSwingClustersN {24}
\newcommand\feTwoToxicityRestrictedAdsN {60}
\newcommand\feTwoYoungAdultSharePct {57.3}
\newcommand\feTwoYoungestSharePct {8.7}

\newcommand\attritionDropAttentionFail {3,896}
\newcommand\attritionDropInvalidConditionsPREATTRITION {5,077}

\newcommand\attritionFinalN {3,868}

\newcommand\balanceNControl {1,189}
\newcommand\balanceNOppose {1,392}
\newcommand\balanceNSupport {1,287}

\newcommand\responseRateControl {0.997}
\newcommand\responseRateFtestP {0.134}
\newcommand\responseRateOppose {0.990}
\newcommand\responseRateSupport {0.992}

\newcommand\tabArmCEightaOnePostExposureAttitudesPairwiseP {0.014}

\newcommand\surveyAttitudeOpposeSD {0.12}
\newcommand\surveyCognitiveDemocratSD {0.67}
\newcommand\surveyCognitiveIndependentSD {0.38}

\newcommand\surveyDonationAmountPct {10}
\newcommand\surveyDonationAmountUSD {2.6}
\newcommand\surveyDonationControlMean {0.73}
\newcommand\surveyDonationDemocratPctDecline {6.4}
\newcommand\surveyDonationOpposePP {5.3}
\newcommand\surveyDonationOpposePctDecline {7.2}
\newcommand\surveyDonationRepublicanPctDecline {12.6}
\newcommand\surveyEmotionalIndependentSD {0.4}
\newcommand\surveyEmotionalRepublicanSD {0.35}
\newcommand\surveyExpectCommentersConservative {17.3}

\newcommand\surveyExpectCommentersProgressive {47.6}
\newcommand\surveyNewsletterControlMean {0.19}

\newcommand\surveyReadCommentsSometimesPct {86}

\newcommand\surveyTimeSpentMaxPct {43}
\newcommand\surveyTimeSpentMaxSec {27}
\newcommand\surveyTimeSpentMinPct {39}
\newcommand\surveyTimeSpentMinSec {25}

\begin{document}

\title{\Huge \vspace{-1cm} The Influence of the Vocal Few:\\Evidence from Social Media Comments}

\author{\Large Dante Donati and Lena Song\thanks{Donati: Columbia Business School. dd3137@gsb.columbia.edu. Song: University
of Illinois Urbana-Champaign. lenasong@illinois.edu. We thank Charles Amuzie from Color of Change for support. We thank George Beknazar-Yuzbashev, Luca Braghieri, Yiting Deng, Sarah Eichmeyer, Ruben Enikolopov, Rafael Jim\'enez-Dur\'an, Gita Johar, Maria Petrova, Andrea Prat, Carlo Schwarz, Andrey Simonov, and Yanwen Wang, and  seminar and conference participants at the Virtual Quantitative Marketing Seminar, University of Chicago, University of Alberta, University of California Berkeley, Bass FORMS Conference, MIT Conference on Digital Experimentation, Berlin School of Economics, Carnegie Mellon University, University of Illinois Urbana-Champaign, Marketing Science, Columbia University, New York University, Workshop on Platform Analytics, AOM, Early Career Behavioral Economics Conference, and CESifo Venice Summer Institute: Workshop on Digital Platforms for helpful feedback. We are grateful to Anna Bezhanishvili, Seungwoo Kim, Jungyun Kim, Thomas Lilly, Daniel Merlau, and Navtej Singh for excellent research assistance. The research was approved by the Institutional Review Boards at Columbia University (AAAU8166). This study was registered in the American Economic Association Registry for randomized controlled trials under trial numbers AEARCTR-0013812 and AEARCTR-0017850 \citep{donati2024rct13812,donati2026rct17850}. We acknowledge financial support from the Russell Sage Foundation (Grant G-2309-44994), the Digital Future Initiative and the Bernstein Center at Columbia Business School, and the Provost Office at Columbia University. The authors declare no conflict of interest. All errors are the authors' own.}} 

\date{September 2026} 
\maketitle

\vspace{-0.5cm}\begin{abstract}

\singlespacing
\vspace{-0.4cm}
Online comment sections let a small number of vocal individuals reach far beyond their own networks. We conduct a large-scale field experiment on Facebook that randomizes the presence and stance of comments beneath posts for a racial justice organization, reaching around one million U.S. users. Opposing comments increase reactions, comments, and link clicks by \feTwoClicksOpposeRelativeWholePct{}--\feTwoInteractionsOpposeRelativeWholePct{} percent relative to no comments, whereas supportive comments have little effect. A complementary survey experiment shows that similar opposing comments make attitudes less progressive and reduce donations to the organization. Through a common feature of online platforms, the vocal few can exert outsized influence.

\end{abstract}\par
\hspace{0.35cm}\makebox[\textwidth][l]{\small \textbf{Keywords}: 
Social Media, Comments, User-Generated Content, Social Influence, Field Experiments
}\par
\hspace{0.35cm}\makebox[\textwidth][l]{\small  \textbf{JEL codes}: C93, D72, D83, D91, J15, L82, L86}

\medskip{}

\newpage{}

\section{Introduction}

Digital technologies have changed the structure and scale of social influence. Online platforms, such as social media, have extended social influence well beyond offline networks and expanded people's exposure to others' opinions.  Unlike traditional media with primarily one-way broadcasting,  a defining feature of social media is horizontal communication that allows users to share information and opinions among peers  \citep{zhuravskaya2020political}.

Comment sections on social media provide a natural setting to study horizontal communication. Comments are publicly visible and widely read: in our survey of U.S. adults, \surveyReadCommentsSometimesPct{} percent of respondents report sometimes, often, or very often reading comments on social media, and more than 50 percent report doing so often or very often. At the same time, comments are often generated by a small and potentially unrepresentative subset of users \citep{kim2025disproportionate}, mirroring the concentration of content production in the hands of a small minority of accounts \citep{grinberg2019fake}. As comments compete for attention with the post, they can shape how others receive it, even when those who comment are not representative. This effect may be further amplified by engagement-maximizing algorithms that make posts with comments more visible.
Understanding the causal effects of comments, and of the views they express, is therefore central to evaluating how social media platforms shape discourse and how they should be governed.

This paper presents a conceptual framework motivated by salience theory (e.g., \citealt{bordalo2022salience}) and studies how comments shape other users' subsequent on-platform engagement and off-platform attitudes and behavior in two complementary pre-registered experiments. The framework predicts that the presence of comments raises attention to the post regardless of stance, but that only opposing comments, being the more salient, raise subsequent engagement.
In collaboration with Color of Change, the largest online racial justice organization in the United States, we ran a large-scale field experiment in which approximately \feTwoReachApproxMillions{} million Facebook users were randomly assigned to see posts about racial justice that had identical content but differed in the presence and stance of the comment section. To examine downstream attitudinal and behavioral effects, we complement the on-platform experiment with a survey experiment on Prolific.

We show that comments causally affect user engagement, attitudes, and behavior, with effects varying by stance. The presence of comments increases attention to posts, but opposing comments further increase on-platform engagement while shifting attitudes in a less progressive direction and reducing donations to a progressive cause. These results suggest that the comment section is an integral component of platform design that can shape outcomes well beyond the platform itself, creating a tension between online engagement and broader offline social objectives, a trade-off that engagement-based ranking algorithms may further amplify \citep{acemoglu2024model}.

To establish these findings, we begin by describing how users engage with racial justice content and analyzing the discussions that emerge in the comment sections. Racial justice provides a useful setting because it is a domain of intense public debate and highly polarized views \citep{pew_racial_attitudes_2024}, allowing us to study cross-cutting exposure in a context where the stakes are socially important. To collect comments and reactions---Facebook's one-click responses to a post, such as \emph{like}, \emph{love}, and \emph{angry}---we advertised posts to approximately \feOneReachApproxN{} Facebook users covering several dimensions of racial justice, including voting rights, environmental justice, and criminal justice. We show that engagement patterns differ sharply across areas with different ideological compositions. Users in conservative areas commented at higher rates  than those in progressive areas, yet they reacted less frequently. Moreover, comments were overwhelmingly negative and more likely to be offensive in conservative areas.
These findings suggest that engagement does not always imply endorsement, especially in ideologically opposed communities. This supports concerns that the voices visible on social media are more extreme than the audience they reach \citep{bail2021breaking}.

We then leverage these comments from real user discussions to estimate their causal impact on subsequent engagement.
Isolating the effects of comments from those of the original posts is empirically challenging: posts that receive many early comments may have inherent characteristics that make them more engaging ex ante, and platform algorithms tend to recommend posts with higher early activity, further increasing their visibility and subsequent engagement.
To address this challenge, we build on existing platform features to design a pipeline that manipulates comment visibility and stance as a novel research instrument. Using this pipeline, we provide causal evidence that the presence and stance of pre-existing comments  influence the behavior of other users.

We implement this pipeline in a large-scale field experiment on Facebook. To isolate the effect of comment stance from that of the post itself, we \emph{re‑marketed} a subset of the posts from the previous phase of the study---that is, we ran a second paid campaign delivering the same posts, each now pre‑populated with two comments, to an audience that had not seen them before. This second campaign reached roughly \feTwoReachApproxMillions{} million new users combined into \feTwoStrataN{} clusters of ZIP codes, across areas with different ideological compositions. Using Meta’s A/B testing infrastructure, in each cluster, we randomized participants into four conditions: (1) no comments (control), (2) supportive comments only, (3) opposing comments only, and (4) a mixed condition with one supportive and one opposing comment. We measured actions to expand the post and comment section,  user interactions with the post (reactions, comments, and shares), and direct traffic to the organization's website.

We made several design choices to address potential concerns. To minimize violations of the Stable Unit Treatment Value Assumption (SUTVA), a real‑time filtering pipeline hid all new comments, minimizing the influence that users exposed to the same post could have on one another. To limit differences in other visible interactions, we equalized the number of shares and balanced the number of reactions across conditions.  To address the risk of divergent delivery—the tendency of ad algorithms to learn and serve treatment arms to different user types based on early engagement \citep{braun2025b, eckles2018field}—we followed and augmented best practices from recent work \citep{burtch2025characterizing}. Specifically, we split budgets evenly across arms, launched all ads simultaneously, imposed a one-impression cap per user, and optimized for reach rather than engagement. In addition, we ran the campaign over several weeks to achieve audience saturation for each ad, ensuring that nearly all users within a defined area were reached, to further limit algorithmic learning and divergence. We verified balance across gender, age, and delivery metrics such as impressions and costs.

We show that comment sections significantly influence subsequent user engagement: the  opinions of a small, vocal minority can shape the behavior of a larger audience. Comparing the treatment conditions to the control condition in which posts were displayed without any comments, we find that displaying any pre-populated comments increases all subsequent engagement with the post by \feTwoAllAnyEffectPP{} percentage points, a \feTwoAllAnyRelativeWholePct{} percent rise relative to the baseline ($p\feTwoAllAnyCROnePText{}$). Further dissecting by types of engagement, we show that the presence of comments increases the likelihood of users clicking to expand the post by about \feTwoExpansionsAnyEffectCoarsePP{} percentage points on a \feTwoExpansionsControlCoarseMeanPct{} percent baseline. Consistent with the fact that comment stance is only revealed after a user clicks to expand, there are no significant differences between supportive, opposing, or mixed conditions. However, comment stance has pronounced effects on downstream on-platform engagement: ads with opposing comments generate significantly more interactions and link clicks than those with supportive comments ($p\feTwoInteractionsClicksOpposeVsSupportCROnePText{}$). Relative to the control, opposing stances increase comments, reactions, and shares by roughly \feTwoInteractionsOpposeRelativeWholePct{} percent ($p\feTwoInteractionsOpposeCROnePText{}$) and raise click-through rates by \feTwoClicksOpposeEffectPP{} percentage points (a \feTwoClicksOpposeRelativeWholePct{} percent increase, $p\feTwoClicksOpposeCROnePText{}$; advertising costs per click and per interaction fall proportionately).

These effects vary with the political leaning of the area. The effect of the comment section is largest in conservative areas, where the baseline engagement with the posts is lowest. Stance also matters in these areas: opposing comments substantially increase interactions and click-through rates, whereas the corresponding estimates in progressive areas are small and imprecise. This suggests that identity congruence may affect the impact of comments, as an opposing comment contrasts with the post for every viewer but aligns with a conservative viewer's own stance, although with ad-level data we cannot separate this channel from differences in attention. The influence of comments, therefore, depends on both their stance and the composition of the audience.

We make several design choices in our field experiment to preserve ecological validity. First, we examine the effects of comment sections on a large, diverse population of social media users across a wide range of ZIP codes. Second, users were unaware they were part of an experiment and interacted with the posts as they naturally would. Third, we used comments from real users, ensuring that the content shown to new users reflected genuine Facebook user responses rather than researcher‑generated or AI‑generated text. Together, these choices allow us to directly study real-world online discourse on a socially important and divisive issue.

At the same time, on-platform engagement metrics capture only part of the picture. To assess how comment sections shape beliefs, attitudes, and costly off-platform behavior, we conducted a complementary survey experiment on Prolific, mirroring the design, creatives, and organic comments used in the Facebook experiment. We find that both supportive and opposing comments increase the time spent viewing the posts by up to \surveyTimeSpentMaxPct{} percent relative to control. However, opposing comments shift perceptions of the organization and related issues in a less progressive direction, and reduce incentivized donations by \surveyDonationOpposePctDecline{} percent, with no significant differences between Republicans and Democrats. In contrast, opposing comments make Democrats angrier, more annoyed, and less curious and reflective about the post and its comments than they do Republicans. Thus, although opposing comments amplify engagement on the platform and traffic to the website, they weaken support for the organization and its message off the platform. Taken together, the evidence highlights a tension between short-run engagement gains and downstream attitudinal and behavioral consequences. A back-of-the-envelope fundraising analysis further suggests that the net benefit of tolerating opposing comments depends on the quality of the additional traffic they generate.

The results have several implications for content producers, platforms, and policymakers. For advertisers and campaign managers, opposing comments raise engagement, but stricter moderation may better protect the organization's reputation and downstream support. Importantly, the objectives of platforms that maximize engagement may not align with those of firms, nonprofits, or political actors that care about their legitimacy or about policy support. Regulations such as the European Union's Digital Services Act and the United Kingdom's Online Safety Act already place growing responsibility on platforms for the content they host, including comments as well as posts.\footnote{Digital Services Act (2022, EU): \url{https://eur-lex.europa.eu/eli/reg/2022/2065/oj/eng}; Online Safety Act (2023, UK): \url{https://www.legislation.gov.uk/ukpga/2023/50/contents}.} Our results support including comment sections in these regulations. Without moderation, comment sections from the vocal few can disproportionately shape the attitudes and behavior of the broader audience.

Our paper makes several contributions. First, we show that the opinions of a small and unrepresentative set of individuals shape the on-platform behavior and the off-platform attitudes and behavior of a much larger audience. Our paper builds upon and expands the literature on the drivers and consequences of user-generated content (UGC), which has largely studied product reviews.\footnote{%
See \citet{luca2015user} for a review.} Social media comments represent a distinct and understudied form of UGC. They can express opinions on important issues, with societal rather than commercial consequences.
A related literature shows that visible social signals change how people respond to content. Cues showing which friends liked an ad raise engagement \citep{bakshy2012social, tucker2014social}.
Our paper shows that even when comments come from selected and unknown users rather than friends or influencers, they can influence the many people who read the post, including those who never comment.

Second, our findings speak to an emerging literature on the divergence between engagement and other outcomes on digital platforms. \cite{beknazar2024model} show theoretically that platforms may have incentives to promote harmful yet engaging content, and \cite{beknazar2025toxic} provide supporting experimental evidence. Relatedly, \cite{germano2026ranking} show that algorithms placing greater weight on social interaction signals can increase engagement while also increasing misinformation and polarization.
We document a trade-off between on-platform engagement and off-platform outcomes that operates through the stance of the comment sections. 

Third, we contribute to the growing literature on the economics of social media \citep{zhuravskaya2020political, aridor2024economics}. A recent wave of field experiments has studied how platform design choices shape user behavior and attitudes, focusing on algorithmic curation of the news feed \citep{levy2021social, nyhan2023like, guess2023social, guess2023reshares, gauthier2026political}. These studies have advanced our understanding of the role of algorithms, but less is known about how the social context surrounding a post shapes user responses. The comment section is also a different lever from the feed: feed ranking is set by the platform, while the composition of a comment section can be managed by whoever posts, which makes it available to any firm, nonprofit, or campaign. 

Finally, we contribute a methodology for experimentally varying the social context around a post in the field. As manipulating comments is challenging without platform cooperation or the installation of software such as browser extensions, past work has largely used them as an outcome rather than a treatment (e.g., \citealt{moehring2024personalized}). We address this challenge by developing an experimental pipeline that leverages platform features to randomize comment visibility and stance. The infrastructure outlined in the paper can be adapted to study the impact of comments in other settings.

The paper is organized as follows: Section 2 introduces a conceptual framework of the influence of the comment section, Section 3 presents the study setting, Section 4 describes the generation and analysis of engagement from real users, Section 5 presents causal evidence on the impact of the comment section on online engagement, Section 6 shows its effects on offline attitudes and behavior, and Section 7 concludes.

\section{Conceptual Framework\label{sec:conceptual_framework}}
Comment sections are ubiquitous online. On Facebook and YouTube, the two most widely used platforms in the United States, they appear directly beneath the post. On Reddit, they are the primary mode of discussion. Beyond social media, online news outlets such as The New York Times, The Wall Street Journal, and CNN maintain dedicated comment sections at the end of articles. This proximity means that comments are often read alongside the post or article itself. Such discussions can expose readers to a range of viewpoints and, in principle, support productive conversation on divisive issues. They can also devolve into dismissive or even hostile exchanges. In either case, the exchange is not confined to its participants. Comments are delivered to the same audience as the post, including many who read them without responding.

Those who comment are not drawn at random from the audience. To conceptualize the influence of a comment section dominated by the vocal few, we use a stylized framework motivated by salience theory \citep{bordalo2012salience,bordalo2013salience,bordalo2018diagnostic,bordalo2022salience}. The salience of a comment section is captured by three features: its prominence before any of its content is visible, how its content contrasts with the post, and how surprising that content is given what the viewer anticipated. Matching our empirical setting in the field experiment, we model a viewer who first decides whether to open the comment section and, if she does, observes the comments alongside the post. She then decides whether to engage, and her attitudes update in response to what she has seen. In line with the experimental design, the framework takes the composition of the comment section as given and does not model who chooses to comment. Appendix~\ref{app:framework} provides details of the setup and derivations. There are two main predictions:

\begin{prediction}[Presence and stance act on different margins of engagement]
\label{pred:margins}
The comment counter makes the post more prominent, and because stance is not revealed until the post is expanded, the presence of comments raises post expansions regardless of stance. Once the comment section is opened, the opposing section is more salient, so it increases interactions and link clicks relative both to a post without comments and to one with a supportive section.
\end{prediction}

\begin{prediction}[Opposing comments move attitudes away from the post]
\label{pred:attitudes}
Because comments shift attention away from the post and toward the comment section, an opposing section can pull attitudes toward its own position and against that of the post and, if willingness to donate is increasing in support for the organization, lower donations. A section that repeats the post's position has little effect, since averaging the post with a section that restates it leaves the position unchanged. 
\end{prediction}

The two predictions highlight a trade-off between engagement and downstream objectives. Greater salience of the comment section raises the total attention people pay to the post and its comments, which increases the probability of engagement. At the same time, salience shifts the allocation of that attention from the post toward the comments, which increases the weight placed on the comments relative to the post when attitudes are formed. When the comments oppose the post, the first effect produces more interactions and clicks, while the second pulls attitudes away from the post. Supportive comments, which are less salient and restate the post, raise post expansions like any comment section but have little effect on interactions, clicks, or attitudes.

Salience responds to how much the comments surprise the viewer and contrast with the post, not to how representative they are. Content produced by a vocal few can therefore influence a much larger audience even when it is not representative. We focus on salience as the channel through which comments affect engagement, and in Section~\ref{sec:survey} we discuss other channels that could affect offline outcomes, such as learning about prevailing norms.

\section{Background and Setting}

\subsection{Comment Sections and Racial Justice}

We study comment sections in the context of racial justice, one of the most divisive issues in the United States. In 2020, the George Floyd protests sparked a racial reckoning across the country. On social media, discussions about race and racial justice, such as those with the hashtag \#blacklivesmatter, surged in 2020 \citep{anderson2020blmsurge}. Yet racial attitudes remain divided in the U.S. According to a Pew Research Center survey in 2024, 80 percent of Democrats or Democratic-leaning independents say that White people benefit from advantages in society that Black people do not have, while only 22 percent of Republicans or Republican-leaning independents expressed the same view \citep{pew_racial_attitudes_2024}.

Our study provides evidence on how these divisions manifest in online discourse and how they influence subsequent users' on-platform engagement and off-platform attitudes and behavior. We partner with Color of Change, the largest online racial justice organization in the United States. As a nonprofit, Color of Change uses digital platforms to mobilize supporters to hold institutions accountable. We designed social media posts in line with Color of Change's brand guidelines to ensure the ecological validity of our study.

\subsection{Social Media Advertising as a Research Tool}
To reach a large audience, we deliver the posts as sponsored content on Facebook using Meta Ads Manager under \textit{The Public Square}, a research account created for the project. Our design leverages social media advertising for several reasons. First, understanding the role of comment sections on advertisements is inherently important, as social media ads represent a significant share of the trillion-dollar advertising industry. In 2025, global social media ad spend exceeded \$275 billion, including about \$100 billion in the U.S.\footnote{DataReportal, \textit{Digital 2026: Global Overview Report}, \url{https://datareportal.com/reports/digital-2026-global-overview-report}.} While existing literature has studied social media ad effectiveness (e.g., \citealt{gordon2019comparison}), the role of user comments in enhancing or diminishing ad impact remains underexplored. Comments on ads allow users to share information and opinions, potentially influencing future viewers, akin to other forms of UGC. Despite significant interest from firms in leveraging UGC and social influence, there is little empirical evidence on the role of social media comments.

Second, using Facebook advertising allows us to access a large and diverse sample while minimizing experimenter demand effects. Recent work (e.g., \citealt{donati2025adaptive, donati2024can}) 
has explored social media ads as a research tool, as they enable the delivery of content to a broad audience and allow researchers to observe user behavior in the natural context in which the content is usually encountered. 

Third, we develop a novel pipeline for comment section manipulation using existing features in the Meta Ads Manager. This enables us to manage comment sections across a large number of posts and systematically collect relevant platform data. Because our approach relies solely on tools already available on the platform, it yields practical insights for social media managers seeking to moderate and manage comment sections, as well as a methodological contribution for researchers seeking to vary social context in an experimental setting at scale. 

\subsection{Post Design and Pre-testing}
To generate the comment sections used in our analysis, we produced ad creatives and copy on five issues identified with our partner organization: voter suppression, environmental justice, criminal justice and police reform, education reform, and technology fairness.\footnote{See Appendix \ref{app:background} for a detailed description of these issues.} The organization reviewed and approved all materials, so the creatives are comparable to content the organization would itself run rather than researcher-produced stimuli. We pre-tested the designs and advertised the best-performing ones to elicit the organic comments that form the basis of our main experiment.

Appendix Figure \ref{fig:banners} displays the creatives and their headlines exactly as they would appear to Facebook users. Appendix \ref{app:background} reports the pre-tests used to select among them and to assess their performance under alternative delivery configurations.

\subsection{Facebook Audience Selection}

We reach Facebook users via sponsored content. A key advantage of social media advertising is that it enables both broad distribution and precise control over audience targeting \citep{aridor2024Experiments}.

To examine how responses vary by audience characteristics, we use ZIP codes as a targeting criterion. We group ZIP codes that are similar in political ideology and racial composition into strata, each pooled into an audience group to target. Within each ideology--race category, we construct several independent strata. Homogeneous strata ensure that arms are compared on similar users rather than on systematically different areas, since Meta's A/B tool randomizes within a single audience. It also lets us compare outcomes across areas with different ideological leaning. 
The ZIP code characteristics were collected from several sources:

\begin{itemize}
    \item \textbf{Meta Audience Estimates}: We use information on audience size provided by Meta, which reports the estimated number of users advertisers could potentially reach over a given period.\footnote{Meta Business Help Center, ``About Estimated Audience Size,'' \url{https://www.facebook.com/business/help/1665333080167380?id=176276233019487}.} This data was collected through the Marketing API for each ZIP code on October 15, 2024.\vspace{-0.2cm}

    \item \textbf{Voting Behavior}: To proxy political preferences and ideology at the ZIP code level, we rely on the 2020 voting results. These come at the precinct level and were obtained from The Upshot.\footnote{The Upshot, \textit{Presidential Precinct Map 2020}, \url{https://github.com/TheUpshot/presidential-precinct-map-2020}.} We assigned each precinct to its nearest ZIP code according to Euclidean distance of the centroids using GIS software, and then aggregated voting information across all precincts matched with the same ZIP code.\vspace{-0.2cm}

    \item \textbf{Population and Racial Composition}: We use 2020 Census data to obtain information on the total and Black populations in each ZIP code and compute the share of Black residents.\footnote{U.S.\ Census Bureau, 2020 ZIP Code Tabulation Area shapefiles, \url{https://www.census.gov/cgi-bin/geo/shapefiles/index.php?year=2020&layergroup=ZIP+Code+Tabulation+Areas}.}
\end{itemize}

We categorize ZIP codes into three ideology groups based on the Republican vote share: \emph{Blue} (Republican vote below 30 percent), \emph{Swing} (between 45 and 55 percent), and \emph{Red} (above 70 percent). Each ideology group was further divided into two subgroups based on racial composition (low and high share of Black population relative to the group median). Hence, a total of six ideology-race categories were created. Each ZIP code is used only once during our experiments. Because Meta's A/B tool randomizes users within a single audience, this ensures that each user is exposed to the content in a single, well-defined condition.

\section{Generation and Analysis of Engagement \label{sec:generation}}

To analyze the impact of comment sections, we begin with a large-scale social media campaign designed to elicit engagement with posts about racial justice. This initial phase generates user comments and reactions in a naturalistic setting, and provides direct evidence on how individuals engage with divisive content across audience types and topics. This engagement forms the basis for subsequent experimental manipulation.\footnote{The generation and analysis of engagement, as well as the main field and survey experiments, were pre-registered before being fielded. The analysis follows the pre-analysis plans, and Appendix~\ref{app:deviations}
documents the few deviations.}

Generating engagement is an important feature of our design for several reasons. First, comments reflect genuine, realistic user behavior, which is crucial for ensuring external validity. Unlike researcher- or AI-generated content, organic comments capture the authentic language, tone, and perspectives that users produce and encounter on social media. Second, these comments are themselves substantively important to study. By targeting specific audiences through Facebook's ad infrastructure, we can link engagement patterns to detailed audience characteristics---such as demographics and ZIP code–level ideology---providing richer insights than are typically available. Third, using organic comments enhances the ethical integrity of the experiment, as users interact with content created by other real users rather than being unknowingly exposed to artificially constructed narratives.

\subsection{Methodology}

Between January 13 and February 10, 2025, we ran a Facebook ad campaign featuring five banners—one for each issue—that had achieved higher click-through rates (CTR) in pre-tests. 
Each banner promoted content related to racial justice, covering topics in education, environmental justice, criminal justice and police reform, technology fairness, and voting rights. 
The campaign was optimized to maximize engagement with the posts—showing ads to users most likely to react, comment, or share—in order to collect authentic interactions that reflected spontaneous responses to important yet divisive content.

To examine variation in engagement across audiences and ad characteristics, we created 30 distinct audience strata. 
Each stratum consisted of sets of ZIP codes randomly sampled and grouped by ideological similarity and racial composition, with an estimated Facebook audience size of about 800{,}000 users on average (we excluded ZIP codes used in pre-tests). 
The 30 strata corresponded to six combinations of political ideology (conservative, moderate, and liberal) and racial composition (above or below the median share of Black residents within each ideological category). For each of the six combinations (e.g., conservative areas with a below-median Black population share), we constructed five audience groups, yielding a total of 30 audience strata. Within each stratum, we used Facebook's native A/B testing tool to randomly allocate users to be potentially exposed to one of the five issue banners. 
In total, the campaign included \feOnePostsN{} posts (30 strata × 5 topics), reaching approximately \feOneReachApproxN{} individuals and generating \feOneReactionsApproxN{} reactions, \feOneUniqueLinkClicksApproxN{} unique link clicks, \feOneCommentsApproxN{} direct comments,\footnote{Direct comments are those directed at the Facebook page/post itself, as opposed to replies to other users’ comments.} and \feOneSharesApproxN{} shares.

The randomization from A/B testing enables comparisons of engagement patterns across topics while holding the potential audience composition constant. 
The resulting data allow us to characterize the intensity and nature of engagement and to identify patterns of interaction by political ideology and demographic composition. 
However, we do not interpret potential differences as causal effects of content. 
Although audiences are randomly assigned to potential exposure, actual exposure is determined by Facebook's ad-delivery algorithm, which endogenously allocates impressions based on predicted engagement probabilities \citep{braun2025b}. 
As a result, the observed engagement patterns reflect the joint influence of both content characteristics and algorithmic delivery.

We focus on engagement outcomes visible to other users: comments and reactions. A \emph{reaction} is a one-click response from a fixed menu of emoji (\emph{like}, \emph{love}, \emph{care}, \emph{laugh}, \emph{wow}, \emph{sad}, \emph{angry}), shown only as an aggregate count. A \emph{comment} is written text posted under the user's name and profile picture. Commenting is thus both more effortful and more public than reacting, which is why the two can move in opposite directions across audiences and why the comment section need not represent the audience that sees the post.

First, we compare overall engagement levels and rates (expressed as a percentage of total reach) across ideological groups, computing confidence intervals using standard errors clustered at the advertisement level (\feOnePostsN{} posts).
Second, we assess the \emph{valence} of these interactions. 
For reactions, we classify \emph{likes}, \emph{loves}, and \emph{cares} as supportive of the posts.\footnote{Other reaction types, such as \emph{laugh}, \emph{wow}, \emph{sad}, and \emph{angry}, are context-dependent and therefore harder to interpret; we focus on those that most reliably convey positive engagement.}
For comments, we use GPT\textendash4 to characterize their political stance, sentiment, informativeness, and offensiveness, and the Google Perspective API to measure toxicity. Because a reply is often hard to interpret by itself, we include a description of the post in the prompt, so each comment is coded in the context of the content it responds to. Appendix \ref{app:Codebook} documents  NLP-based measures used in the paper, including the response scales and the API settings.

\subsection{Engagement Across Locations}

We first document patterns of engagement---both comments and reactions---across different areas. As shown in Figure~\ref{fig:comments_reactions_levels}, there are pronounced differences in how users engage with racial justice content across areas with different ideological compositions. When considering all interactions irrespective of their valence, the number of comments increases substantially from liberal to conservative areas. We find a similar pattern for the rate of commenting in Appendix Figure~\ref{fig:comments_reactions_rates}. Comment rate rises from about \feOneBlueCommentsMeanPct{} percent of reach in Blue ZIP codes to \feOneRedCommentsMeanPct{} percent in Red ones (\(p\feOneCommentsBlueVsRedCROnePText{}\)).
Reactions, by contrast, follow the opposite pattern, declining from roughly \feOneBlueReactionsMeanPct{} percent of reach in Blue areas to 7.5 percent in Red areas (\(p\feOneReactionsBlueVsRedCROnePText{}\)).
These differences indicate that users in more conservative areas are less likely to engage through quick, low-effort reactions but more likely to participate vocally by commenting on posts. 
In more progressive areas, engagement occurs primarily through reactions, suggesting a more silent mode of interaction. Because reactions are far more common than comments, summing the two measures across the subfigures indicates that users in progressive areas are, as expected, more likely to engage with racial justice posts overall.\footnote{As these are differences across locations, we interpret these results as descriptive of who is exposed to and reacts to racial justice content when it is targeted at an area, not as a causal effect of ideology. Conservative and progressive areas differ on many other characteristics that plausibly affect commenting, such as age composition.} 

\begin{figure}[h]
\caption{Comment and Reaction Counts by Location}
\label{fig:comments_reactions_levels}
\centering
\includegraphics[width=0.6\textwidth]{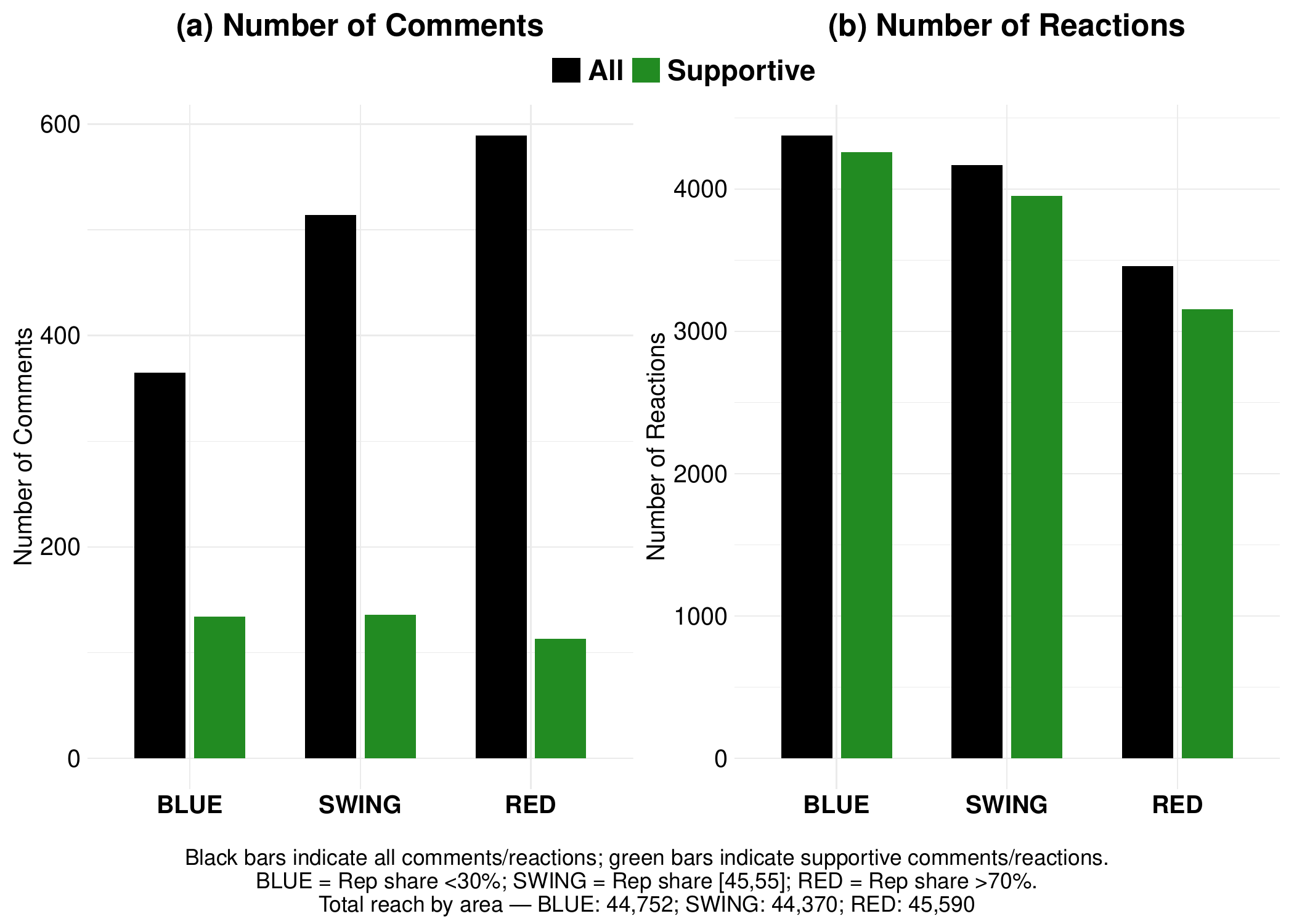} 
\begin{minipage}{\textwidth}
\scriptsize
\noindent \textit{Notes:} This figure reports the total number of comments and reactions generated during the initial Facebook campaign, separately by location. Black bars denote all comments/reactions, while green bars denote supportive comments/reactions. Areas are grouped as BLUE (Republican vote share below 30\%), SWING (Republican vote share between 45\% and 55\%), and RED (Republican vote share above 70\%). Supportive reactions include likes, loves, and cares.
\end{minipage}

\end{figure}

When focusing on \emph{supportive engagement}—defined as reactions or comments expressing agreement or approval—the ideological gradient is notably flatter for comments. 
Supportive comments remain consistently low across areas, ranging around \feOneSupportiveCommentsMeanPct{} percent of reach, with no statistically significant differences between Blue, Swing, and Red ZIP codes (\(p\feOneSupportiveCommentsMinPairwiseCROnePText{}\)).
In contrast, supportive reactions decline significantly from roughly \feOneBlueSupportiveReactionsMeanPct{} percent in Blue areas to about \feOneRedSupportiveReactionsMeanPct{} percent in Red areas (\(p\feOneSupportiveReactionsBlueVsRedCROnePText{}\)), mirroring the overall reaction pattern.
This suggests that while the overall volume of vocal participation (comments) rises in conservative areas, supportive responses remain relatively small and stable. 
Taken together, these results imply that ideological context shapes not only the intensity but also the \emph{type} of engagement: audiences in liberal areas interact more through silent reactions, whereas those in conservative areas engage more vocally, using comments more frequently to express or debate opposing views.

We present the results separately for each issue in Appendix Figure~\ref{fig:comments_reactions_levels_issue} (Appendix Figure~\ref{fig:comments_reactions_rates_issue} reports the rates). Issues such as voter suppression, as well as criminal justice and police reform, generate substantially more vocal engagement in Red ZIP codes, producing large differences relative to Blue ZIP codes. In contrast, topics such as education reform and technology fairness elicit lower levels of vocal engagement overall and exhibit little variation across areas. This pattern suggests that users in more conservative areas are not uniformly more vocal; rather, specific issues within the broader domain of racial justice appear to trigger heightened vocal engagement and often dissent. Unlike comments, reactions follow the general pattern in which users in more progressive areas are more likely to react, and this pattern holds consistently across all issues.

While the different issues allow us to show that engagement patterns differ even within racial justice, by design all of the posts are progressive. Therefore, we interpret the results as descriptive of how people respond to progressive content, rather than content in general. Nevertheless, the results highlight that the comment section under progressive content is disproportionately from conservative areas, so the vocal few are not representative of the underlying audience.

\subsection{Content and Tone of the Comment Section}

Appendix Figure~\ref{fig:comments_characteristics} shows how the content and tone of comments vary across areas with different ideological compositions, with all rates expressed as a share of total comments.%
\footnote{Appendix Tables \ref{tab:phase1length} and \ref{tab:phase1quality} report additional pre-specified analyses on the characteristics of the comment sections, including length, depth, and other measures of conversation quality and opinions' diversity.} 
Comments originating from conservative areas are substantially more likely to express a conservative stance and a negative sentiment. 
The share of conservative-leaning comments rises from about 55 percent in Blue ZIP codes to roughly 75 percent in Red areas (Panel~(a), \(p\feOneConservativeBlueVsRedCROnePText{}\)), while the share of comments with negative sentiment increases from around \feOneBlueNegativeMeanPct{} percent to nearly 80 percent (Panel~(b), \(p\feOneNegativeBlueVsRedCROnePText{}\)).
Panel~(c) shows that the prevalence of offensive language increases from roughly \feOneBlueOffensiveMeanPct{} percent in Blue areas to almost 50 percent in Swing and Red areas, with a statistically significant difference relative to Blue areas (\(p\feOneOffensiveBlueContrastsCROnePText{}\)).
Finally, Panel~(d) indicates that the share of informative comments—those providing factual content or elaboration—remains relatively low, between 7.5 and \feOneInformativeMaxMeanPct{} percent on average, with no statistically significant differences across locations (\(p\feOneInformativeMinPairwiseCROnePText{}\)). We find similar patterns with our pre-specified measures of conversation quality (Appendix Table~\ref{tab:phase1quality}). In the average post, about a third of direct comments engage the existing conversation rather than shifting topic or attacking, fewer than one in five give a reason for the position they take, and fewer than half are civil.

Taken together, these results suggest that while the tone and stance of comments vary, comment sections on divisive issues are generally not very informative across geography.\footnote{This could be because commenters simply prefer to express a stance, or because expressive content draws higher engagement than informative content \citep{lee2018advertising}.}  
Rather than stating a conservative position with relevant arguments, an opposing comment typically voices disagreement without justification.

\section{The Impact of Comments on On-platform Engagement \label{sec:comment_experiment}}
In this section, we provide causal evidence on the effect of the comment section. Specifically, we study how the presence and stance of a comment section influence subsequent users' on-platform engagement with the content.

\subsection{Design \label{sec:experiment_design}}
The experiment ran from March 26 to April 13, 2025, and reached about \feTwoReachApproxMillions{} million Facebook users. Figure~\ref{fig:design} provides an overview of the experimental design. We manipulate the comments that participants see below our ads in a new ad campaign, using Meta's A/B testing tool and an automated pipeline that hides new comments from users.

\begin{figure}[h]
\centering
\caption{Design Overview}
\label{fig:design}
\hspace*{-0.5cm}\scalebox{0.92}{
\begin{tikzpicture}[
  strata/.style={draw, rounded corners, align=center, inner sep=10pt,
                 text width=0.95\linewidth, minimum height=1.6cm},
  box/.style={draw, rounded corners, align=center, inner sep=8pt,
              text width=0.22\linewidth, minimum height=1.45cm},
  line/.style={thick}
]
\node[box] (c0) {\textbf{No Comments} \\ (\feTwoAdsPerArmN{} ads)};
\node[box, right=0.04\linewidth of c0] (c1) {\textbf{Opposing} \\ (\feTwoAdsPerArmN{} ads)};
\node[box, right=0.04\linewidth of c1] (c2) {\textbf{Supportive} \\ (\feTwoAdsPerArmN{} ads)};
\node[box, right=0.04\linewidth of c2] (c3) {\textbf{Mixed} \\ (\feTwoAdsPerArmN{} ads)};
\node[coordinate] (mid) at ($(c0.west)!0.5!(c3.east)$) {};
\node[strata, above=1.6cm of mid] (strata)
{\textbf{Audience groups or strata (\feTwoStrataN{} total)}\\
Ideology (Blue, Swing, Red) $\times$ 
Racial Composition (Above, Below median Black share)   $\times$  3 };
\coordinate (split) at ($(strata.south) + (0,-0.35)$);
\draw[line] (strata.south) -- (split);
\draw[line] ($(c0.north)+(0,0.25)$) -- ($(c3.north)+(0,0.25)$);
\draw[line] (split) -- ($(split |- c0.north)+(0,0.25)$);
\foreach \c in {c0,c1,c2,c3} {
  \draw[line] ($(\c.north)+(0,0.25)$) -- (\c.north);
}
\end{tikzpicture}
}
\begin{minipage}{\textwidth}
\scriptsize
\noindent \textit{Notes:} This figure summarizes the experimental design. Users are grouped into \feTwoStrataN{} audience groups or strata, defined by the interaction of ZIP code ideology (Blue, Swing, Red) and racial composition (above or below the median Black population share). Within each ZIP code group, users are randomly assigned via Meta's A/B testing infrastructure to one of four comment conditions: (i) no comments (control), (ii) opposing comments only, (iii) supportive comments only, or (iv) mixed comments.
\end{minipage}
\end{figure}

We investigate how exposure to the comment section and the different narratives expressed in the comment section of a post (collected in the Facebook campaign described above) affect individuals' subsequent engagement with that post (views and comments), as well as their intentions (clicks).

\subsubsection{Intervention Design} 

To select and stratify audience types, we follow a similar approach as the one described in Section \ref{sec:generation}. We exclude ZIP codes used in the comment generation campaign, and create \feTwoStrataN{} audience groups, with three groups for each combination of political ideology (Blue, Swing, Red) and racial composition (above or below the median share of Black residents within each ideological category).

Within each audience group, we use A/B testing to randomly split its population into four conditions: no visible comments (No Comments), visible comments that include both opposing and supportive comments (Mixed), opposing comments only (Opposing), and supportive comments only (Supportive). Figure \ref{fig:S2_arms} provides an example of what users could see under each condition. 

Throughout the paper, Supportive and Opposing describe a comment's stance toward the post beneath which it appears, not its position on an absolute political scale. A supportive comment endorses the claim the post advances; an opposing comment rejects it. Because every post in our experiment advocates a progressive position on racial justice, opposing comments in our setting are also comments a reader would code as conservative. We use the relational terms deliberately, since the object we manipulate is the presence of visible agreement or disagreement with the content.

On Facebook, comments on ads are not visible by default. Users must actively click the comment counter or comment button, or expand the ad container, to view the comment section. Importantly, Facebook algorithmically ranks comments based on engagement and other signals.\footnote{Meta Newsroom, ``Making Public Comments More Meaningful,'' June 2019, \url{https://about.fb.com/news/2019/06/making-public-comments-more-meaningful/}.}
 To mitigate potential ranking effects, we display exactly two comments in each comment condition, thereby minimizing within-condition variation in comment visibility. In addition, at the start of the campaign, the share counter is equalized across conditions (12 shares).\footnote{We equalized the number of shares across posts by sharing them ourselves before the experiment.} The comment counter is absent in the control condition and equalized across the comment arms that are part of the same A/B test, displaying 20, 30, or 40 comments. The reaction counter is similarly balanced across posts within the same A/B test, set at approximately 60, 70, or 80. Figure \ref{fig:S2_arms} shows an example of an A/B test in which the comment counter is set to 30 in the treatment arms and the reaction counter is approximately 80.

\begin{figure}[t]
\begin{adjustwidth}{-0.4cm}{-0.6cm}
\caption{Example Experimental Conditions}
\label{fig:S2_arms}
   \begin{subfigure}[b]{0.262\textwidth}
        \centering
    \includegraphics[width=1\textwidth]{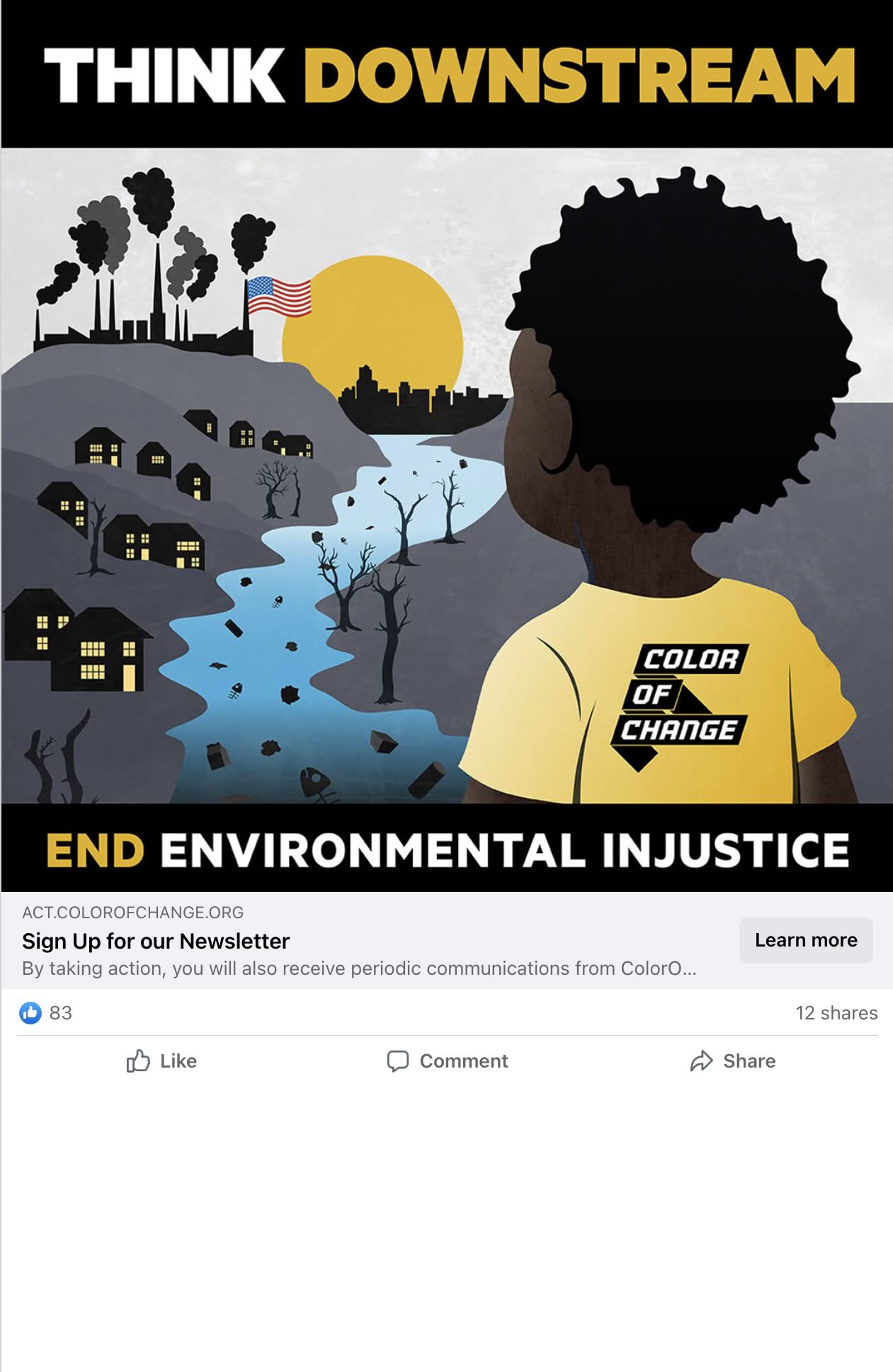} 
        \caption{No Comments}
    \end{subfigure}
    \begin{subfigure}[b]{0.255\textwidth}
        \centering
    \includegraphics[width=1\textwidth]{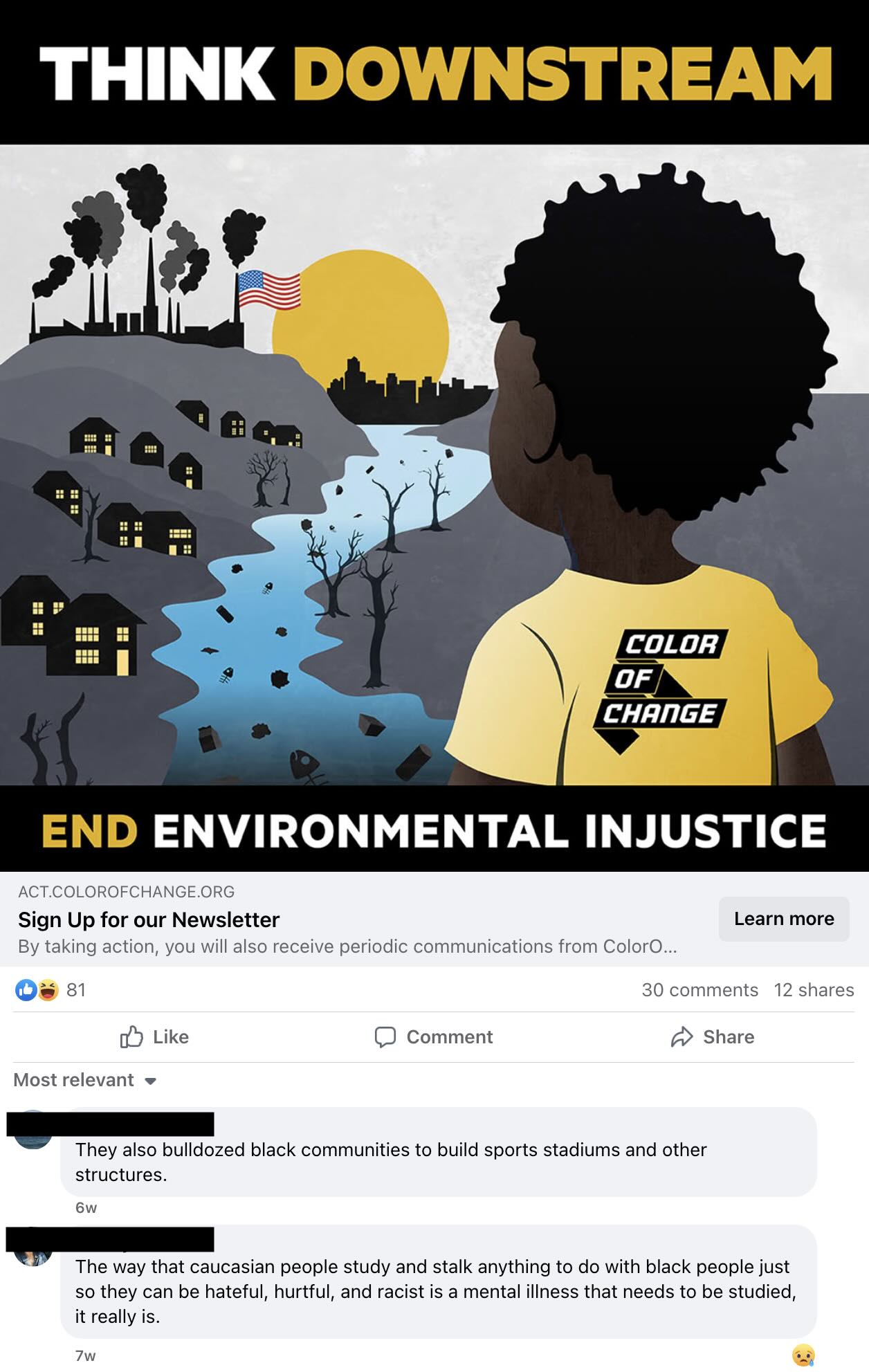}
        \caption{Supportive}
     \end{subfigure} \vspace*{0.4cm}
    \begin{subfigure}[b]{0.254\textwidth}
        \centering
    \includegraphics[width=1\textwidth]{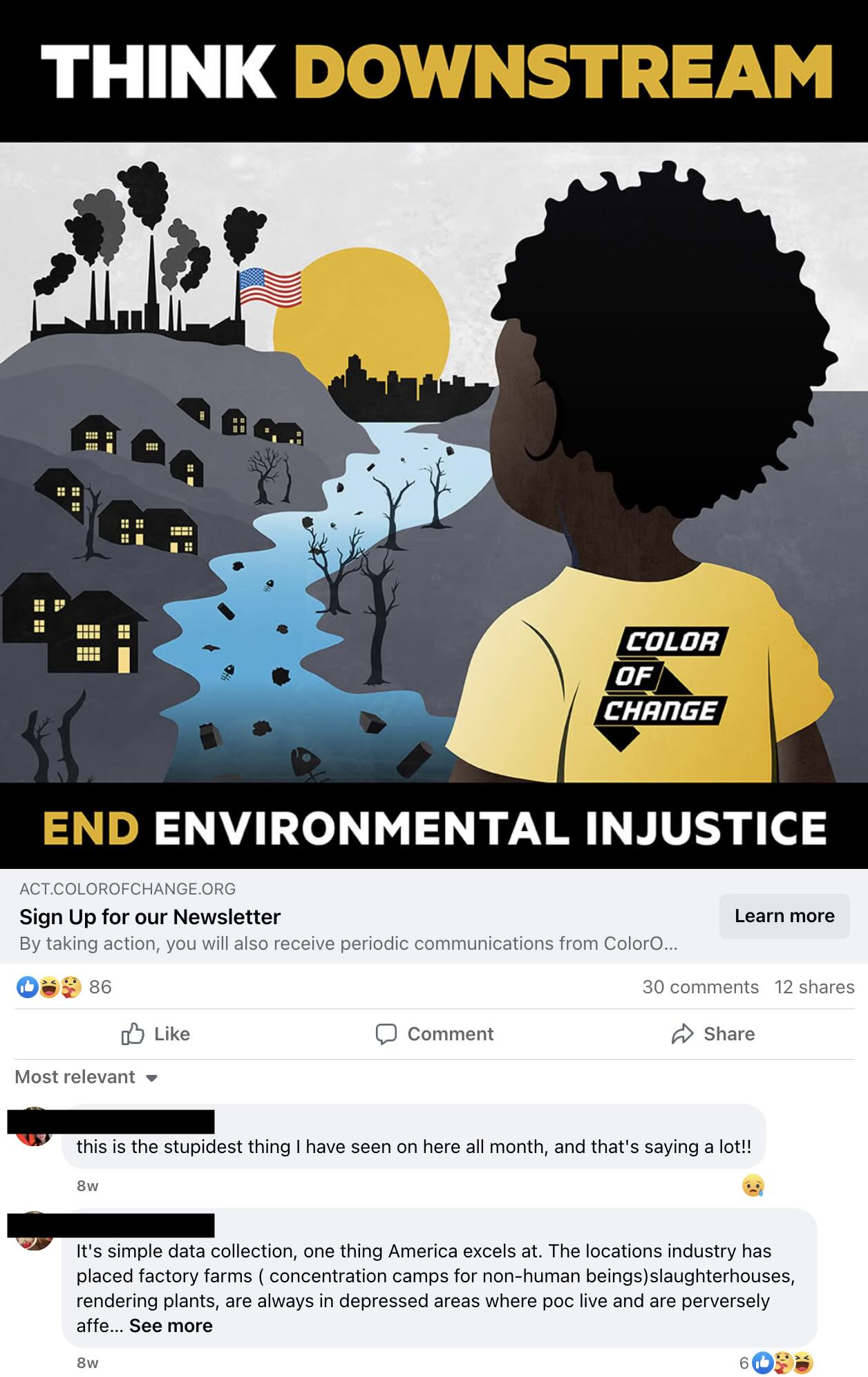} 
        \caption{Mixed}
    \end{subfigure}
    \begin{subfigure}[b]{0.27\textwidth}
        \centering
    \includegraphics[width=1\textwidth]{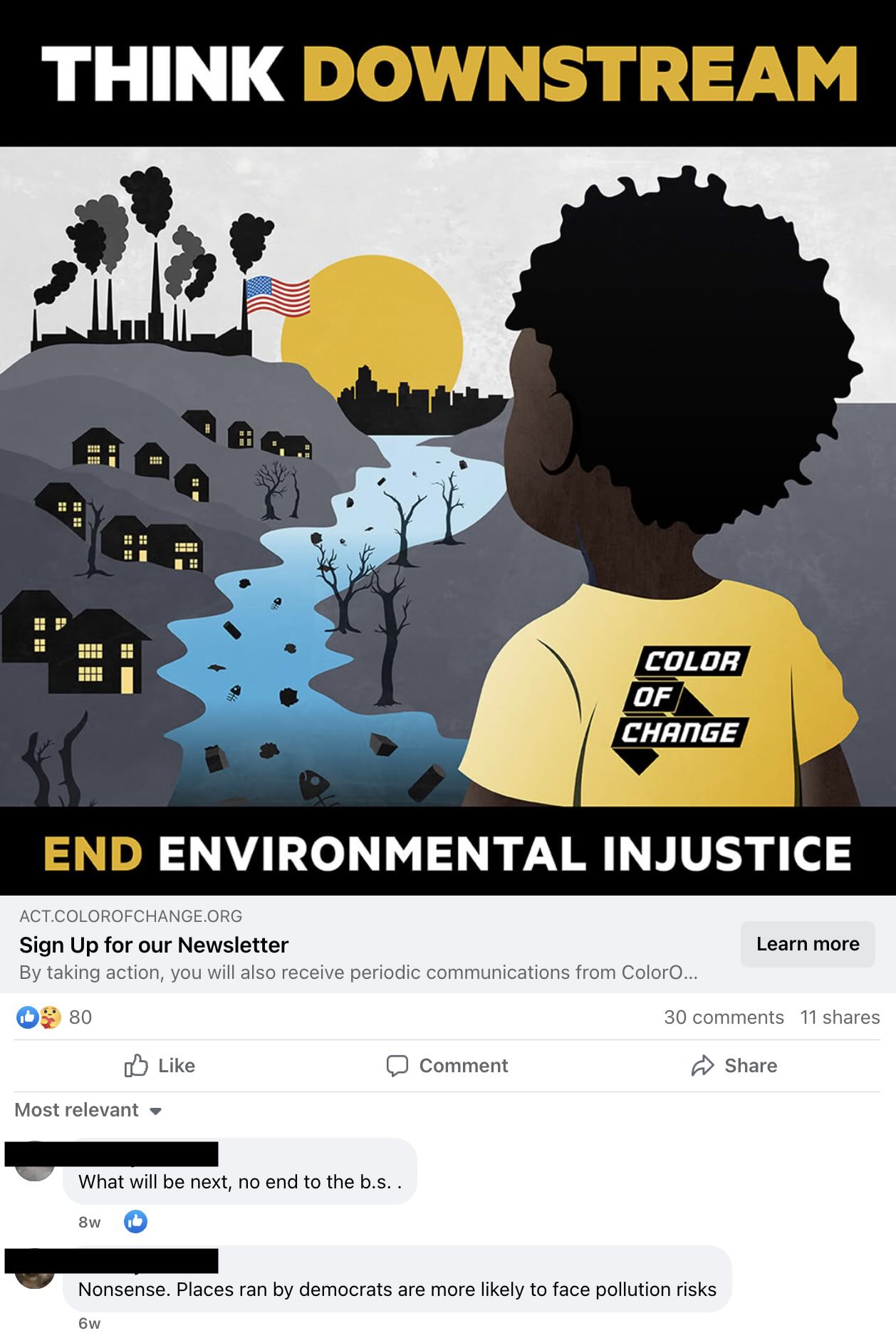}
        \caption{Opposing}
    \end{subfigure}
    \end{adjustwidth}

\begin{minipage}{\textwidth}
\scriptsize
\noindent \textit{Notes:} This figure displays  screenshots of the four experimental conditions as they appeared to users in their Facebook feed. Panel (a) shows the No Comments condition (control), in which the post is displayed without a visible comment section. Panel (b) shows the supportive condition, in which two supportive comments are pre-populated below the post. Panel (c) shows the Mixed condition, in which one supportive and one opposing comment are displayed. Panel (d) shows the opposing condition, in which two opposing comments are pre-populated below the post. All other features of the post (e.g., the image, caption, call-to-action link, number of shares) are held constant across conditions. Aggregate like and reaction counts vary naturally but are balanced across conditions. 
\end{minipage}

\end{figure}

\subsubsection{Issue Selection} From the five issues used in the comment generation phase, we selected environmental justice to focus on in order to maximize statistical power. This topic was chosen because it ranks near the middle in terms of overall engagement in the earlier comment-generation campaign, ensuring sufficient variation in user responses without being dominated by extreme levels of attention or controversy. In addition, the topic remained timely and relevant at the time of the experiment.

\subsubsection{Post Selection} We showed a subset of posts from the comment generation campaign with existing interactions to new audience groups. To select the posts for the treatment conditions, we identified triplets of posts using the following procedure. As part of the analysis in Section \ref{sec:generation}, comments are classified using a 5-point scale for political ideology: Strongly progressive or left-leaning, Slightly or moderately progressive, Centrist/unclear or no explicit stance, Slightly or moderately conservative, or Strongly conservative or right-leaning. In general, comments that support the original post or express pro-racial justice views are classified as progressive, while those that oppose the post or its message are classified as more conservative. To select posts for the Mixed, Opposing, and Supportive conditions, we focused on posts that have at least two supportive and two opposing direct comments (i.e., comments that are directly responding to the post, rather than comments that reply to another comment). For each triplet of posts with a similar number of reactions, we assigned one post each to the Mixed, Opposing, and Supportive conditions. 

For the No Comments condition, we selected posts that are not used in any other conditions and have a number of reactions similar to the average number of reactions in each triplet group. 

We selected \feTwoSelectedPostsN{} posts, organized into three groups of four. Posts are grouped by reaction count, so posts within a group have similar numbers of reactions. Within each group, we have one post in each of Mixed, Opposing, and Supportive, and No Comments (control). All \feTwoSelectedPostsN{} posts use the same ad banner and headline, with identical image, copy, and call-to-action link. We advertise each post to six distinct audience groups, one in each of the six ideology-race categories, creating \feTwoAdsN{} advertisements shown to around one million individuals.

\subsubsection{Comment Selection}
To create the Opposing and Supportive conditions, we selected two comments that matched the relevant stance. For the Mixed condition, we selected one opposing comment and one supportive comment. These comments remained visible to users in the main experiment, while all other direct comments and replies were hidden.
We did so using Facebook's existing moderation tool---the \textit{Hide} comment option available to page owners (see Figure \ref{fig:hide2}). This tool makes a comment invisible to other users. 

Comments were selected by rule and manual inspection. From each eligible post we kept only direct comments, discarding replies to other users. Among these we gave priority to comments at the two extremes of the five-point stance scale. All displayed comments are coded 1 or 5, except one opposing comment coded 4. Posts within a group were assigned to the Mixed, Opposing, and Supportive conditions starting from a random draw, and we finalized the assignment and the displayed comments by manual inspection, checking that each post had suitable comments for its condition. No displayed comment was edited, shortened, or rewritten, and none was written by the research team. Appendix Table~\ref{tab:displayed_comments} reports the full text of every displayed comment with its GPT-4 scores. 

Most of these comments are expressive and often convey a position without justification. Our comment selection procedure allows us to identify the effect of the stance of a comment section under a progressive post, holding the post content and the audience fixed and other visible interactions balanced.\footnote{Because every post takes a progressive position, comments that oppose (support) the post are also comments a reader could call conservative (progressive). We therefore cannot separate the effect of disagreement (agreement) with the post from the effect of simply being exposed to conservative (progressive) comments.}

Finally, to better isolate the effect of the number of comments versus the narrative of the comments on outcomes, we manipulated the number of comments in the posts displayed to audience groups. In the No Comments condition, we deleted as many comments as possible so that the comment counter is zero or close to zero when it is first delivered to the audience in the experiment.\footnote{Some comments, such as those violating Facebook policy, are not visible to the research team and therefore cannot be deleted. As a result, for some comment sections, it may not be possible to reduce the comment counter to zero.} In the triplet of posts used for the treatment conditions, we equalized the number of initial comments across conditions by deleting comments or by posting comments ourselves and hiding them, while allowing the total number of comments to vary across triplets (ranging from 20 to 40). This procedure ensured that the comment counter in the No Comments condition was substantially lower than the counter displayed in the treatment conditions.

\subsubsection{SUTVA Violations}
A potential threat to the internal validity of social media experiments is the violation of the Stable Unit Treatment Value Assumption, or SUTVA \citep{aridor2024Experiments}. 
This concern is particularly relevant to our research question, given the inherently social nature of comment sections. Because outcomes for one user may depend on the treatment or behavior of others exposed to the same post, interference is a first-order threat, and reducing it through design rather than only through analysis is preferable where feasible. New comments posted by users could influence the perceptions or behaviors of other users, thereby confounding the treatment effects. For example, a new comment supportive of racial justice posted in the opposing condition would alter the intended composition of the comment section. 

To address this concern, we implemented a real-time, automated comment-hiding pipeline that immediately hides any new user comments from other users once they are posted.\footnote{The commenter will not know that their comment has been hidden; the comment remains visible to the commenter as well as to the page owner.} This design ensured that the No Comments condition displayed no comments and that users in the treatment conditions saw only the comments corresponding to the assigned stance. It also helped minimize interactions among users exposed to the same post, thereby mitigating possible SUTVA violations resulting from social interactions within the comment section.\footnote{Other forms of visible engagement on the post, such as the number of likes and shares, remained stable over the treatment period. At the start of the experiment, the average post had \averagereactions{} reactions and 12 shares; the additional engagement generated over the course of the experiment is small relative to these baseline levels.}

Hiding new comments comes at a cost. In the field, an early comment draws replies and the composition of a section shifts over the life of a post. Existing work shows that what is already visible changes what later users write: exposure to higher quality comments leads strangers to write higher quality comments \citep{berry2017discussion}. Our pipeline shuts down this channel for internal validity: otherwise users exposed to the same post would determine each other's treatment, and the stance we assigned would not be the stance later users saw. We therefore estimate the effect of a fixed comment section on the users who read it, not the joint effect of that section and the discussion it would generate.

\subsubsection{Divergent delivery} 
Field experimentation in online display advertising presents several challenges to causal inference \citep{johnson2023inferno}. In particular, even within A/B tests, advertising platforms' algorithms may optimize campaign delivery over time for predicted user–ad relevance. As a result, different users can be targeted across experimental conditions based on engagement early in the campaign, generating algorithmic selection bias over time \citep{eckles2018field, braun2025b}. This threat to internal validity---the divergent delivery bias---poses a key concern when the goal is to identify the causal effect of specific post features, rather than the joint effect of algorithmic delivery and post features. 

To minimize the risk of divergent delivery and ensure that our estimates reflect the causal effect of the comment section on behavior, rather than the effect of the comment section and platform delivery, we followed and augmented best practices from recent work \citep{burtch2025characterizing}. First, we split budgets evenly across arms, launched all ads simultaneously, and sought to cap exposure at one impression per user. Second, we optimized the campaign for reach rather than engagement. Third, we set a budget large enough to saturate the predefined audience and ran the campaign over multiple weeks to ensure that nearly all users within a defined area were reached. These design features minimize differences in ad delivery across conditions \citep{braun2025b}. Consequently, any observed differences across conditions can be more confidently interpreted as the causal effect of the comment section on user behavior, rather than as artifacts of algorithmic delivery dynamics.

Since we equalized the number of shares and average reactions across conditions and held the post content constant, the only element that varies across conditions is the comment section. Together with the precautionary measures described above, this design allows us to isolate the effect of the comment section.

\subsection{Data}

Our data come from two sources. First, we obtain engagement metrics from the Meta Ads Manager dashboard,\footnote{Meta Ads Manager, \url{https://www.facebook.com/business/tools/ads-manager}.} which reports daily outcomes for each advertisement disaggregated by gender and age group. Second, we collect the full text and timestamps of all user-generated comments directly from the corresponding Facebook posts. 

\subsubsection{Outcomes}

We analyze a set of engagement outcomes that capture on-platform activity. Specifically, we focus on four metrics: 1) all engagement, defined as any unique user activity related to the ad, including clicks on the ad container, link clicks, profile clicks, reactions, comments, shares, saves, and other observable actions; 2) post expansions, our proxy for attention to the post and its comment section, defined as clicks to expand the ad panel, including clicks to view the comments (this outcome is not directly reported in the Meta Ads Manager interface and is constructed by differencing out other available metrics);\footnote{\textit{Post Expansions = Clicks All - Page Engagement}. \textit{Clicks All} captures all user clicks on an ad, including clicks on the ad container, link clicks, profile clicks, reactions, comments, shares, saves, and other interactions. \textit{Page Engagement} includes all identifiable interactions with the ad or the advertiser's page (link clicks, profile clicks, reactions, comments, shares, saves, and other interactions), except for ad container clicks. The residual therefore isolates clicks on the ad container that expand the post without generating any other observable user activity.} 3) interactions, defined as the sum of unique reactions (likes and other emoji responses), comments, and shares (reposting); and 4) unique link clicks, capturing whether a user clicked on the external link, which we interpret as intent to learn more about the campaign.\footnote{The last three outcomes need not add up exactly to all engagement. For example, a user who both comments and clicks on the link is counted twice when the outcomes are summed but only once in all engagement.}\textsuperscript{,}\footnote{As a robustness check, Section \ref{sec:robust} reports results using landing-page views, which capture downstream off-platform activity. This measure has two limitations. First, landing-page views are inferred via the Facebook Pixel rather than directly recorded by Facebook, which may introduce measurement error. Second, unlike unique link clicks, landing-page views are not unique at the user level.}

All of these actions can be taken with or without exposure to the comments themselves. Across many platforms, including Facebook, the comment section is not displayed by default. In the feed, users see the post and its aggregate counters, but comment text appears only if they expand the container. Unless otherwise specified, we report the engagement rates computed as the ratio of each outcome to total reach, which is defined as the number of distinct users who were shown the ad at least once.

In addition to these engagement outcomes, we collect the full text and timestamps of all user-generated comments. This allows us to examine not only the volume but also the \textit{stance} of user discourse. We use a large language model (GPT-4) to classify each comment as \textit{supportive} or \textit{non-supportive} of the organization's message, based on its semantic content and sentiment. Analogously, we categorize user reactions according to their valence: ``likes,'' ``loves,'' and ``cares'' are coded as supportive, while other reaction types (such as ``angry,'' ``sad,'' or ``wow'') are coded as neutral or not supportive. These additional measures allow us to quantify how pre-existing comment narratives shape the tone and direction of subsequent engagement, thereby linking the stance of visible comments to the ideological composition and sentiment of later user responses. 

\subsubsection{Sample Characteristics}

Appendix Table \ref{tab:summary_stats} reports descriptive statistics for the sample used in the main experiment. The final dataset comprises \feTwoReachN{} observations at the user level, the sum of the unique users reached by each of the \feTwoAdsN{} ads. Treatment assignment is well balanced across arms, with roughly \feTwoArmMeanSharePct{} percent of individuals allocated to each of the four conditions (Control, Supportive, Mixed, and opposing comments).
Engagement outcomes exhibit substantial variation, reflecting the skewed and infrequent nature of user activity on social media platforms. On average, \feTwoAllMeanPct{} percent of reached users engaged with the post in any form, \feTwoExpansionsMeanPct{} percent expanded the ad panel to view the comment section, and \feTwoClicksMeanPct{} percent clicked on the external link. Interaction rates—comprising reactions, comments, and shares—averaged \feTwoInteractionsMeanPct{} percent of reach, while \feTwoPageViewsMeanPct{} percent of users visited the organization’s landing page.

The reached audience broadly tracks the demographic composition of Meta's U.S. adult audience, although some differences remain. Men account for \feTwoMaleSharePct{} percent of reached users and women for \feTwoFemaleSharePct{} percent, compared with 46.8 percent and 53.3 percent, respectively, in Meta's combined U.S. adult (18+) ad audience across Facebook, Instagram, and Messenger. By age, the reached audience is concentrated among users aged 25–44, who represent \feTwoYoungAdultSharePct{} percent of the sample compared with 43.6 percent of Meta's U.S. adult audience, whereas users aged 18–24 and 65+ are relatively underrepresented (\feTwoYoungestSharePct{} vs. 18.0 percent and \feTwoSeniorSharePct{} vs. 12.7 percent, respectively). Taken together, these patterns indicate that the campaign reached a demographically broad segment of U.S. Meta users, but with somewhat greater representation of men and middle-aged users than in the overall adult audience.

\subsubsection{Balance Checks}
Table \ref{tab:balance} reports covariate balance across the four experimental conditions. The sample comprises approximately \feTwoReachMillions{} million individuals, evenly distributed across treatment arms, with about 263{,}000 users per group. We examine three observable individual-level characteristics: gender, middle-aged status (ages 35–64), and senior status (ages 65 and above). Mean values and standard deviations are shown by group, and pairwise differences with the control arm are tested using two-sample \textit{t}-tests with standard errors clustered at the advertisement level (see Section~\ref{sec:Strategy} for details).

Across all covariates, differences between treatment and control groups are small in magnitude and statistically insignificant. The \textit{p}-values from the corresponding tests uniformly exceed conventional significance thresholds, indicating that random assignment produced well-balanced groups across key demographic dimensions. This balance supports the internal validity of the experimental design and suggests that any subsequent differences in engagement outcomes can be  attributed to the randomized variation in comment visibility and stance, rather than to pre-existing differences in audience composition.

\begin{table}[t]
\centering
\caption{Balance Checks: Individual-level Covariates}
\label{tab:balance}
\small
\resizebox{\textwidth}{!}{\begin{tabular}{lccccccc}
\toprule\toprule
& \multicolumn{4}{c}{\textit{Group Mean / (SD)}} & \multicolumn{3}{c}{\textit{t}--test \textit{p}--value} \\
\cmidrule(lr){2-5}\cmidrule(lr){6-8}
Variable &
(1) Control & (2) Supportive & (3) Mixed & (4) Opposing &
(1)–(2) & (1)–(3) & (1)–(4) \\[0.3em]
\midrule
Male
 & \makecell{0.522 \\ (0.500)}
 & \makecell{0.522 \\ (0.500)}
 & \makecell{0.524 \\ (0.499)}
 & \makecell{0.525 \\ (0.499)}
 & 0.982 & 0.898 & 0.877 \\
\midrule
Middle-Aged (35-64)
 & \makecell{0.538 \\ (0.499)}
 & \makecell{0.536 \\ (0.499)}
 & \makecell{0.537 \\ (0.499)}
 & \makecell{0.535 \\ (0.499)}
 & 0.867 & 0.944 & 0.812 \\
\midrule
Senior (65+)
 & \makecell{0.074 \\ (0.262)}
 & \makecell{0.076 \\ (0.266)}
 & \makecell{0.075 \\ (0.263)}
 & \makecell{0.074 \\ (0.262)}
 & 0.736 & 0.937 & 0.973 \\
\midrule
Observations
 & 263{,}706 & 262{,}246 & 263{,}197 & 264{,}866 &  &  &  \\[0.3em]
\bottomrule
\end{tabular}
}
\vspace{0.3em}

\begin{minipage}{\textwidth}
\scriptsize
\noindent \textit{Notes:} Each observation is a user. Standard errors in the t-tests are clustered at the ad level (\feTwoAdsN{} ads).
\end{minipage}
\end{table}

Appendix Table \ref{tab:balance_adlevel} reports balance checks for ad-level cost and performance metrics across the four experimental conditions. Each observation corresponds to one advertisement, for a total of \feTwoAdsN{} ads evenly distributed across treatment arms. We compare total spend, cost per mille (CPM), frequency, reach, and spend per user to verify that Meta's delivery algorithm exposed ads in each treatment arm to comparable audience sizes and costs.

Mean values are virtually identical across groups, and none of the pairwise differences relative to the control group are statistically significant. Total spend per ad averages approximately \$\feTwoAdSpendMeanUSD{}, with CPMs around \$\feTwoAdCPMMeanUSD{} and mean reach near \feTwoAdReachApproxMeanN{} users. The estimated \textit{p}-values for all tests are well above conventional significance thresholds, confirming that the experimental conditions were implemented under comparable delivery and budget parameters. These results indicate that Meta's optimization algorithm did not differentially allocate impressions or spending across treatment arms, reinforcing the internal validity of our causal design.

The frequency and reach metrics further support the correct implementation of the experimental design. Our objective was to saturate audiences such that each individual would be reached at most once. The estimated average frequency of approximately \feTwoAdFrequencyMean{} shows that repeat exposure was limited though not eliminated. Combined with an upper-bound estimated audience size of about 14{,}518 users per ad on average, it is consistent with complete audience saturation and balanced delivery across conditions. Appendix Table \ref{tab:audience_saturation} reports saturation numbers under different scenarios. Under the most conservative estimate, we reached over 95 percent of the potential audience in the areas we targeted.

\subsection{Empirical Strategy} \label{sec:Strategy}
We observe outcomes for \feTwoAdsN{} ads, one for each combination of audience stratum \( z \in \{1, \dots, \feTwoStrataN{}\} \) and condition \( k \in \{\text{No Comments}, \text{Opposing}, \text{Supportive}, \text{Mixed}\} \). Within each stratum, individuals are randomly assigned to one of the four ads. For each ad, we record the number of unique individuals reached and the number who took a given action (e.g., clicking the link or reacting to the post), disaggregated by day, gender, and age group. In the main analysis, we aggregate across days, so each observation covers the full campaign.

To estimate effects at the individual level, we construct a synthetic dataset in which each observation is one individual reached by an ad. The construction is deterministic: each cell defined by ad, gender, and age group is replicated once per individual reached, and \( Y = 1 \) is assigned to exactly the number of individuals in the cell who took the action. Let \( p_{zg}^k \) denote the observed share of individuals in stratum \( z \), condition \( k \), and gender--age group \( g \) who took action \( Y \). We model the individual outcome as Bernoulli, \( Y_{izg}^k \sim \text{Ber}(p_{zg}^k) \), independent and identically distributed within each cell. Random assignment within strata justifies this assumption. The synthetic data let us include individual-level covariates and stratum fixed effects, and report coefficients as changes in the probability that a user takes the action.\footnote{The point estimates are numerically identical to weighted least squares on the cell-level proportions, weighted by cell reach, a standard property of regression on grouped data.}

We then estimate the following linear probability model by ordinary least squares:
\begin{equation}
Y_{iz} = \alpha + \sum_{k} \beta_k T_i^k + X_{iz}' \gamma + \delta_z + \varepsilon_{iz},
\label{eq:main}
\end{equation}
where \( Y_{iz} \) is the outcome of individual \( i \) in stratum \( z \) in the synthetic dataset; \( T_i^k \) indicates assignment to condition \( k \in \{\text{Opposing}, \text{Supportive}, \text{Mixed}\} \), with \textit{No Comments} as the omitted category; \( X_{iz} \) is a vector of gender and age-group indicators, as well as their interactions with the stratum fixed effects; \( \delta_z \) denotes stratum fixed effects, which the tables label \textit{ZIP Code Set FEs} because each stratum is a set of ZIP codes; and \( \varepsilon_{iz} \) is an error term. The coefficients \( \beta_k \) capture the causal effect of assignment to each comment condition relative to the control.

We estimate heterogeneous effects by re-estimating equation~\eqref{eq:main} separately by area ideology. Ideology is measured at the ZIP code group level, not for individuals. A Red area is one where the Republican vote share is high, not one in which every user is conservative, and vote shares correlate with other characteristics such as population density. We therefore interpret these results as heterogeneity across locations.

We cluster standard errors at the stratum--treatment level, corresponding to the \feTwoAdsN{} ads. This is the level at which the comment section is delivered. Individuals within a cell see the same post and, in the treatment arms, the same comments. Clustering allows arbitrary dependence within cells, so inference does not rely on the i.i.d.\ assumption behind the synthetic data analysis. It requires that cells be independent of one another. The design supports this: strata are mutually exclusive sets of ZIP codes, ZIP codes used in the comment generation campaign are excluded, and the comment-hiding pipeline prevents users who see the same post from influencing each other. Section~\ref{sec:robust} reports robustness to alternative specifications and inference methods, including estimates computed directly on the \feTwoAdsN{} ad-level rates instead of relying on the synthetic data, and wild cluster bootstrap $p$-values.

\subsection{Results}
\subsubsection{Main Results \label{sec:main_results}}

We present estimates of the causal impact of the comment section on on-platform user engagement, based on the linear probability model described above, with standard errors clustered at the advertisement level. Figure~\ref{fig:results_others}(a) reports the effects on overall engagement, while Figures~\ref{fig:results_others}(b–d) present the effects on post expansions, interactions, and link clicks. Effects are reported in percentage points (pp). Full regression results are provided in Appendix Table~\ref{main_combined}.

\begin{figure}[H]
\centering

\caption{The Impact of the Comment Section on On-platform User Engagement\\
\textit{Outcomes are expressed in \% of total reach}}
\label{fig:results_others}

\vspace{0.5em}

\begin{subfigure}[b]{0.48\textwidth}
    \centering
    \caption{All Engagement}
    \includegraphics[width=\textwidth]{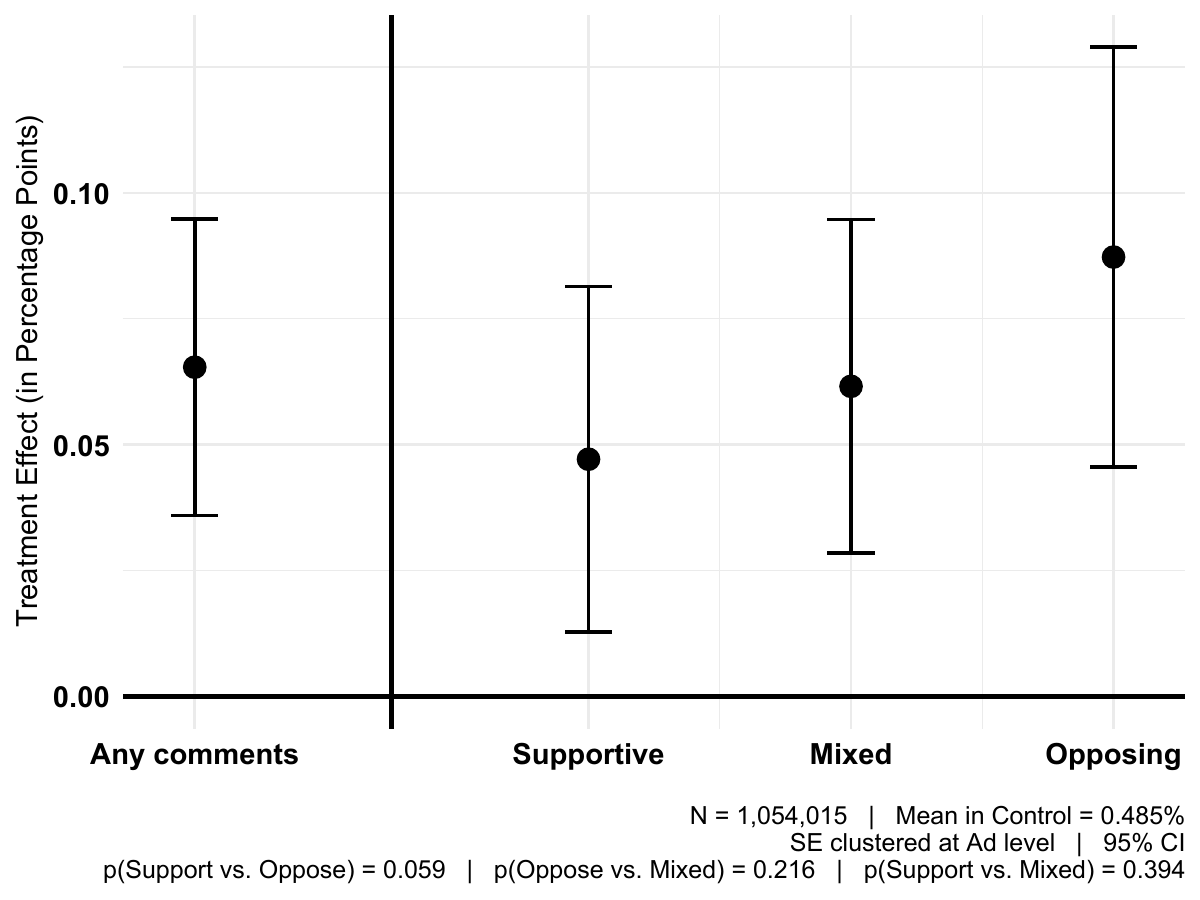}
\end{subfigure}
\hfill
\begin{subfigure}[b]{0.48\textwidth}
    \centering
    \caption{Post Expansions}
    \includegraphics[width=\textwidth]{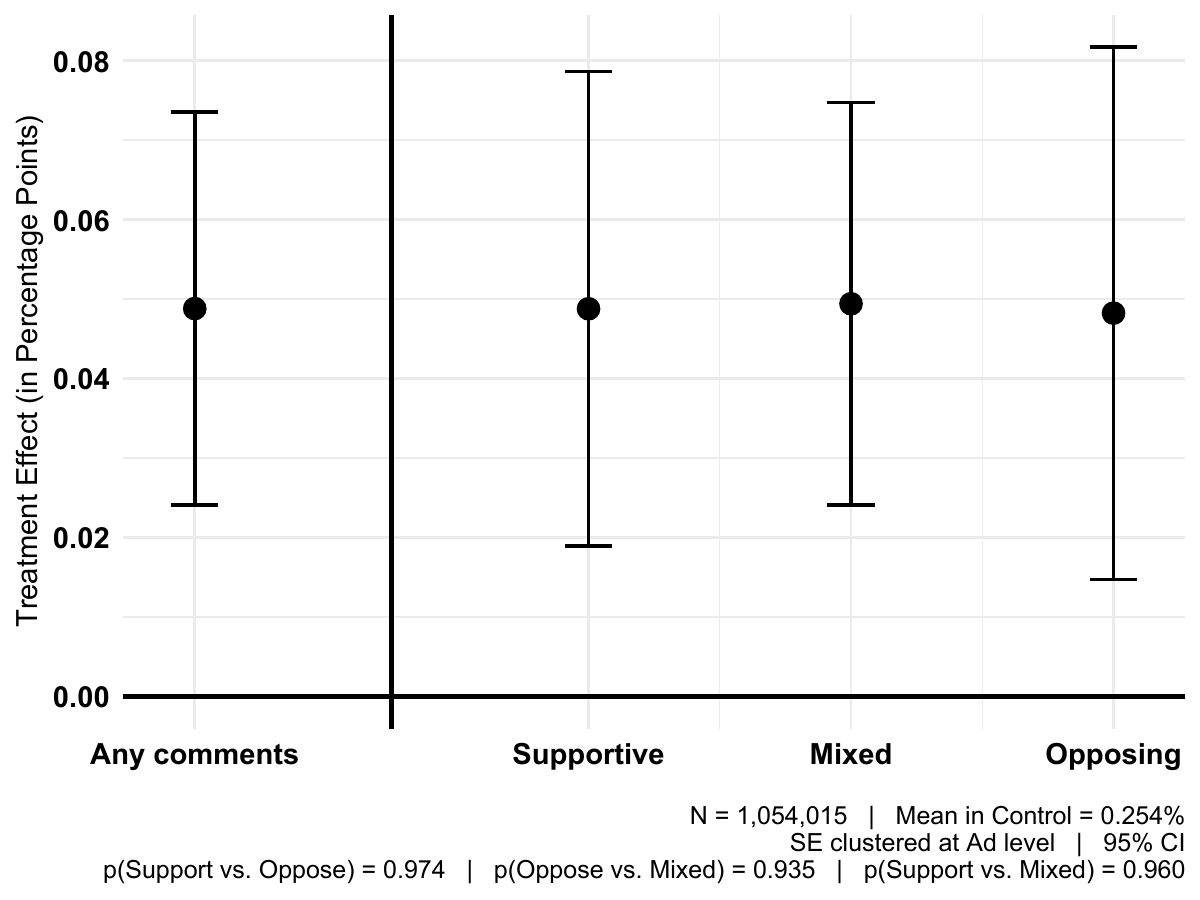}
\end{subfigure}

\vspace{0.4cm}

\begin{subfigure}[b]{0.48\textwidth}
    \centering
    \caption{Interactions (reactions, comments, shares)}
    \includegraphics[width=\textwidth]{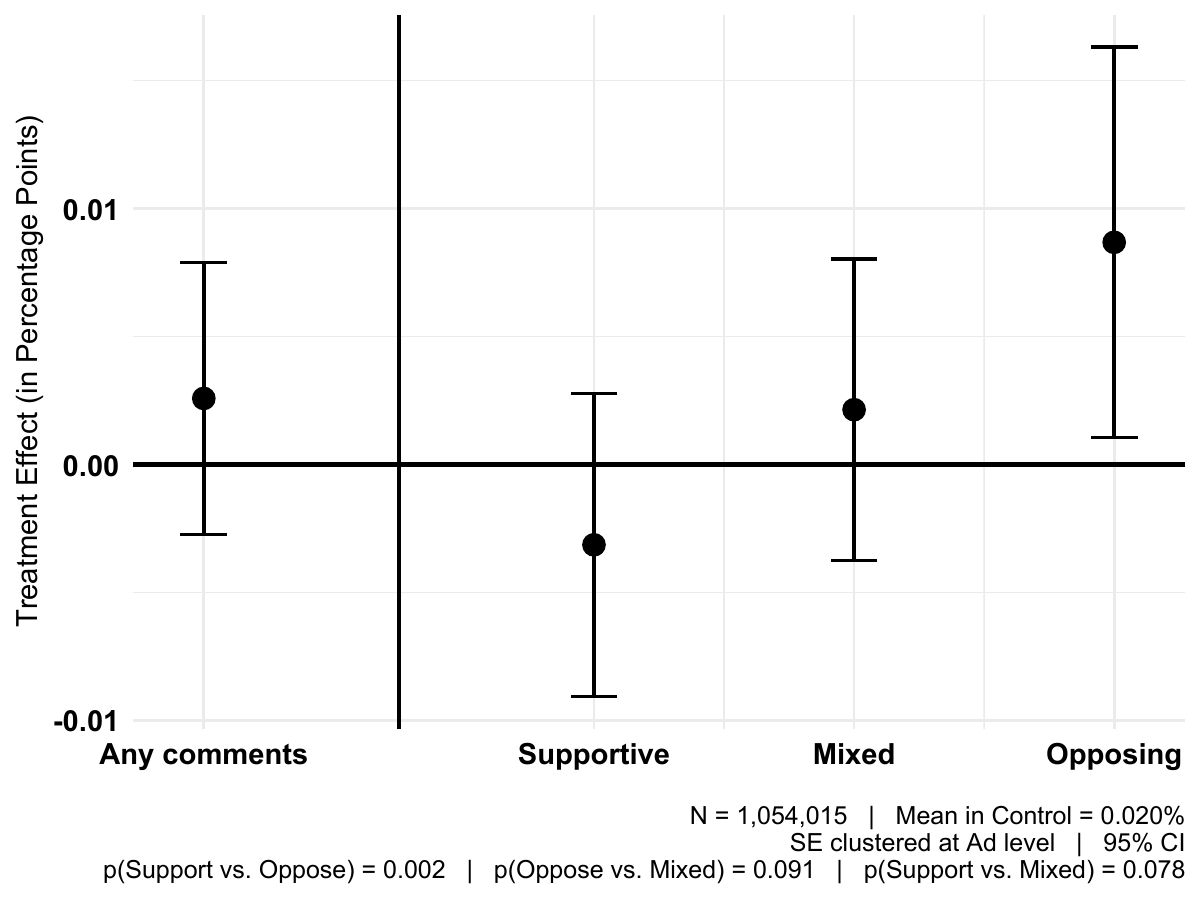}
\end{subfigure}
\hfill
\begin{subfigure}[b]{0.48\textwidth}
    \centering
    \caption{Link Clicks}
    \includegraphics[width=\textwidth]{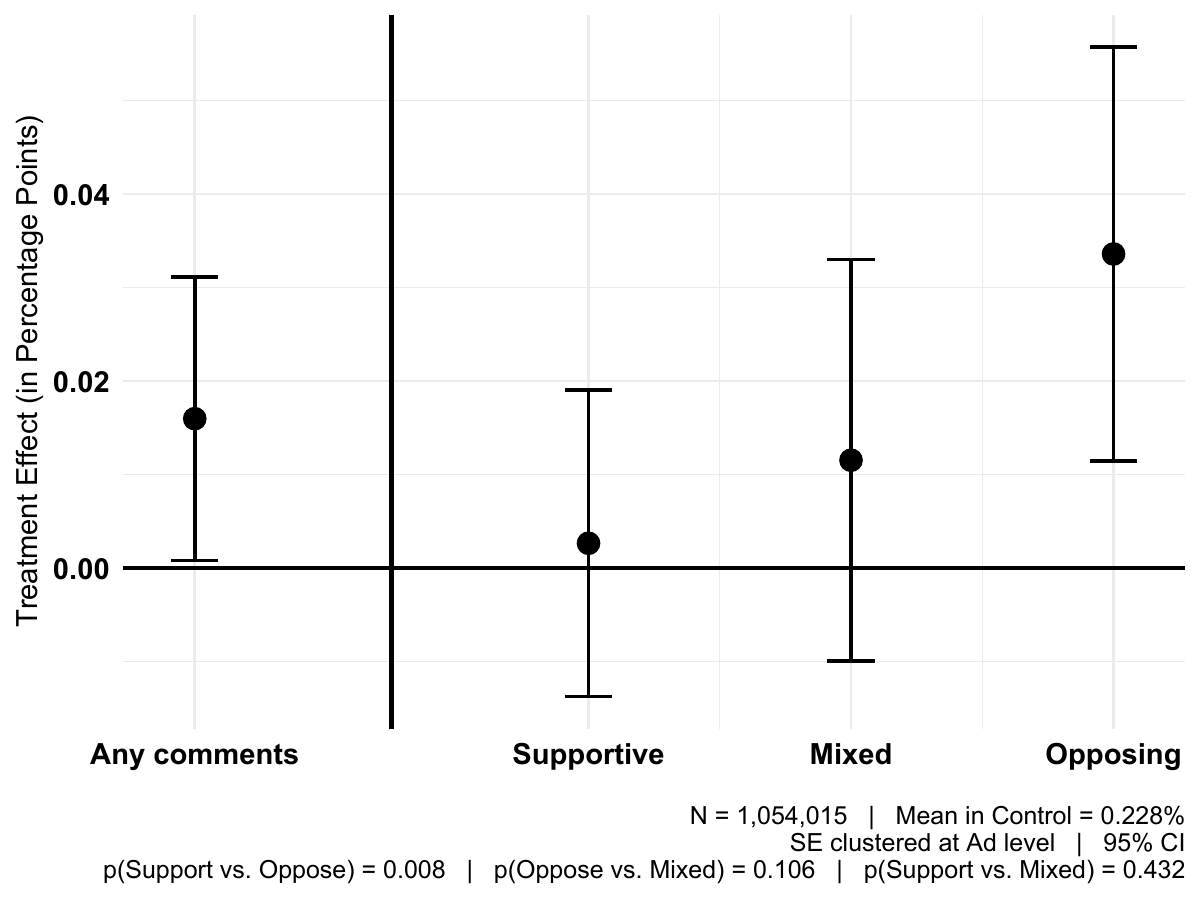}
\end{subfigure}

\vspace{0.3em}

\begin{minipage}{\textwidth}
\scriptsize
\noindent \textit{Notes:} This figure reports treatment effects of comment visibility and 
stance on on-platform engagement outcomes, expressed in percentage points of total reach. Panel (a) reports all engagement, panel (b) post expansions, panel (c) interactions (comments, reactions, and shares), and panel (d) unique link clicks. ``Any comments'' pools the three comment-treatment arms. Estimates are from regression models with the No Comments arm as the omitted category, including gender and age-group controls, ZIP code set fixed effects, and all pairwise interactions among gender, age group, and ZIP code set. Standard errors are clustered at the advertisement level (\feTwoAdsN{} ads). Vertical lines represent 95\% confidence intervals. The sample includes \feTwoReachN{} reached users.
\end{minipage}

\end{figure}

\paragraph{All Engagement}
Figure~\ref{fig:results_others}(a) shows that displaying any comments increases total engagement by \feTwoAllAnyEffectPP{} pp (\(p\feTwoAllAnyCROnePText{}\)), corresponding to a \feTwoAllAnyRelativePct{} percent increase relative to the control mean of \feTwoAllControlMeanPct{} percent.
By stance, opposing comments generate the largest increase (\feTwoAllOpposeEffectPP{} pp, \(p\feTwoAllOpposeCROnePText{}\); \feTwoAllOpposeRelativePct{} percent), followed by Mixed (\feTwoAllMixedEffectPP{} pp, \(p\feTwoAllMixedCROnePText{}\); \feTwoAllMixedRelativePct{} percent) and supportive (\feTwoAllSupportEffectPP{} pp, \(p\feTwoAllSupportCROnePText{}\); \feTwoAllSupportRelativePct{} percent).
The difference between opposing and supportive comments is statistically significant at the 10 percent level (\(p=\feTwoAllOpposeVsSupportCROneP{}\)) and sizable, as the effect of opposing comments is nearly \feTwoAllOpposeSupportEffectRatio{} times as large.
The results indicate that not only does the presence of comments matter for subsequent engagement, but also their stance. In particular, critical remarks draw substantially more overall engagement than supportive ones, suggesting that negative commentary tends to amplify overall user activity around the content. We next examine which specific actions drive this pattern.

\paragraph{Post Expansions}
Figure~\ref{fig:results_others}(b) shows that any comments increase the probability that users expand the ad panel to view the comment section by \feTwoExpansionsAnyEffectPP{} pp (\(p\feTwoExpansionsAnyCROnePText{}\)), corresponding to a \feTwoExpansionsAnyRelativePct{} percent increase over the baseline mean of \feTwoExpansionsControlMeanPct{} percent.  Interpreting post expansions as a proxy for attention, these findings indicate that the presence of comments per se increases attention, independent of their ideological orientation.
All three stance conditions produce virtually identical effects—Opposing (\feTwoExpansionsOpposeEffectPP{} pp, \(p\feTwoExpansionsOpposeCROnePText{}\)), Mixed (\feTwoExpansionsMixedEffectPP{} pp, \(p\feTwoExpansionsMixedCROnePText{}\)), and supportive (\feTwoExpansionsSupportEffectPP{} pp, \(p\feTwoExpansionsSupportCROnePText{}\))—and none of the pairwise differences are significant (\(p\feTwoExpansionsMinPairwiseCROnePText{}\)).
This pattern matches Prediction~\ref{pred:margins}: stance is visible only after the post is expanded, so the decision to open the comment section should not differ across stances. As such, this evidence provides an additional sanity check on the successful randomization for our study, suggesting that differential ad delivery across experimental conditions is unlikely to explain our results.

\paragraph{Interactions}
Figure~\ref{fig:results_others}(c) indicates that the pooled ``any comments'' effect on reactions, comments, and shares is small and statistically insignificant (\feTwoInteractionsAnyEffectPP{} pp).
However, stance-specific estimates reveal that opposing comments substantially increase interactions by \feTwoInteractionsOpposeEffectPP{} pp (\(p\feTwoInteractionsOpposeCROnePText{}\)), a \feTwoInteractionsOpposeRelativeWholePct{} percent rise relative to the control mean of \feTwoInteractionsControlMeanPct{} percent.
For mixed comments, the coefficient is positive but statistically insignificant (\feTwoInteractionsMixedEffectPP{} pp), whereas for supportive comments it is negative and insignificant (\feTwoInteractionsSupportEffectPP{} pp). Differences between opposing and supportive conditions are highly significant (\(p\feTwoInteractionsOpposeVsSupportCROnePText{}\)), whereas differences involving the Mixed condition are smaller and only marginally significant (\(p\feTwoInteractionsMixedContrastsCROnePText{}\)).
This pattern suggests that antagonistic comments elicit greater visible participation from subsequent users.

\paragraph{Link Clicks}
Figure~\ref{fig:results_others}(d) shows that the presence of any comments raises the probability of clicking on the external link by \feTwoClicksAnyEffectPP{} pp (\(p\feTwoClicksAnyCROnePText{}\)), representing a \feTwoClicksAnyRelativeWholePct{} percent increase over the baseline mean of \feTwoClicksControlMeanPct{} percent.
Opposing comments again produce the largest effect (\feTwoClicksOpposeEffectPP{} pp, \(p\feTwoClicksOpposeCROnePText{}\); \feTwoClicksOpposeRelativePct{} percent), while the coefficients for Mixed (\feTwoClicksMixedEffectPP{} pp) and supportive (\feTwoClicksSupportEffectPP{} pp) comments are smaller and not significant.
The difference between opposing and supportive arms is statistically significant (\(p\feTwoClicksOpposeVsSupportCROnePText{}\)), whereas comparisons involving the Mixed condition are not.
These results indicate that opposing comments are particularly effective in increasing click-through activity, a key metric for organizations seeking to drive website traffic.

\paragraph{Valence of Interactions}
The additional engagement generated by opposing comments appears to be predominantly supportive in tone. Comparing effects on all interactions with effects on \emph{supportive interactions}, defined as positive reactions (\emph{likes}, \emph{loves}, \emph{cares}), comments expressing agreement with the organization's message, and shares, opposing comments raise all interactions by 0.0082 pp and supportive interactions by 0.0075 pp (both \(p<0.05\)), as shown in Appendix Figure~\ref{fig:results_interactions_valence} and Appendix Table~\ref{tab:interactions_valence}.\footnote{These estimates are from a specification using data retrieved directly from the posts, without user-level covariates, and with heteroskedasticity-robust standard errors. Results for all interactions are therefore very similar but not identical to those reported above. Minor discrepancies may arise if users deleted their comments or removed reactions after the initial data extraction.}
Mixed and supportive comment sections have no statistically significant effects, and effects on non-supportive or ambiguous interactions are small and imprecisely estimated. One possibility is that visible disagreement increases the likelihood that users who support the original message choose to express their agreement. At the same time, the initial presence of negative comments may reduce the propensity of other users holding similar views to post additional negative responses. Either way, these patterns suggest that opposing comments alter the composition of subsequent engagement.

\paragraph{Discussion}
Across all outcomes, the presence of a comment section modestly increases on-platform user engagement. However, much of this overall effect reflects greater attention to the post itself, as captured by post expansions, and does not vary by the stance of existing comments. In contrast, comment stance shapes the composition of subsequent engagement. These patterns are consistent with Prediction~\ref{pred:margins}: the presence of comments increases post expansion regardless of stance, and stance drives subsequent engagement.  Opposing comments consistently generate higher rates of interactions and link clicks relative to the No Comments condition, whereas supportive comments do not significantly outperform the control. Differences between the opposing and supportive conditions are substantial, while differences across the remaining conditions are generally small. These engagement gains also lower unit advertising costs: with respect to the control, spend per interaction falls by 50 percent under opposing comments and spend per link click by \feTwoClickCostOpposeDeclinePct{} percent (Appendix Table~\ref{tab:costs}). Taken together, these findings suggest that opposing discourse amplifies participation and interest more than supportive commentary, pointing to a potential trade-off between engagement amplification and polarization in online comment sections. This trade-off echoes findings on other platform features: \citet{germano2026ranking}, for instance, document a similar tension in the context of engagement-maximizing ranking algorithms.  In Section~\ref{sec:survey}, we further examine how these engagement patterns affect attitudes and off-platform behavior in a survey experiment.

\subsubsection{Heterogeneous Effects by Area Ideology}

We next examine whether these effects vary with the prevailing political ideology of the area, distinguishing between \emph{Blue} (mostly liberal), \emph{Swing} (mixed), and \emph{Red} (mostly conservative) ZIP codes. Figure~\ref{fig:hte_political} and Appendix Tables~\ref{tab:blue_zip}--\ref{tab:red_zip} report the estimates.

\begin{figure}[h]
\centering

\caption{Heterogeneous Treatment Effects by Area Ideology\\
\textit{Outcomes are expressed in \% of total reach}}
\label{fig:hte_political}

\vspace{0.5em}

\begin{adjustwidth}{-.1cm}{-.1cm}

\begin{subfigure}[b]{0.48\textwidth}
    \centering
    \caption{All Engagement}
    \includegraphics[width=\textwidth]{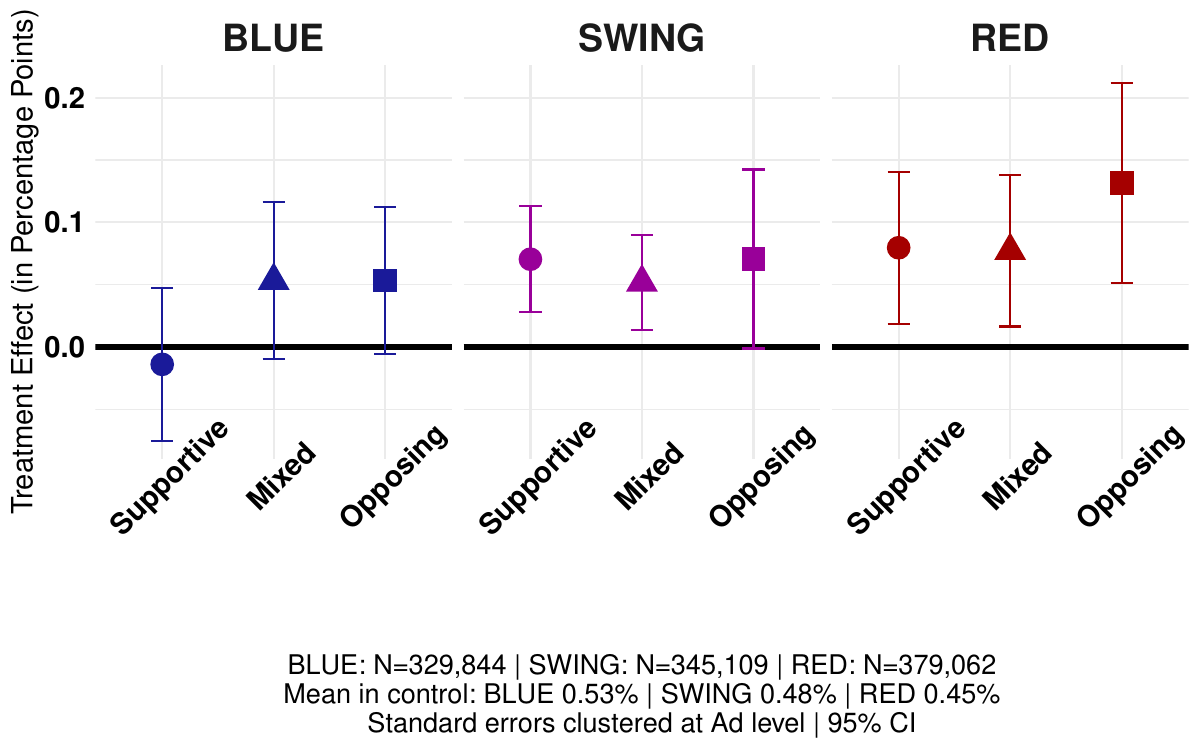}
\end{subfigure}
\hfill
\begin{subfigure}[b]{0.48\textwidth}
    \centering
    \caption{Post Expansions}
    \includegraphics[width=\textwidth]{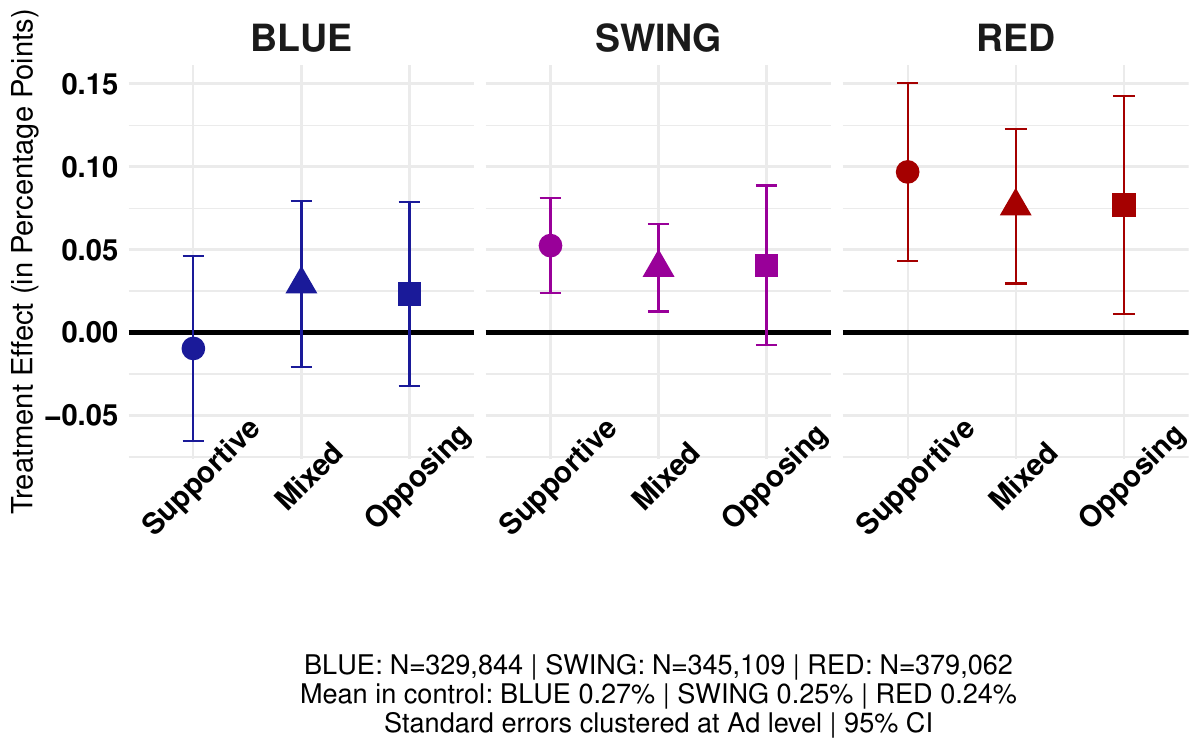}
\end{subfigure}

\vspace{0.4cm}

\begin{subfigure}[b]{0.48\textwidth}
    \centering
    \caption{Interactions (Reactions, Comments, Shares)}
    \includegraphics[width=\textwidth]{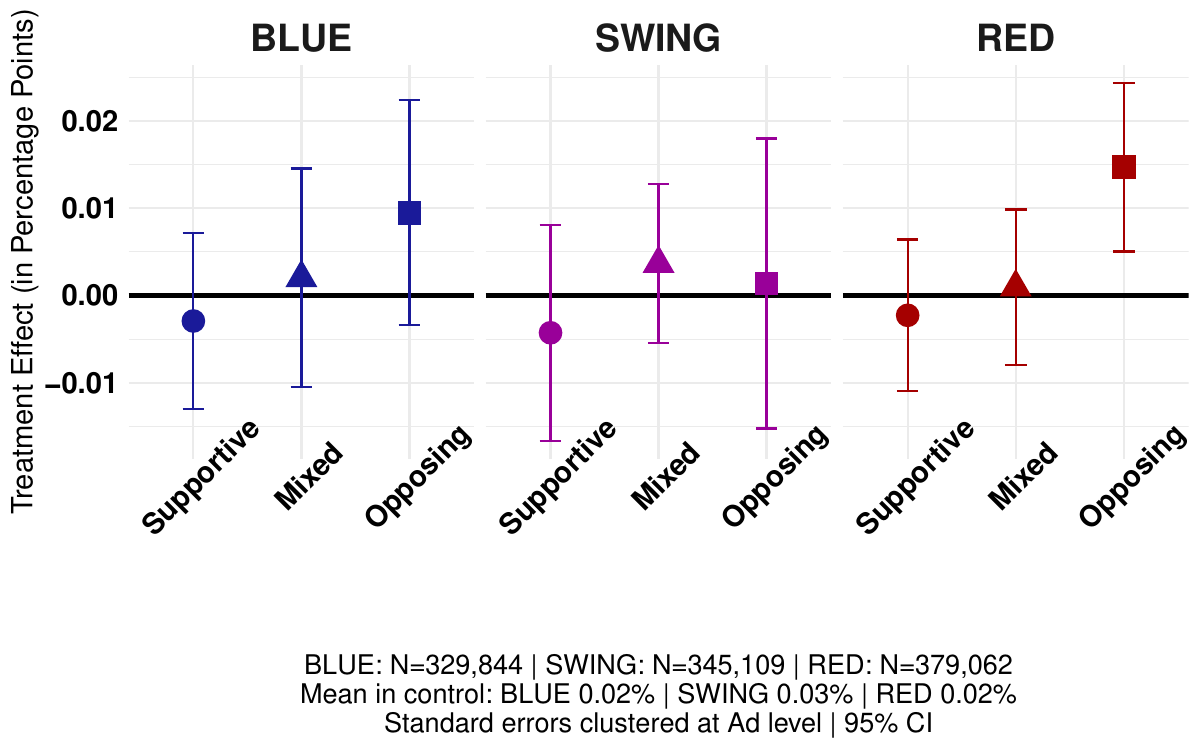}
\end{subfigure}
\hfill
\begin{subfigure}[b]{0.48\textwidth}
    \centering
    \caption{Link Clicks}
    \includegraphics[width=\textwidth]{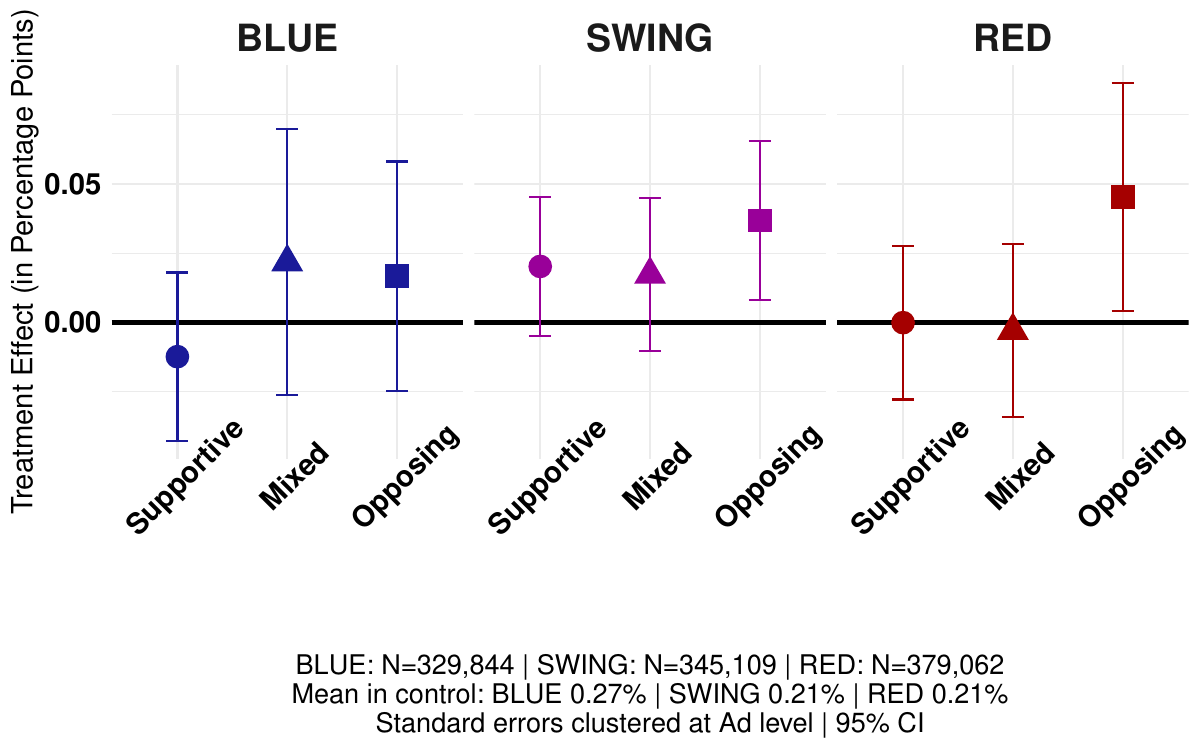}
\end{subfigure}

\end{adjustwidth}

\vspace{0.3em}

\begin{minipage}{\textwidth}
\scriptsize
\noindent \textit{Notes:} This figure reports heterogeneous treatment effects across areas with different ideological compositions. Panels show treatment effects on all engagement, post expansions, interactions, and unique link clicks, each expressed in percentage points of total reach. BLUE, SWING, and RED areas are defined using Republican vote share cutoffs. Estimates are from regression models estimated separately by area type, with the No Comments arm as the omitted category. Standard errors are clustered at the advertisement level, and vertical lines represent 95\% confidence intervals.
\end{minipage}

\end{figure}

The effect of the presence of a comment section rises with the conservativeness of the area. Any comments raise total engagement by \feTwoBlueAllAnyEffectPP{} pp in Blue areas, \feTwoSwingAllAnyEffectPP{} pp in Swing areas (\(p\feTwoSwingAllAnyCROnePText{}\)), and \feTwoRedAllAnyEffectPP{} pp in Red areas (\(p\feTwoRedAllAnyCROnePText{}\)), or \feTwoBlueAllAnyRelativeWholePct{}, \feTwoSwingAllAnyRelativeWholePct{}, and \feTwoRedAllAnyRelativeWholePct{} percent of the respective control means. The estimate for Blue areas is statistically insignificant. In Swing and Red areas, most of this increase consists of post expansions, our proxy for attention, which respond similarly to all three stances.  Baseline engagement runs in the opposite direction: in the control condition, overall engagement averages \feTwoBlueAllControlCoarseMeanPct{} percent in Blue areas, \feTwoSwingAllControlCoarseMeanPct{} percent in Swing areas, and \feTwoRedAllControlCoarseMeanPct{} percent in Red areas, consistent with greater alignment between the campaign's message and prevailing ideology in more progressive areas.\footnote{This pattern is consistent with prior literature (e.g., \citealt{song2024} in the racial justice context), which finds that individuals are more likely to engage with social media content that aligns with their pre-existing attitudes.} The comment section thus draws attention in areas where the post itself draws less attention. By making visible that the post has generated discussion, it may increase the salience of content that users in conservative areas would otherwise overlook.

Differences across comment stances are most pronounced in Red areas. There, opposing comments raise total engagement by \feTwoRedAllOpposeEffectCoarsePP{} pp (\(p\feTwoRedAllOpposeCROnePText{}\)), interactions by \feTwoRedInteractionsOpposeEffectPP{} pp (\(p\feTwoRedInteractionsOpposeCROnePText{}\)), about \feTwoRedInteractionsOpposeLevelRatio{} times the control rate, and link clicks by \feTwoRedClicksOpposeEffectPP{} pp (\(p\feTwoRedClicksOpposeCROnePText{}\)), whereas mixed and supportive comments have no detectable effect on interactions or link clicks despite raising post expansions as much as opposing comments do. The differences between the Opposing condition and the other conditions are statistically significant for interactions (\(p\feTwoRedInteractionsOpposeContrastsCROnePText{}\)) and link clicks (\(p\feTwoRedClicksOpposeContrastsCROnePText{}\)). In Swing areas, all three stances raise total engagement by \feTwoSwingAllMinEffectPP{}--\feTwoSwingAllMaxEffectPP{} pp and opposing comments raise link clicks by \feTwoSwingClicksOpposeEffectPP{} pp (\(p\feTwoSwingClicksOpposeCROnePText{}\)), but the stances do not differ significantly from one another. In Blue areas, the estimates are small and imprecise, and only the difference in interactions between opposing and supportive comments is statistically significant (\(p\feTwoBlueInteractionsOpposeVsSupportCROnePText{}\)).

These patterns suggest that the comment counter raises attention, most where baseline attention to the post is lowest, and once the section is opened, the opposing section is the most salient. Identity congruence may reinforce the latter effect in conservative areas, where an opposing comment both contrasts with the post and aligns with the viewer's own stance, as discussed in Appendix~\ref{app:discussion}. Because we observe ad-level aggregates, however, we cannot link post expansions to subsequent actions at the user level, and so cannot tell whether the larger stance effects in Red areas arise because more users open the comment section there or because those who open it respond differently to what they see.

\subsubsection{{Robustness and Placebo Tests}\label{sec:robust}}

We conduct several robustness checks to assess the sensitivity of our results. First, we re-estimate all main models using alternative combinations of control variables and fixed effects (Appendix Tables \ref{tab:robustness_anycomment_controls} and \ref{tab:robustness_stance_controls}), and examine robustness to alternative estimators by employing a logit specification (Appendix Table \ref{tab:logit}). Second, we compute engagement rates directly at the ad level, without the synthetic data construction, and report the corresponding estimates together with heteroskedasticity-robust t-tests of equality across experimental conditions (Appendix Table \ref{tab:outcomes_model_free}). The estimates remain robust across all these exercises.

We also check robustness to the inference method, reporting wild cluster bootstrap $p$-values for all estimates  \citep{cameron2008bootstrap}. Under the wild cluster bootstrap, the effect of any comments on overall engagement and the differences between opposing and supportive comments in interactions and link clicks remain significant at the 5 percent level. Other estimates become less precise, particularly within areas, where only \feTwoBlueClustersN{} clusters are available (Appendix Tables~\ref{main_combined} and \ref{tab:blue_zip}--\ref{tab:red_zip}). Clustering at the ZIP-code-set level leaves our main conclusions unchanged (Appendix Table~\ref{tab:main_zip_cluster}).

We conduct three additional exercises. We first perform a ZIP code exclusion sensitivity analysis, dropping one of the \feTwoStrataN{} ZIP-code sets at a time and re-estimating the main specifications; the resulting coefficients remain stable, indicating that the findings are not driven by particular geographic areas (Appendix Figures \ref{all_Engagement_one_zipcode_set_out}-\ref{link_clicks_one_zipcode_set_out}). We next repeat this exercise at the level of the stimulus, dropping one stimulus (i.e., \feTwoAdsPerPostN{} ads and more than \feTwoPostReachLowerBoundN{} people) at a time, so that each estimate excludes one realization of the comment section. The coefficients are again stable, indicating that the results are not driven by a particular stimulus or comment section (Appendix Figures \ref{all_Engagement_one_post_out}--\ref{link_clicks_one_post_out}).  Moreover, we implement randomization-inference (permutation) tests based on repeated placebo reassignments of treatment status \citep{young2019channeling}. For all statistically significant effects in the main analysis, the observed estimates lie consistently in the tails of the corresponding placebo distributions, with one-sided permutation p-values below 0.02, suggesting that these effects are unlikely to be generated by chance under the original randomization scheme (Appendix Figures \ref{all_engagement_permutation}--\ref{link_clicks_permutation}).

Furthermore, we test the impact of comments on landing-page views, an alternative measure of downstream behavior beyond link clicks. Appendix Table~\ref{tab:pageviews} suggests that the presence of any comments increases off-platform engagement by \feTwoPageViewsAnyEffectPP{} percentage points (\(p\feTwoPageViewsAnyCROnePText{}\)), corresponding to a \feTwoPageViewsAnyRelativePct{} percent increase relative to the control mean of \feTwoPageViewsControlMeanPct{} percent.
Opposing comments yield the largest increase (\feTwoPageViewsOpposeEffectPP{} percentage points, \(p\feTwoPageViewsOpposeCROnePText{}\); \feTwoPageViewsOpposeRelativePct{} percent), followed by mixed (\feTwoPageViewsMixedEffectPP{} percentage points) and supportive comments (\feTwoPageViewsSupportEffectPP{} percentage points), both of which are not statistically significant. However, we interpret these findings with caution since landing-page views are inferred via pixel-based tracking and are prone to measurement error.

One potential concern is that the higher engagement driven by opposing comments reflects not their stance per se, but other textual features correlated with stance. In particular, toxicity has been shown to drive up engagement \citep{beknazar2025toxic}. To address this, we classify comments shown in each experimental condition using the Google Perspective API and find that average toxicity levels across the three treatment conditions are similar at around \feTwoDisplayedToxicityMean{}, so the comments in Opposing are no more toxic than those in other conditions. As a further robustness check, we exclude posts containing comments that exceed the Perspective API's recommended toxicity threshold of 0.7.\footnote{Perspective API developer documentation, ``About the API: Score,'' \url{https://developers.perspectiveapi.com/s/about-the-api-score}.} Appendix Table \ref{tab:toxicity}
shows that our results remain robust to this restriction, confirming that the engagement effect of opposing comments is not merely an artifact of toxic content. More broadly, toxicity and stance are conceptually distinct: toxicity is a fixed characteristic of the content itself, whereas the effect of stance is relative to the post. Moreover, identity congruence between the comment and the viewer could play a role. The heterogeneity we document is therefore unlikely to be driven by toxicity or other textual features whose effects do not vary systematically with location ideology.

\section{\makebox[\textwidth][l]{The Impact of Comments on Attitudes and Off-Platform Outcomes}}
\label{sec:survey}

The field experiment provides causal evidence that comment sections influence engagement on the platform. However, it does not allow us to elicit users' beliefs or attitudes. Nor can we directly test the psychological mechanisms, such as emotional reactions driven by ideological congruence between users and content.

To address these limitations, we conduct a complementary survey experiment. This design preserves key features of the organic online environment, including real social media posts, authentic comments collected from Facebook users, and behavioral incentives, while allowing us to measure beliefs, attitudes, and incentivized behaviors at the individual level. By embedding experimentally manipulated comment sections inside a controlled survey setting, we isolate how the presence and stance of comments affect attention, perceptions, attitudes, and incentivized donations.

Comments can affect attitudes and behavior through several channels. Our framework in Section~\ref{sec:conceptual_framework} predicts that, because comments shift attention away from the post, opposing comments move the audience's attitudes toward the comments and away from the post, while supportive comments, which restate the post, have little effect (Prediction~\ref{pred:attitudes}). The framework is reduced form, and attitudes and behavior could respond through several other channels. Comments could persuade, if users update on the arguments or information they carry, or backfire, if opposing views trigger reactance \citep{bail2018exposure}. They could shift perceived norms, if users read the comment section as evidence of what others believe \citep{bursztyn2020misperceived}, or signal the post's credibility \citep{muchnik2013social}. They may also have little effect, since political and social attitudes are difficult to change \citep{haaland2023beliefs}. Whether comments affect attitudes and behavior in practice is therefore an empirical question.

\subsection{Design}

\subsubsection{Recruitment and Sample}

We recruited approximately 5,000 participants from Prolific. Eligibility criteria required participants to be U.S. residents aged 18–64, to have completed at least 20 previous Prolific submissions, and to maintain an approval rate of at least 95 percent. Pilot participants were excluded. To enable the study of heterogeneous effects, recruitment was stratified to achieve approximate balance across self-identified political ideology (conservative, moderate, liberal) and gender.\footnote{Participants were informed that the survey would take approximately 10 minutes and that they could earn a bonus of up to \$1 based on performance in incentivized belief elicitation, in addition to being entered into a \$100 lottery.} From an initial sample of \attritionDropInvalidConditionsPREATTRITION{} eligible respondents, we excluded those who failed the attention check, leaving \attritionDropAttentionFail{} participants. Our final sample consists of the \attritionFinalN{} respondents who completed the survey. 

Appendix Table~\ref{tab:survey_sample_vs_us} compares the composition of the analysis sample with the U.S. adult population. Our sample is gender-balanced but more educated, and includes more Republicans and Democrats and fewer independents than the U.S. average. As shown in Appendix Table~\ref{tab:survey_response_rates}, almost everyone who was randomized completed the survey (\responseRateControl{} in the control arm, \responseRateSupport{} under supportive comments, and \responseRateOppose{} under opposing comments), and completion rates are balanced across experimental arms ($p=\responseRateFtestP{}$).\footnote{Including respondents who failed the attention check does not qualitatively change the results.} The arms differ somewhat in size (\balanceNControl{}, \balanceNSupport{}, and \balanceNOppose{} respondents in the control, supportive, and opposing arms). These differences are present at assignment, and baseline characteristics are balanced across arms in the analysis sample as shown in Appendix Table~\ref{tab:survey_balance}.

\subsubsection{Survey Structure}
The survey proceeded in four stages. Participants first completed baseline measures capturing political ideology, racial attitudes, beliefs about others' views, social media usage, and demographics. Next, the visibility and stance of the comment section were randomized at the participant level. Participants were then shown three social media posts promoting racial justice, mirroring those used in our Facebook campaigns, with the comment section displayed according to their assigned condition. Finally, participants completed post-exposure measures capturing attention, beliefs, attitudes, behavior, emotional responses, and open-ended questions. To minimize experimenter demand effects and preserve ecological validity, participants were instructed to react to each post as if they encountered it naturally on their Facebook feed.

\subsubsection{Treatment Conditions}

Participants were randomly assigned with equal probability to one of three conditions. In the \textit{No Comments} control condition, posts were displayed without any visible comment section. In the \textit{Supportive Comments} condition, each post displayed two supportive comments. In the \textit{Opposing Comments} condition, each post displayed two opposing comments.\footnote{As pre-specified in the Pre-Analysis Plan, we omitted the \textit{Mixed} condition from the survey experiment to increase statistical power.}

The posts themselves were identical across conditions. The comments were real comments collected from Facebook users during the comment generation campaign described in Section~\ref{sec:generation}. To protect user privacy, names and profile images were replaced with neutral placeholders. In the comment conditions, we displayed two comments of the same stance, one using a male name and one using a female name, in order to avoid confounding stance with perceived commenter gender. Within a treatment condition, the stance of comments was consistent across all posts shown to a participant. Thus, the only dimension varying across individuals was the presence and stance of the comment section. Appendix Figure \ref{fig:stimuli_survey} provides an example. Each participant viewed three posts in random order, one each on environmental justice, criminal justice and police reform, and education, all under the same condition.

\subsection{Outcomes}

We measure attention, beliefs, attitudes, behavior, and emotional responses.\footnote{For survey instruments, see Appendix \ref{app:SurveyInstrument}.} When multiple measures capture a common construct, we aggregate them into standardized indices using inverse-covariance weighting following \citet{anderson2008multiple}. All component variables are re-oriented so that higher values reflect greater alignment with the organization or post. Results for pre-specified secondary outcomes are reported in Appendix \ref{app:SecOutcomes} for completeness.

\paragraph{Time Spent}

We measure the total time each participant spent viewing the three posts,  using both time (in seconds) and $\log(1 + \text{time})$ to reduce the influence of outliers.

\paragraph{Beliefs about Others and Commenters}

We measure beliefs about others using an incentivized question in which participants estimated the share of participants in the study who agreed with a conservative statement on racial inequality.  Responses were incentivized using a quadratic scoring rule, with potential earnings of up to \$1 based on accuracy relative to the share observed in the study. This measure captures whether comments shift perceptions of others' views, which we interpret as a proxy for perceived social norms. 

\paragraph{Attitudes toward Racial Issues and the Organization}
We construct a Post-Exposure Attitudes Index aggregating measures of attitudes toward the organization, perceived importance of the racial justice issues covered in the posts, willingness to discuss political issues separately with progressives and with conservatives, and attitudes toward Black Americans. All components are re-oriented so that higher values reflect greater alignment with the organization and are combined using inverse-covariance weighting.

\paragraph{Behavioral Outcomes}

We measure two costly behavioral outcomes. Participants were given the opportunity to subscribe to the organization's mailing list by voluntarily providing their email address. Separately, they were enrolled in a lottery and asked whether they would donate any winnings to Color of Change, and if so, how much. If selected as winners, the chosen amount was deducted from their prize and transferred directly to the organization. We analyze both the extensive margin, defined as an indicator for making a positive donation, and the intensive margin, defined as the committed donation amount.

\paragraph{Curiosity, Reflection, Anger, and Annoyance}

As a measure of curiosity and reflection, we construct an index combining participants' reported interest in seeing additional comments and their assessments of how thought-provoking they found the post and the comments. Because perception of comments applies only to participants in the treatment conditions, we construct this index for these two groups and compare them. For these participants, we also construct an index combining Likert-scale measures of anger and annoyance triggered by the comments.

\subsection{Empirical Strategy}

Let $Y_i$ denote an outcome for participant $i$. We estimate:

\begin{equation}
Y_i = \alpha + \beta_1 \mathit{Supportive}_i + \beta_2 \mathit{Opposing}_i + \tau' X_i + \varepsilon_i,
\label{eq:main_survey}
\end{equation}
where $Y_i$ is the outcome for respondent $i$, $\alpha$ is a constant, and $\mathit{Supportive}_i$ and $\mathit{Opposing}_i$ are indicators for assignment to the respective comment conditions, and the omitted category is No Comments. The vector $X_i$ contains pre-specified covariates: baseline racial attitudes, political ideology, beliefs about others' views, and demographics. Standard errors are robust to heteroskedasticity. 

The coefficients $\beta_1$ and $\beta_2$ identify the causal effect of exposure to supportive and opposing comments, respectively, relative to no comments. The contrast $\beta_2 - \beta_1$ captures the differential effect of opposing versus supportive comments. 

\subsubsection{Descriptive Statistics and Balance Checks}

Appendix Figure \ref{fig:social_read_comments} (Panel a) reports the distribution of responses to the question of how often the respondent reads or checks comments on social media. Approximately \surveyReadCommentsSometimesPct{} percent of survey participants report that they sometimes, often, or very often read comments on social media, and more than 50 percent report doing so often or very often. These patterns are even more pronounced  among respondents who report higher social media use (Panel b). This evidence confirms that comment sections are a salient feature of online content consumption and motivates our focus on their effects.

Appendix Table~\ref{tab:survey_balance} reports baseline characteristics by experimental condition and balance tests. Across all observable covariates, the treatment groups are well-balanced relative to the control group. Differences in means are small in magnitude, and the vast majority of pairwise t-tests fail to reject equality at conventional significance levels, with only a few imbalances in ethnicity, race, and beliefs about others' views. Importantly, joint orthogonality F-tests fail to reject equality across all pairwise comparisons, confirming that treatment assignment is orthogonal to observed baseline characteristics. To account for these marginal imbalances and improve precision, as specified in the pre-analysis plan, we include in the main specifications baseline controls for age, gender, education, race, ethnicity, party affiliation, political views, racial attitudes, and beliefs about others’ views. The results are qualitatively unchanged across alternative control specifications.

\subsection{Results}\label{sec:SurveyResults}

\paragraph{Time Spent}
As shown in Appendix Figure~\ref{fig:survey_attention} and Appendix Table~\ref{tab:survey_arm_engagement}, across the three posts shown to  respondents, both supportive and opposing comments significantly increase viewing time by approximately \surveyTimeSpentMinSec{}--\surveyTimeSpentMaxSec{} seconds, a \surveyTimeSpentMinPct{}--\surveyTimeSpentMaxPct{} percent increase relative to the No Comments condition. 
These results indicate that people read comments, and the presence of a comment section makes users spend more time on the post regardless of stance. The results are similar when using log viewing time (Appendix Figure \ref{fig:survey_attention_log}). 

\paragraph{Attitudes}
Figure~\ref{fig:survey_attitudes} shows that opposing comments shift attitudes in a less progressive direction relative to the control condition by approximately \surveyAttitudeOpposeSD{} standard deviations, in line with Prediction~\ref{pred:attitudes}. The effects are concentrated in measures of NGO favorability and the perceived importance of education equity and environmental justice in the context of racial issues, although the coefficient  for the latter is only marginally significant. Opposing comments also significantly reduce the willingness to discuss political issues with conservatives.\footnote{The overall effect remains negative and statistically significant when we exclude the cross-partisan interaction measure from the attitude index. This pre-specified measure can be interpreted as capturing a reduction in affective polarization, though it maps less directly onto progressiveness of attitudes than the other components of the index.} In contrast, supportive comments have comparatively small and statistically insignificant effects, and the two arms differ significantly ($p=\tabArmCEightaOnePostExposureAttitudesPairwiseP{}$). Thus, exposure to opposing comments shifts attitudes away from the organization’s position, with potential implications for support of both the organization and the cause it represents. 

\paragraph{Off-platform Behavior}
As shown in Figure~\ref{fig:survey_behavior}, opposing comments significantly reduce the likelihood of making an incentivized donation relative to the control condition by \surveyDonationOpposePP{} percentage points, corresponding to a \surveyDonationOpposePctDecline{} percent decline. They also reduce the amount donated by \$\surveyDonationAmountUSD{}, or about \surveyDonationAmountPct{} percent relative to the control group (Appendix Figure~\ref{fig:donations}).
Supportive comments have no statistically significant effect. 
Neither treatment meaningfully affects newsletter sign-up. 
Thus, while opposing comments increase attention, they reduce financial support to the organization. Appendix Table~\ref{tab:survey_arm_main} reports the corresponding regression estimates for the attitude index and the two behavioral outcomes.

\begin{figure}[t]
    \centering
        \caption{Post-Exposure Racial Attitudes}
    \label{fig:survey_attitudes}
    \includegraphics[width=0.8\textwidth]{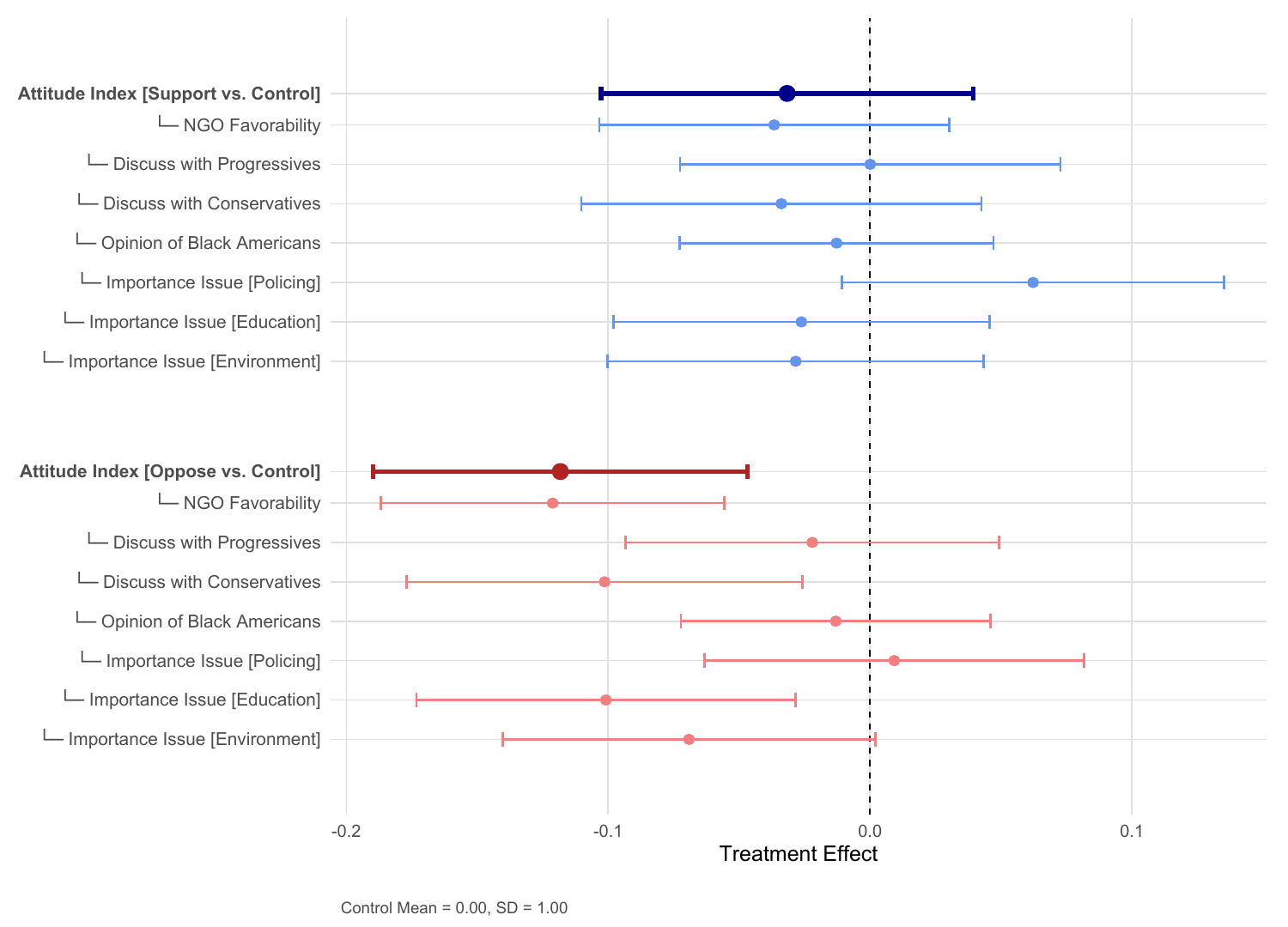}
\begin{minipage}{\textwidth}
\scriptsize
\noindent \textit{Notes:} This figure reports treatment effects of supportive and opposing comments, relative to the control, on post-exposure racial attitudes in the survey experiment. The main outcome is an attitude index, with higher values indicating greater alignment with the organization’s position. The figure also reports treatment effects on the index components, including NGO favorability, willingness to discuss with progressives and conservatives, opinions of Black or African Americans, and the perceived importance of policing, education, and environmental issues. Outcomes are standardized to mean zero and standard deviation one in the control group. Horizontal lines represent 95\% confidence intervals.
\end{minipage}

\end{figure}

\begin{figure}[t]
    \centering
    \caption{Donations and Newsletter Sign-Up (Yes/No)}
    \label{fig:survey_behavior}
    \includegraphics[width=0.8\textwidth]{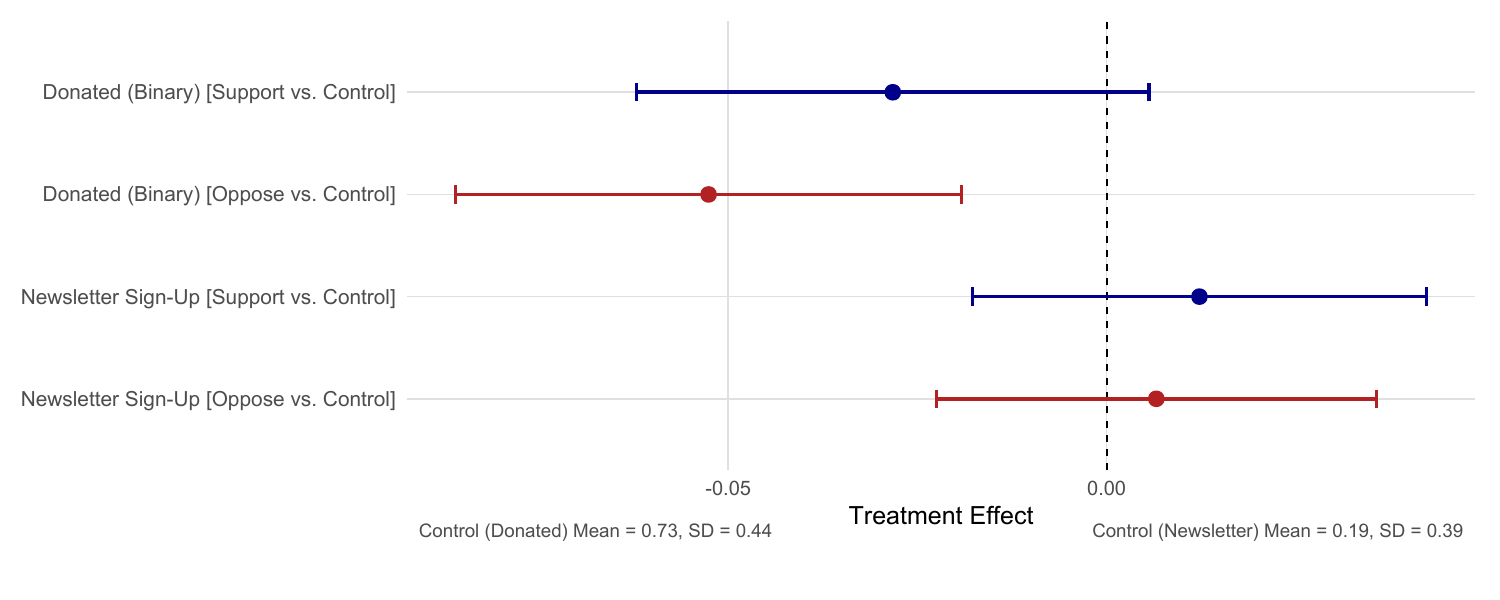}
\begin{minipage}{\textwidth}
\scriptsize
\noindent \textit{Notes:} This figure reports treatment effects of supportive and opposing comments, relative to the control, on off-platform behavioral outcomes in the survey experiment. Outcomes include a binary indicator for newsletter sign-up and a binary indicator for making a positive donation in the incentivized donation task. The control-group mean is \surveyDonationControlMean{} for donation and \surveyNewsletterControlMean{} for newsletter sign-up. Horizontal lines represent 95\% confidence intervals.
\end{minipage}

\end{figure}

\paragraph{Heterogeneity}
We further examine heterogeneity in behavioral and attitudinal outcomes by pre-specified moderators including baseline racial attitudes, party affiliation, gender, and age (Appendix Figures \ref{fig:hte_time_spent}-\ref{fig:hte_newsletter}). Overall, we find limited evidence of heterogeneous effects across these dimensions. Estimated treatment effects are broadly similar across groups. In particular, opposing comments reduce willingness to donate for both Democrats and Republicans by \surveyDonationDemocratPctDecline{} percent and \surveyDonationRepublicanPctDecline{} percent, respectively, relative to the donation rate in the control group. 

In contrast, there is substantial heterogeneity in emotional responses. Appendix Figure~\ref{fig:survey_emotions} and Appendix Table~\ref{tab:survey_arm_engagement} show that opposing comments generate higher levels of anger and annoyance, particularly when visible comments conflict with respondents' party affiliation. Relative to supportive comments, opposing comments increase reported anger and annoyance by more than one standard deviation among Democrats, have a smaller positive effect among Independents (about \surveyEmotionalIndependentSD{} standard deviations), and reduce negative emotional responses among Republicans by roughly \surveyEmotionalRepublicanSD{} standard deviations.  A closely related pattern appears for curiosity and reflection (Appendix Figure~\ref{fig:cognitive_responses_primary_overall_by_party_affiliation}): relative to supportive comments, opposing comments reduce an index combining curiosity about additional comments and how thought-provoking respondents find the post and its comments by roughly \surveyCognitiveDemocratSD{} standard deviations among Democrats and \surveyCognitiveIndependentSD{} standard deviations among Independents, with no detectable effect among Republicans. These findings indicate that comment stance shapes not only engagement, but also emotional and cognitive reactions, and that identity congruence plays an important role in determining how users respond to visible disagreement.

To provide suggestive evidence on mechanisms behind the attitude and donation effects, we examine beliefs about others by party affiliation. Appendix Figure \ref{fig:survey_social_norms} reports responses to an incentivized belief-elicitation question using a quadratic scoring rule, asking participants what share of survey respondents they believe agreed or strongly agreed with the statement ``Black people could be just as well off as white people if they would only try harder''. The outcome is reverse-coded, so that higher values indicate more progressive perceived norms. Among Democrats, opposing comments reduce the perceived prevalence of progressive views, consistent with social norms as a potential channel. At the same time, the shift is smaller and insignificant among Republicans and absent among Independents, suggesting that norm updating alone cannot fully account for the effects on attitudes and donations. Other channels, such as persuasion or learning about the organization's legitimacy, may also play a role.

\paragraph{Discussion}

The survey evidence reveals a clear tension between engagement and downstream outcomes. Comment sections—especially those featuring opposing remarks—increase engagement but can shift attitudes and reduce financial support, highlighting a trade-off between amplifying engagement and preserving support for the organization and the underlying message. 

The experiment on Facebook shows that comment sections on progressive posts are disproportionately populated by conservative voices, and both their presence and stance shape subsequent on-platform behavior. Opposing comments increase clicks and interactions in the field setting, particularly in conservative areas. However, the survey evidence shows that exposure to opposing comments shifts attitudes in a less progressive direction and reduces incentivized donations.

To quantify this trade-off, we conduct a back-of-the-envelope cost-benefit analysis for fundraising campaigns in Appendix~\ref{sec:cost-benefit}. Under a benchmark calibration that combines the observed increase in click-through rates with the estimated decline in donation propensity, abstracting from changes in delivery efficiency or audience composition, opposing comments can still raise expected donations. However, this result relies on the assumption that the additional users induced to click are no less likely to convert than baseline users. Our evidence suggests that this assumption may be too strong, since opposing comments attract relatively more traffic from less progressive areas, making a deterioration in traffic quality plausible. 
While this exercise should be interpreted with caution given its simplifying assumptions, it highlights that the net value of tolerating opposing comments depends not only on the increase in traffic they generate, but also on the quality of that traffic.

\section{Conclusion}

This paper provides causal evidence that comment sections shape both on-platform engagement and downstream attitudes and behavior. In a large-scale Facebook field experiment conducted in collaboration with a leading racial justice organization, we show that the presence of a comment section on progressive posts increases engagement, and that comment stance affects how users respond to content. Opposing comments, in particular, amplify interactions (e.g., comments and reactions) and link clicks relative to supportive comments. Because comment sections are often populated by the vocal few, these findings imply that the visible opinions of a few users can meaningfully shape the on-platform behavior of many others.

Our complementary survey experiment reveals that higher visible engagement does not necessarily imply greater support. While opposing comments increase attention, they also shift attitudes to be less progressive and reduce incentivized donations to the organization. Exposure to counter-attitudinal comments generates anger and annoyance among identity-incongruent users, underscoring that visible disagreement affects not only engagement behavior but also emotional responses and evaluations of the organization. Together, the evidence points to a tension between short-run on-platform engagement gains and potential off-platform longer-run costs.

These results speak to broader concerns about online discourse. Comment sections can amplify divisive  narratives in ways that may not reflect the broader audience, potentially complicating the ideal of social media as a deliberative public sphere. At the same time, our findings do not imply that removing comment sections is a straightforward solution: comments increase attention and facilitate participation, and in some contexts may stimulate supportive expression.
The challenge lies in how visible discourse is structured and moderated.

For platforms and content producers, moderation involves clear trade-offs. Tolerating contentious or opposing comments can boost engagement and reduce advertising costs, but may undermine brand safety and shift attitudes away from the content producer's objectives. Importantly, platform incentives to maximize engagement may not align with the incentives of firms, nonprofits, or political actors seeking to preserve brand integrity or policy support. Understanding these misalignments is central to current debates over platform governance and content moderation.

A limitation of our study is that our evidence comes from a single organization and a single issue, racial justice. The comment sections in our experiments involve little deliberation and informational content, and those who comment are a small and unrepresentative subset of the audience. This mirrors the concentration of post production on platforms \citep{grinberg2019fake} and is common for news and political issues \citep{kim2025disproportionate}. We show that under these conditions the vocal few can exert outsized influence, even without any prior connection to the audience they reach. In settings where comments are more deliberative or informative, other mechanisms may play a role and the effects may differ.

Our study points to two promising directions for future research. Methodologically, we introduce a scalable experimental pipeline that manipulates comment visibility and stance using platform-native tools while preserving ecological validity through organic user-generated content. One direction is to extend this framework to other domains, such as commercial products or public health campaigns, to better understand how visible online discourse shapes behavior and to inform the design of comment environments that balance engagement with broader social and organizational goals. A second direction is to move beyond one-time exposure and study how repeated exposure, user interactions, and algorithmic ranking jointly shape engagement and attitudes over time.

\singlespacing
\bibliographystyle{aea}
\bibliography{bibliography}

\clearpage

\appendix
\setcounter{page}{1}
\onehalfspacing
\begin{center}
\textbf{\LARGE Online Appendix%
}
\par\end{center}

\bigskip{}

\begin{center}
\LARGE The Influence of the Vocal Few:\\Evidence from Social Media Comments
\par\end{center}

\begin{center}
\Large \emph{Dante Donati and Lena Song}
\par\end{center}

\newpage{}

\clearpage{}

\pagestyle{fancy}
\lhead{Online Appendix} \rhead{The Influence of the Vocal Few; Donati and Song}

\section{Framework Appendix\label{app:framework}}
\setcounter{figure}{0}
\renewcommand{\thefigure}{A\arabic{figure}}
\renewcommand{\theHfigure}{A\arabic{figure}}

This appendix states the setup and results of the framework described in Section~\ref{sec:conceptual_framework}.

\subsection{Setup}
A post is shown either without comments, $c=\emptyset$, or with a section that is supportive or opposing, $c\in\{S,O\}$. Define $x(c)$ as the attitude the section expresses toward the organization and its cause, measured relative to the post's own position, with
\[
x(\emptyset)=x(S)=0,
\qquad
x(O)=-1.
\]
A supportive section repeats the post's position, while an opposing section expresses a different view and therefore provides the greatest contrast.

As in our empirical setting and on many other platforms, we assume that a comment counter is visible next to a post before the comment section is opened, making the post and the presence of comments more prominent. Denote this prominence by $\phi(c)$, measured relative to a post with no comments and common to $S$ and $O$, with $\phi(\emptyset)=0$ and $\phi(S)=\phi(O)=\phi>0$. 

When comments are present, the viewer expects the section to be opposing with probability $\widehat q\in(0,1)$, which depends on her belief about who comments: a viewer who thinks progressives do most of the commenting below a progressive post holds low $\widehat q$.

Applying the three channels through which a stimulus captures attention \citep{bordalo2022salience}, the salience of section $c$ is
\begin{equation}
s(c)=\underbrace{\phi(c)}_{\text{prominence}}
+\underbrace{|x(c)|}_{\text{contrast}}
+\underbrace{\gamma\big(-\log p_c\big)}_{\text{surprise}},
\qquad \gamma\geq0,
\label{eq:salience}
\end{equation}
where $p_c$ is the probability the viewer, having seen the counter but not yet the comments, assigns to the section being of stance $c$. Surprise is strictly decreasing in $p_c$: the less a section was expected, the more salient it is. A counter showing no comments reveals $c=\emptyset$, so $p_\emptyset=1$ and $s(\emptyset)=0$. When the counter shows comments, $p_S=1-\widehat q$ and $p_O=\widehat q$, so
\begin{equation}
\begin{aligned}
s_S&=\phi+0+\gamma\big[-\log(1-\widehat q)\big],\\[4pt]
s_O&=\phi+1+\gamma\big[-\log\widehat q\big],
\end{aligned}
\qquad\text{so}\qquad
s_O-s_S=1+\gamma\log\frac{1-\widehat q}{\widehat q}.
\label{eq:levels}
\end{equation}
An opposing section always has higher contrast, and it is also more surprising if the viewer expected a supportive discussion, $\widehat q<1/2$.

Let $a(s)$ denote the attention the post and its comment section jointly receive, and $w(s)\in[0,1]$ the share of that attention devoted to the comment section, both strictly increasing in $s$.

\paragraph{Engagement.} Because stance is not visible until the post is expanded, the viewer decides in two stages. Let $R$ indicate that she expands the post, which reveals the comment section if there is one, and let $A(c)$ indicate that she takes some further action, such as reacting or clicking a link, when the post is shown with section $c$. Let $G_R$ and $G_A$ be strictly increasing response functions that map the level of attention into the probability of each action. This captures the assumption that greater attention makes a response more likely by increasing the opportunity to process and act on the stimulus. Then

\begin{equation}
\Pr(R=1\mid c)=G_R\big(a(\phi(c))\big),
\qquad
\Pr\big(A(c)=1\mid R\big)=
\begin{cases}
G_A\big(a(\phi(c))\big) & \text{if } R=0,\\[2pt]
G_A\big(a(s(c))\big) & \text{if } R=1.
\end{cases}
\label{eq:engagement}
\end{equation}
Before the viewer expands the post, a post with a positive comment counter is more salient than a post without comments through prominence, which raises the probability of expanding the post, $G_R(a(\phi))>G_R(a(0))$, equally for $S$ and $O$. A viewer who does not open the section can still act in response to the post, in which case the level of attention is $a(\phi(c))$. For a viewer who opens the section, the salience of its content also enters, so attention is $a(s(c))$ and the probability of subsequent engagement is $G_A(a(s(c)))$.

\paragraph{Attitudes.} 

Let $\mu(c)$ denote the change in the viewer's attitude toward the organization and its cause relative to seeing the post alone. The comment section moves it by shifting weight from the post toward the section's own position, in proportion to the attention each receives, so that salience enters as a decision weight distorted toward what draws the eye \citep{bordalo2012salience,bordalo2013salience}: 
\begin{equation}
\mu(c)=\big(1-w(s(c))\big)\cdot 0+w(s(c))\,x(c)=w\big(s(c)\big)\,x(c).
\label{eq:attitude}
\end{equation}
Equation~\eqref{eq:attitude} assumes that comments move attitudes only through how far the position they express departs from the post, so supportive comments, which restate the post, leave attitudes unchanged. If supportive comments instead reinforced the post or shifted perceived social norms, they could also move attitudes, as discussed in Section~\ref{sec:SurveyResults}.

\subsection{The salience trade-off\label{app:proof}}

\begin{lemma}[The salience trade-off]
\label{prop:tradeoff}
Suppose that the viewer expects a supportive discussion, $\widehat q<1/2$, so that $s_O>s_S$. Relative to a supportive section, an opposing section
\begin{enumerate}
\item[(i)] leaves the probability of opening unchanged but, conditional on opening, strictly raises the probability of subsequent engagement, $\Pr\big(A(O)=1\mid R=1\big)=G_A(a(s_O))>G_A(a(s_S))=\Pr\big(A(S)=1\mid R=1\big)$, and
\item[(ii)] strictly lowers the attitude, $\mu(O)=-w(s_O)<0=\mu(S)$, which holds for any $\widehat q$ since $x(S)=0$.
\end{enumerate}
\end{lemma}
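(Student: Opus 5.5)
The plan is to verify each part directly from the primitives of the setup, using only monotonicity of $a$, $w$, $G_R$, $G_A$ and the salience levels in \eqref{eq:levels}. First I will establish that $s_O>s_S$ under $\widehat q<1/2$. From \eqref{eq:levels}, $s_O-s_S=1+\gamma\log\frac{1-\widehat q}{\widehat q}$. When $\widehat q<1/2$ we have $\frac{1-\widehat q}{\widehat q}>1$, so the logarithm is strictly positive. Since $\gamma\ge 0$, the difference is at least $1>0$. (If $\gamma=0$, contrast alone already gives the strict inequality.)

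For part (i), I would split the engagement model \eqref{eq:engagement} into its two stages. For opening, $\Pr(R=1\mid c)=G_R(a(\phi(c)))$, and $\phi(S)=\phi(O)=\phi$ by assumption. The two probabilities therefore coincide, so stance does not affect opening. Conditional on $R=1$, the action probability is $G_A(a(s(c)))$. Because $a$ and $G_A$ are both strictly increasing, their composition $G_A\circ a$ is strictly increasing. Applying it to $s_O>s_S$ yields $G_A(a(s_O))>G_A(a(s_S))$, which is the claimed strict inequality.

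For part (ii), I substitute into \eqref{eq:attitude}. For the supportive section, $\mu(S)=w(s_S)\,x(S)=w(s_S)\cdot 0=0$; this holds regardless of $\widehat q$. For the opposing section, $\mu(O)=w(s_O)\cdot(-1)=-w(s_O)$. It remains to show $w(s_O)>0$. We know $w$ maps into $[0,1]$ and is strictly increasing. At $c=\emptyset$ salience is $s(\emptyset)=0$, so we can compare against $w(0)\ge 0$. For every $\widehat q\in(0,1)$, $s_O=\phi+1+\gamma(-\log\widehat q)>0$, because $\phi>0$ and $-\log\widehat q>0$. Strict monotonicity then gives $w(s_O)>w(0)\ge 0$, so $\mu(O)<0=\mu(S)$ for any $\widehat q$.

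The only delicate step is this strict positivity of $w(s_O)$. The setup states only $w\in[0,1]$, so $w(s_O)>0$ is not immediate. I would resolve it exactly as above: use that $w$ is defined on a domain containing $0=s(\emptyset)$, and then apply strict monotonicity together with $s_O>0$. Everything else reduces to composing strictly increasing functions.
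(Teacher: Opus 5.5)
Your proof is correct and follows the paper's argument. You sign $s_O-s_S$ from \eqref{eq:levels}, use equal prominence together with strict monotonicity of $G_A\circ a$ for (i), and substitute directly into \eqref{eq:attitude} for (ii). Your extra step, showing $w(s_O)>w(0)\ge 0$ from $s_O>0$, supplies the strict positivity of $w(s_O)$ that the paper simply asserts when it writes $-w(s_O)<0$, and it is what makes (ii) valid for every $\widehat q$, not only when $s_O>s_S$.
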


\begin{proof}
Write $g(\widehat q)$ for $s_O-s_S$ in equation \eqref{eq:levels}. If $\gamma=0$ then $g\equiv1>0$. If $\gamma>0$ and $\widehat q<1/2$, then $(1-\widehat q)/\widehat q>1$, so the logarithm is positive and $g(\widehat q)>1>0$. In either case $s_O>s_S$.

For (i), the opening probability is $G_R(a(\phi))$ under both stances because the comment content is not yet visible. If $R=0$, the probability of a subsequent action is $G_A(a(\phi))$ under both stances. If $R=1$, strict monotonicity of $a$ and $G_A$, together with $s_O>s_S$, imply $G_A(a(s_O))>G_A(a(s_S))$. Thus, opposing comments raise the unconditional probability of subsequent engagement whenever the section is opened with positive probability. The difference is $G_R(a(\phi))$$\big[G_A(a(s_O))-G_A(a(s_S))\big]>0$.

For (ii), equation \eqref{eq:attitude} gives $\mu(c)=w(s(c))\,x(c)$. Since $x(S)=0$, $\mu(S)=0$ for any $w$. Since $x(O)=-1$, $\mu(O)=-w(s_O)<0$.
\end{proof}

By \eqref{eq:engagement}, opening depends on prominence alone, so expansions rise with the presence of comments but do not differ across stances. Once the comment section is open, greater salience raises the level of attention and hence the probability of a subsequent action. By \eqref{eq:attitude}, the same rise in salience shifts the allocation of attention toward a section carrying $x(O)=-1$, pulling the evaluation away from the post, while reweighting a supportive section with $x(S)=0$ leaves the evaluation unchanged. The stance that generates the most engagement is therefore the one that moves attitudes against the post.

The condition $\widehat q<1/2$ is sufficient but not necessary. If $\gamma=0$, then $s_O>s_S$ for every $\widehat q$. If $\gamma>0$, then by equation \eqref{eq:levels}, $s_O>s_S$ holds if and only if
\[
\widehat q<\overline q(\gamma)\equiv\frac{1}{1+e^{-1/\gamma}}>\tfrac12 ,
\]
so the ordering reverses only once surprise runs against opposing content strongly enough to offset the contrast effect.

Appendix Figure~\ref{fig:expected_commenter_ideology} reports a related belief measured in the survey experiment. Among respondents in the control group, \surveyExpectCommentersProgressive{}\% expect progressives to be the more likely commenters and only \surveyExpectCommentersConservative{}\% expect conservatives. Most respondents therefore do not expect conservatives to be more likely to comment, even though users in conservative areas comment at higher rates as shown in Section~\ref{sec:generation}, consistent with them not anticipating an opposing comment section.

\begin{figure}[H]
    \centering
    \caption{Expected Ideology of Commenters in the Control Condition}
    \label{fig:expected_commenter_ideology}
    \includegraphics[width=0.8\textwidth]{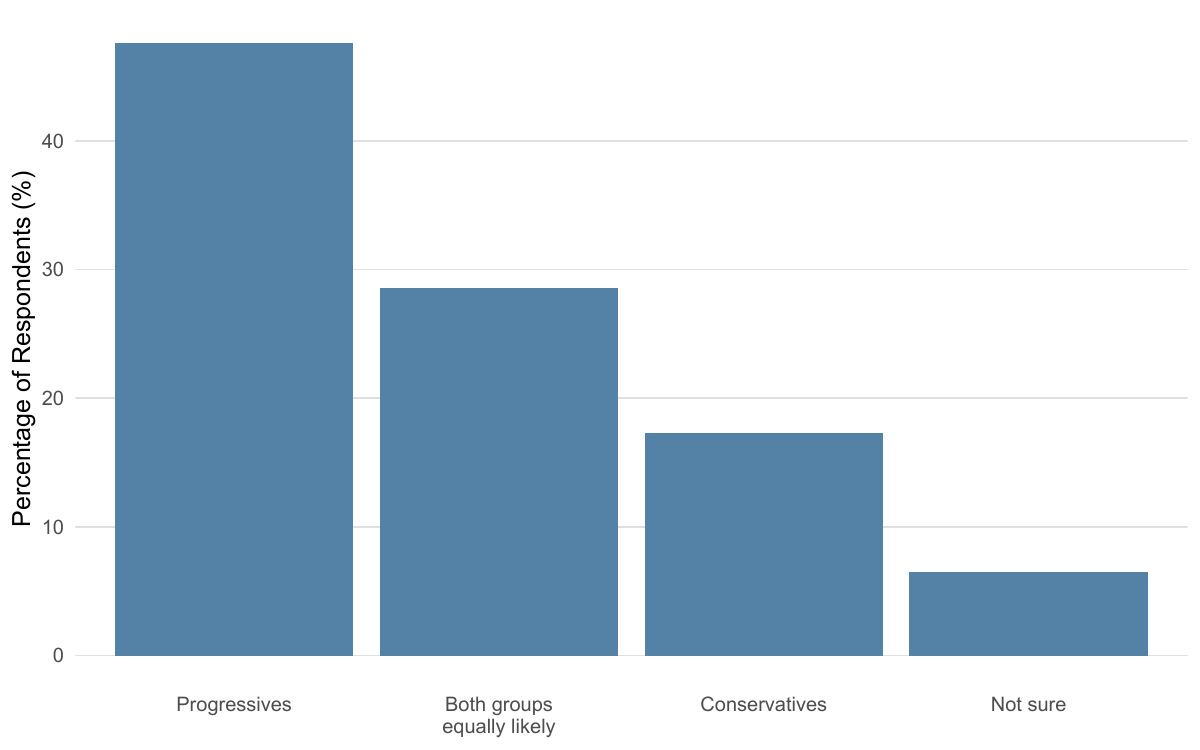}
\begin{minipage}{\textwidth}
\scriptsize
\noindent \textit{Notes:} This figure reports beliefs about who comments among the \balanceNControl{} survey respondents assigned to the control condition, who saw the post without a comment section. Respondents were asked which group they believe is more likely to comment on the post. 
\end{minipage}
\end{figure}

\subsection{Discussion\label{app:discussion}}
Equation \eqref{eq:salience} should be interpreted as a reduced-form application of salience. In \citet{bordalo2012salience} and \citet{bordalo2013salience}, salience is defined relative to payoffs or attributes in a choice set. Here the reference is the post, which is what surrounds the comments on the screen, and the surprise term $\gamma(-\log p_c)$ is an addition motivated by the broader account in \citet{bordalo2022salience} of attention being drawn to what is unexpected.

The framework abstracts from viewer ideology. Salience is defined relative to the post, whereas identity congruence is defined relative to the viewer. A comment opposing a progressive post contrasts with the post for every viewer, but aligns with a conservative viewer's own position and conflicts with a progressive viewer's. The same comment can therefore invite approval from one viewer and anger from another. An extension of the model is to let the payoff of acting depend on identity congruence. This may explain why differences across comment stances are most pronounced in conservative areas: for a conservative viewer, salience and identity congruence both push toward engaging with an opposing section, whereas for a progressive viewer they may move in opposite directions.

\section{Pre-registration and Pre-analysis Plan}\label{app:deviations}

Both Facebook and Prolific experiments were pre-registered in the American Economic Association Registry
for randomized controlled trials (under AEARCTR-0013812 and
AEARCTR-0017850), before the
corresponding experiment was fielded. We follow the two plans in  the experimental designs, audience construction and stratification,
randomization, treatment conditions, post and comment selection, estimating
equations, covariate sets, and index construction. The field experiment plan also
pre-specified heterogeneity analyses by political ideology, racial composition, age, and gender. We report
heterogeneity by the political ideology of the area; the analyses by racial composition, age, and gender
are available upon request. 

Two outcome-level deviations apply to the
field experiment. First, we add post expansions, which was not pre-specified, as
our proxy for seeing the comment section.  Second, we do not report treatment effects
on newsletter sign-ups recorded by the organization's website as the campaign
generated too few sign-ups to support inference on this outcome, so we instead
report unique link clicks. We also compare conditions using the linear
probability model in equation~\eqref{eq:main}, together with model-free
comparisons of reach-weighted means and heteroskedasticity-robust $t$-tests
(Appendix Table~\ref{tab:outcomes_model_free}), rather than the $\chi^{2}$ tests of
proportions named in the plan. Finally, the realized sample is also smaller than anticipated: the plan targeted 2,500,000 users
reached, whereas the campaign reached \feTwoReachN{} across \feTwoStrataN{} ZIP code sets.
Reach per ZIP code set was in line with the plan, so the shortfall reflects the number of
sets fielded rather than under-delivery within them. 

For the survey experiment, the realized analysis sample of
\attritionFinalN{} respondents is smaller than the 4,800 anticipated in the plan,
primarily because fewer conservatives were available on Prolific than the ideology quotas required. Descriptive analyses specified in the plan, including those based on the open-response questions, are available upon request.

\section{Background Appendix\label{app:background}}

\setcounter{figure}{0}
\renewcommand{\thefigure}{C\arabic{figure}}
\renewcommand{\theHfigure}{C\arabic{figure}}

\setcounter{table}{0}
\renewcommand{\thetable}{C\arabic{table}}
\renewcommand{\theHtable}{C\arabic{table}}

\subsection{List of Issues and Description}
Here are the racial justice issues and their descriptions presented to content designers to guide the creation of posts for each issue: 

\begin{itemize}
\item Voter Suppression:
Black communities face deliberate barriers like restricted polling access, strict ID laws, and voter roll purges, aimed at limiting their voting power. Misinformation campaigns also target Black voters to reduce turnout, undermining fair representation. Breaking down these barriers is crucial to ensure Black voices are heard in democratic processes.

 \item Environmental Justice:
Black communities often live near pollution sources like factories and highways, leading to higher rates of health issues such as asthma. These neighborhoods are frequently overlooked in clean-up efforts and lack green spaces. Environmental justice aims to provide Black communities with clean air, safe water, and healthy environments.

 \item Criminal Justice and Police Reform:
Black communities experience disproportionate police violence, profiling, and harsher sentencing. This systemic bias erodes trust in law enforcement and perpetuates disadvantages. Police reform is essential for fair treatment, accountability, and ensuring Black communities feel protected, not targeted, by the justice system.

 \item Education Reform: 
Black students often attend underfunded schools with fewer resources, larger classes, and limited access to advanced courses. These disparities create achievement gaps and limit future opportunities. Education reform seeks equitable funding and support to provide Black students with the quality education they deserve.

 \item Technology Fairness:
Black communities face systemic biases in technology, from algorithmic discrimination in hiring and lending to facial recognition tools that disproportionately misidentify Black individuals. These inequities perpetuate existing racial disparities and limit opportunities. Ensuring technology fairness involves designing inclusive systems, addressing bias in algorithms, and creating tools that serve all communities equitably.

\end{itemize}

\begin{figure}[H]
\caption{Ad Banners and Headlines}
        \label{fig:banners}
        \centering
    \begin{subfigure}[b]{0.48\textwidth}
        \centering
    \includegraphics[width=0.44\textwidth]{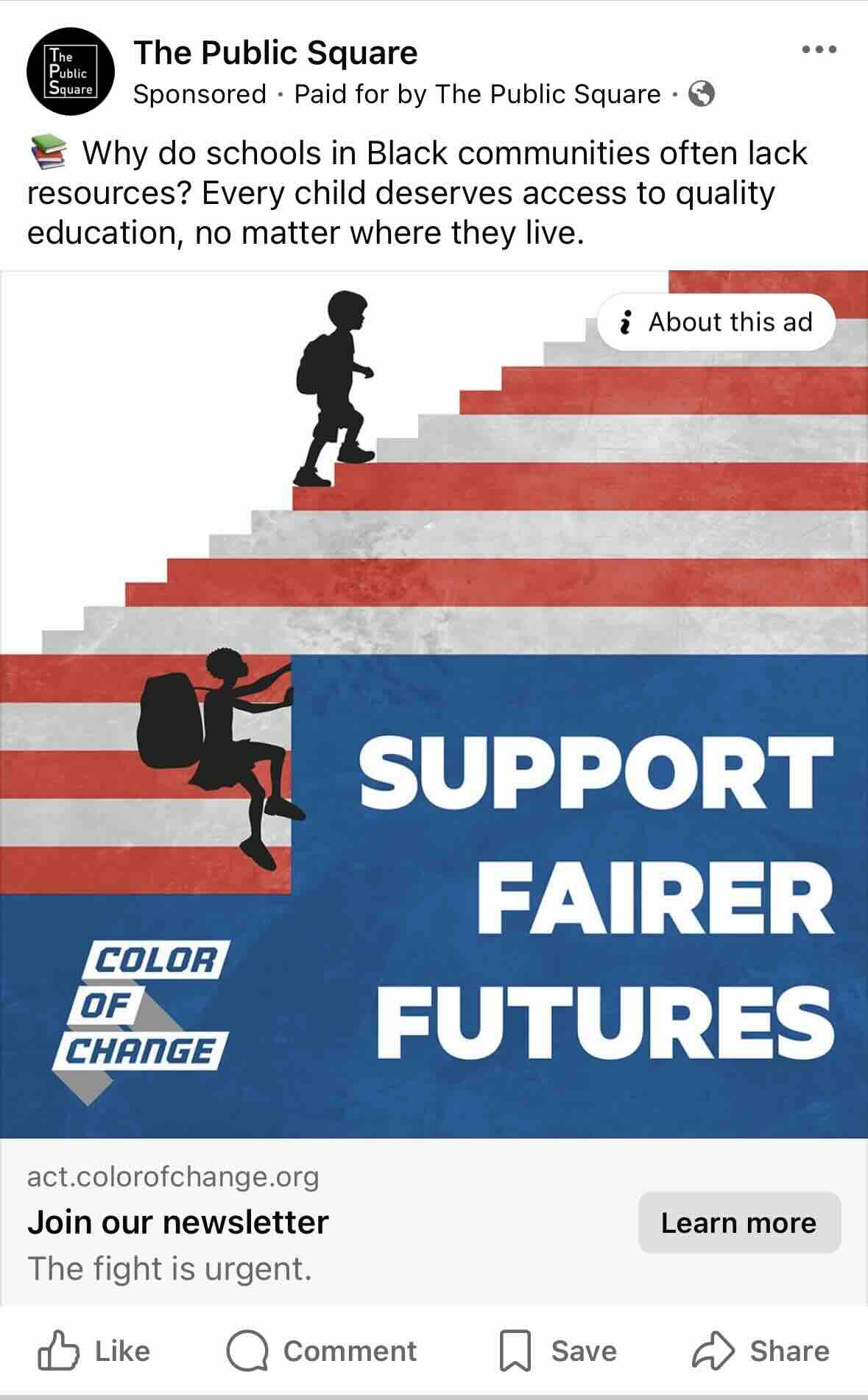} \,
    \includegraphics[width=0.43\textwidth]{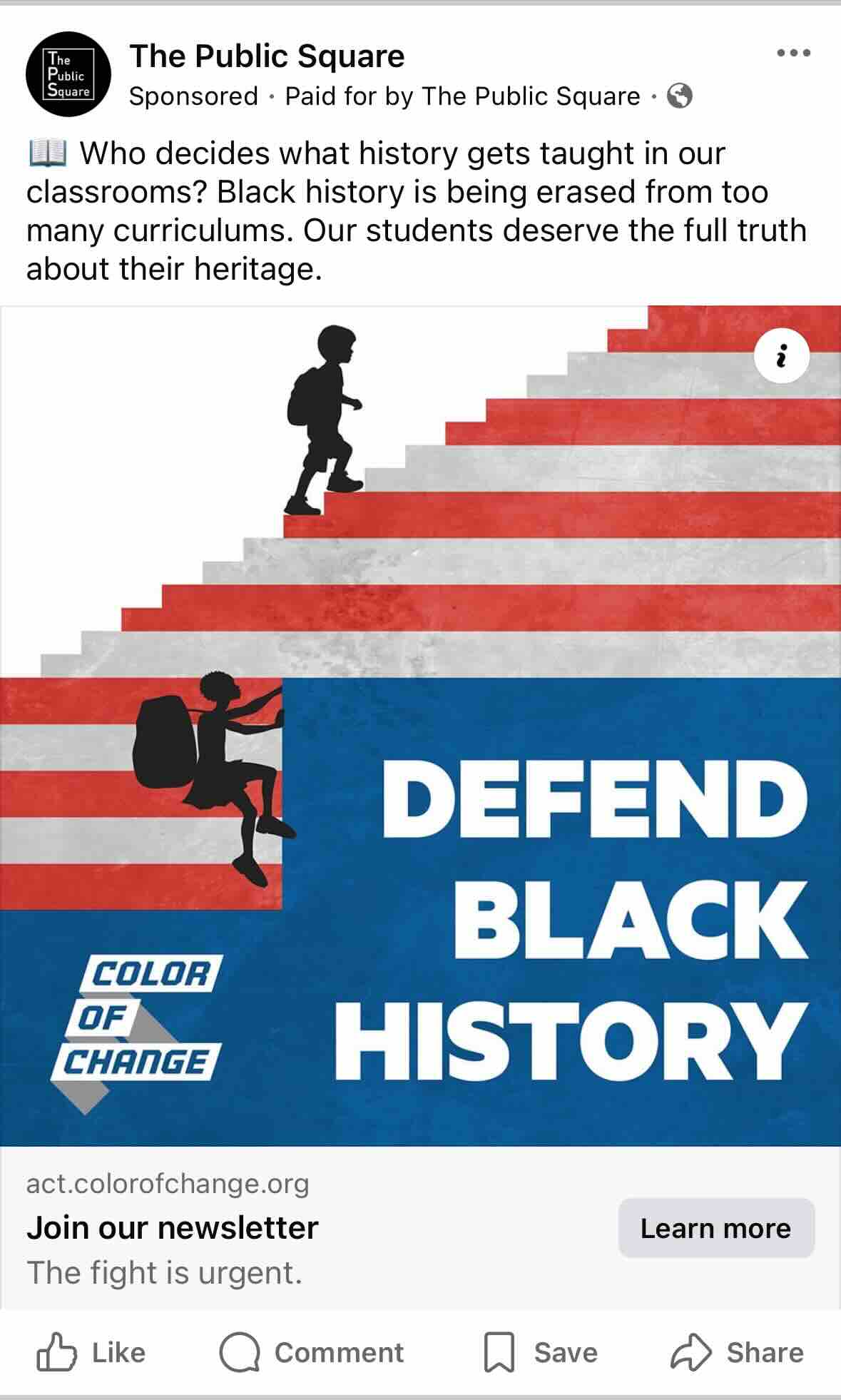}
        \caption{Education}
    \end{subfigure}
    \hfill
    \begin{subfigure}[b]{0.48\textwidth}
        \centering
    \includegraphics[width=0.44\textwidth]{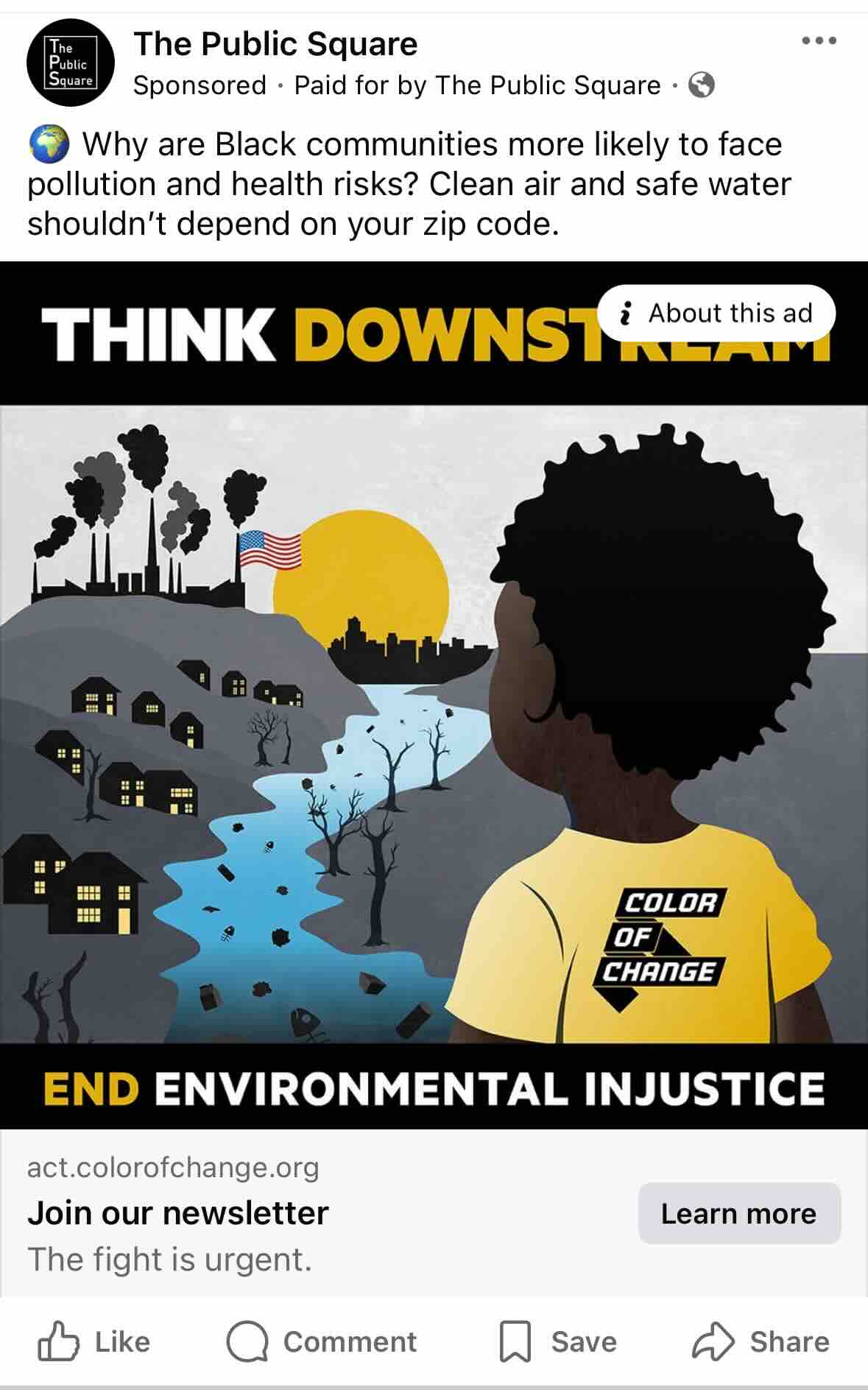} \,
    \includegraphics[width=0.44\textwidth]{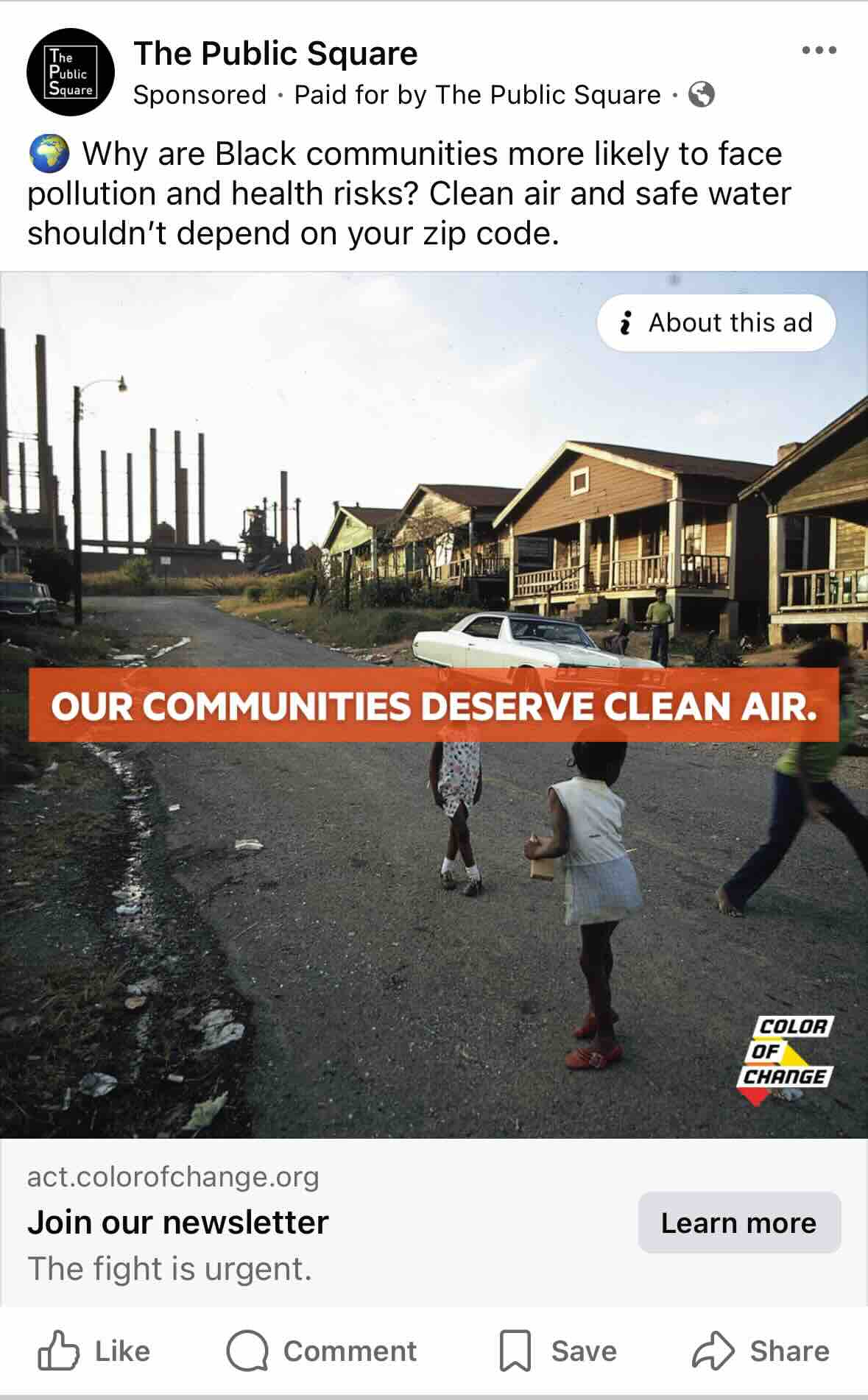}
        \caption{Environment}
     \end{subfigure} \vspace*{0.4cm}
    
    \begin{subfigure}[b]{0.48\textwidth}
        \centering
    \includegraphics[width=0.44\textwidth]{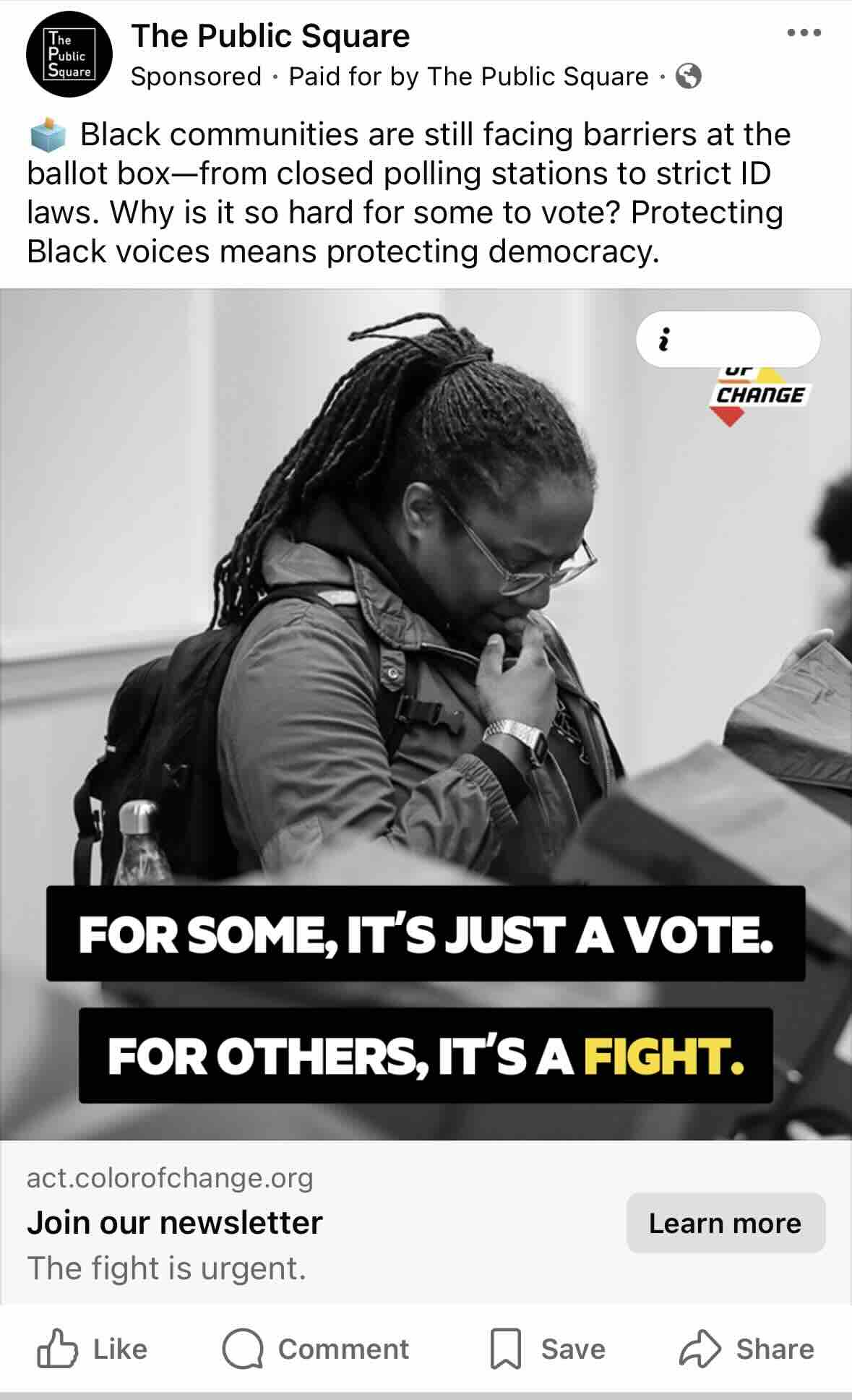} \, 
    \includegraphics[width=0.44\textwidth]{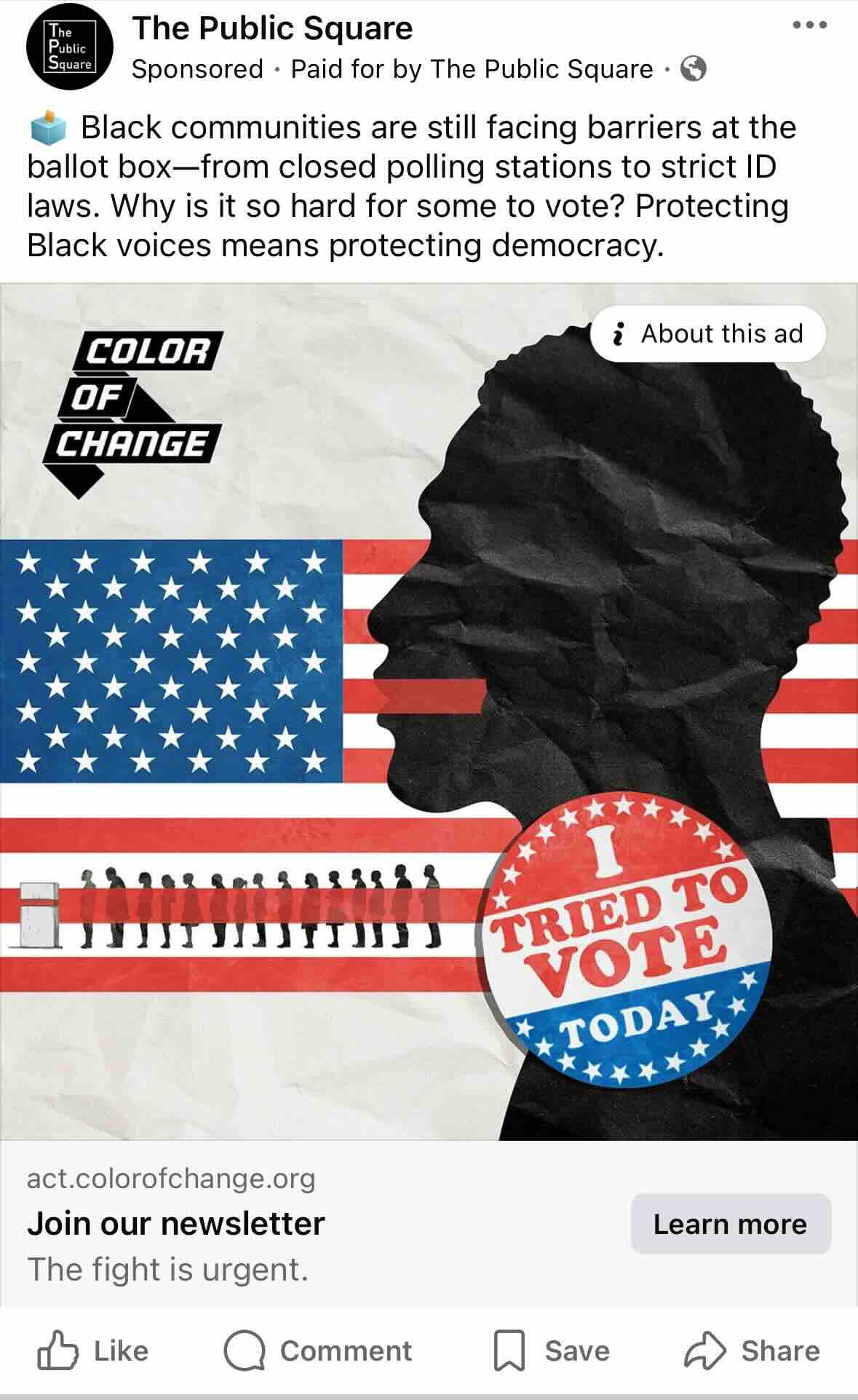}
        \caption{Voting}
    \end{subfigure}
     \hfill  
    \begin{subfigure}[b]{0.48\textwidth}
        \centering
    \includegraphics[width=0.44\textwidth]{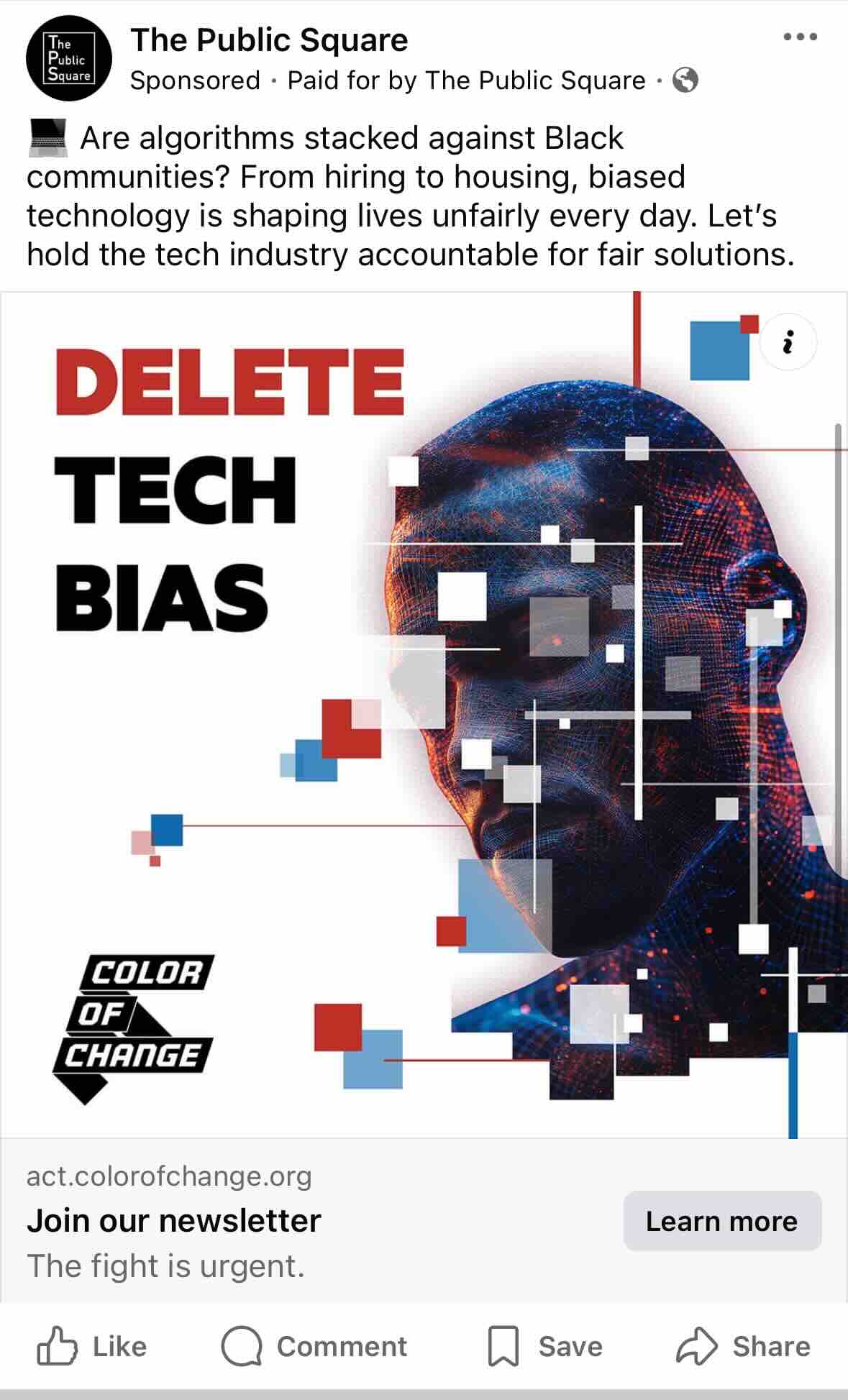} \, 
    \includegraphics[width=0.44\textwidth]{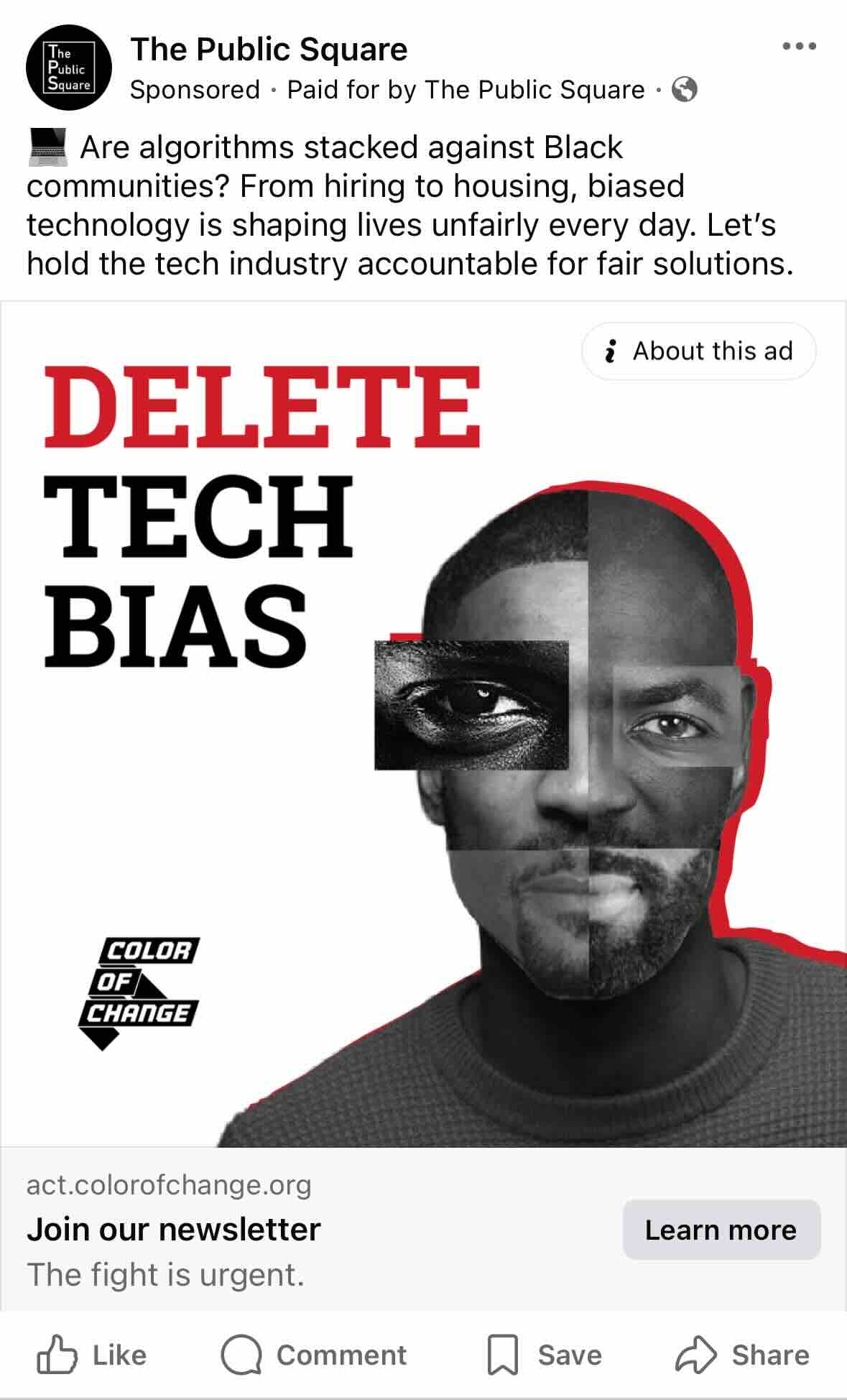}
        \caption{Technology}
    \end{subfigure} \vspace*{0.4cm}
    
    \begin{subfigure}[b]{0.48\textwidth}
        \centering
    \includegraphics[width=0.44\textwidth]{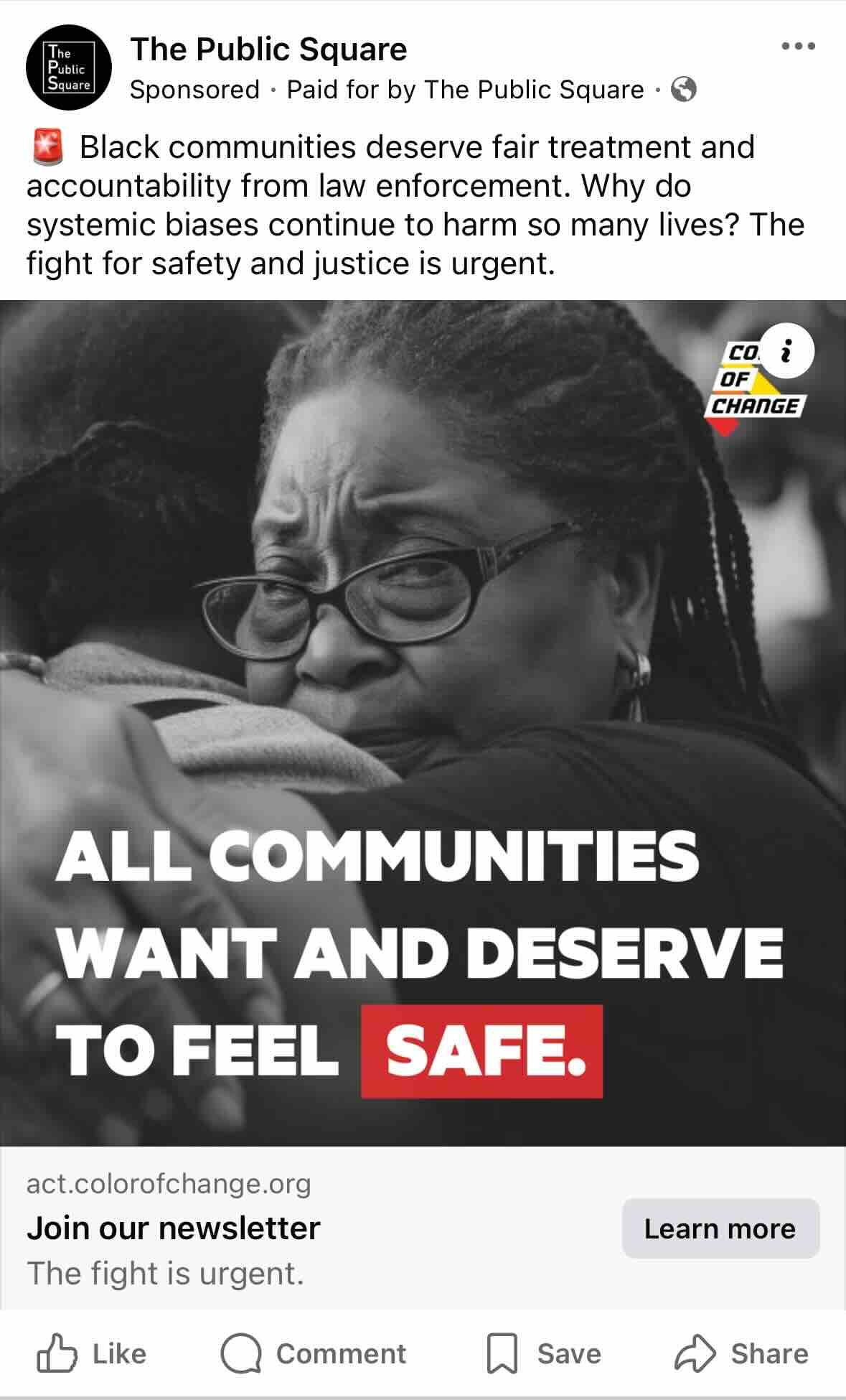} \, 
    \includegraphics[width=0.44\textwidth]{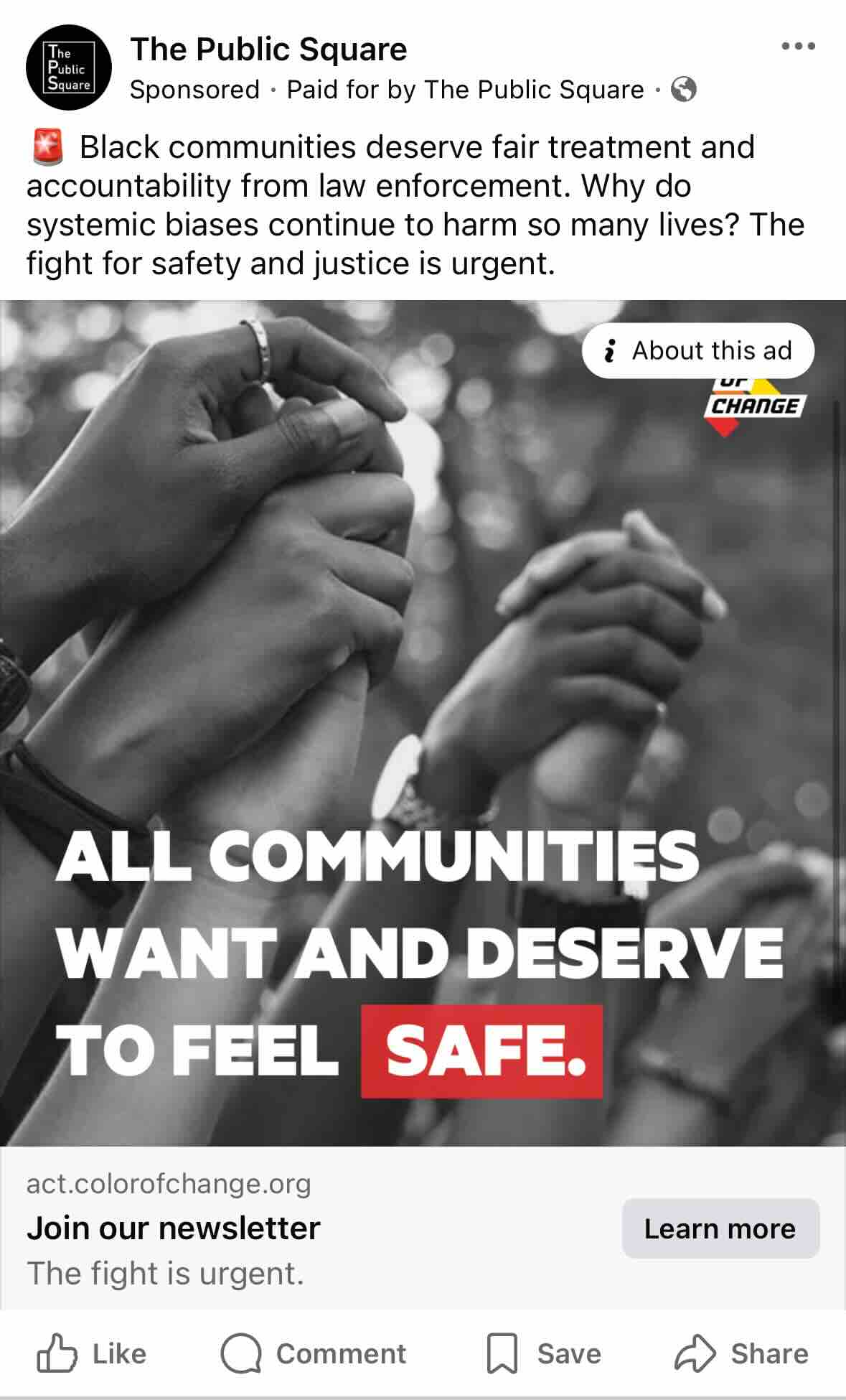}
        \caption{Police}
    \end{subfigure}
\begin{minipage}{\textwidth}
\scriptsize
\noindent \textit{Notes:} This figure displays the ad banners and headlines used in the pre-tests.
\end{minipage}

\end{figure}

\subsection{Pre-tests}

We conducted several pre-tests to systematically select the posts used in the study and to refine the campaign parameters. In Pre-test A, we used Facebook's A/B testing tool across all 10 banners to identify, within each issue, which posts were most likely to generate a high number of clicks.\footnote{In Pre-test A, we specified the audience (ZIP codes with a high share of progressive populations), budget (\$50 per banner), duration (1 week), and optimization goal (reach), without imposing a frequency cap.} 
Table \ref{tab:pretestA} summarizes the click-through rates (CTRs)---the ratio of link clicks over reach---for this test. These vary between 0.12\% and 0.25\%. For each issue, the banners with higher CTR are displayed on the right in Appendix Figure \ref{fig:banners}.

We conducted two additional tests, Pre-tests B and C, where we capped the frequency at one impression per person. Test B was conducted with a large potential audience (approximately 200,000 users per banner), while Test C targeted a smaller potential audience (approximately 6,000 users per banner). These adjustments were made to simulate a campaign that closely resembles the one planned for our main experiment. 

Table \ref{tab:pretestBC} presents the aggregate results for Pre-tests B and C.\footnote{The banners on education were excluded from these tests due to their low performance in Pre-test A.} Notably, the CTRs for Pre-tests B and C are significantly lower than those reported for Pre-test A. This discrepancy arises because Pre-test A did not impose a frequency cap, allowing users to see each banner an average of two times and thereby increasing the likelihood of clicks. By contrast,  Pre-tests B and C adopt a configuration similar to that used in the main experiment, in which we saturate an audience group by imposing a frequency cap of one impression per user. While this approach may result in lower CTRs, it is essential to mitigate potential divergent delivery bias in Facebook A/B tests caused by the ad platform's algorithm \citep{burtch2025characterizing}, as further discussed in Section \ref{sec:experiment_design}.

\begin{table}[H] \centering
  \caption{Results from Pre-test A}
  \label{tab:pretestA}
\begin{tabular}{llrrr}
\toprule
      Issue & Ad Name &  Link Clicks &  Reach &  CTR (\%) \\
\midrule
 technology &  pixels &           15 &   6115 &    0.245 \\
 technology &     man &           13 &   6683 &    0.195 \\
     voting &    lady &           14 &   5777 &    0.242 \\
     voting &    flag &           11 &   5792 &    0.190 \\
     police &    lady &           13 &   6146 &    0.212 \\
     police &   hands &           13 &   6315 &    0.206 \\
environment &     kid &           13 &   6235 &    0.209 \\
environment &  street &           11 &   6290 &    0.175 \\
  education &  future &            9 &   6216 &    0.145 \\
  education & history &            6 &   4884 &    0.123 \\
\bottomrule
\end{tabular}

\vspace{0.3em}

\begin{minipage}{\textwidth}
\scriptsize
\noindent \textit{Notes:} This table reports link clicks, reach, and click-through rates (CTR) for each ad creative tested in Pre-test A, used to select the best-performing banner and headline for each issue prior to the main experiment.
\end{minipage}
\end{table}

\begin{table}[H] \centering
  \caption{Results from Pre-tests B and C}
  \label{tab:pretestBC}
\begin{tabular}{llrrr}
\toprule
       Issue & Ad Name &  Link Clicks &  Reach &  CTR (\%) \\
\midrule
      voting &    lady &           32 &  26794 &    0.119 \\
 environment &     kid &           33 &  28351 &    0.116 \\
      police &   hands &           32 &  27771 &    0.115 \\
  technology &  pixels &           29 &  28208 &    0.103 \\
\bottomrule
\end{tabular}

\vspace{0.3em}

\begin{minipage}{\textwidth}
\scriptsize
\noindent \textit{Notes:} This table reports link clicks, reach, and click-through rates (CTR) for each ad creative tested in Pre-tests B and C. Link clicks and reach are pooled across the two tests.
\end{minipage}
\end{table}

\section{Descriptive Evidence on Engagement: Additional Results}

\setcounter{figure}{0}
\renewcommand{\thefigure}{D\arabic{figure}}
\renewcommand{\theHfigure}{D\arabic{figure}}

\setcounter{table}{0}
\renewcommand{\thetable}{D\arabic{table}}
\renewcommand{\theHtable}{D\arabic{table}}

\begin{figure}[H]
\caption{Comment and Reaction Rates by Valence and Location}
\label{fig:comments_reactions_rates}
\centering
\includegraphics[width=0.65\textwidth]{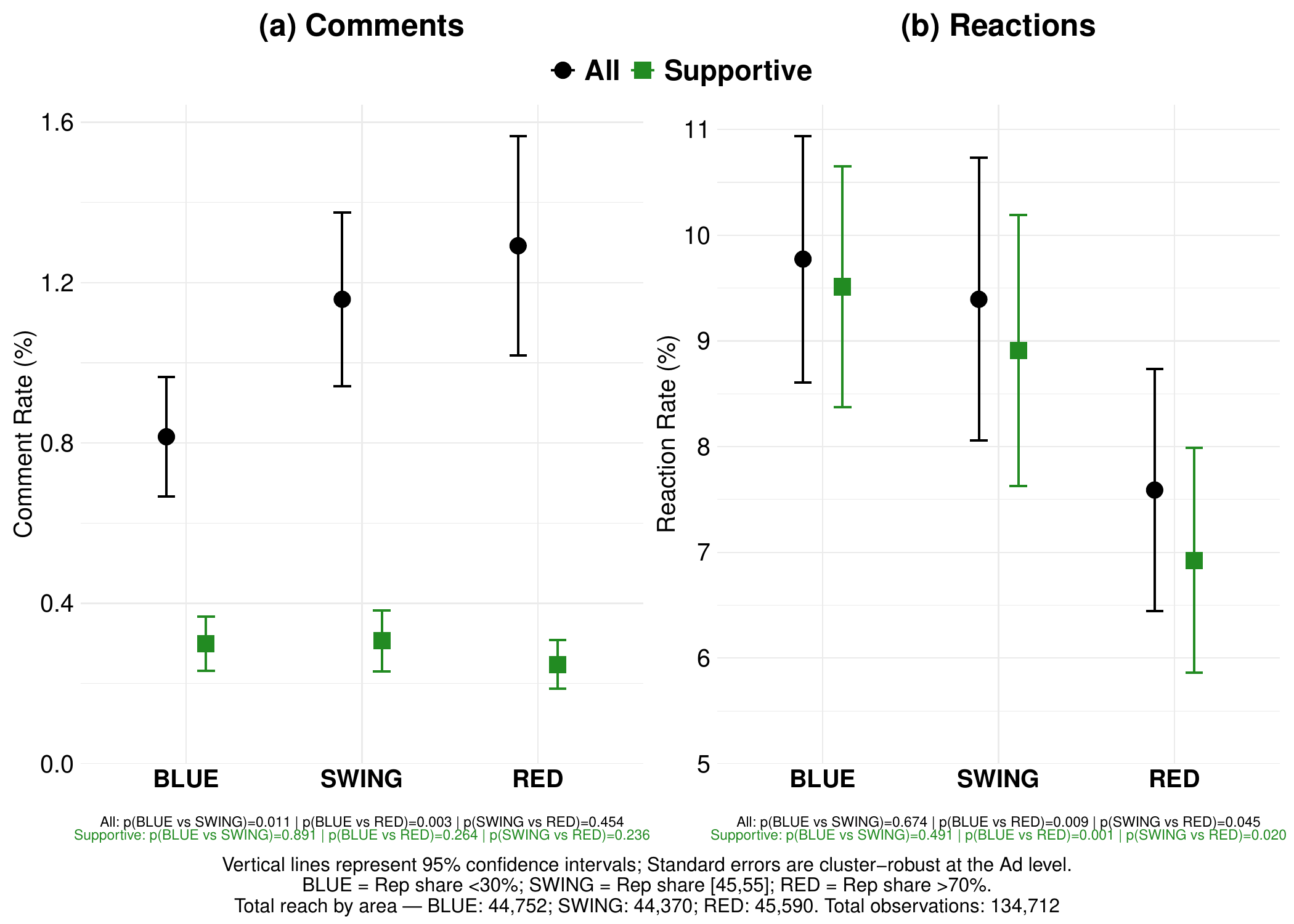}
\begin{minipage}{\textwidth}
\scriptsize
\noindent \textit{Notes:} This figure reports comment and reaction rates, expressed as percentages of total reach, by location. Black markers denote all comments/reactions and green markers denote supportive comments/reactions. Supportive reactions include likes, loves, and cares. Vertical lines represent 95\% confidence intervals, and standard errors are cluster-robust at the advertisement level. Total reach is 44,752 in BLUE areas, 44,370 in SWING areas, and 45,590 in RED areas.
\end{minipage}
\end{figure}

\begin{figure}[H]
\caption{Comment and Reaction Counts by Issue and Location}
\label{fig:comments_reactions_levels_issue}
\centering
\includegraphics[width=0.8\textwidth]{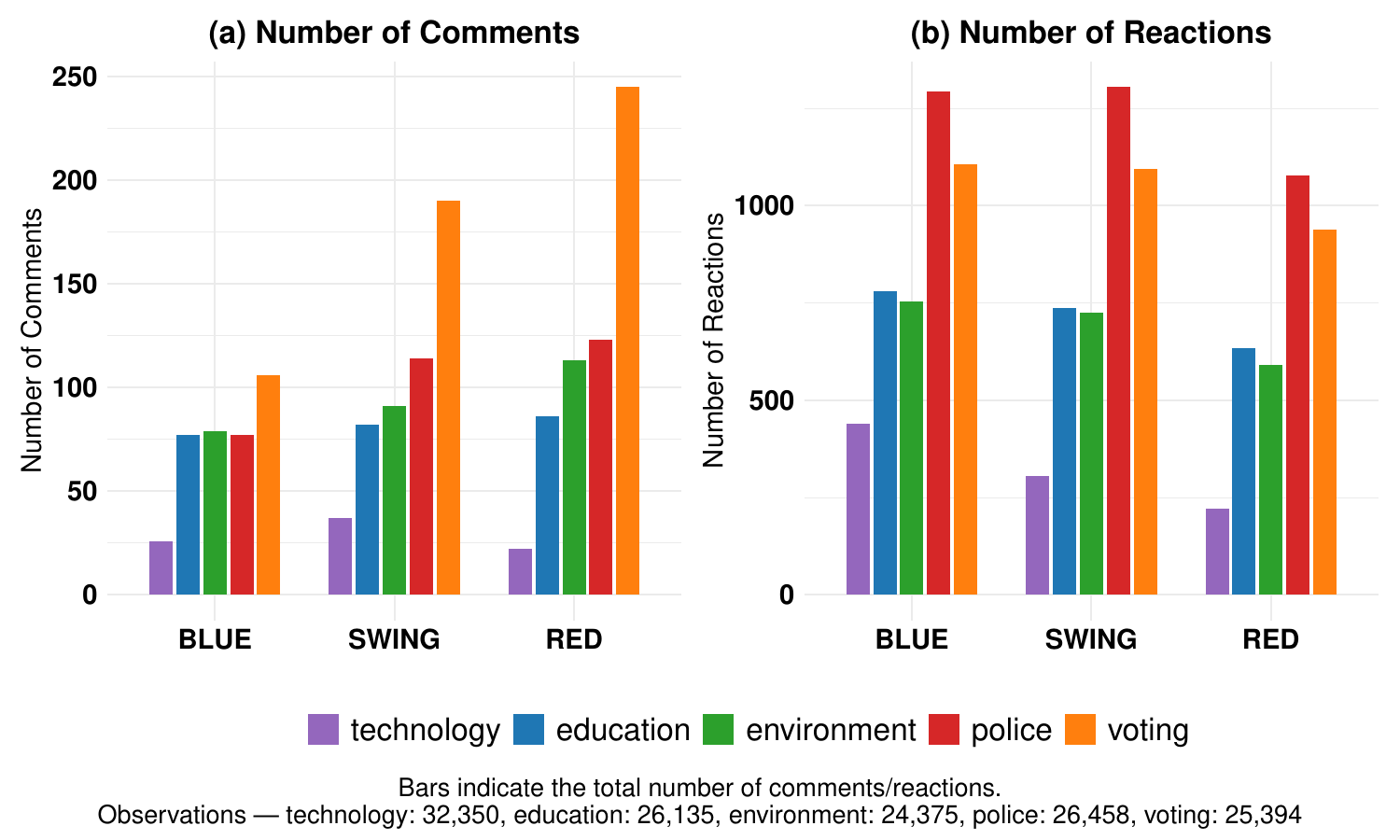} 
\begin{minipage}{\textwidth}
\scriptsize
\noindent \textit{Notes:} This figure reports the total number of comments and reactions generated during the initial Facebook campaign, by issue and location. Colors denote issue areas: technology fairness, education reform, environmental justice, criminal justice and police reform, and voting rights. Areas are grouped as BLUE, SWING, and RED according to Republican vote share.
\end{minipage}

\end{figure}

\begin{figure}[H]
\caption{Comment and Reaction Rates by Location and Issue}
\label{fig:comments_reactions_rates_issue}
\centering
\includegraphics[width=0.85\textwidth]{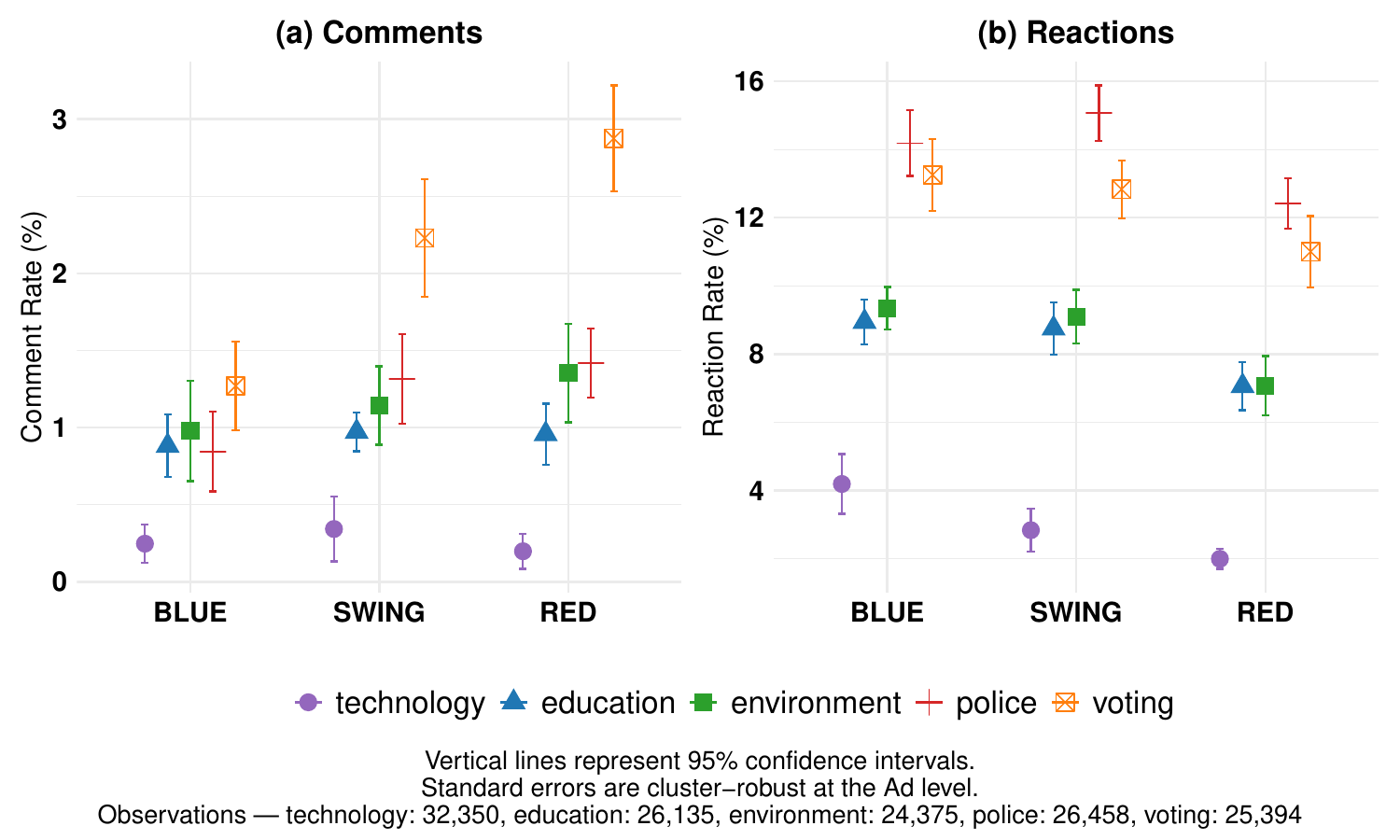} 
\begin{minipage}{\textwidth}
\scriptsize
\noindent \textit{Notes:} This figure reports comment and reaction rates, expressed as percentages of total reach, by issue and location. Colors denote issue areas: technology fairness, education reform, environmental justice, criminal justice and police reform, and voting rights. Vertical lines represent 95\% confidence intervals, and standard errors are cluster-robust at the advertisement level. Observations are 32,350 for technology, 26,135 for education, 24,375 for environment, 26,458 for police, and 25,394 for voting.
\end{minipage}
\end{figure}

\begin{figure}[H]
\caption{Comment Characteristics by Location}
\label{fig:comments_characteristics}
\centering
\includegraphics[width=0.8\textwidth]{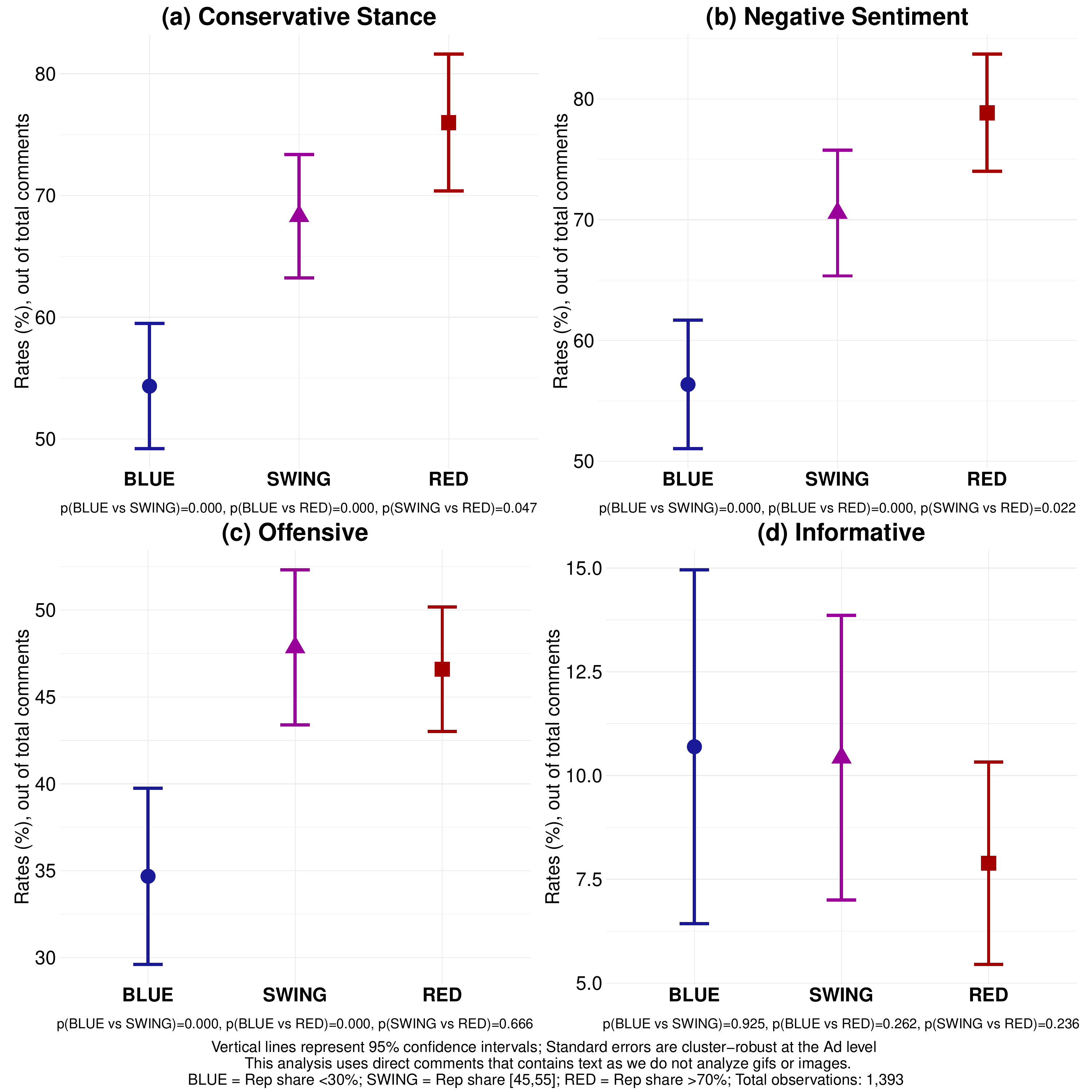} 
\begin{minipage}{\textwidth}
\scriptsize
\noindent \textit{Notes:} This figure reports characteristics of direct text comments by location, expressed as percentages of total comments. Panel (a) shows the share of comments classified as conservative in stance, panel (b) the share with negative sentiment, panel (c) the share classified as offensive, and panel (d) the share classified as informative. The analysis is restricted to direct comments containing text and excludes GIFs or images. Vertical lines represent 95\% confidence intervals, and standard errors are cluster-robust at the advertisement level. Total observations are 1,393 comments.
\end{minipage}
\end{figure}

\begin{table}[H] \centering 
  \caption{Length and Depth of the Comment Section} 
  \label{tab:phase1length} 
\scalebox{0.7}{\begin{tabular}{@{\extracolsep{5pt}}lccccc} 
\\[-1.8ex]\hline 
\hline \\[-1.8ex] 
 & \multicolumn{5}{c}{\textit{Dependent variable:}} \\ 
\cline{2-6} 
 & Comment Size & Words per Comment & Sentences per Comment & Sentence Length & Comment Depth \\ 
\\[-1.8ex] & (1) & (2) & (3) & (4) & (5)\\ 
\hline \\[-1.8ex] 
 Red & $-$9.350 & $-$1.046 & 0.128 & $-$0.912$^{*}$ & 0.101 \\ 
  & (8.177) & (1.772) & (0.131) & (0.536) & (0.232) \\ 
  Swing & 10.798 & 2.780 & 0.271$^{*}$ & $-$0.597 & 0.060 \\ 
  & (11.076) & (2.360) & (0.150) & (0.554) & (0.169) \\ 
  Constant & 115.705$^{***}$ & 24.619$^{***}$ & 2.244$^{***}$ & 10.765$^{***}$ & 0.786$^{***}$ \\ 
  & (7.132) & (1.526) & (0.109) & (0.463) & (0.119) \\ 
 \hline \\[-1.8ex] 
Observations & 2,593 & 2,593 & 2,593 & 2,593 & 1,393 \\ 
R$^{2}$ & 0.002 & 0.002 & 0.001 & 0.001 & 0.0001 \\ 
\hline 
\hline \\[-1.8ex] 

\end{tabular}
}
\vspace{0.3em}

\begin{minipage}{\textwidth}
\scriptsize
\noindent \textit{Notes:} Heteroskedasticity-robust standard errors (HC1) are reported. The unit of observation is the comment. Comment Size is the number of characters excluding spaces. Words per Comment is the number of words. Sentences per Comment is the number of sentences. Sentence Length is token count divided by sentence count. Comment Depth is the number of direct replies to a parent comment, and uses parent comments only. Omitted baseline category is BLUE. $^{*}p<0.10$; $^{**}p<0.05$; $^{***}p<0.01$.
\end{minipage}
\end{table}

\begin{table}[H] \centering 
  \caption{Quality and Diversity of the Comment Section} 
  \label{tab:phase1quality} 
\scalebox{0.8}{

\begin{tabular}{@{\extracolsep{5pt}}lccccc} 
\\[-1.8ex]\hline 
\hline \\[-1.8ex] 
 & \multicolumn{5}{c}{\textit{Dependent variable:}} \\ 
\cline{2-6} 
 & Reciprocity & Justification & Respect & Simpson's Index & Variance in Political Stance \\ 
\\[-1.8ex] & (1) & (2) & (3) & (4) & (5)\\ 
\hline \\[-1.8ex] 
 Red & $-$0.0003 & 0.035 & $-$0.112$^{**}$ & $-$0.049 & $-$0.513$^{***}$ \\ 
  & (0.049) & (0.047) & (0.049) & (0.038) & (0.196) \\ 
  Swing & $-$0.057 & 0.001 & $-$0.144$^{***}$ & $-$0.030 & $-$0.260 \\ 
  & (0.046) & (0.039) & (0.048) & (0.038) & (0.194) \\ 
  Constant & 0.331$^{***}$ & 0.182$^{***}$ & 0.528$^{***}$ & 0.567$^{***}$ & 1.903$^{***}$ \\ 
  & (0.035) & (0.031) & (0.036) & (0.028) & (0.153) \\ 
 \hline \\[-1.8ex] 
Observations & 145 & 145 & 145 & 134 & 134 \\ 
R$^{2}$ & 0.014 & 0.006 & 0.067 & 0.013 & 0.054 \\ 
\hline 
\hline \\[-1.8ex] 

\end{tabular}
}
\vspace{0.3em}

\begin{minipage}{\textwidth}
\scriptsize
\noindent \textit{Notes:} Heteroskedasticity-robust standard errors (HC1) are reported. The unit of observation is the post. Reciprocity, Justification, and Respect are binary indicators coded for each direct comment.
Each measure is averaged across the direct comments of a post with equal weight. Reciprocity captures whether participants stay on topic and engage with prior comments rather than shifting to unrelated themes, making personal attacks, or using rhetorical questions that hinder argumentation. Justification captures whether participants who express or defend a viewpoint provide reasons or arguments for their position. Respect captures whether participants interact civilly, without threats, insults, vulgar language, identity attacks, humiliating language, or silencing expressions. For the Simpson’s Index and Variance in Political Stance, posts with only one usable comment are excluded. Simpson's Index is computed as \(1 - \sum_{k=1}^{K} p_k^2\), where \(p_k\) denotes the share of comments in political-stance category \(k\) within a post. Political Stance is coded on a five-point scale: (1) strongly progressive or left-leaning, (2) slightly or moderately progressive, (3) centrist, unclear, or no explicit stance, (4) slightly or moderately conservative, and (5) strongly conservative or right-leaning. Omitted baseline category is BLUE. $^{*}p<0.10$; $^{**}p<0.05$; $^{***}p<0.01$.
\end{minipage}
\end{table}

\section{Field Experiment: Additional Results and Robustness }

\setcounter{figure}{0}
\renewcommand{\thefigure}{E\arabic{figure}}
\renewcommand{\theHfigure}{E\arabic{figure}}

\setcounter{table}{0}
\renewcommand{\thetable}{E\arabic{table}}
\renewcommand{\theHtable}{E\arabic{table}}

\begin{table}[htbp]
\centering
\caption{Comments Displayed in the Field Experiment}
\label{tab:displayed_comments}
\footnotesize
\renewcommand{\arraystretch}{1.15}
\begin{tabular}{@{}c p{10.5cm} cccc@{}}
\toprule
Group & Comment (verbatim) & St & Se & In & Of \\
\midrule
\addlinespace
\multicolumn{6}{@{}l}{\textit{\textbf{Supportive comments}}} \\
\addlinespace[2pt]
1 & They also bulldozed black communities to build sports stadiums and other structures. & 1 & 3 & 2 & 1 \\
1 & The way that caucasian people study and stalk anything to do with black people just so they can be hateful, hurtful, and racist is a mental illness that needs to be studied, it really is. & 1 & 2 & 1 & 2 \\
2 & Yep put polluting plants upstream so the water gets polluted, then allow poor people to live nearby for work & 1 & 4 & 2 & 1 \\
2 & Environmental racism & 1 & 4 & 1 & 1 \\
3 & Because unfortunately still in 2025 we are racially bias. And anyone who says that ain't true lives in a world all their own. & 1 & 4 & 1 & 1 \\
3 & The asshats laughing have a huge lack of morals. & 1 & 3 & 1 & 2 \\
\addlinespace
\multicolumn{6}{@{}l}{\textit{\textbf{Mixed comments}}} \\
\addlinespace[2pt]
1 & It's simple data collection, one thing America excels at. The locations industry has placed factory farms ( concentration camps for non-human beings)slaughterhouses, rendering plants, are always in depressed areas where poc live and are perversely affected. You'd never see a pig farm or slaughterhouse placed in suburban or high income areas Oh no! Can't remind the mindless what their violent diets require. They might regain their long lost conscience and feel disgusted at how highly intelligent, sentient animals are violated and tortured. They'd gave to smell death. & 1 & 4 & 2 & 2 \\
1 & this is the stupidest thing I have seen on here all month, and that's saying a lot!! & 5 & 1 & 1 & 2 \\
2 & No green new scam wasting millions let people choose their car & 5 & 1 & 1 & 1 \\
2 & Racism, (im)pure and simple. & 1 & 5 & 1 & 1 \\
3 & I THINK MOST WHITE FOLKS DON'T CARE WHAT HAPPENS TO THE BLACK PEOPLE AS THEY ALWAYS HAVE. & 1 & 2 & 1 & 2 \\
3 & This is the stupidest thing I read today. Black communities do not have more pollution or any different water than anyone else! & 5 & 1 & 1 & 1 \\
\addlinespace
\multicolumn{6}{@{}l}{\textit{\textbf{Opposing comments}}} \\
\addlinespace[2pt]
1 & Nonsense. Places ran by democrats are more likely to face pollution risks & 5 & 1 & 1 & 1 \\
1 & What will be next, no end to the b.s. . & 5 & 1 & 1 & 1 \\
2 & Sooooooooooooooooooooooooo, WOKE \& more DEI, B/S!!!! Thank God we're moving back to "Common Sense"! & 5 & 1 & 1 & 2 \\
2 & Typical Progressive racism claptrap. & 5 & 1 & 1 & 2 \\
3 & it depends on the politicians you elect in those cities. Has nothing to do with the color of your skin. & 4 & 2 & 2 & 1 \\
3 & Enough with the race-baiting bullshit & 5 & 1 & 1 & 2 \\
\bottomrule
\end{tabular}

\begin{minipage}{\textwidth}
\vspace{4pt}\scriptsize
\noindent \textit{Notes:} This table reports the exact text of all comments displayed in the field experiment (Section~\ref{sec:comment_experiment}), with their GPT-4 scores. Commenter names and profile images are withheld. Each treatment post displayed two comments: Supportive posts two progressive comments, Opposing posts two conservative comments, and Mixed one of each; the three groups (Grp) correspond to the post triplets matched on reactions. Columns: St = Political Stance (1 = strongly progressive $\ldots$ 5 = strongly conservative); Se = Sentiment toward the ad (1 = highly negative $\ldots$ 5 = very positive); In = Informativeness (1 = not $\ldots$ 3 = highly informative); Of = Offensiveness (1 = not $\ldots$ 3 = highly offensive). Emoji have been removed for typesetting.
\end{minipage}
\end{table}

\begin{figure}[H]
\caption{Example of Opposing Comment and Hide Option}
\label{fig:hide2}
\centering
\includegraphics[width=0.7\textwidth]{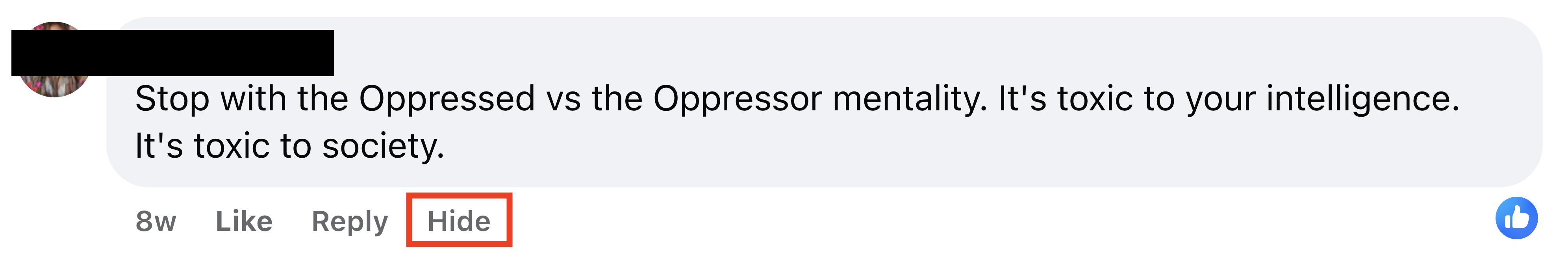} 
\begin{minipage}{\textwidth}
\scriptsize
\noindent \textit{Notes:} This figure presents an example comment and the hide option on Facebook. 
\end{minipage}
\end{figure}

\begin{table}[H]
\centering
\caption{Summary Statistics}
\label{tab:summary_stats}
\small
\begin{tabular}[t]{lccc}
\toprule\toprule
Variable name & Mean (\%) & St. Dev. & US Meta users 18+ (\%)\\
\midrule
\addlinespace[0.3em]
\multicolumn{4}{l}{\textbf{Treatment assignment (obs.)}}\\
\hspace{1em}Arm: Control (no comments) & 25.019 & 43.312 & -- \\
\hspace{1em}Arm: Supportive        & 24.881 & 43.232 & -- \\
\hspace{1em}Arm: Mixed             & 24.971 & 43.284 & -- \\
\hspace{1em}Arm: Opposing          & 25.129 & 43.376 & -- \\
\addlinespace[0.3em]\midrule
\multicolumn{4}{l}{\textbf{Main Outcomes}}\\
\hspace{1em}All engagement         & 0.535  & 7.292 & -- \\
\hspace{1em}Post expansions        & 0.290  & 5.380 & -- \\
\hspace{1em}Interactions           & 0.022  & 1.487 & -- \\
\hspace{1em}Link clicks            & 0.240  & 4.889 & -- \\
\hspace{1em}Page views             & 0.183  & 4.276 & -- \\
\addlinespace[0.3em]\midrule
\multicolumn{4}{l}{\textbf{Demographics}}\\
\hspace{1em}Gender: females        & 47.672 & 49.946 & 53.3 \\
\hspace{1em}Gender: males          & 52.328 & 49.946 & 46.8 \\
\addlinespace[0.3em]
\hspace{1em}Age: 18--24            & 8.692  & 28.172 & 18.0 \\
\hspace{1em}Age: 25--34            & 30.165 & 45.897 & 24.4 \\
\hspace{1em}Age: 35--44            & 27.158 & 44.478 & 19.2 \\
\hspace{1em}Age: 45--54            & 16.262 & 36.902 & 14.0 \\
\hspace{1em}Age: 55--64            & 10.241 & 30.319 & 11.8 \\
\hspace{1em}Age: 65+               & 7.482  & 26.309 & 12.7 \\
\addlinespace[0.3em]\midrule
\multicolumn{1}{l}{\textbf{Observations:}} & \multicolumn{2}{c}{\textbf{1,054,015}} & \\
\bottomrule
\end{tabular}

\vspace{0.3em}

\begin{minipage}{\textwidth}
\scriptsize
\noindent \centering \textit{Notes:} Values are expressed as percentages of reach. US Meta user data are from DataReportal \citep{kemp2025us}.
\end{minipage}

\end{table}

\begin{table}[H]
\centering
\caption{Balance Checks: Ad-level Delivery Metrics}
\label{tab:balance_adlevel}
\small
\resizebox{\textwidth}{!}{\begin{tabular}{lccccccc}
\toprule\toprule
& \multicolumn{4}{c}{\textit{Group Mean / (SD)}} & \multicolumn{3}{c}{\textit{t}--test \textit{p}--value} \\
\cmidrule(lr){2-5}\cmidrule(lr){6-8}
Variable &
(1) Control & (2) Supportive & (3) Mixed & (4) Opposing &
(1)–(2) & (1)–(3) & (1)–(4) \\[0.3em]
\midrule
Total Spend
 & \makecell{150.763 \\ (1.028)}
 & \makecell{150.276 \\ (1.202)}
 & \makecell{150.395 \\ (0.722)}
 & \makecell{150.542 \\ (1.078)}
 & 0.196 & 0.218 & 0.531 \\
\midrule
CPM
 & \makecell{9.594 \\ (2.130)}
 & \makecell{9.639 \\ (2.075)}
 & \makecell{9.626 \\ (2.065)}
 & \makecell{9.548 \\ (2.046)}
 & 0.949 & 0.964 & 0.947 \\
\midrule
Frequency
 & \makecell{1.123 \\ (0.026)}
 & \makecell{1.118 \\ (0.020)}
 & \makecell{1.116 \\ (0.023)}
 & \makecell{1.118 \\ (0.023)}
 & 0.502 & 0.356 & 0.520 \\
\midrule
Reach
 & \makecell{14650.333 \\ (3234.248)}
 & \makecell{14569.222 \\ (3150.366)}
 & \makecell{14622.056 \\ (3136.580)}
 & \makecell{14714.778 \\ (3103.458)}
 & 0.939 & 0.979 & 0.952 \\
\midrule
Spend per User
 & \makecell{0.011 \\ (0.003)}
 & \makecell{0.011 \\ (0.002)}
 & \makecell{0.011 \\ (0.002)}
 & \makecell{0.011 \\ (0.002)}
 & 0.994 & 0.952 & 0.894 \\
\midrule
Observations
 & 18 & 18 & 18 & 18 &  &  &  \\[0.3em]
\bottomrule
\end{tabular}
}
\vspace{0.3em}

\begin{minipage}{\textwidth}
\scriptsize
\noindent \textit{Notes:} Each observation is an ad. The p-values are based on t-tests using heteroskedasticity-robust standard errors
\end{minipage}

\end{table}

\begin{table}[H]
\scriptsize
\centering
\caption{Audience Saturation under Alternative Audience Size and Reach Estimates}
\label{tab:audience_saturation}
\begin{tabular}{llcc}
\toprule
 &  & \multicolumn{2}{c}{Campaign Reach Estimate} \\
\cmidrule(lr){3-4}
 &  & Lower Bound & Upper Bound \\
\multicolumn{2}{l}{Audience Size Estimate} & 993,572 & 1,054,015 \\
\midrule
Lower Bound & \multicolumn{1}{l|}{904,700}   & 1.098 & 1.165 \\
Midpoint    & \multicolumn{1}{l|}{975,000}   & 1.019 & 1.081 \\
Upper Bound & \multicolumn{1}{l|}{1,045,300 $\space \space$} & 0.950 & 1.008 \\
\bottomrule
\end{tabular}

\vspace{0.3em}

\begin{minipage}{\textwidth}
\scriptsize
\noindent \textit{Notes:} Audience size estimates correspond to Meta's ``Estimated Audience Size,'' defined as the number of accounts meeting the specified targeting criteria. We record these estimates at the start of the campaign. Meta reports these values as potential reach ranges based on recent platform activity and targeting configuration. Estimates may fluctuate over time as platform usage and measurement systems evolve. Campaign reach estimates are measured either at the campaign level (lower bound) or at the ad level and then aggregated across ads (upper bound). Saturation is computed as the reach estimate divided by the corresponding audience estimate.
\end{minipage}
\end{table}

\begin{table}[H] \centering 
  \caption{The Impact of the Comment Section on On-platform User Engagement} 
  \label{main_combined} 
\scalebox{0.78}{\begin{tabular}{@{\extracolsep{5pt}}lcccccccc} 
\\[-1.8ex]\hline 
\hline \\[-1.8ex] 
 & \multicolumn{8}{c}{\textit{Dependent variable:}} \\ 
\cline{2-9} 
 & \multicolumn{2}{c}{All Engagement} & \multicolumn{2}{c}{Post Expansions} & \multicolumn{2}{c}{Interactions} & \multicolumn{2}{c}{Link Clicks} \\ 
\\[-1.8ex] & (1) & (2) & (3) & (4) & (5) & (6) & (7) & (8)\\ 
\hline \\[-1.8ex] 
 Any comments & 0.065$^{***}$ &  & 0.049$^{***}$ &  & 0.003 &  & 0.016$^{**}$ &  \\ 
  & (0.015) &  & (0.013) &  & (0.003) &  & (0.008) &  \\ 
  & [0.000] &  & [0.005] &  & [0.405] &  & [0.070] &  \\ 
 Opposing &  & 0.087$^{***}$ &  & 0.048$^{***}$ &  & 0.009$^{**}$ &  & 0.034$^{***}$ \\ 
  &  & (0.021) &  & (0.017) &  & (0.004) &  & (0.011) \\ 
  &  & [0.002] &  & [0.033] &  & [0.077] &  & [0.019] \\ 
 Mixed &  & 0.062$^{***}$ &  & 0.049$^{***}$ &  & 0.002 &  & 0.012 \\ 
  &  & (0.017) &  & (0.013) &  & (0.003) &  & (0.011) \\ 
  &  & [0.003] &  & [0.002] &  & [0.531] &  & [0.399] \\ 
 Supportive &  & 0.047$^{***}$ &  & 0.049$^{***}$ &  & $-$0.003 &  & 0.003 \\ 
  &  & (0.018) &  & (0.015) &  & (0.003) &  & (0.008) \\ 
  &  & [0.024] &  & [0.012] &  & [0.366] &  & [0.764] \\ 
 Constant & 0.538$^{***}$ & 0.539$^{***}$ & 0.158$^{**}$ & 0.158$^{**}$ & 0.004 & 0.005 & 0.379$^{***}$ & 0.380$^{***}$ \\ 
  & (0.120) & (0.115) & (0.063) & (0.063) & (0.010) & (0.010) & (0.070) & (0.066) \\ 
  & [0.030] & [0.047] & [0.063] & [0.065] & [0.692] & [0.669] & [0.036] & [0.049] \\ 
 \hline \\[-1.8ex] 
ZIP Code Set FEs & \checkmark & \checkmark & \checkmark & \checkmark & \checkmark & \checkmark & \checkmark & \checkmark \\ 
Controls & \checkmark & \checkmark & \checkmark & \checkmark & \checkmark & \checkmark & \checkmark & \checkmark \\ 
ZIP Code Set~$\times$~Controls & \checkmark & \checkmark & \checkmark & \checkmark & \checkmark & \checkmark & \checkmark & \checkmark \\ 
\hline \\[-1.8ex] 

Mean Y in Control & 0.485 & 0.485 & 0.254 & 0.254 & 0.020 & 0.020 & 0.228 & 0.228 \\ 
p(Support vs. Oppose) &  & 0.059 &  & 0.974 &  & 0.002 &  & 0.008 \\ 
Wild p(Support vs. Oppose) &  & 0.140 &  & 0.979 &  & 0.014 &  & 0.034 \\ 
p(Oppose vs. Mixed) &  & 0.216 &  & 0.935 &  & 0.091 &  & 0.106 \\ 
Wild p(Oppose vs. Mixed) &  & 0.333 &  & 0.944 &  & 0.167 &  & 0.193 \\ 
p(Support vs. Mixed) &  & 0.394 &  & 0.960 &  & 0.078 &  & 0.432 \\ 
Wild p(Support vs. Mixed) &  & 0.458 &  & 0.967 &  & 0.126 &  & 0.539 \\ 
Observations & 1,054,015 & 1,054,015 & 1,054,015 & 1,054,015 & 1,054,015 & 1,054,015 & 1,054,015 & 1,054,015 \\ 
R$^{2}$ & 0.0004 & 0.0004 & 0.0005 & 0.0005 & 0.0003 & 0.0003 & 0.0002 & 0.0002 \\ 
\hline 
\hline \\[-1.8ex] 
\end{tabular}
}
\vspace{0.3em}

\begin{minipage}{\textwidth}
\scriptsize
\noindent \textit{Notes:} Standard errors clustered at the advertisement level (\feTwoAdsN{} ads) are reported in parentheses; wild cluster bootstrap $p$-values are reported in square brackets. Asterisks refer to the cluster-robust standard errors. The unit of observation is the user, constructed from ad-level aggregate data. Controls include user gender and age, as well as their pairwise interactions. The outcome ``Interactions'' includes comments, reactions, and shares.
$^{*}p<0.10$; $^{**}p<0.05$; $^{***}p<0.01$.
\end{minipage}
\end{table}

\begin{table}[H]
\centering
\caption{The Impact of the Comment Section on Ad-level Cost Metrics}
\label{tab:costs}
\small
\resizebox{\textwidth}{!}{\begin{tabular}{lccccccc}
\toprule\toprule
& \multicolumn{4}{c}{\textit{Group Mean / (SD)}} & \multicolumn{3}{c}{\textit{t}--test \textit{p}--value} \\
\cmidrule(lr){2-5}\cmidrule(lr){6-8}
Variable &
(1) Control & (2) Supportive & (3) Mixed & (4) Opposing &
(1)–(2) & (1)–(3) & (1)–(4) \\[0.3em]
\midrule
Spend per engagement
 & \makecell{2.131 \\ (0.337)}
 & \makecell{1.924 \\ (0.208)}
 & \makecell{1.879 \\ (0.283)}
 & \makecell{1.813 \\ (0.392)}
 & 0.033 & 0.021 & 0.011 \\
\midrule
Spend per interaction
 & \makecell{68.587 \\ (42.266)}
 & \makecell{88.450 \\ (48.005)}
 & \makecell{54.910 \\ (29.021)}
 & \makecell{36.098 \\ (13.111)}
 & 0.217 & 0.288 & 0.005 \\
\midrule
Spend per link click
 & \makecell{4.685 \\ (1.264)}
 & \makecell{4.497 \\ (0.817)}
 & \makecell{4.449 \\ (0.991)}
 & \makecell{3.996 \\ (0.944)}
 & 0.598 & 0.549 & 0.066 \\
\midrule
Observations
 & 18 & 18 & 18 & 18 &  &  &  \\[0.3em]
\bottomrule
\end{tabular}
}
\vspace{0.3em}

\begin{minipage}{\textwidth}
\scriptsize
\noindent \textit{Notes:} Each observation is an ad. Statistics are weighted by ad reach. The p-values are based on t-tests using heteroskedasticity-robust standard errors. Opposing comments reduce spend per engagement by \$0.30 (15 percent relative to control, \(p<0.05\)); mixed and supportive comments also reduce spend per engagement (\(p<0.05\)), though by smaller magnitudes. Spend per interaction falls by \$32.5 under opposing comments (a 50 percent reduction, \(p<0.01\)), and spend per link click by \$0.69 (15 percent, \(p<0.10\)), with no significant effects in the remaining conditions.
\end{minipage}
\end{table}

\begin{figure}[H]
\caption{The Impact of the Comment Section on All and Supportive Interactions \\ \textit{Outcomes are expressed in \% of total reach}}
\label{fig:results_interactions_valence}
\centering
\includegraphics[width=0.6\textwidth]{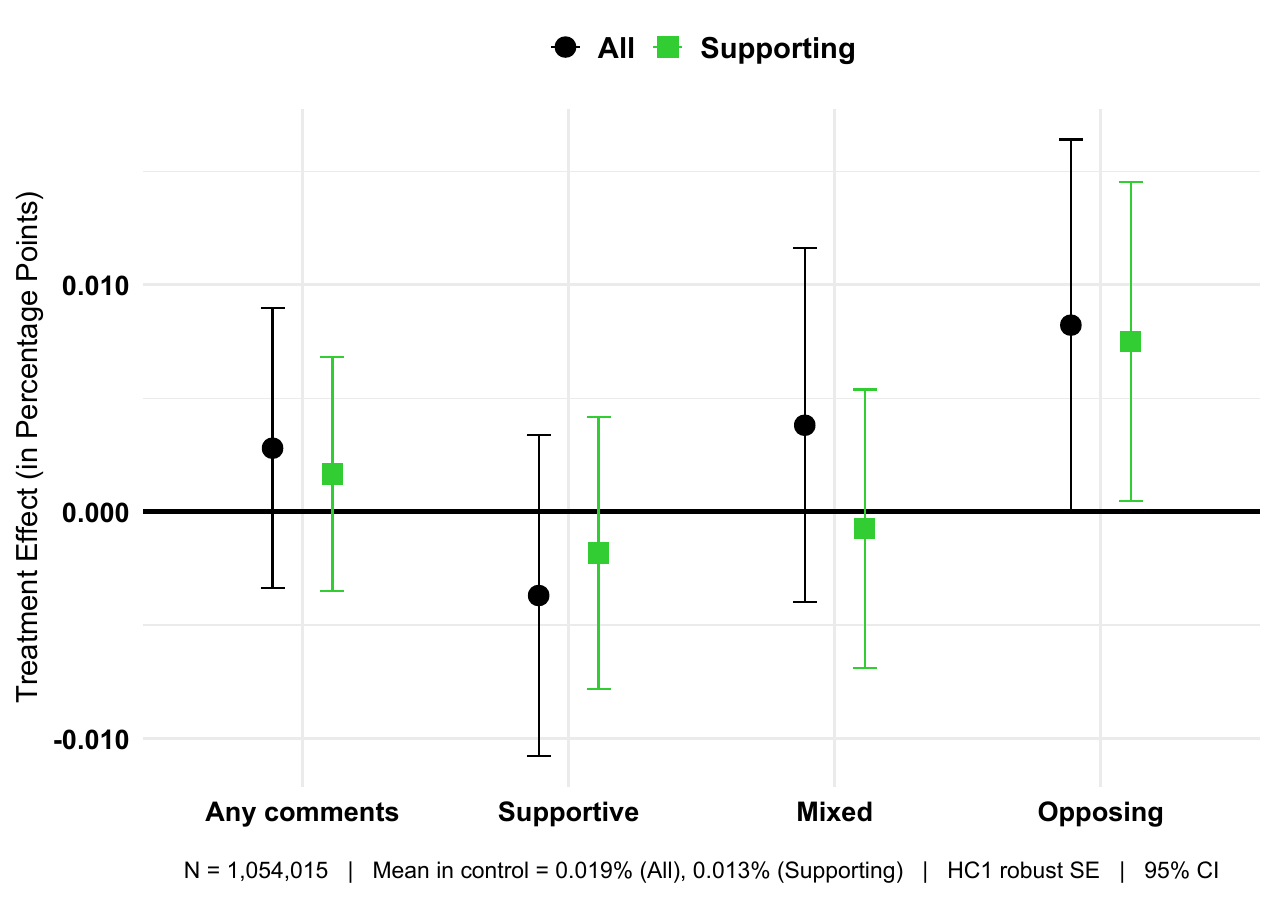}
\begin{minipage}{\textwidth}
\scriptsize
\noindent \textit{Notes:} This figure reports treatment effects of comment visibility and stance on all interactions and supportive interactions, expressed in percentage points of total reach. Interactions include comments, reactions, and shares. Supportive interactions include supportive comments, supportive reactions (likes, loves, and cares), and shares. Estimates are obtained from specifications using data collected directly from the posts, with heteroskedasticity-robust (HC1) standard errors. Vertical lines represent 95\% confidence intervals. The sample includes \feTwoReachN{} reached users.
\end{minipage}
\end{figure}

\begin{table}[H] \centering 
  \caption{The Impact of the Comment Section on the Valence of Subsequent Interactions (in \%)} 
  \label{tab:interactions_valence} 
\scalebox{0.9}{\begin{tabular}{@{\extracolsep{5pt}}lcccccc} 
\\[-1.8ex]\hline 
\hline \\[-1.8ex] 
 & \multicolumn{6}{c}{\textit{Dependent variable:}} \\ 
\cline{2-7} 
 & \multicolumn{2}{c}{All} & \multicolumn{2}{c}{Supportive} & \multicolumn{2}{c}{Non-supportive} \\ 
\\[-1.8ex] & (1) & (2) & (3) & (4) & (5) & (6)\\ 
\hline \\[-1.8ex] 
 Any comments & 0.0028 &  & 0.0017 &  & 0.0011 &  \\ 
  & (0.0032) &  & (0.0026) &  & (0.0017) &  \\ 
  & & & & & & \\ 
 Opposing &  & 0.0082$^{**}$ &  & 0.0075$^{**}$ &  & 0.0007 \\ 
  &  & (0.0042) &  & (0.0036) &  & (0.0021) \\ 
  & & & & & & \\ 
 Mixed &  & 0.0038 &  & $-$0.0008 &  & 0.0046$^{*}$ \\ 
  &  & (0.0040) &  & (0.0031) &  & (0.0025) \\ 
  & & & & & & \\ 
 Supportive &  & $-$0.0037 &  & $-$0.0018 &  & $-$0.0019 \\ 
  &  & (0.0036) &  & (0.0031) &  & (0.0019) \\ 
  & & & & & & \\ 
 Constant & 0.0210$^{***}$ & 0.0210$^{***}$ & 0.0143$^{***}$ & 0.0143$^{***}$ & 0.0067$^{***}$ & 0.0067$^{***}$ \\ 
  & (0.0034) & (0.0034) & (0.0027) & (0.0027) & (0.0019) & (0.0019) \\ 
  & & & & & & \\ 
\hline \\[-1.8ex] 
ZIP Code Set FEs & \checkmark & \checkmark &\checkmark &\checkmark & \checkmark & \checkmark\\ 
\hline \\[-1.8ex] 

Mean Y in Control (C) & 0.0190 & 0.0190 & 0.0133 & 0.0133 & 0.0057 & 0.0057 \\ 
p(Support vs. Oppose) &  & 0.003 &  & 0.008 &  & 0.186 \\ 
p(Oppose vs. Mixed) &  & 0.310 &  & 0.020 &  & 0.128 \\ 
p(Support vs. Mixed) &  & 0.048 &  & 0.722 &  & 0.005 \\ 
Observations & 1,054,015 & 1,054,015 & 1,054,015 & 1,054,015 & 1,054,015 & 1,054,015 \\ 
R$^{2}$ & 0.00001 & 0.00002 & 0.000005 & 0.00001 & 0.000002 & 0.00001 \\ 

\hline 
\hline \\[-1.8ex] 

\end{tabular}
}
\vspace{0.3em}

\begin{minipage}{\textwidth}
\scriptsize
\noindent \textit{Notes:} Heteroskedasticity-robust standard errors (HC1) are reported. The unit of observation is the user, constructed from ad-level aggregate data. Supportive interactions include supportive comments, supportive reactions, and shares. Because the stance of comments and the types of reactions are not available from the Meta Ads Manager but are instead retrieved directly from the posts, we cannot include user-level controls and cluster the standard errors at the ad level in this analysis. The outcome ``Interactions'' includes comments, reactions, and shares.$^{*}p<0.10$; $^{**}p<0.05$; $^{***}p<0.01$.
\end{minipage}
\end{table}

\begin{table}[H] \centering 
  \caption{The Impact of the Comment Section in Blue Areas} 
  \label{tab:blue_zip} 
\scalebox{0.8}{\begin{tabular}{@{\extracolsep{5pt}}lcccccccc} 
\\[-1.8ex]\hline 
\hline \\[-1.8ex] 
 & \multicolumn{8}{c}{\textit{Dependent variable:}} \\ 
\cline{2-9} 
 & \multicolumn{2}{c}{All Engagement} & \multicolumn{2}{c}{Post Expansions} & \multicolumn{2}{c}{Interactions} & \multicolumn{2}{c}{Link Clicks} \\ 
\\[-1.8ex] & (1) & (2) & (3) & (4) & (5) & (6) & (7) & (8)\\ 
\hline \\[-1.8ex] 
 Any comments & 0.031 &  & 0.014 &  & 0.003 &  & 0.009 &  \\ 
  & (0.025) &  & (0.024) &  & (0.005) &  & (0.015) &  \\ 
  & [0.309] &  & [0.651] &  & [0.651] &  & [0.589] &  \\ 
 Opposing &  & 0.053$^{*}$ &  & 0.023 &  & 0.009 &  & 0.017 \\ 
  &  & (0.030) &  & (0.028) &  & (0.007) &  & (0.021) \\ 
  &  & [0.199] &  & [0.530] &  & [0.293] &  & [0.542] \\ 
 Mixed &  & 0.053$^{*}$ &  & 0.029 &  & 0.002 &  & 0.022 \\ 
  &  & (0.032) &  & (0.026) &  & (0.006) &  & (0.025) \\ 
  &  & [0.234] &  & [0.381] &  & [0.800] &  & [0.548] \\ 
 Supportive &  & $-$0.014 &  & $-$0.010 &  & $-$0.003 &  & $-$0.012 \\ 
  &  & (0.031) &  & (0.028) &  & (0.005) &  & (0.015) \\ 
  &  & [0.714] &  & [0.792] &  & [0.640] &  & [0.486] \\ 
 Constant & 0.559$^{***}$ & 0.561$^{***}$ & 0.155$^{**}$ & 0.156$^{**}$ & $-$0.007 & $-$0.007 & 0.416$^{***}$ & 0.417$^{***}$ \\ 
  & (0.129) & (0.117) & (0.072) & (0.066) & (0.013) & (0.012) & (0.077) & (0.072) \\ 
  & [0.028] & [0.035] & [0.085] & [0.060] & [0.588] & [0.608] & [0.019] & [0.026] \\ 
 \hline \\[-1.8ex] 
ZIP Code Set FEs & \checkmark & \checkmark & \checkmark & \checkmark & \checkmark & \checkmark & \checkmark & \checkmark \\ 
Controls & \checkmark & \checkmark & \checkmark & \checkmark & \checkmark & \checkmark & \checkmark & \checkmark \\ 
ZIP Code Set~$\times$~Controls & \checkmark & \checkmark & \checkmark & \checkmark & \checkmark & \checkmark & \checkmark & \checkmark \\ 
\hline \\[-1.8ex] 

Mean Y in Control & 0.529 & 0.529 & 0.267 & 0.267 & 0.020 & 0.020 & 0.267 & 0.267 \\ 
p(Support vs. Oppose) &  & 0.052 &  & 0.191 &  & 0.029 &  & 0.195 \\ 
Wild p(Support vs. Oppose) &  & 0.148 &  & 0.319 &  & 0.093 &  & 0.334 \\ 
p(Oppose vs. Mixed) &  & 0.996 &  & 0.788 &  & 0.270 &  & 0.862 \\ 
Wild p(Oppose vs. Mixed) &  & 0.997 &  & 0.823 &  & 0.405 &  & 0.914 \\ 
p(Support vs. Mixed) &  & 0.065 &  & 0.074 &  & 0.363 &  & 0.184 \\ 
Wild p(Support vs. Mixed) &  & 0.202 &  & 0.182 &  & 0.465 &  & 0.376 \\ 
Observations & 329,844 & 329,844 & 329,844 & 329,844 & 329,844 & 329,844 & 329,844 & 329,844 \\ 
R$^{2}$ & 0.0004 & 0.0004 & 0.0005 & 0.0005 & 0.0003 & 0.0003 & 0.0002 & 0.0002 \\ 
\hline 
\hline \\[-1.8ex] 
\end{tabular}
}
\vspace{0.3em}

\begin{minipage}{\textwidth}
\scriptsize
\noindent \textit{Notes:} Standard errors clustered at the advertisement level (\feTwoBlueClustersN{} ads) are reported in parentheses; wild cluster bootstrap $p$-values are reported in square brackets. Asterisks refer to the cluster-robust standard errors. The unit of observation is the user, constructed from ad-level aggregate data. Controls include user gender and age, as well as their pairwise interactions. The outcome ``Interactions'' includes comments, reactions, and shares.
$^{*}p<0.10$; $^{**}p<0.05$; $^{***}p<0.01$.
\end{minipage}
\end{table}

\begin{table}[H] \centering 
  \caption{The Impact of the Comment Section in Swing Areas} 
  \label{tab:swing_zip}  
\scalebox{0.8}{\begin{tabular}{@{\extracolsep{5pt}}lcccccccc} 
\\[-1.8ex]\hline 
\hline \\[-1.8ex] 
 & \multicolumn{8}{c}{\textit{Dependent variable:}} \\ 
\cline{2-9} 
 & \multicolumn{2}{c}{All Engagement} & \multicolumn{2}{c}{Post Expansions} & \multicolumn{2}{c}{Interactions} & \multicolumn{2}{c}{Link Clicks} \\ 
\\[-1.8ex] & (1) & (2) & (3) & (4) & (5) & (6) & (7) & (8)\\ 
\hline \\[-1.8ex] 
 Any comments & 0.064$^{***}$ &  & 0.044$^{***}$ &  & 0.000 &  & 0.025$^{**}$ &  \\ 
  & (0.020) &  & (0.014) &  & (0.005) &  & (0.011) &  \\ 
  & [0.016] &  & [0.027] &  & [0.954] &  & [0.085] &  \\ 
 Opposing &  & 0.071$^{*}$ &  & 0.040$^{*}$ &  & 0.001 &  & 0.037$^{**}$ \\ 
  &  & (0.037) &  & (0.025) &  & (0.008) &  & (0.015) \\ 
  &  & [0.201] &  & [0.268] &  & [0.904] &  & [0.101] \\ 
 Mixed &  & 0.052$^{***}$ &  & 0.039$^{***}$ &  & 0.004 &  & 0.017 \\ 
  &  & (0.019) &  & (0.014) &  & (0.005) &  & (0.014) \\ 
  &  & [0.043] &  & [0.041] &  & [0.449] &  & [0.331] \\ 
 Supportive &  & 0.070$^{***}$ &  & 0.052$^{***}$ &  & $-$0.004 &  & 0.020 \\ 
  &  & (0.022) &  & (0.015) &  & (0.006) &  & (0.013) \\ 
  &  & [0.023] &  & [0.028] &  & [0.563] &  & [0.207] \\ 
 Constant & 0.317$^{***}$ & 0.317$^{***}$ & 0.181$^{***}$ & 0.181$^{***}$ & 0.021 & 0.021 & 0.110$^{**}$ & 0.109$^{**}$ \\ 
  & (0.067) & (0.068) & (0.047) & (0.047) & (0.017) & (0.017) & (0.051) & (0.053) \\ 
  & [0.028] & [0.032] & [0.007] & [0.004] & [0.344] & [0.389] & [0.089] & [0.092] \\ 
 \hline \\[-1.8ex] 
ZIP Code Set FEs & \checkmark & \checkmark & \checkmark & \checkmark & \checkmark & \checkmark & \checkmark & \checkmark \\ 
Controls & \checkmark & \checkmark & \checkmark & \checkmark & \checkmark & \checkmark & \checkmark & \checkmark \\ 
ZIP Code Set~$\times$~Controls & \checkmark & \checkmark & \checkmark & \checkmark & \checkmark & \checkmark & \checkmark & \checkmark \\ 
\hline \\[-1.8ex] 

Mean Y in Control & 0.478 & 0.478 & 0.254 & 0.254 & 0.025 & 0.025 & 0.209 & 0.209 \\ 
p(Support vs. Oppose) &  & 0.996 &  & 0.628 &  & 0.556 &  & 0.282 \\ 
Wild p(Support vs. Oppose) &  & 0.997 &  & 0.737 &  & 0.664 &  & 0.438 \\ 
p(Oppose vs. Mixed) &  & 0.604 &  & 0.956 &  & 0.789 &  & 0.237 \\ 
Wild p(Oppose vs. Mixed) &  & 0.730 &  & 0.969 &  & 0.839 &  & 0.406 \\ 
p(Support vs. Mixed) &  & 0.382 &  & 0.327 &  & 0.219 &  & 0.844 \\ 
Wild p(Support vs. Mixed) &  & 0.455 &  & 0.392 &  & 0.320 &  & 0.878 \\ 
Observations & 345,109 & 345,109 & 345,109 & 345,109 & 345,109 & 345,109 & 345,109 & 345,109 \\ 
R$^{2}$ & 0.0005 & 0.0005 & 0.0006 & 0.0006 & 0.0003 & 0.0003 & 0.0002 & 0.0002 \\ 
\hline 
\hline \\[-1.8ex] 
\end{tabular}
}
\vspace{0.3em}

\begin{minipage}{\textwidth}
\scriptsize
\noindent \textit{Notes:} Standard errors clustered at the advertisement level (\feTwoSwingClustersN{} ads) are reported in parentheses; wild cluster bootstrap $p$-values are reported in square brackets. Asterisks refer to the cluster-robust standard errors. The unit of observation is the user, constructed from ad-level aggregate data. Controls include user gender and age, as well as their pairwise interactions. The outcome ``Interactions'' includes comments, reactions, and shares.
$^{*}p<0.10$; $^{**}p<0.05$; $^{***}p<0.01$.
\end{minipage}
\end{table}

\begin{table}[H] \centering 
  \caption{The Impact of the Comment Section in Red Areas} 
  \label{tab:red_zip} 
\scalebox{0.8}{\begin{tabular}{@{\extracolsep{5pt}}lcccccccc} 
\\[-1.8ex]\hline 
\hline \\[-1.8ex] 
 & \multicolumn{8}{c}{\textit{Dependent variable:}} \\ 
\cline{2-9} 
 & \multicolumn{2}{c}{All Engagement} & \multicolumn{2}{c}{Post Expansions} & \multicolumn{2}{c}{Interactions} & \multicolumn{2}{c}{Link Clicks} \\ 
\\[-1.8ex] & (1) & (2) & (3) & (4) & (5) & (6) & (7) & (8)\\ 
\hline \\[-1.8ex] 
 Any comments & 0.096$^{***}$ &  & 0.083$^{***}$ &  & 0.004 &  & 0.014 &  \\ 
  & (0.030) &  & (0.024) &  & (0.004) &  & (0.014) &  \\ 
  & [0.012] &  & [0.026] &  & [0.418] &  & [0.402] &  \\ 
 Opposing &  & 0.132$^{***}$ &  & 0.077$^{**}$ &  & 0.015$^{***}$ &  & 0.045$^{**}$ \\ 
  &  & (0.041) &  & (0.034) &  & (0.005) &  & (0.021) \\ 
  &  & [0.015] &  & [0.103] &  & [0.031] &  & [0.137] \\ 
 Mixed &  & 0.077$^{**}$ &  & 0.076$^{***}$ &  & 0.001 &  & $-$0.003 \\ 
  &  & (0.031) &  & (0.024) &  & (0.005) &  & (0.016) \\ 
  &  & [0.068] &  & [0.026] &  & [0.869] &  & [0.876] \\ 
 Supportive &  & 0.080$^{**}$ &  & 0.097$^{***}$ &  & $-$0.002 &  & $-$0.000 \\ 
  &  & (0.031) &  & (0.027) &  & (0.004) &  & (0.014) \\ 
  &  & [0.058] &  & [0.020] &  & [0.681] &  & [0.995] \\ 
 Constant & 0.482$^{***}$ & 0.482$^{***}$ & 0.286$^{**}$ & 0.286$^{**}$ & $-$0.015$^{**}$ & $-$0.014$^{**}$ & 0.212$^{***}$ & 0.212$^{***}$ \\ 
  & (0.152) & (0.145) & (0.121) & (0.120) & (0.006) & (0.007) & (0.049) & (0.041) \\ 
  & [0.059] & [0.044] & [0.129] & [0.135] & [0.055] & [0.086] & [0.010] & [0.025] \\ 
 \hline \\[-1.8ex] 
ZIP Code Set FEs & \checkmark & \checkmark & \checkmark & \checkmark & \checkmark & \checkmark & \checkmark & \checkmark \\ 
Controls & \checkmark & \checkmark & \checkmark & \checkmark & \checkmark & \checkmark & \checkmark & \checkmark \\ 
ZIP Code Set~$\times$~Controls & \checkmark & \checkmark & \checkmark & \checkmark & \checkmark & \checkmark & \checkmark & \checkmark \\ 
\hline \\[-1.8ex] 

Mean Y in Control & 0.455 & 0.455 & 0.242 & 0.242 & 0.016 & 0.016 & 0.210 & 0.210 \\ 
p(Support vs. Oppose) &  & 0.145 &  & 0.539 &  & 0.000 &  & 0.027 \\ 
Wild p(Support vs. Oppose) &  & 0.290 &  & 0.655 &  & 0.003 &  & 0.113 \\ 
p(Oppose vs. Mixed) &  & 0.127 &  & 0.976 &  & 0.001 &  & 0.027 \\ 
Wild p(Oppose vs. Mixed) &  & 0.261 &  & 0.981 &  & 0.004 &  & 0.132 \\ 
p(Support vs. Mixed) &  & 0.919 &  & 0.349 &  & 0.370 &  & 0.849 \\ 
Wild p(Support vs. Mixed) &  & 0.924 &  & 0.432 &  & 0.449 &  & 0.876 \\ 
Observations & 379,062 & 379,062 & 379,062 & 379,062 & 379,062 & 379,062 & 379,062 & 379,062 \\ 
R$^{2}$ & 0.0005 & 0.0005 & 0.0006 & 0.0006 & 0.0002 & 0.0002 & 0.0001 & 0.0002 \\ 
\hline 
\hline \\[-1.8ex] 
\end{tabular}
}
\vspace{0.3em}

\begin{minipage}{\textwidth}
\scriptsize
\noindent \textit{Notes:} Standard errors clustered at the advertisement level (\feTwoRedClustersN{} ads) are reported in parentheses; wild cluster bootstrap $p$-values are reported in square brackets. Asterisks refer to the cluster-robust standard errors. The unit of observation is the user, constructed from ad-level aggregate data. Controls include user gender and age, as well as their pairwise interactions. The outcome ``Interactions'' includes comments, reactions, and shares.
$^{*}p<0.10$; $^{**}p<0.05$; $^{***}p<0.01$.
\end{minipage}
\end{table}

\begin{sidewaystable}[htp]
\centering\small
\caption{The Impact of Any Comment on On-platform User Engagement}
\label{tab:robustness_anycomment_controls}
\begin{adjustbox}{width=\textwidth,keepaspectratio}
\begin{threeparttable}

\begin{tabular}{l*{12}{c}}
\toprule
& \multicolumn{3}{c}{All Engagement}
& \multicolumn{3}{c}{Post Expansions}
& \multicolumn{3}{c}{Interactions}
& \multicolumn{3}{c}{Link Clicks} \\
\cmidrule(lr){2-4}\cmidrule(lr){5-7}\cmidrule(lr){8-10}\cmidrule(lr){11-13}
 & (1) & (2) & (3)
 & (4) & (5) & (6)
 & (7) & (8) & (9)
 & (10) & (11) & (12) \\
\midrule
Any comments
  & 0.065$^{***}$ & 0.065$^{***}$ & 0.066$^{***}$
  & 0.049$^{***}$ & 0.049$^{**}$  & 0.049$^{***}$
  & 0.003         & 0.003         & 0.003
  & 0.016$^{**}$  & 0.016         & 0.016 \\
  & (0.015) & (0.025) & (0.024)
  & (0.013) & (0.019) & (0.019)
  & (0.003) & (0.004) & (0.004)
  & (0.008) & (0.014) & (0.014) \\[0.5ex]

Constant
  & 0.566$^{***}$ & 0.418$^{***}$ & 0.485$^{***}$
  & 0.282$^{***}$ & 0.178$^{***}$ & 0.254$^{***}$
  & 0.034$^{***}$ & 0.006         & 0.020$^{***}$
  & 0.276$^{***}$ & 0.243$^{***}$ & 0.228$^{***}$ \\
  & (0.046) & (0.035) & (0.021)
  & (0.045) & (0.025) & (0.016)
  & (0.008) & (0.004) & (0.003)
  & (0.021) & (0.025) & (0.012) \\
\midrule
ZIP Code Set FEs
  & \multicolumn{1}{c}{\cmark} &  & 
  & \multicolumn{1}{c}{\cmark} &  &
  & \multicolumn{1}{c}{\cmark} &  &
  & \multicolumn{1}{c}{\cmark} &  & \\
Controls
  &  & \multicolumn{1}{c}{\cmark} & 
  &  & \multicolumn{1}{c}{\cmark} &
  &  & \multicolumn{1}{c}{\cmark} &
  &  & \multicolumn{1}{c}{\cmark} & \\
\midrule
Mean \(Y\) in Control (C)
  & 0.485 & 0.485 & 0.485
  & 0.254 & 0.254 & 0.254
  & 0.020 & 0.020 & 0.020
  & 0.228 & 0.228 & 0.228 \\

Observations
  & \multicolumn{12}{c}{1,054,015} \\
$R^{2}$
  & 0.0001 & 0.0002 & 0.00002
  & 0.0001 & 0.0002 & 0.00002
  & 0.00002 & 0.0001 & 0.00000
  & 0.0001 & 0.00004 & 0.00000 \\
\bottomrule
\end{tabular}

\vspace{0.3em}

\parbox{\linewidth}{\textit{Notes:} Standard errors are clustered at the advertisement level (\feTwoAdsN{} ads). The unit of observation is the user, constructed from ad-level aggregate data.  Columns (1), (4), (7), (10) include ZIP code set fixed effects only. Columns (2), (5), (8), (11) include gender and age controls only. Columns (3), (6), (9), (12) do not include ZIP code set fixed effects and controls. The outcome ``Interactions'' includes comments, reactions and shares. $^{*}p<0.10$; $^{**}p<0.05$; $^{***}p<0.01$.}
\end{threeparttable}
\end{adjustbox}
\end{sidewaystable}

\begin{sidewaystable}[htp]
\centering\small
\caption{The Impact of Comment Stance on On-platform User Engagement}
\label{tab:robustness_stance_controls}
\begin{adjustbox}{width=\textwidth,keepaspectratio}
\begin{threeparttable}
\begin{tabular}{l*{12}{c}}
\toprule
& \multicolumn{3}{c}{All Engagement}
& \multicolumn{3}{c}{Post Expansions}
& \multicolumn{3}{c}{Interactions}
& \multicolumn{3}{c}{Link Clicks} \\
\cmidrule(lr){2-4}\cmidrule(lr){5-7}\cmidrule(lr){8-10}\cmidrule(lr){11-13}
 & (1) & (2) & (3)
 & (4) & (5) & (6)
 & (7) & (8) & (9)
 & (10) & (11) & (12) \\
\midrule
Opposing
  & 0.086$^{***}$ & 0.087$^{***}$ & 0.087$^{***}$
  & 0.048$^{***}$ & 0.048$^{*}$   & 0.048$^{*}$
  & 0.009$^{**}$  & 0.009$^{*}$   & 0.009$^{*}$
  & 0.033$^{***}$ & 0.034$^{**}$  & 0.034$^{**}$ \\
  & (0.021) & (0.032) & (0.031)
  & (0.017) & (0.025) & (0.025)
  & (0.004) & (0.005) & (0.005)
  & (0.011) & (0.017) & (0.017) \\[0.5ex]

Mixed
  & 0.061$^{***}$ & 0.061$^{**}$ & 0.062$^{**}$
  & 0.049$^{***}$ & 0.049$^{**}$ & 0.050$^{**}$
  & 0.002         & 0.002        & 0.002
  & 0.011         & 0.011        & 0.011 \\
  & (0.017) & (0.029) & (0.028)
  & (0.013) & (0.021) & (0.020)
  & (0.003) & (0.004) & (0.004)
  & (0.011) & (0.018) & (0.018) \\[0.5ex]

Supportive
  & 0.047$^{***}$ & 0.047$^{*}$ & 0.048$^{*}$
  & 0.049$^{***}$ & 0.049$^{**}$ & 0.049$^{**}$
  & $-$0.003      & $-$0.003     & $-$0.003
  & 0.002         & 0.003        & 0.003 \\
  & (0.017) & (0.028) & (0.028)
  & (0.015) & (0.023) & (0.022)
  & (0.003) & (0.004) & (0.004)
  & (0.008) & (0.014) & (0.014) \\[0.5ex]

Constant
  & 0.566$^{***}$ & 0.418$^{***}$ & 0.485$^{***}$
  & 0.282$^{***}$ & 0.178$^{***}$ & 0.254$^{***}$
  & 0.034$^{***}$ & 0.006         & 0.020$^{***}$
  & 0.276$^{***}$ & 0.243$^{***}$ & 0.228$^{***}$ \\
  & (0.046) & (0.035) & (0.021)
  & (0.045) & (0.025) & (0.016)
  & (0.006) & (0.004) & (0.003)
  & (0.021) & (0.025) & (0.012) \\
\midrule
ZIP Code Set FEs
  & \multicolumn{1}{c}{\cmark} &  & 
  & \multicolumn{1}{c}{\cmark} &  &
  & \multicolumn{1}{c}{\cmark} &  &
  & \multicolumn{1}{c}{\cmark} &  & \\
Controls
  &  & \multicolumn{1}{c}{\cmark} & 
  &  & \multicolumn{1}{c}{\cmark} &
  &  & \multicolumn{1}{c}{\cmark} &
  &  & \multicolumn{1}{c}{\cmark} & \\
\midrule
Mean \(Y\) in Control (C)
  & 0.485 & 0.485 & 0.485
  & 0.254 & 0.254 & 0.254
  & 0.020 & 0.020 & 0.020
  & 0.228 & 0.228 & 0.228 \\

p(Support vs. Oppose)
  & 0.067 & 0.185 & 0.187
  & 0.935 & 0.978 & 0.964
  & 0.003 & 0.010 & 0.010
  & 0.007 & 0.027 & 0.027 \\

p(Oppose vs. Mixed)
  & 0.230 & 0.406 & 0.396
  & 0.914 & 0.948 & 0.944
  & 0.101 & 0.123 & 0.121
  & 0.105 & 0.218 & 0.217 \\

p(Support vs. Mixed)
  & 0.410 & 0.587 & 0.591
  & 0.987 & 0.968 & 0.982
  & 0.081 & 0.144 & 0.139
  & 0.427 & 0.585 & 0.590 \\

Observations
  & \multicolumn{12}{c}{1,054,015} \\

$R^{2}$
  & 0.0001 & 0.0002 & 0.00002
  & 0.0001 & 0.0002 & 0.00002
  & 0.00003 & 0.0001 & 0.00001
  & 0.0001 & 0.00005 & 0.00001 \\
\bottomrule
\end{tabular}

\vspace{0.3em}

\parbox{\linewidth}{\textit{Notes:} Standard errors are clustered at the advertisement level (\feTwoAdsN{} ads). The unit of observation is the user, constructed from ad-level aggregate data.  Columns (1), (4), (7), (10) include ZIP code set fixed effects only. Columns (2), (5), (8), (11) include gender and age controls only. Columns (3), (6), (9), (12) do not include ZIP code set fixed effects and controls. The outcome ``Interactions'' includes comments, reactions and shares. $^{*}p<0.10$; $^{**}p<0.05$; $^{***}p<0.01$.}
\end{threeparttable}
\end{adjustbox}
\end{sidewaystable}

\newpage
\begin{table}[H]
\centering
\caption{The Impact of the Comment Section on On-platform User Engagement (Logistic Regression)} 
\label{tab:logit}
\scalebox{0.65}{\begin{tabular}{@{\extracolsep{5pt}}p{4.2cm}*{8}{>{\centering\arraybackslash}p{2.3cm}}}
\\[-1.8ex]\hline
\hline \\[-1.8ex]
 & \multicolumn{8}{c}{\textit{Dependent variable (Odds Ratios)}} \\
 \cline{2-9}
 & \multicolumn{2}{c}{All Engagement}
 & \multicolumn{2}{c}{Post Expansions}
 & \multicolumn{2}{c}{Interactions}
 & \multicolumn{2}{c}{Link Clicks} \\
\\[-1.8ex]
 & (1) & (2) & (3) & (4) & (5) & (6) & (7) & (8) \\
\hline \\[-1.8ex]

Any comments
 & 1.136$^{***}$ & 
 & 1.193$^{***}$ &
 & 1.131 &
 & 1.070 &  \\
 & (0.036) & 
 & (0.052) &
 & (0.177) &
 & (0.050) & \\[0.8ex]

Opposing
 &  & 1.181$^{***}$
 &  & 1.191$^{***}$
 &  & 1.436$^{**}$
 &  & 1.148$^{**}$ \\
 &  & (0.045)
 &  & (0.063)
 &  & (0.257)
 &  & (0.064) \\

Mixed
 &  & 1.128$^{***}$
 &  & 1.196$^{***}$
 &  & 1.110
 &  & 1.051 \\
 &  & (0.043)
 &  & (0.063)
 &  & (0.210)
 &  & (0.060) \\

Supportive
 &  & 1.098$^{**}$
 &  & 1.193$^{***}$
 &  & 0.848
 &  & 1.012 \\
 &  & (0.043)
 &  & (0.063)
 &  & (0.172)
 &  & (0.058) \\[0.8ex]

Constant
 & 0.005$^{***}$ & 0.005$^{***}$
 & 0.002$^{***}$ & 0.002$^{***}$
 & 0.000 & 0.000
 & 0.004$^{***}$ & 0.004$^{***}$ \\
 & (0.001) & (0.001)
 & (0.001) & (0.001)
 & (0.00000) & (0.00000)
 & (0.001) & (0.001) \\

\hline \\[-1.8ex]

ZIP Code Set FEs
 & \checkmark & \checkmark
 & \checkmark & \checkmark
 & \checkmark & \checkmark
 & \checkmark & \checkmark \\

Controls
 & \checkmark & \checkmark
 & \checkmark & \checkmark
 & \checkmark & \checkmark
 & \checkmark & \checkmark \\

\mbox{ZIP Code Set~$\times$~Controls}
 & \checkmark & \checkmark
 & \checkmark & \checkmark
 & \checkmark & \checkmark
 & \checkmark & \checkmark \\

\hline \\[-1.8ex]

\mbox{Mean Y in Control (C)}
 & 0.485 & 0.485
 & 0.254 & 0.254
 & 0.020 & 0.020
 & 0.228 & 0.228 \\

Observations
 & 1,054,015 & 1,054,015
 & 1,054,015 & 1,054,015
 & 1,054,015 & 1,054,015
 & 1,054,015 & 1,054,015 \\

\hline
\hline \\[-1.8ex]

\end{tabular}
}
\vspace{0.3em}

\begin{minipage}{\textwidth}
\scriptsize
\noindent \textit{Notes:} Coefficients reported as odds ratios. The unit of observation is the user, constructed from ad-level aggregate data. Controls include user gender and age, as well as their pairwise interactions. The outcome ``Interactions'' includes comments, reactions and shares. $^{*}p<0.10$; $^{**}p<0.05$; $^{***}p<0.01$.
\end{minipage}

\end{table}

\begin{table}[H]
\centering
\caption{Ad-level Estimates}
\label{tab:outcomes_model_free}
\scriptsize
\resizebox{\textwidth}{!}{
\begin{tabular}{lcccccccc}
\toprule\toprule
& \multicolumn{4}{c}{\textit{Group Mean / (SD)}} & \multicolumn{4}{c}{t--test \textit{p}--value} \\
\cmidrule(lr){2-5}\cmidrule(lr){6-9}
Variable &
(1) Control & (2) Supportive & (3) Mixed & (4) Opposing &
(1)–(2) & (1)–(3) & (1)–(4) & (2)–(4) \\[0.3em]
\midrule
All Engagement
 & \makecell{0.485 \\ (0.091)}
 & \makecell{0.533 \\ (0.080)}
 & \makecell{0.547 \\ (0.078)}
 & \makecell{0.572 \\ (0.098)}
 & 0.096 & 0.032 & 0.008 & 0.201 \\
\midrule
Post expansions
 & \makecell{0.254 \\ (0.073)}
 & \makecell{0.303 \\ (0.065)}
 & \makecell{0.303 \\ (0.046)}
 & \makecell{0.302 \\ (0.080)}
 & 0.036 & 0.017 & 0.063 & 0.965 \\
\midrule
Interactions
 & \makecell{0.020 \\ (0.014)}
 & \makecell{0.017 \\ (0.014)}
 & \makecell{0.022 \\ (0.009)}
 & \makecell{0.029 \\ (0.015)}
 & 0.506 & 0.557 & 0.079 & 0.014 \\
\midrule
Link Clicks
 & \makecell{0.228 \\ (0.050)}
 & \makecell{0.230 \\ (0.035)}
 & \makecell{0.239 \\ (0.058)}
 & \makecell{0.261 \\ (0.048)}
 & 0.850 & 0.543 & 0.050 & 0.034 \\
\midrule
Observations
 & 18 & 18 & 18 & 18 &  &  &  &  \\[0.3em]
\bottomrule

\end{tabular}
}
\vspace{0.3em}

\begin{minipage}{\textwidth}
\scriptsize
\noindent \textit{Notes:} Each observation corresponds to an ad. Statistics are weighted by ad reach. The p-values are based on t-tests using heteroskedasticity-robust standard errors.
\end{minipage}
\end{table}

\begin{table}[H] \centering 
  \caption{The Impact of the Comment Section on On-platform User Engagement (clustered at the ZIP-code-set level)} 
  \label{tab:main_zip_cluster}
\scalebox{0.78}{\begin{tabular}{@{\extracolsep{5pt}}lcccccccc} 
\\[-1.8ex]\hline 
\hline \\[-1.8ex] 
 & \multicolumn{8}{c}{\textit{Dependent variable:}} \\ 
\cline{2-9} 
 & \multicolumn{2}{c}{All Engagement} & \multicolumn{2}{c}{Post Expansions} & \multicolumn{2}{c}{Interactions} & \multicolumn{2}{c}{Link Clicks} \\ 
\\[-1.8ex] & (1) & (2) & (3) & (4) & (5) & (6) & (7) & (8)\\ 
\hline \\[-1.8ex] 
 Any comments & 0.065$^{***}$ &  & 0.049$^{***}$ &  & 0.003 &  & 0.016$^{**}$ &  \\ 
  & (0.017) &  & (0.015) &  & (0.003) &  & (0.008) &  \\ 
  & [0.001] &  & [0.006] &  & [0.389] &  & [0.053] &  \\ 
 Opposing &  & 0.087$^{***}$ &  & 0.048$^{**}$ &  & 0.009$^{*}$ &  & 0.034$^{***}$ \\ 
  &  & (0.026) &  & (0.021) &  & (0.005) &  & (0.013) \\ 
  &  & [0.003] &  & [0.041] &  & [0.067] &  & [0.025] \\ 
 Mixed &  & 0.062$^{***}$ &  & 0.049$^{***}$ &  & 0.002 &  & 0.012 \\ 
  &  & (0.017) &  & (0.014) &  & (0.004) &  & (0.011) \\ 
  &  & [0.001] &  & [0.004] &  & [0.575] &  & [0.300] \\ 
 Supportive &  & 0.047$^{**}$ &  & 0.049$^{***}$ &  & $-$0.003 &  & 0.003 \\ 
  &  & (0.021) &  & (0.018) &  & (0.003) &  & (0.011) \\ 
  &  & [0.033] &  & [0.018] &  & [0.328] &  & [0.803] \\ 
 Constant & 0.538$^{***}$ & 0.539$^{***}$ & 0.158$^{***}$ & 0.158$^{***}$ & 0.004 & 0.005 & 0.379$^{***}$ & 0.380$^{***}$ \\ 
  & (0.017) & (0.017) & (0.016) & (0.016) & (0.005) & (0.005) & (0.013) & (0.013) \\ 
  & [0.239] & [0.237] & [0.459] & [0.465] & [0.564] & [0.551] & [0.210] & [0.204] \\ 
 \hline \\[-1.8ex] 
ZIP Code Set FEs & \checkmark & \checkmark & \checkmark & \checkmark & \checkmark & \checkmark & \checkmark & \checkmark \\ 
Controls & \checkmark & \checkmark & \checkmark & \checkmark & \checkmark & \checkmark & \checkmark & \checkmark \\ 
ZIP Code Set~$\times$~Controls & \checkmark & \checkmark & \checkmark & \checkmark & \checkmark & \checkmark & \checkmark & \checkmark \\ 
\hline \\[-1.8ex] 

Mean Y in Control & 0.485 & 0.485 & 0.254 & 0.254 & 0.020 & 0.020 & 0.228 & 0.228 \\ 
p(Support vs. Oppose) &  & 0.111 &  & 0.979 &  & 0.021 &  & 0.010 \\ 
Wild p(Support vs. Oppose) &  & 0.132 &  & 0.977 &  & 0.029 &  & 0.019 \\ 
p(Oppose vs. Mixed) &  & 0.331 &  & 0.945 &  & 0.167 &  & 0.241 \\ 
Wild p(Oppose vs. Mixed) &  & 0.347 &  & 0.942 &  & 0.178 &  & 0.256 \\ 
p(Support vs. Mixed) &  & 0.439 &  & 0.959 &  & 0.055 &  & 0.496 \\ 
Wild p(Support vs. Mixed) &  & 0.467 &  & 0.958 &  & 0.067 &  & 0.550 \\ 
Observations & 1,054,015 & 1,054,015 & 1,054,015 & 1,054,015 & 1,054,015 & 1,054,015 & 1,054,015 & 1,054,015 \\ 
R$^{2}$ & 0.0004 & 0.0004 & 0.0005 & 0.0005 & 0.0003 & 0.0003 & 0.0002 & 0.0002 \\ 
\hline 
\hline \\[-1.8ex] 
\end{tabular}
}
\vspace{0.3em}

\begin{minipage}{\textwidth}
\scriptsize
\noindent \textit{Notes:} Standard errors clustered at the ZIP-code-set level (\feTwoStrataN{} ZIP-code sets) are reported in parentheses; wild cluster bootstrap $p$-values are reported in square brackets. Asterisks refer to the cluster-robust standard errors. The unit of observation is the user, constructed from ad-level aggregate data. Controls include user gender and age, as well as their pairwise interactions. The outcome ``Interactions'' includes comments, reactions, and shares.
$^{*}p<0.10$; $^{**}p<0.05$; $^{***}p<0.01$.
\end{minipage}
\end{table}

\begin{table}[H] \centering 
  \caption{The Impact of the Comment Section on Landing-Page Views} 
  \label{tab:pageviews} 
\scalebox{0.8}{
\begin{tabular}{@{\extracolsep{5pt}}p{4cm}*{2}{>{\centering\arraybackslash}p{3cm}}} 
\\[-1.8ex]\hline 
\hline \\[-1.8ex] 
 & \multicolumn{2}{c}{\textit{Dependent variable (as \% of total reach):}} \\ 
 \cline{2-3} 
  & \multicolumn{2}{c}{Landing-Page Views} \\ 
\\[-1.8ex] & (1)  & (2)  \\ 
\hline \\[-1.8ex] 
 Any comments & 0.016$^{*}$ &  \\ 
  & (0.008) &  \\[0.5ex] 
 Opposing &  & 0.019$^{*}$ \\ 
  &  & (0.011) \\ 
 Mixed &  & 0.017 \\ 
  &  & (0.011) \\ 
 Supportive &  & 0.011 \\ 
  &  & (0.011) \\[0.5ex] 
 Constant & 0.347$^{**}$ & 0.347$^{***}$ \\ 
  & (0.135) & (0.134) \\ 
\hline \\[-1.8ex] 
ZIP Code Set FEs        & \checkmark & \checkmark \\ 
Controls                     & \checkmark & \checkmark \\ 
ZIP Code Set~$\times$~Controls & \checkmark & \checkmark \\ 
\hline \\[-1.8ex] 
Mean Y in Control (C) & 0.171 & 0.171 \\ 
p(Support vs. Oppose) &  & 0.523 \\ 
p(Oppose vs. Mixed)   &  & 0.880 \\ 
p(Support vs. Mixed)  &  & 0.611 \\ 
Observations          & 1,054,015 & 1,054,015 \\ 
R$^{2}$               & 0.0002 & 0.0002 \\ 
\hline 
\hline \\[-1.8ex] 
\end{tabular}
}
\vspace{0.3em}

\begin{minipage}{\textwidth}
\scriptsize
\noindent \textit{Notes:} Standard errors are clustered at the advertisement level (\feTwoAdsN{} ads). The unit of observation is the user, constructed from ad-level aggregate data. Controls include user gender and age, as well as their pairwise interactions.
$^{*}p<0.10$; $^{**}p<0.05$; $^{***}p<0.01$.
\end{minipage}
\end{table}

\begin{table}[H] \centering 
  \caption{Robustness Check: Excluding Toxic Comments (Threshold = 0.7)} 
  \label{tab:toxicity} 
\scalebox{0.8}{
\begin{tabular}{@{\extracolsep{5pt}}lcccc} 
\\[-1.8ex]\hline 
\hline \\[-1.8ex] 
 & \multicolumn{4}{c}{\textit{Dependent variable:}} \\ 
\cline{2-5} 
 & All Engagement & Post Expansions & Interactions & Link Clicks \\ 
\\[-1.8ex] & (1) & (2) & (3) & (4)\\ 
\hline \\[-1.8ex] 
 Opposing & 0.100$^{***}$ & 0.059$^{***}$ & 0.012$^{***}$ & 0.032$^{**}$ \\ 
  & (0.024) & (0.021) & (0.004) & (0.014) \\ 
  Mixed & 0.085$^{***}$ & 0.066$^{***}$ & 0.005 & 0.017 \\ 
  & (0.019) & (0.014) & (0.004) & (0.015) \\ 
  Supportive & 0.047$^{***}$ & 0.049$^{***}$ & $-$0.003 & 0.002 \\ 
  & (0.016) & (0.015) & (0.003) & (0.008) \\ 
  Constant & 0.521$^{***}$ & 0.144$^{**}$ & 0.002 & 0.378$^{***}$ \\ 
  & (0.111) & (0.061) & (0.010) & (0.066) \\ 
 \hline \\[-1.8ex] 
ZIP Code Set FEs & \cmark & \cmark & \cmark & \cmark \\
Controls              & \cmark & \cmark & \cmark & \cmark \\
ZIP Code Set~$\times$~Controls
  & \cmark & \cmark & \cmark & \cmark \\ 
\hline \\[-1.8ex] 
Mean Y in Control (C) & 0.485 & 0.254 & 0.020 & 0.228 \\ 
p(Support vs. Oppose) & 0.024 & 0.616 & 0.001 & 0.039 \\ 
p(Oppose vs. Mixed) & 0.519 & 0.723 & 0.164 & 0.392 \\ 
p(Support vs. Mixed) & 0.044 & 0.215 & 0.022 & 0.332 \\ 
Observations & 878,687 & 878,687 & 878,687 & 878,687 \\ 
R$^{2}$ & 0.0005 & 0.001 & 0.0003 & 0.0002 \\ 
\hline 
\hline \\[-1.8ex] 
\end{tabular}
}
\vspace{0.3em}

\begin{minipage}{\textwidth}
\scriptsize
\noindent \textit{Notes:} Standard errors are clustered at the advertisement level (\feTwoToxicityRestrictedAdsN{} ads). We excluded posts with any displayed comments that had toxicity scores above 0.7. The unit of observation is the user, constructed from ad-level aggregate data. Controls include user gender and age, as well as their pairwise interactions. The outcome ``Interactions'' includes comments, reactions and shares. $^{*}p<0.10$; $^{**}p<0.05$; $^{***}p<0.01$.
\end{minipage}
\end{table}

\begin{figure}[H]
\caption{Sensitivity to Excluding One ZIP-Code Set at a Time: All Engagement}
\label{all_Engagement_one_zipcode_set_out}
\centering
\includegraphics[width=0.8\textwidth]{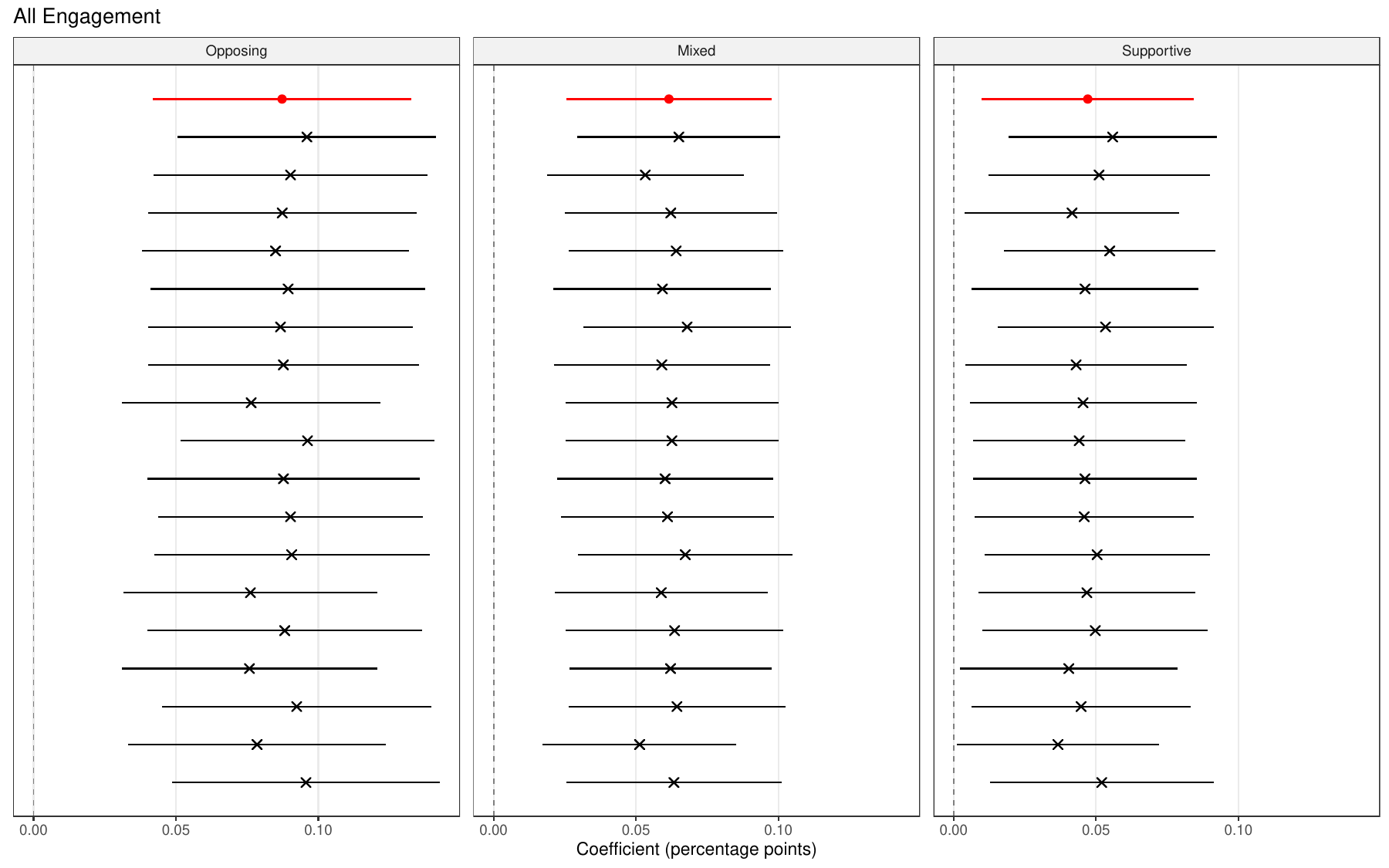} 
\end{figure}

\begin{figure}[H]
\caption{Sensitivity to Excluding One ZIP-Code Set at a Time: Post Expansions}
\label{post_expansions_one_zipcode_set_out}
\centering
\includegraphics[width=0.8\textwidth]{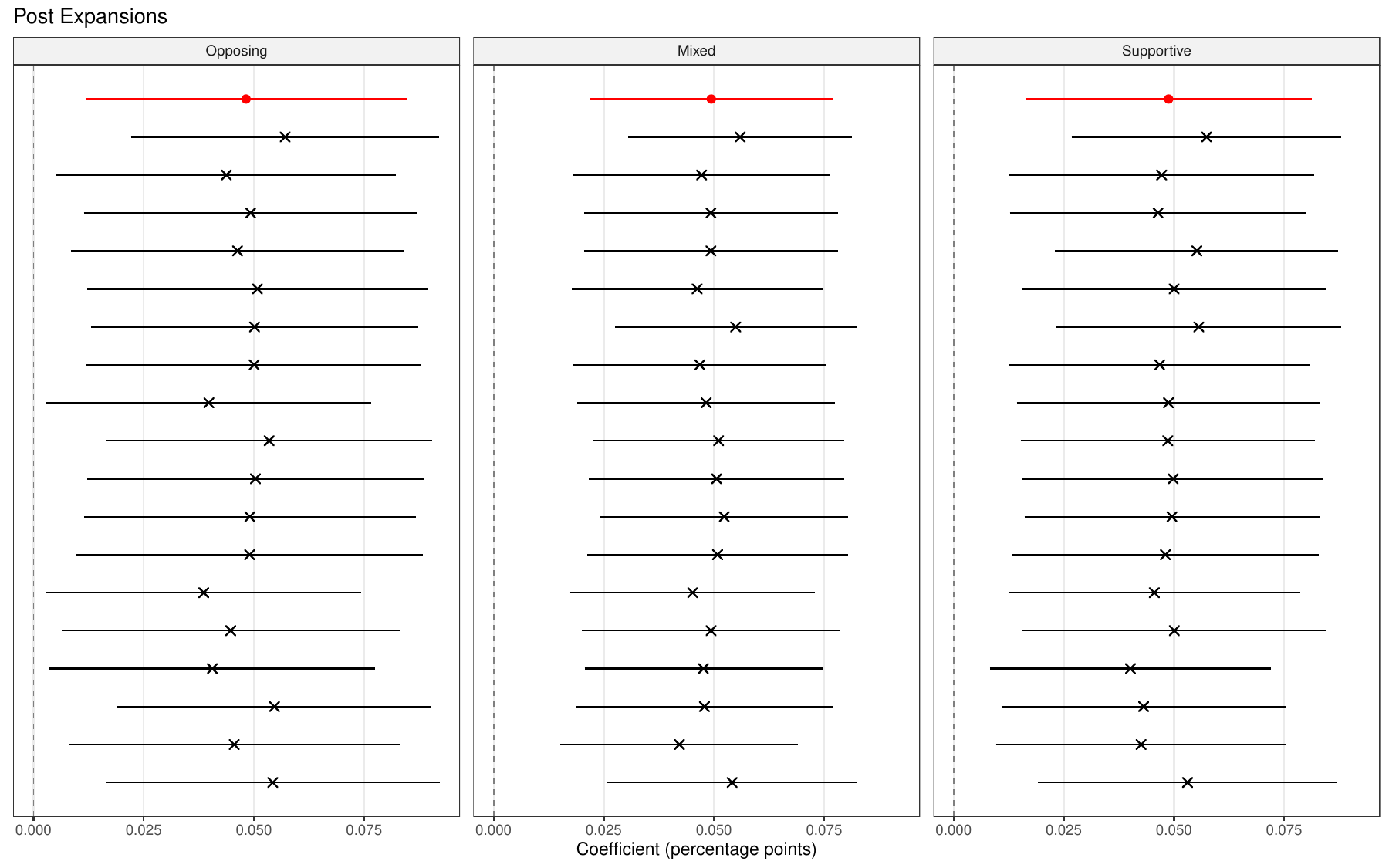} 
\end{figure}

\begin{figure}[H]
\caption{Sensitivity to Excluding One ZIP-Code Set at a Time: Interactions}
\label{interactions_one_zipcode_set_out}
\centering
\includegraphics[width=0.8\textwidth]{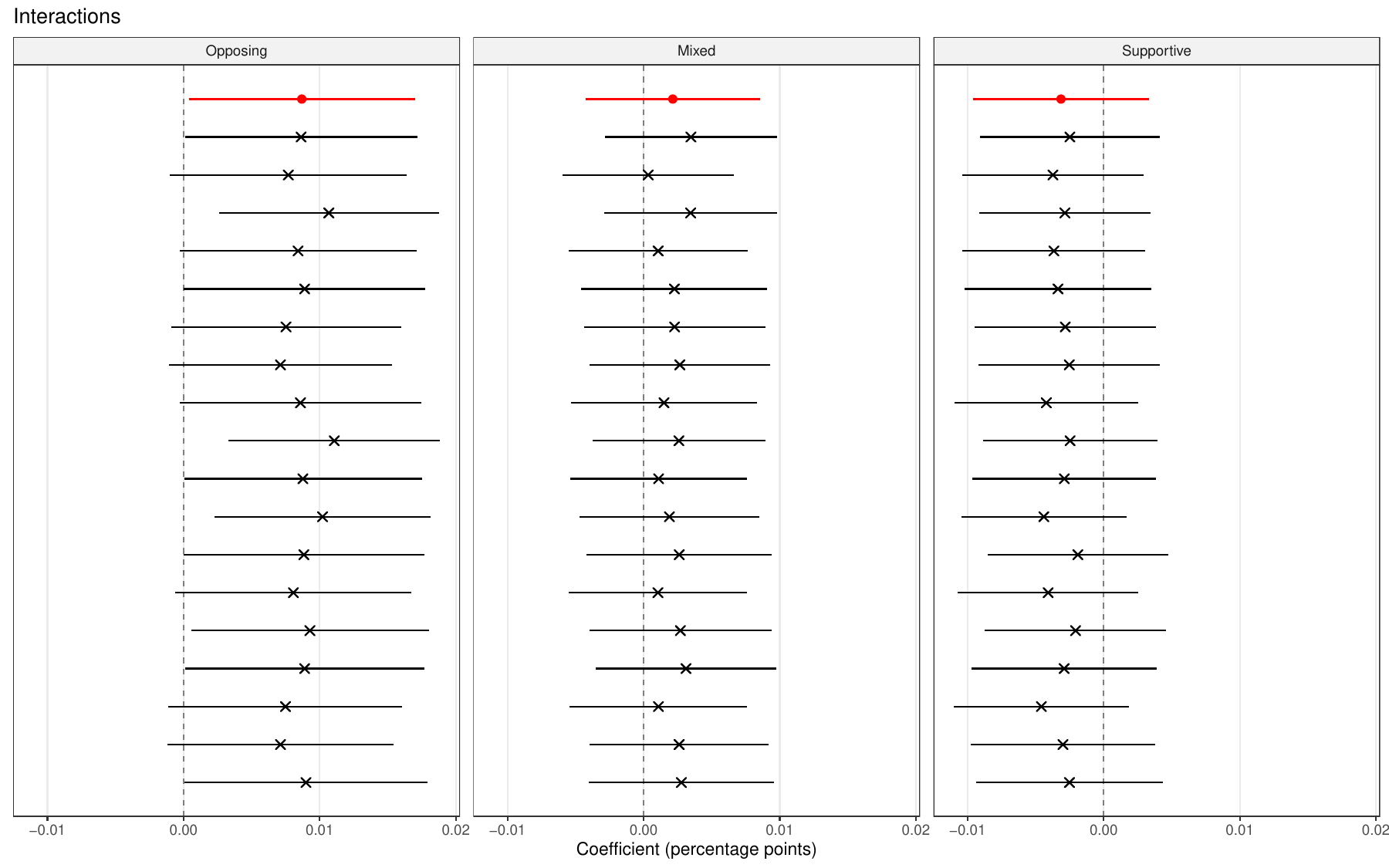} 
\end{figure}

\begin{figure}[H]
\caption{Sensitivity to Excluding One ZIP-Code Set at a Time: Link Clicks}
\label{link_clicks_one_zipcode_set_out}
\centering
\includegraphics[width=0.8\textwidth]{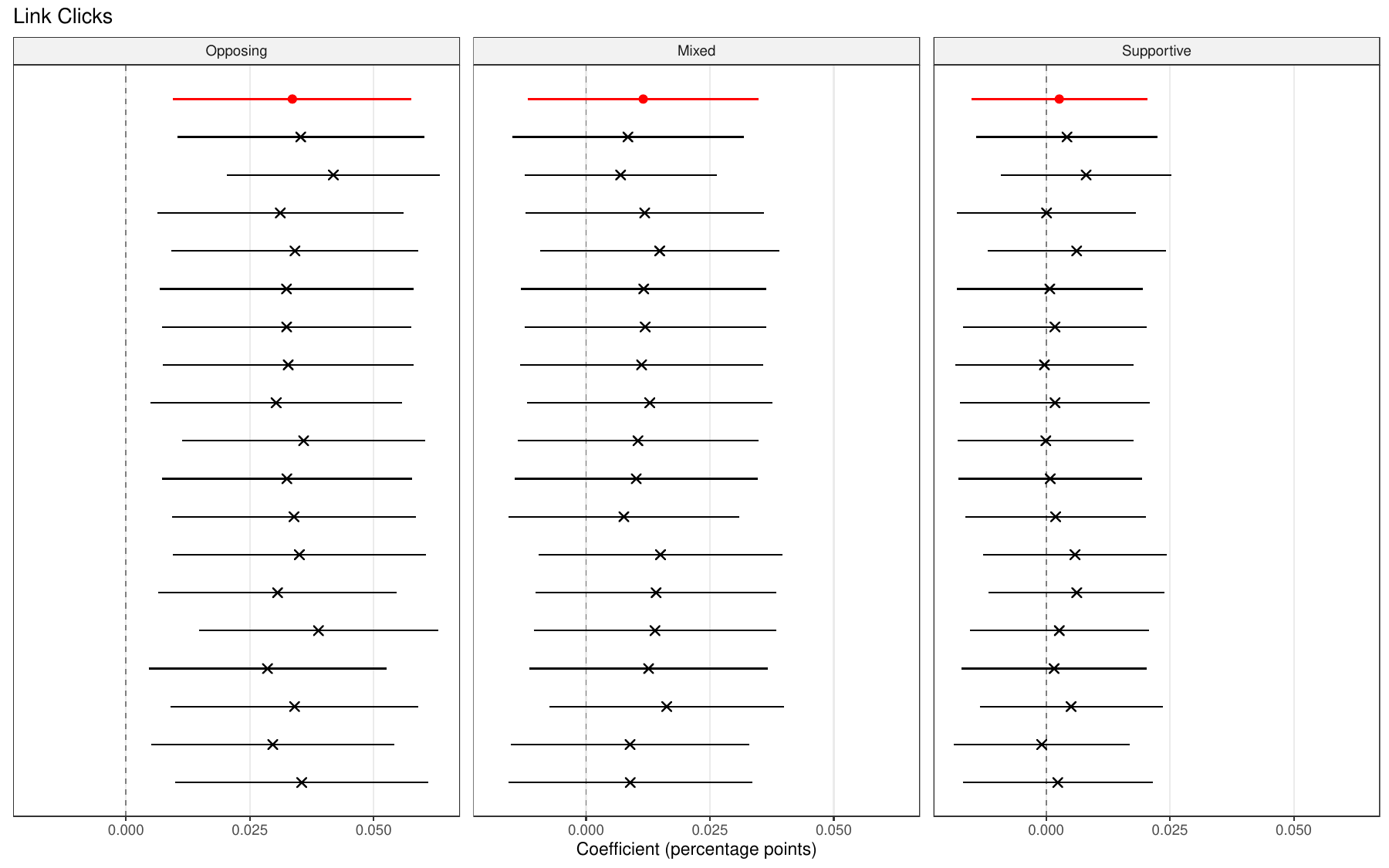} 
\begin{minipage}{\textwidth}
\scriptsize
\noindent \textit{Notes:} Figures~\ref{all_Engagement_one_zipcode_set_out}--\ref{link_clicks_one_zipcode_set_out} report a sensitivity analysis in which subsets of ZIP codes are iteratively excluded and the main treatment effects are re-estimated. Each point corresponds to an estimated coefficient from a re-estimated specification, and horizontal lines represent 95\% confidence intervals. The red point indicates the baseline estimate from the full sample.
\end{minipage}

\end{figure}

\begin{figure}[H]
\caption{Sensitivity to Excluding One Comment Section at a Time: All Engagement}
\label{all_Engagement_one_post_out}
\centering
\includegraphics[width=0.8\textwidth]{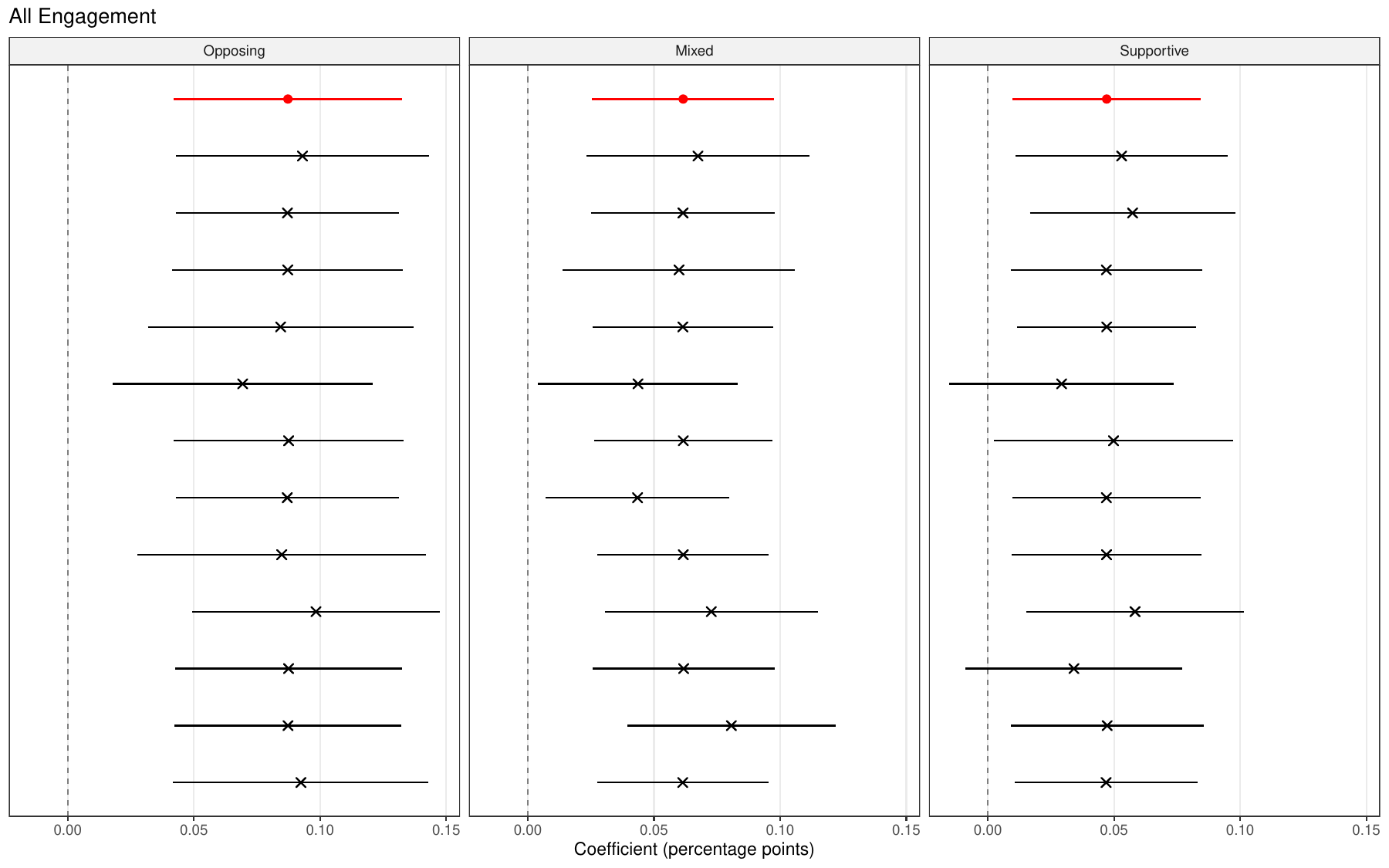} 
\end{figure}

\begin{figure}[H]
\caption{Sensitivity to Excluding One Comment Section at a Time: Post Expansions}
\label{post_expansions_one_post_out}
\centering
\includegraphics[width=0.8\textwidth]{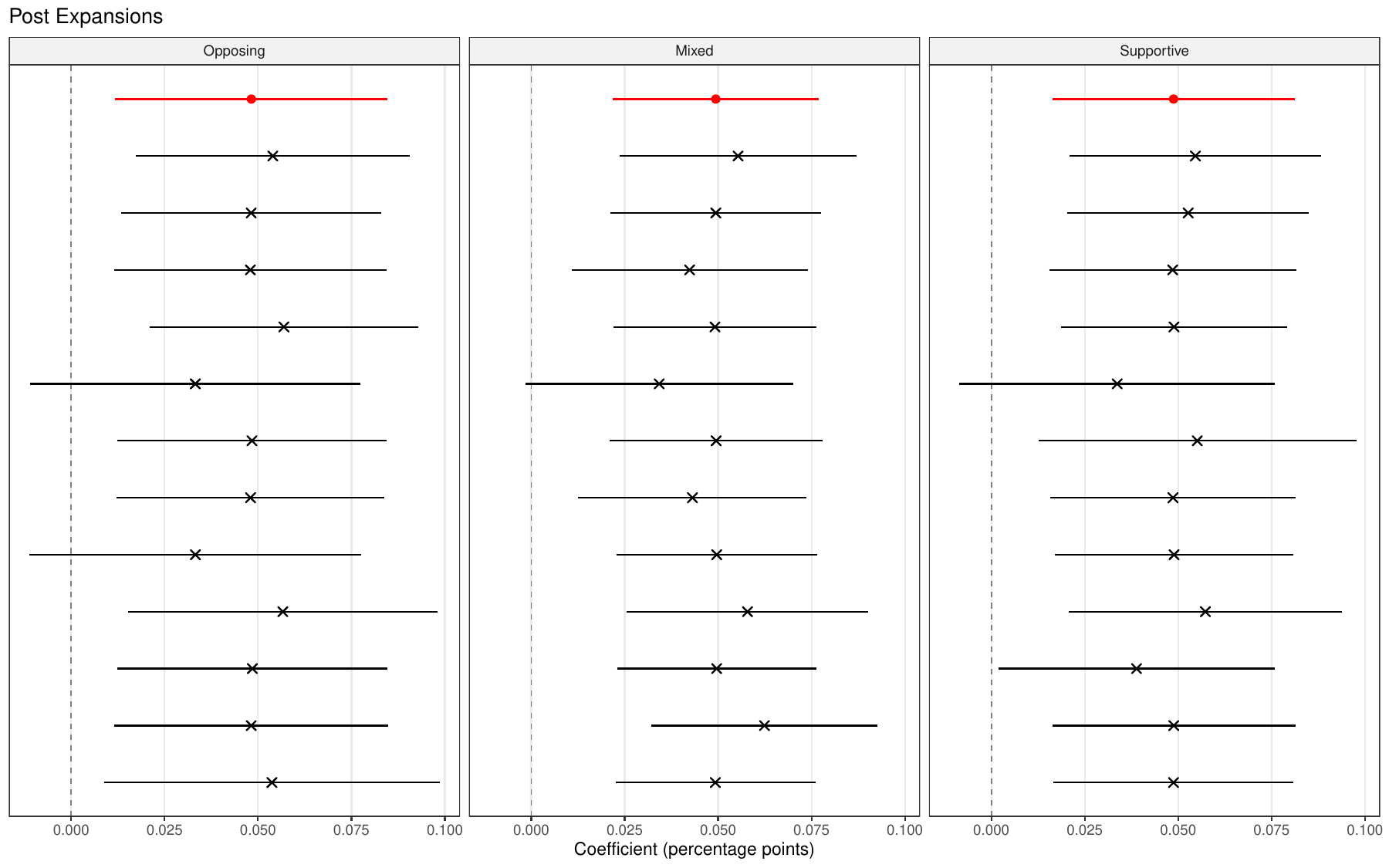} 
\end{figure}

\begin{figure}[H]
\caption{Sensitivity to Excluding One Comment Section at a Time: Interactions}
\label{interactions_one_post_out}
\centering
\includegraphics[width=0.8\textwidth]{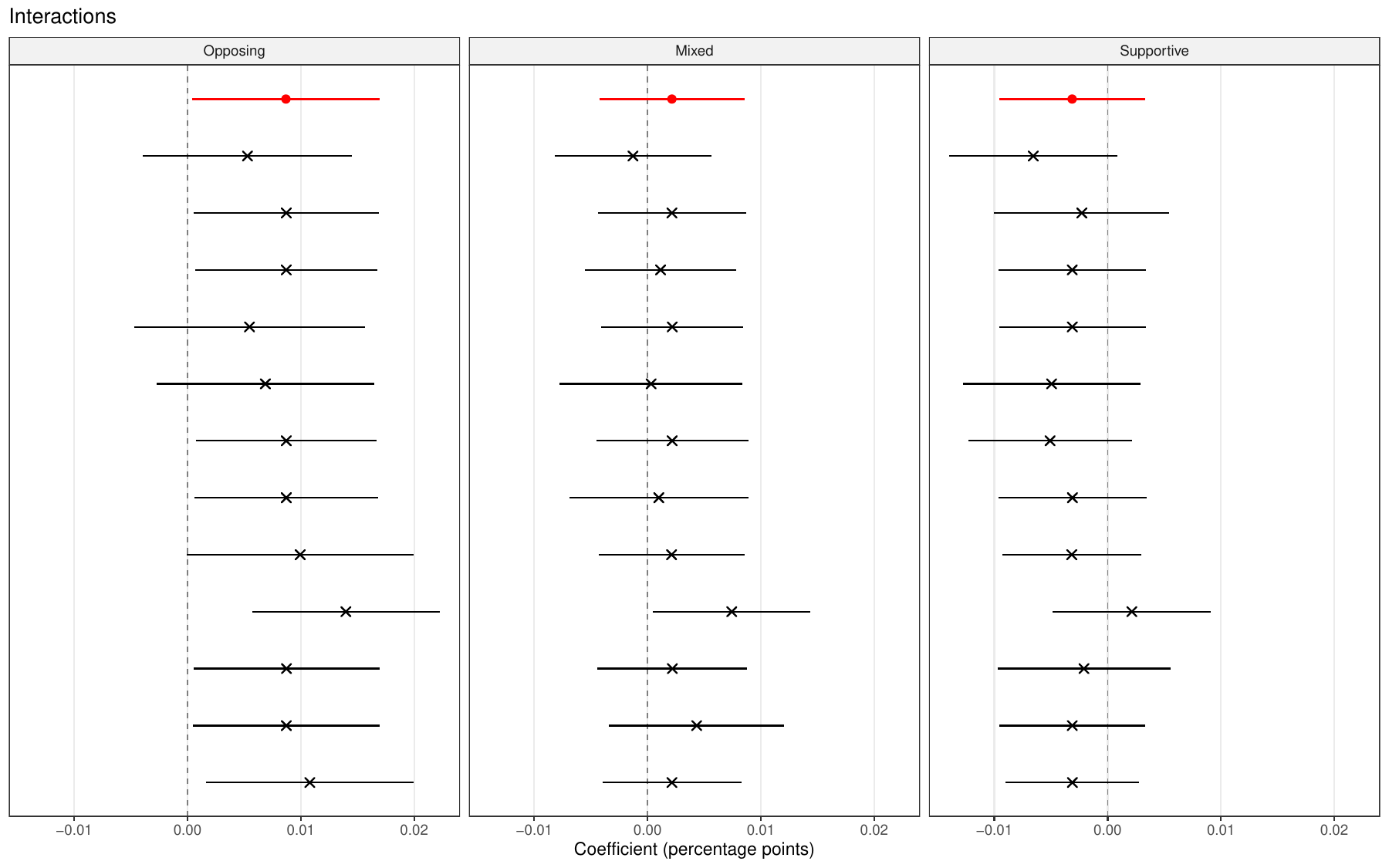} 
\end{figure}

\begin{figure}[H]
\caption{Sensitivity to Excluding One Comment Section at a Time: Link Clicks}
\label{link_clicks_one_post_out}
\centering
\includegraphics[width=0.8\textwidth]{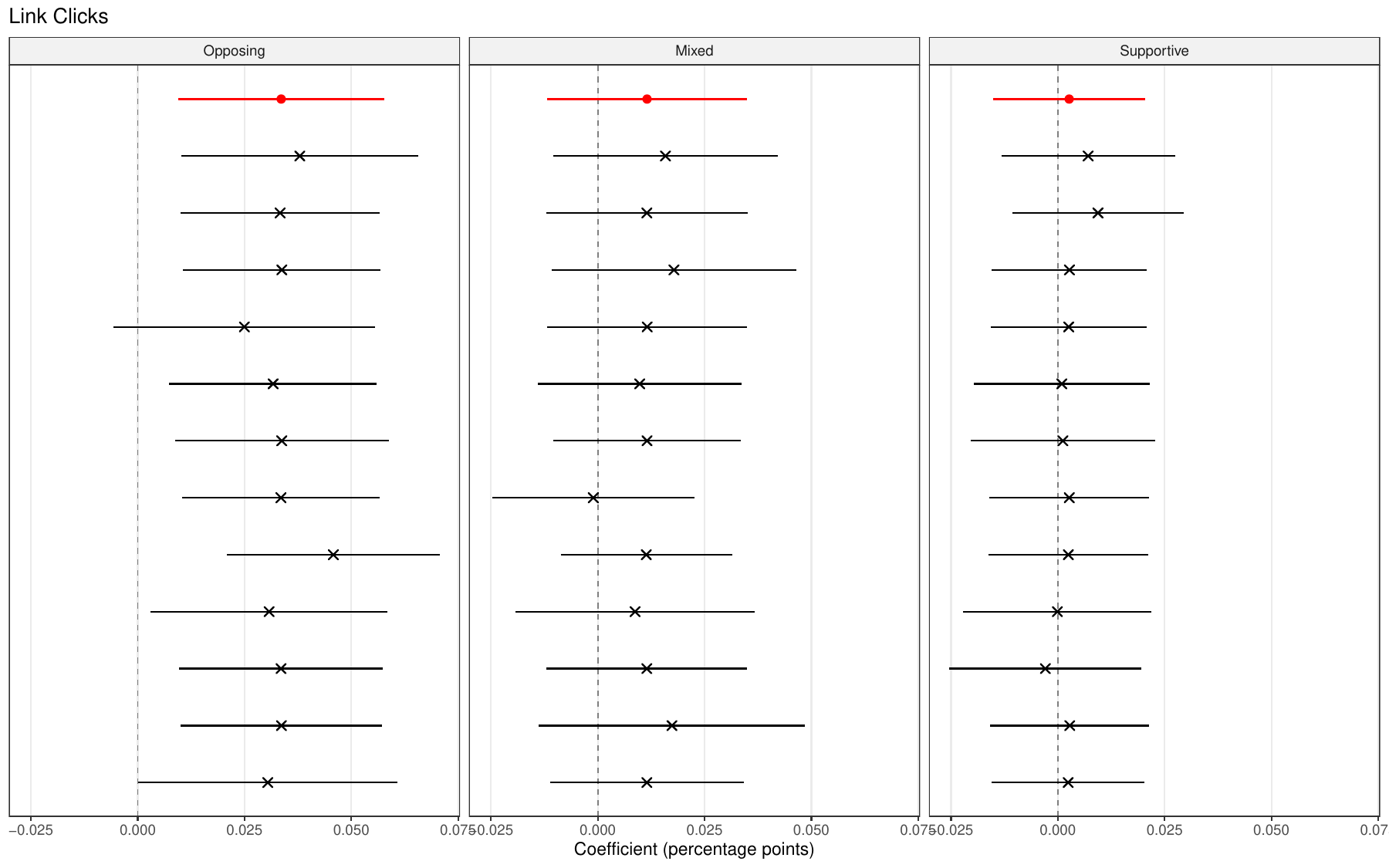} 
\begin{minipage}{\textwidth}
\scriptsize
\noindent \textit{Notes:} Figures~\ref{all_Engagement_one_post_out}--\ref{link_clicks_one_post_out} report a sensitivity analysis in which each comment section (post) is iteratively excluded and the main treatment effects are re-estimated. Each point corresponds to an estimated coefficient from a re-estimated specification, and horizontal lines represent 95\% confidence intervals. The red point indicates the baseline estimate from the full sample.
\end{minipage}

\end{figure}

\begin{figure}[H]
\caption{Randomization-Inference Tests: All Engagement}
\label{all_engagement_permutation}
\centering
\includegraphics[width=0.85\textwidth]{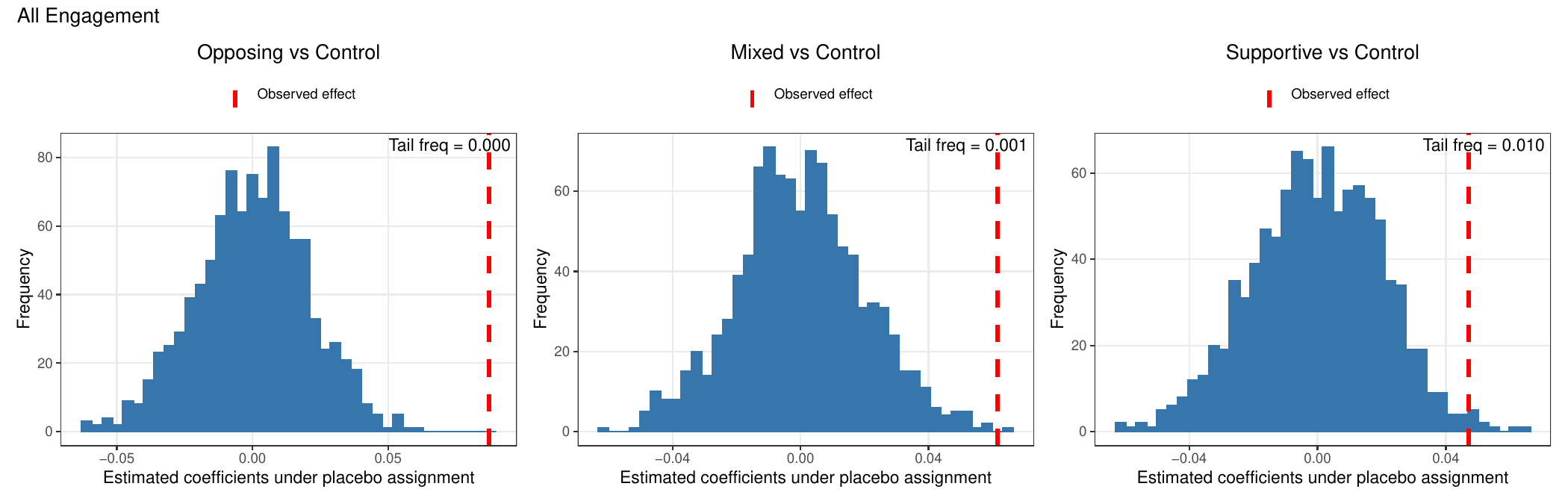} 
\end{figure}

\begin{figure}[H]
\caption{Randomization-Inference Tests: Post Expansions}
\label{post_expansions_permutation}
\centering
\includegraphics[width=0.85\textwidth]{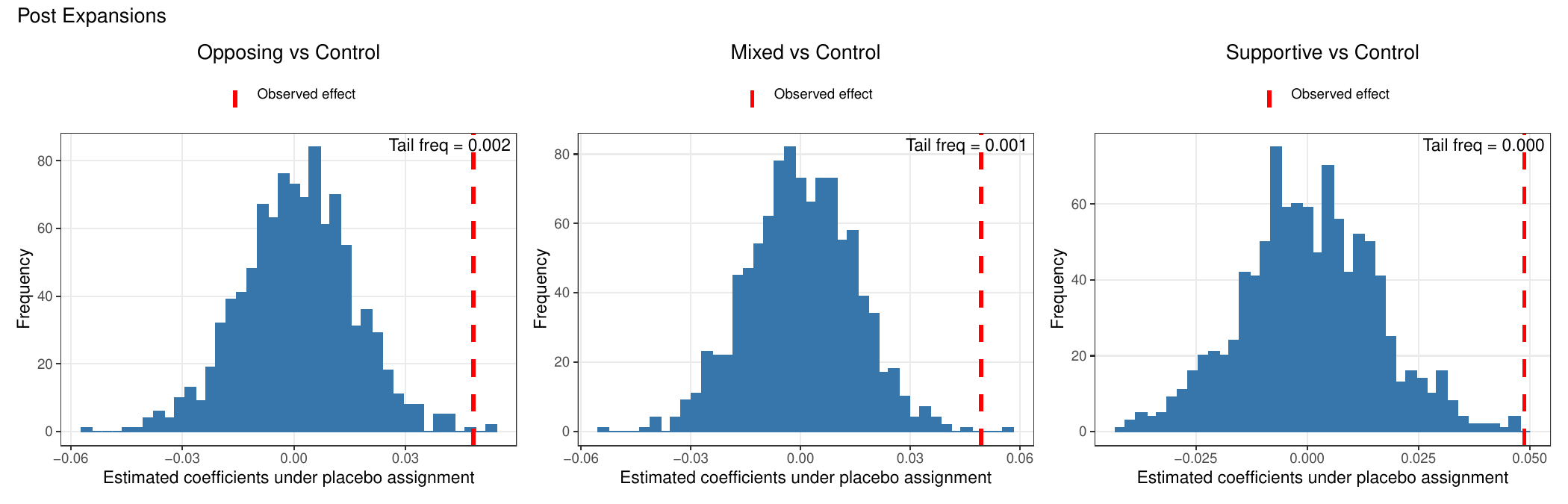} 
\end{figure}

\begin{figure}[H]
\caption{Randomization-Inference Tests: Interactions}
\label{interactions_permutation}
\centering
\includegraphics[width=0.85\textwidth]{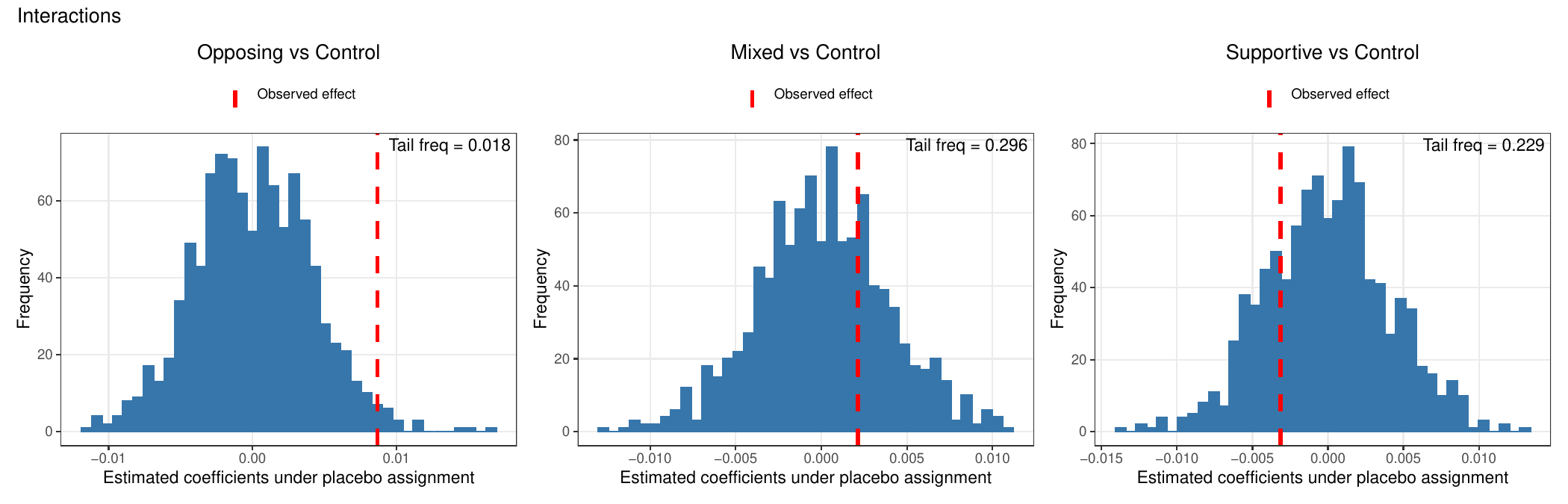} 
\end{figure}

\begin{figure}[H]
\caption{Randomization-Inference Tests: Link Clicks}
\label{link_clicks_permutation}
\centering
\includegraphics[width=0.85\textwidth]{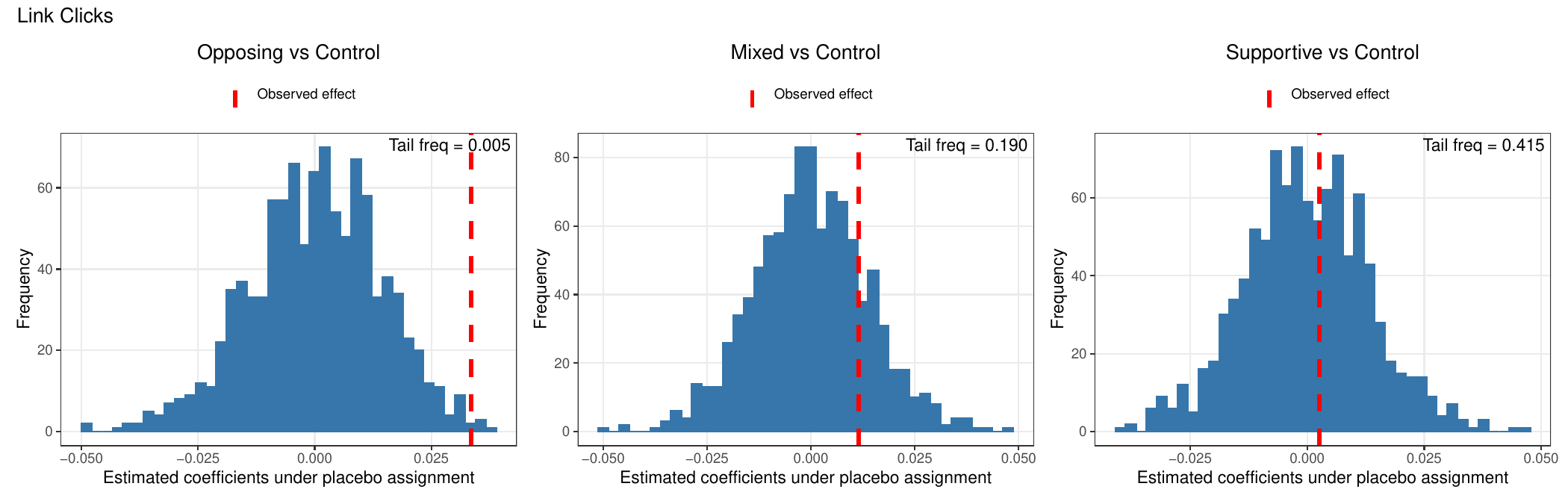} 
\begin{minipage}{\textwidth}
\scriptsize
\noindent \textit{Notes:} Figures~\ref{all_engagement_permutation}--\ref{link_clicks_permutation} report randomization-inference placebo tests. In each of 1,000 draws, individuals are randomly reassigned to the four arms within each ZIP-code set, with equal arm sizes, and the full specification of equation~\eqref{eq:main} is re-estimated. Histograms show the distribution of the resulting placebo estimates. The vertical dashed line marks the observed treatment effect in the data. Tail frequencies report the share of placebo estimates at least as extreme as the observed estimate, in its direction, and therefore correspond to one-sided permutation $p$-values.
\end{minipage}

\end{figure}

\section{Survey Experiment}
\setcounter{figure}{0}
\renewcommand{\thefigure}{F\arabic{figure}}
\renewcommand{\theHfigure}{F\arabic{figure}}

\setcounter{table}{0}
\renewcommand{\thetable}{F\arabic{table}}
\renewcommand{\theHtable}{F\arabic{table}}

\subsection{Design}
\begin{figure}[H]
    \centering
    \caption{Survey Experiment Stimuli}
    \label{fig:stimuli_survey}
    \begin{subfigure}[b]{0.25\textwidth}
        \centering
        \includegraphics[width=1\textwidth]{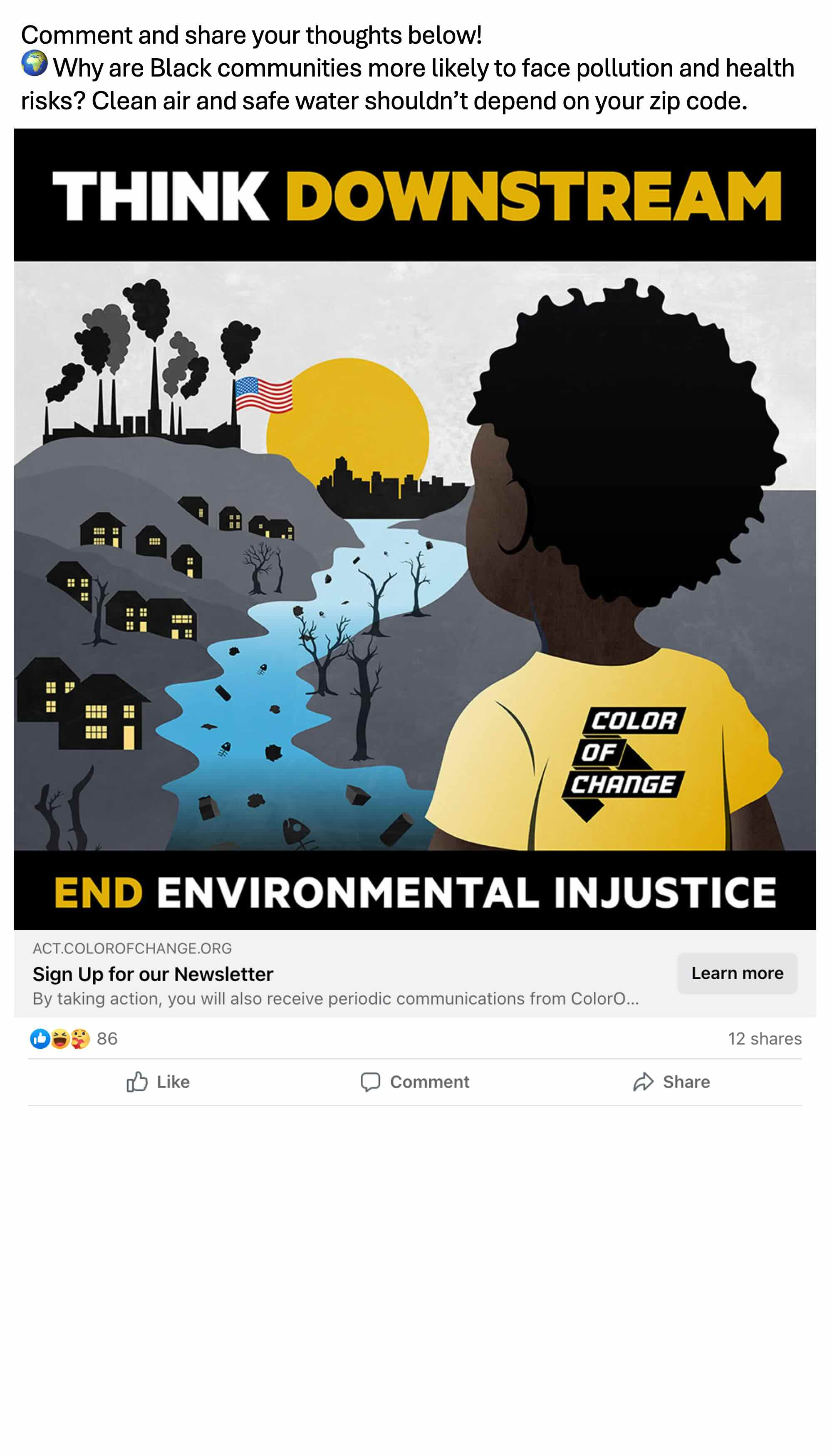} 
        \caption{Control}
    \end{subfigure}
    \, \,
    \begin{subfigure}[b]{0.25\textwidth}
        \centering
        \includegraphics[width=1\textwidth]{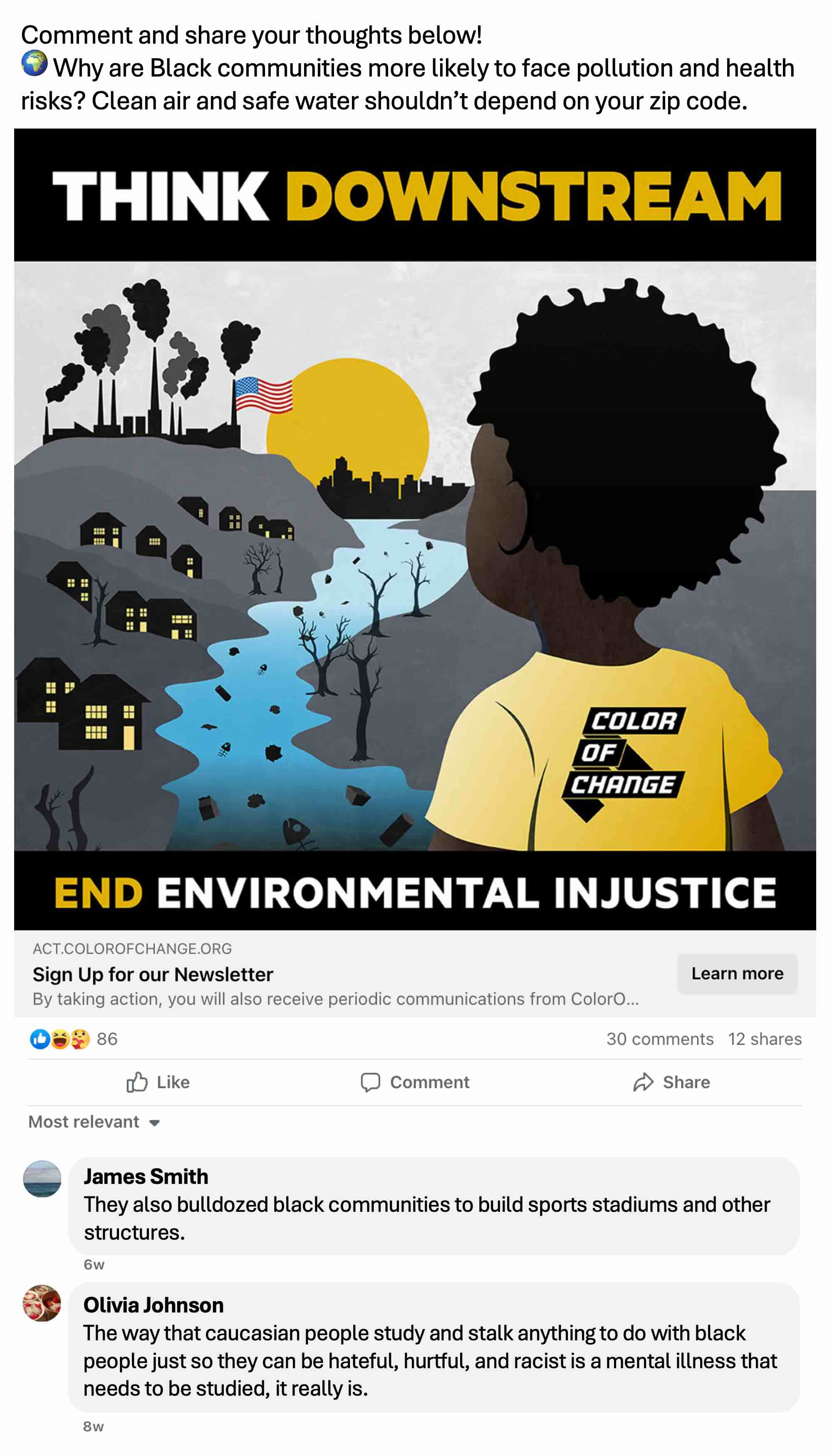}
        \caption{Supportive}
    \end{subfigure}
    \, \,
    \begin{subfigure}[b]{0.25\textwidth}
        \centering
        \includegraphics[width=1\textwidth]{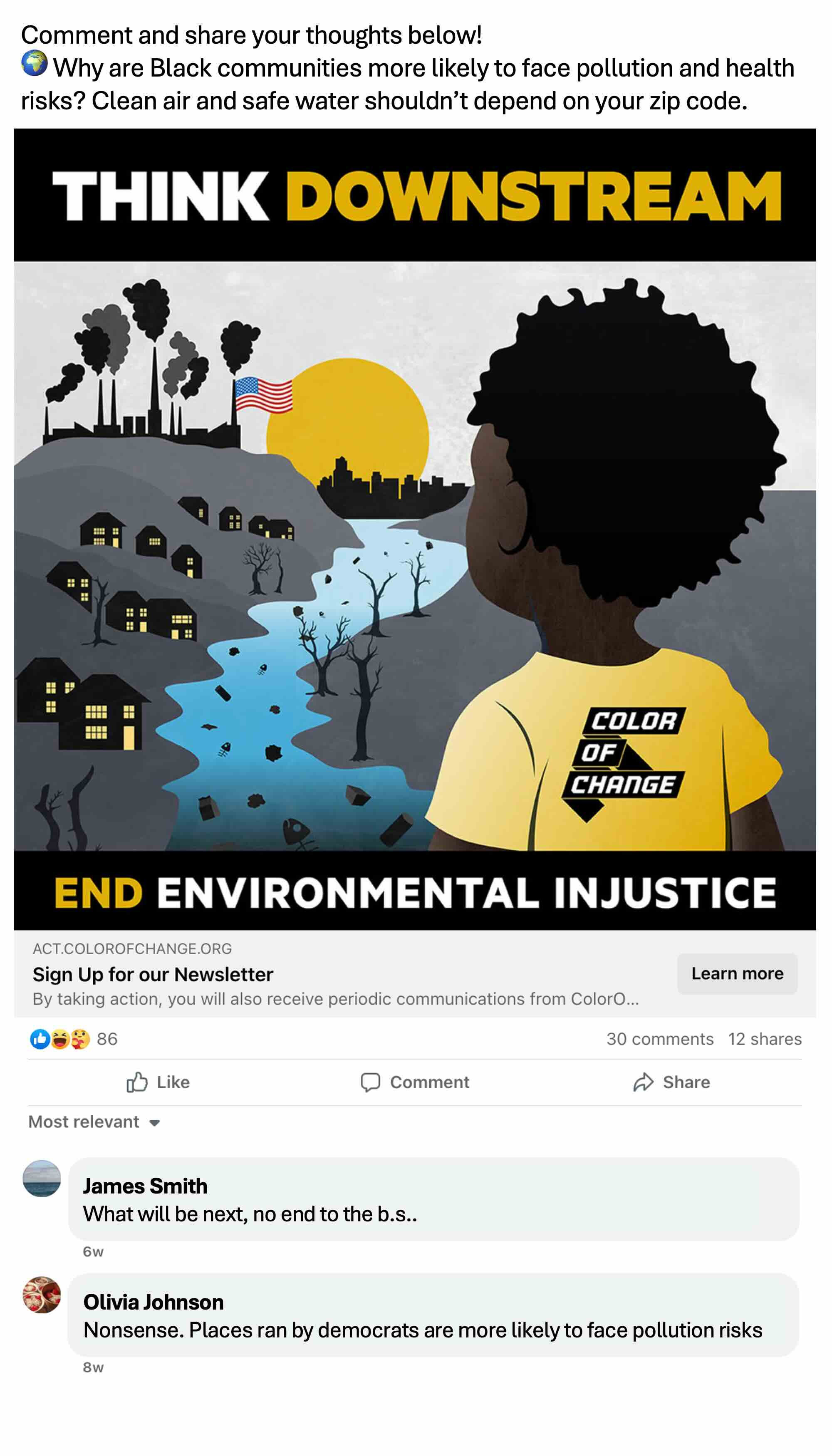} 
        \caption{Opposing}
    \end{subfigure}
\begin{minipage}{\textwidth}
\scriptsize
\noindent \textit{Notes:} This figure displays  screenshots of the survey experiment stimuli for the environment issue. Panel (a) shows the Control condition, in which the post is displayed without comments. Panel (b) shows the Supportive condition, in which two supportive comments appear below the post. Panel (c) shows the Opposing condition, in which two opposing comments appear below the post. The environment post is shown for illustration; participants were assigned to view three posts in random order under their assigned comment condition.
\end{minipage}
\end{figure}

\begin{table}[H]
    \centering
    \caption{Sample Demographics Relative to U.S. Adults}
    \label{tab:survey_sample_vs_us}
    \begin{adjustbox}{max width=\textwidth}
        \begin{tabular}{lcc}
\toprule
 & \shortstack{Analysis\\sample} & \shortstack{U.S.\\adults} \\
\midrule
Female & 0.493 & 0.505 \\
College (Bachelor's or higher) & 0.523 & 0.362 \\
\midrule
Age 18--34 & 0.470 & 0.373 \\
Age 35--54 & 0.423 & 0.420 \\
Age 55--64 & 0.107 & 0.206 \\
\midrule
White & 0.791 & 0.723 \\
Black or African American & 0.116 & 0.144 \\
Asian or Pacific Islander & 0.077 & 0.079 \\
American Indian or Alaska Native & 0.023 & 0.026 \\
Other or mixed race & 0.057 & 0.128 \\
Hispanic or Latino origin & 0.107 & 0.194 \\
\midrule
Democrat & 0.320 & 0.270 \\
Republican & 0.294 & 0.270 \\
Independent & 0.369 & 0.450 \\
\midrule
N & 3,868 &  \\
\bottomrule
\end{tabular}

    \end{adjustbox}
\begin{minipage}{\textwidth}
\scriptsize
\noindent \textit{Notes:} This table compares the composition of the analysis sample in the survey experiment with the U.S. adult population. The first column reports shares in the analysis sample, and the second column reports population benchmarks from 2023 American Community Survey (gender, age, race, and Hispanic origin for the total population; college attainment, bachelor's degree or higher, for adults aged 25 and over) and party identification from Gallup 2025. Age shares are computed among adults aged 18–64 to match the survey's inclusion criteria. Race shares use  the ACS ``race alone or in combination'' tabulation and need not sum to one.
\end{minipage}
\end{table}

\begin{table}[H]
    \centering
    \caption{Survey Completion Rates Across Experimental Arms}
    \label{tab:survey_response_rates}
    \begin{adjustbox}{max width=\textwidth}
        \begin{tabular}{lcccc}
\toprule
 & (1) Control & (2) Supportive & (3) Opposing & F-test \\
Variable & Share & Share & Share & $p$-value \\
\midrule
Completed Survey & 1.00 & 0.99 & 0.99 & 0.134 \\
\bottomrule
\end{tabular}

    \end{adjustbox}
\begin{minipage}{\textwidth}
\scriptsize
\noindent \textit{Notes:} This table reports the share of respondents who completed the survey in each experimental arm, among those who were randomized and passed the attention check. The final column reports the $p$-value from an F-test of joint equality of completion rates across the three arms.
\end{minipage}
\end{table}

\begin{table}[H]
\centering
\caption{Covariate Balance Across Experimental Arms}
\label{tab:survey_balance}
\begin{adjustbox}{max width=\textwidth}
\begin{tabular}{lccccccccc}
\toprule
 & (1) & (2) & (3) & \multicolumn{3}{c}{} & \multicolumn{3}{c}{} \\
 & Control & Supportive & Opposing & \multicolumn{3}{c}{SMD} & \multicolumn{3}{c}{t-test $p$-value} \\
\cmidrule(lr){5-7} \cmidrule(lr){8-10}
Variable & Mean/SD & Mean/SD & Mean/SD & (2)-(1) & (3)-(1) & (3)-(2) & (1)-(2) & (1)-(3) & (2)-(3) \\
\midrule
Age ($>$ Median) & 0.48 (0.50) & 0.49 (0.50) & 0.51 (0.50) & 0.025 & 0.064 & 0.039 & 0.54 & 0.11 & 0.31 \\
Baseline Racial Index ($>$ Median) & 0.49 (0.50) & 0.50 (0.50) & 0.48 (0.50) & 0.013 & -0.022 & -0.035 & 0.74 & 0.58 & 0.36 \\
Perceived Share of Non-Conservatives ($>$ Median) & 0.41 (0.49) & 0.38 (0.49) & 0.39 (0.49) & -0.072 & -0.047 & 0.025 & 0.07* & 0.24 & 0.51 \\
Perceived Share of Progressives ($>$ Median) & 0.49 (0.50) & 0.47 (0.50) & 0.47 (0.50) & -0.040 & -0.045 & -0.005 & 0.32 & 0.26 & 0.91 \\
Education ($>$ Median) & 0.16 (0.36) & 0.15 (0.36) & 0.18 (0.38) & -0.009 & 0.052 & 0.062 & 0.81 & 0.19 & 0.11 \\
Political view ($>$ Median) & 0.38 (0.49) & 0.40 (0.49) & 0.39 (0.49) & 0.045 & 0.026 & -0.020 & 0.26 & 0.51 & 0.61 \\
Social Media Use ($>$ Median) & 0.40 (0.49) & 0.41 (0.49) & 0.39 (0.49) & 0.016 & -0.032 & -0.049 & 0.68 & 0.41 & 0.21 \\
Reads Comments Frequency ($>$ Median) & 0.18 (0.38) & 0.19 (0.39) & 0.18 (0.39) & 0.023 & 0.003 & -0.020 & 0.57 & 0.94 & 0.60 \\
Female & 0.51 (0.50) & 0.49 (0.50) & 0.49 (0.50) & -0.040 & -0.037 & 0.003 & 0.32 & 0.35 & 0.94 \\
White & 0.78 (0.41) & 0.78 (0.41) & 0.81 (0.39) & 0.001 & 0.070 & 0.069 & 0.98 & 0.08* & 0.07* \\
Black / African American & 0.12 (0.32) & 0.12 (0.33) & 0.11 (0.31) & 0.015 & -0.036 & -0.052 & 0.71 & 0.36 & 0.18 \\
Asian American / Pacific Islander & 0.08 (0.28) & 0.07 (0.26) & 0.07 (0.26) & -0.041 & -0.043 & -0.002 & 0.31 & 0.27 & 0.96 \\
Hispanic & 0.09 (0.29) & 0.13 (0.33) & 0.10 (0.30) & 0.104 & 0.022 & -0.082 & 0.01*** & 0.59 & 0.03** \\
Democrat & 0.32 (0.47) & 0.31 (0.46) & 0.33 (0.47) & -0.007 & 0.029 & 0.036 & 0.87 & 0.46 & 0.36 \\
Republican & 0.30 (0.46) & 0.30 (0.46) & 0.28 (0.45) & 0.005 & -0.036 & -0.041 & 0.91 & 0.36 & 0.29 \\
Independent & 0.37 (0.48) & 0.37 (0.48) & 0.37 (0.48) & 0.003 & -0.004 & -0.007 & 0.94 & 0.91 & 0.85 \\
\midrule
N & 1,189 & 1,287 & 1,392 &  &  &  &  &  &  \\
F-test of joint significance ($p$-value) &  &  &  &  &  &  & 0.355 & 0.527 & 0.238 \\
F-test, number of observations &  &  &  &  &  &  & 2,476 & 2,581 & 2,679 \\
\bottomrule
\end{tabular}

\end{adjustbox}
\medskip{}
\begin{minipage}{\textwidth}
\scriptsize
\noindent \textit{Notes:} This table reports covariate balance across the three treatment arms of the survey experiment. Columns (1)--(3) report means with standard deviations in parentheses for the control, supportive comments, and opposing comments groups, respectively. The remaining columns report the standardized mean differences (SMD) and p-values from pairwise t-tests of equality of means. The final rows report F-tests of joint significance across all covariates for each pair of arms. * $p<0.10$, ** $p<0.05$, *** $p<0.01$.
\end{minipage}
\end{table}

\subsection{Additional Results}
\begin{figure}[H]
    \centering
    \caption{Reported Frequency of Reading or Checking Comments on Social Media}
    \label{fig:social_read_comments}

    \begin{subfigure}[t]{0.49\textwidth}
        \centering
        \caption*{(a) Full Sample}
        \includegraphics[width=\textwidth]{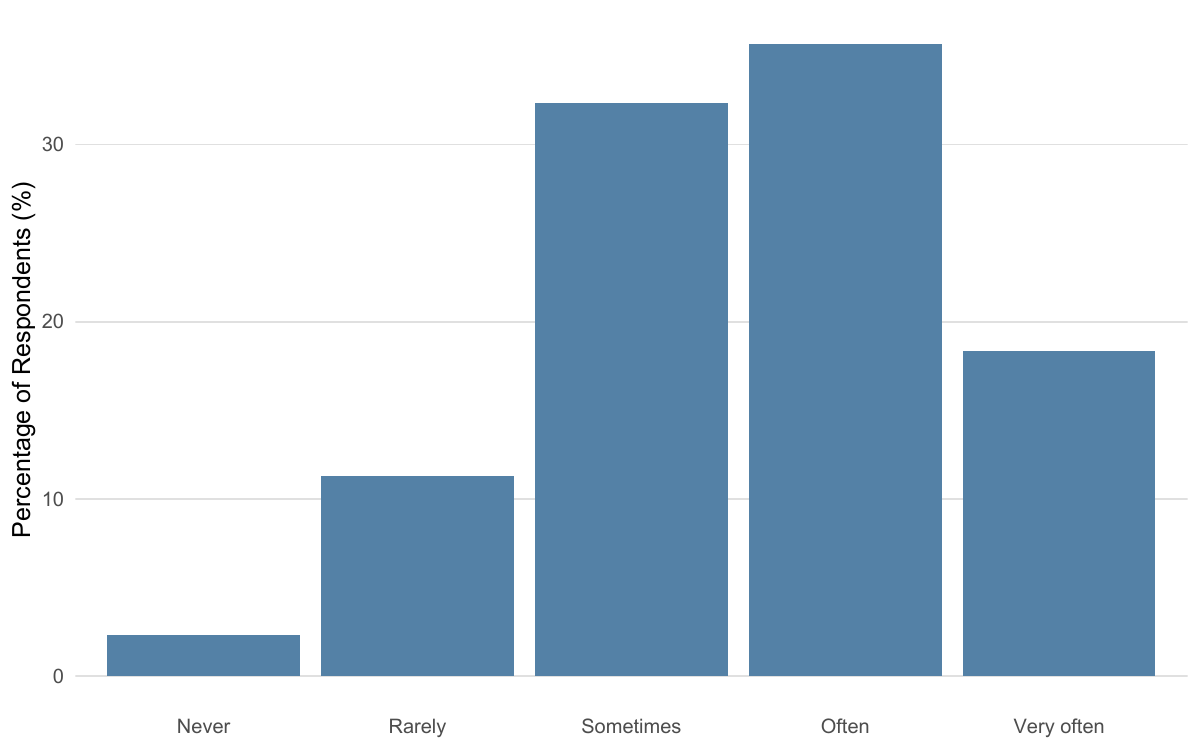}
    \end{subfigure}
    \hfill
    \begin{subfigure}[t]{0.5\textwidth}
        \centering
        \caption*{(b) By Social Media Use}
        \includegraphics[width=\textwidth]{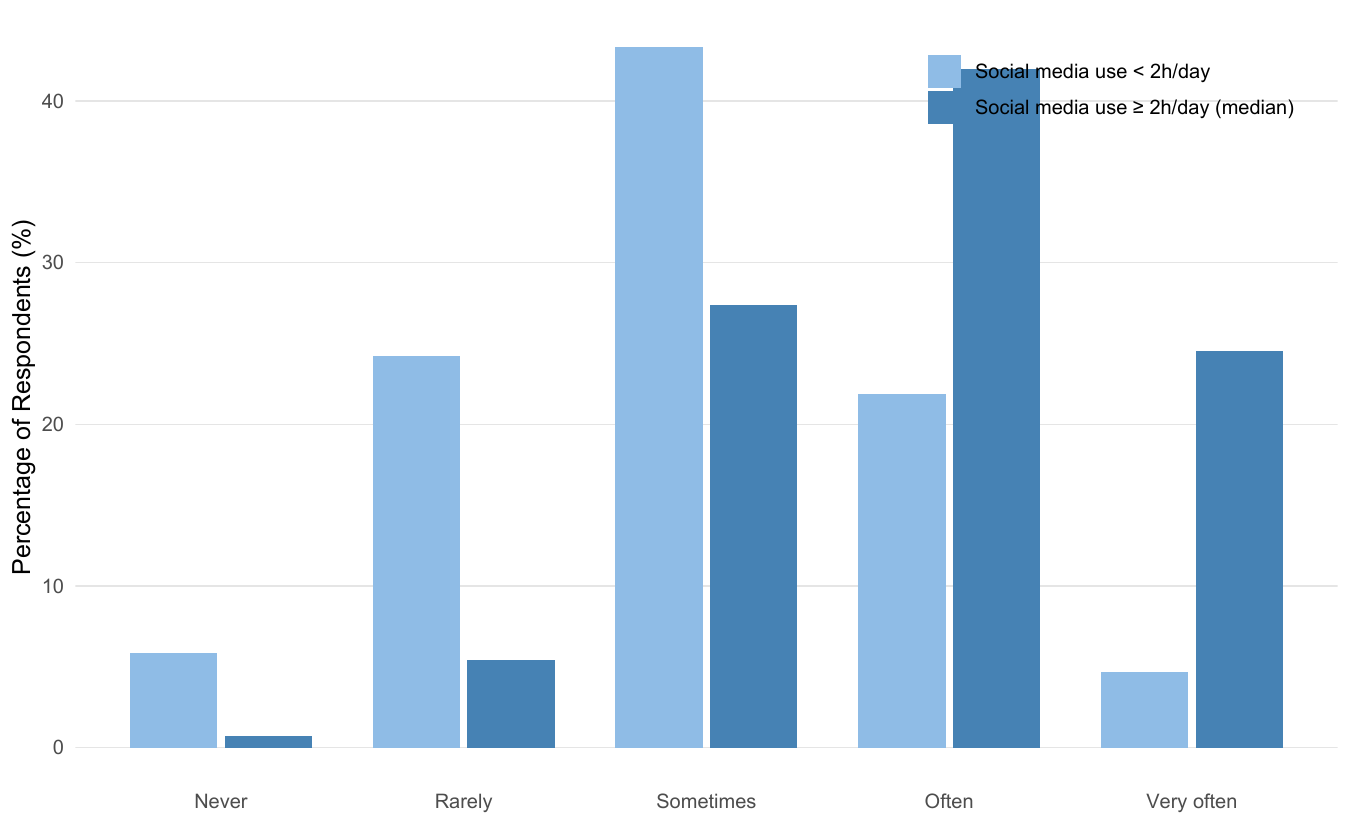}
    \end{subfigure}
    \begin{minipage}{\textwidth}
\scriptsize
\noindent \textit{Notes:} This figure reports the self-reported frequency with which survey participants read or check comments on social media. Panel (a) shows the distribution for the full sample. Panel (b) splits the sample by social media use, comparing participants above and below the median level of social media use.
\end{minipage}
\end{figure}

\begin{figure}[H]
    \centering
    \caption{Time Spent on Posts (in seconds)}
    \label{fig:survey_attention}
    \includegraphics[width=0.8\textwidth]{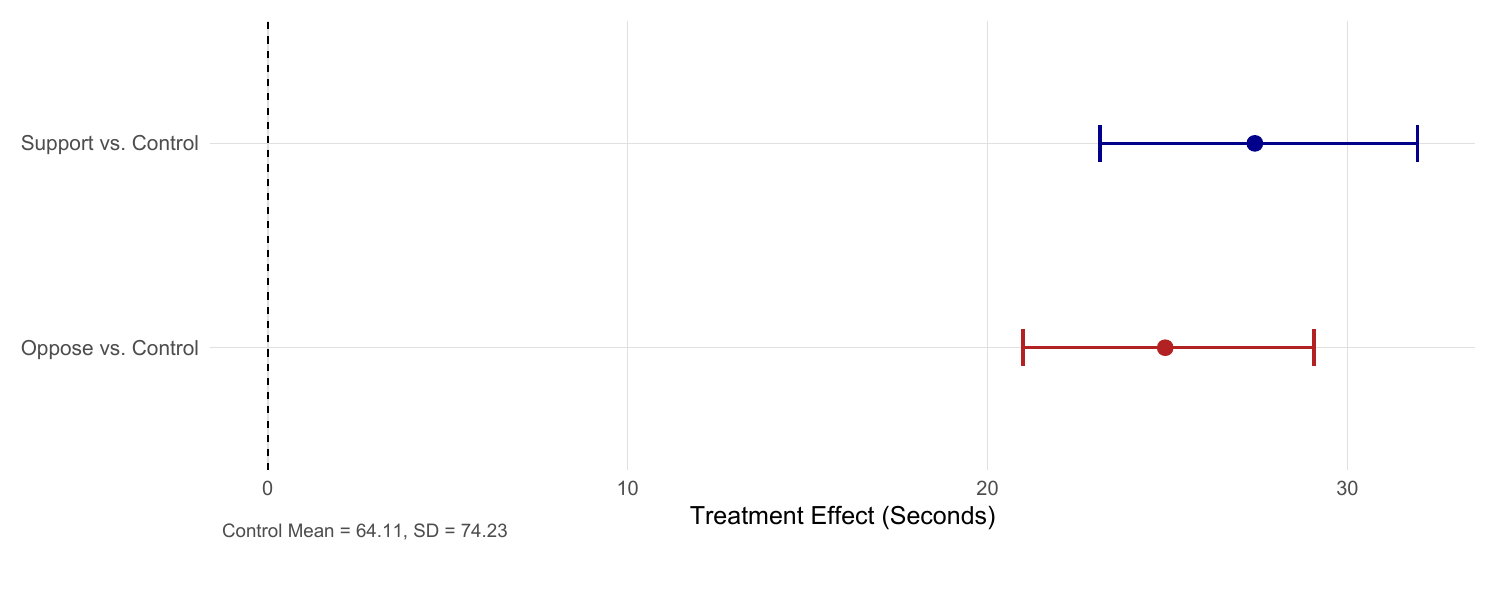}
\begin{minipage}{\textwidth}
\scriptsize
\noindent \textit{Notes:} This figure reports treatment effects of supportive and opposing comments, relative to the no-comments control, on time spent viewing the posts. Effects are expressed in seconds.
Horizontal lines represent 95\% confidence intervals.
\end{minipage}
\end{figure}

\begin{table}[H]
    \centering
    \caption{Effects of Comment Stance on Time Spent, Anger, Annoyance, and Curiosity}
    \label{tab:survey_arm_engagement}
    \begin{adjustbox}{max width=\textwidth}
        \begin{tabular}{l c c c}
\toprule
 & \textit{Time Spent} & \textit{Anger/Annoyance} & \textit{Curiosity/Reflection (Incl. Comments)} \\
\cmidrule(lr){2-2} \cmidrule(lr){3-3} \cmidrule(lr){4-4}
 & (1) & (2) & (3) \\
\midrule
Supportive vs.\ Control & 0.356$^{***}$ &  &  \\
 & (0.025) &  &  \\
Opposing vs.\ Control & 0.329$^{***}$ &  &  \\
 & (0.023) &  &  \\
Opposing vs.\ Supportive & -0.028 & 0.433$^{***}$ & -0.382$^{***}$ \\
 & (0.023) & (0.036) & (0.035) \\
\midrule
Controls & \checkmark & \checkmark & \checkmark \\
\midrule
Observations & 3,868 & 2,679 & 2,679 \\
$R^2$ & 0.1293 & 0.1562 & 0.1936 \\
Mean $Y$ (Control) & 3.931 &  &  \\
\bottomrule
\end{tabular}

    \end{adjustbox}
\begin{minipage}{\textwidth}
\scriptsize
\noindent \textit{Notes:} This table reports OLS estimates from equation~\eqref{eq:main_survey} for the time spent, anger/annoyance, and curiosity/reflection in the survey experiment. Column 1 shows time spent measured as $\log(1+t)$ where $t$ is total time on the three posts in seconds. Column 2 combines Likert-scale measures of anger and annoyance triggered by the comments, oriented so that higher values indicate a stronger negative emotional response. Because this measure is elicited only from respondents who saw a comment section, it is defined for the respondents in the two treatment arms. Column 3 combines interest in seeing additional comments with how thought-provoking respondents found the post and the comments. Because the comment item is asked only of respondents who saw a comment section, this index, like the anger/annoyance index, is estimated on the respondents in the two treatment arms. The indices are aggregated by inverse-covariance weighting following \citet{anderson2008multiple}. Neither index is observed in the control group, so both are standardized to mean zero and standard deviation one in the pooled sample of the two treatment arms. All columns include the pre-specified baseline controls listed in Section~\ref{sec:survey}, and standard errors robust to heteroskedasticity are in parentheses. $^{*}p<0.10$; $^{**}p<0.05$; $^{***}p<0.01$.
\end{minipage}
\end{table}

\begin{figure}[h]
    \centering
    \caption{Time Spent on Posts, log(seconds+1)}
    \label{fig:survey_attention_log}
    \includegraphics[width=0.8\textwidth]{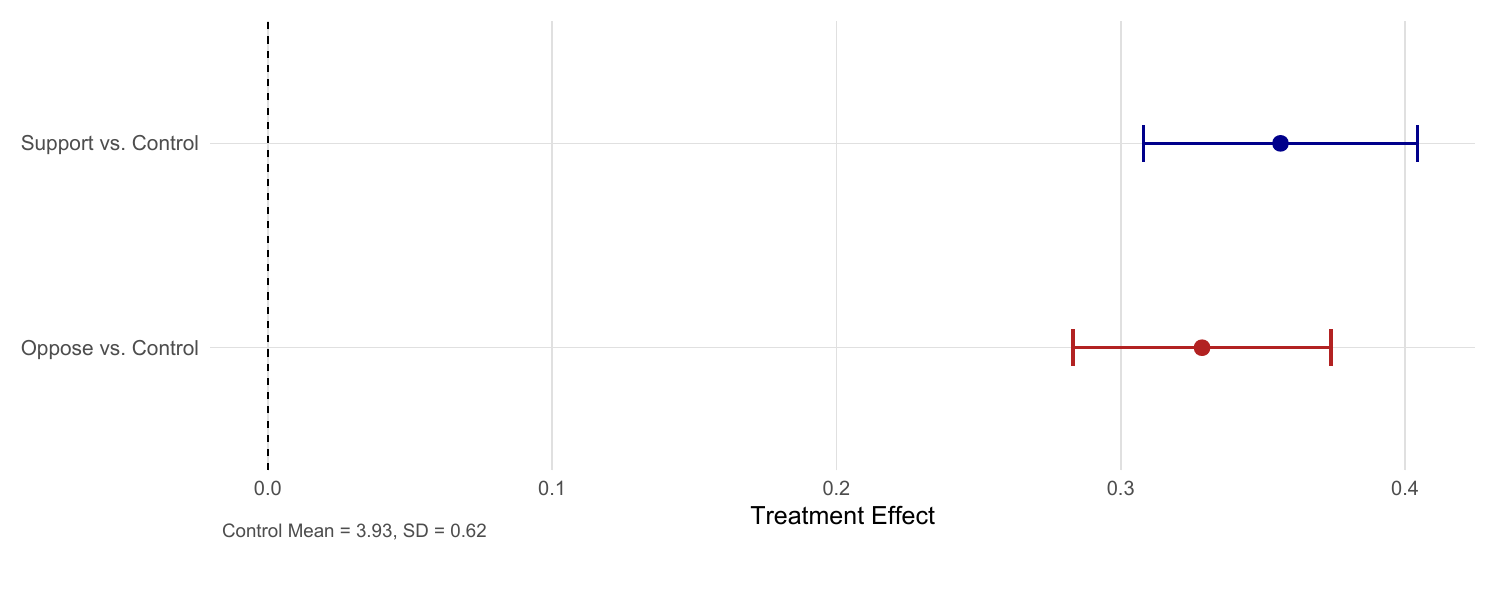}
\begin{minipage}{\textwidth}
\scriptsize
\noindent \textit{Notes:} This figure reports treatment effects of supportive and opposing comments, relative to the no-comments control, on time spent viewing the posts, expressed in log(seconds+1). Horizontal lines represent 95\% confidence intervals.
\end{minipage}
\end{figure}

\begin{figure}[H]
    \centering
    \caption{Donations (U.S.\ Dollars)}
    \label{fig:donations}
    \includegraphics[width=0.8\textwidth]{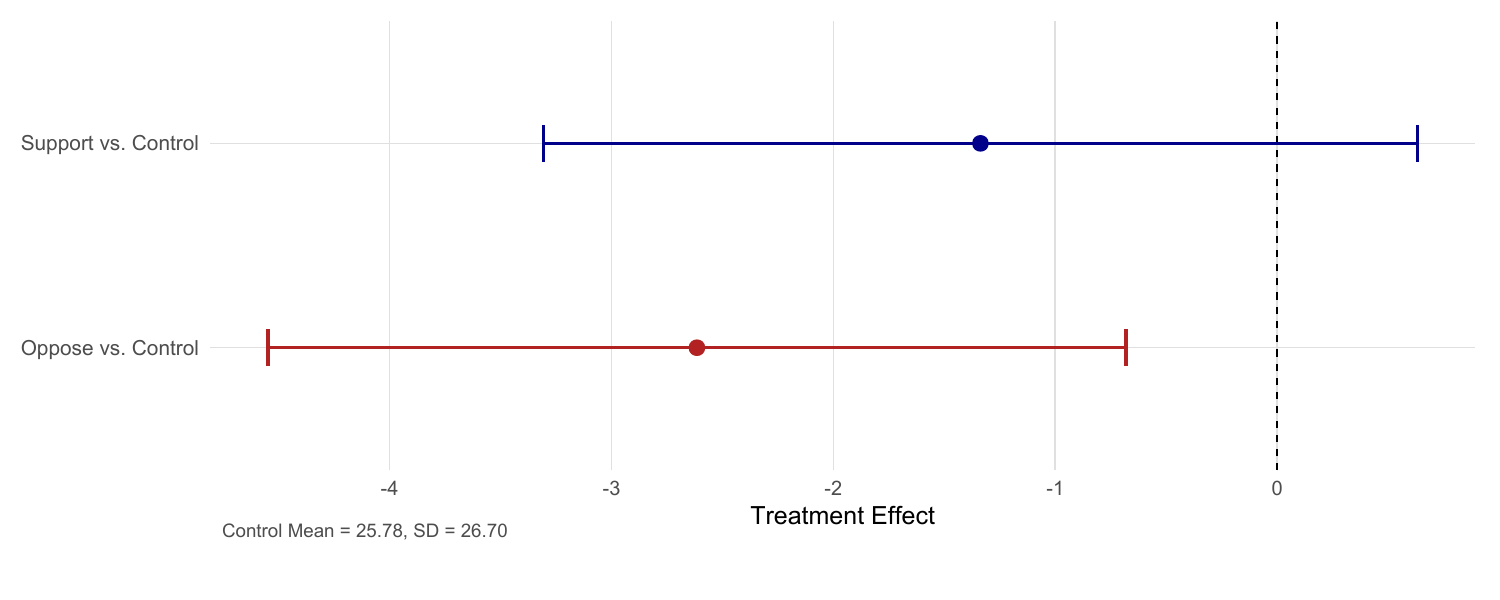}
\begin{minipage}{\textwidth}
\scriptsize
\noindent \textit{Notes:} This figure reports treatment effects of supportive and opposing comments, relative to the control, on the amount donated in the survey experiment, winsorized at the 95th percentile and expressed in U.S.\ dollars. Horizontal lines represent 95\% confidence intervals.
\end{minipage}

\end{figure}

\begin{table}[H]
    \centering
    \caption{Effects of Comment Stance on Attitudes, Donations, and Newsletter Sign-Up}
    \label{tab:survey_arm_main}
    \begin{adjustbox}{max width=\textwidth}
        \begin{tabular}{l c c c}
\toprule
 & \textit{Post-Exposure Attitudes} & \textit{Donated} & \textit{Newsletter Sign-Up} \\
\cmidrule(lr){2-2} \cmidrule(lr){3-3} \cmidrule(lr){4-4}
 & (1) & (2) & (3) \\
\midrule
Supportive vs.\ Control & -0.032 & -0.028 & 0.012 \\
 & (0.036) & (0.017) & (0.015) \\
Opposing vs.\ Control & -0.118$^{***}$ & -0.053$^{***}$ & 0.007 \\
 & (0.036) & (0.017) & (0.015) \\
Opposing vs.\ Supportive & -0.087$^{**}$ & -0.024 & -0.006 \\
 & (0.035) & (0.017) & (0.015) \\
\midrule
Controls & \checkmark & \checkmark & \checkmark \\
\midrule
Observations & 3,868 & 3,868 & 3,868 \\
$R^2$ & 0.1885 & 0.1001 & 0.1126 \\
Mean $Y$ (Control) & -0.000 & 0.733 & 0.193 \\
\bottomrule
\end{tabular}

    \end{adjustbox}
\begin{minipage}{\textwidth}
\scriptsize
\noindent \textit{Notes:} This table reports OLS estimates from equation~\eqref{eq:main_survey} for the three primary attitudinal and behavioral outcomes of the survey experiment. \textit{Post-Exposure Attitudes} is the attitudes index described in Section~\ref{sec:survey}, aggregated by inverse-covariance weighting following \citet{anderson2008multiple} and standardized to mean zero and standard deviation one in the control group, with higher values indicating greater alignment with the organization's position. \textit{Donated} is an indicator for committing a positive amount in the incentivized donation task, and \textit{Newsletter Sign-Up} is an indicator for providing an email address to the organization. All columns include the pre-specified baseline controls listed in Section~\ref{sec:survey}. Standard errors robust to heteroskedasticity are in parentheses. $^{*}p<0.10$; $^{**}p<0.05$; $^{***}p<0.01$.
\end{minipage}
\end{table}

\begin{figure}[H]
    \centering
        \caption{Heterogeneous Treatment Effects for Time Spent on the Post, log(seconds+1)}
    \label{fig:hte_time_spent}
    \includegraphics[width=0.85\textwidth]{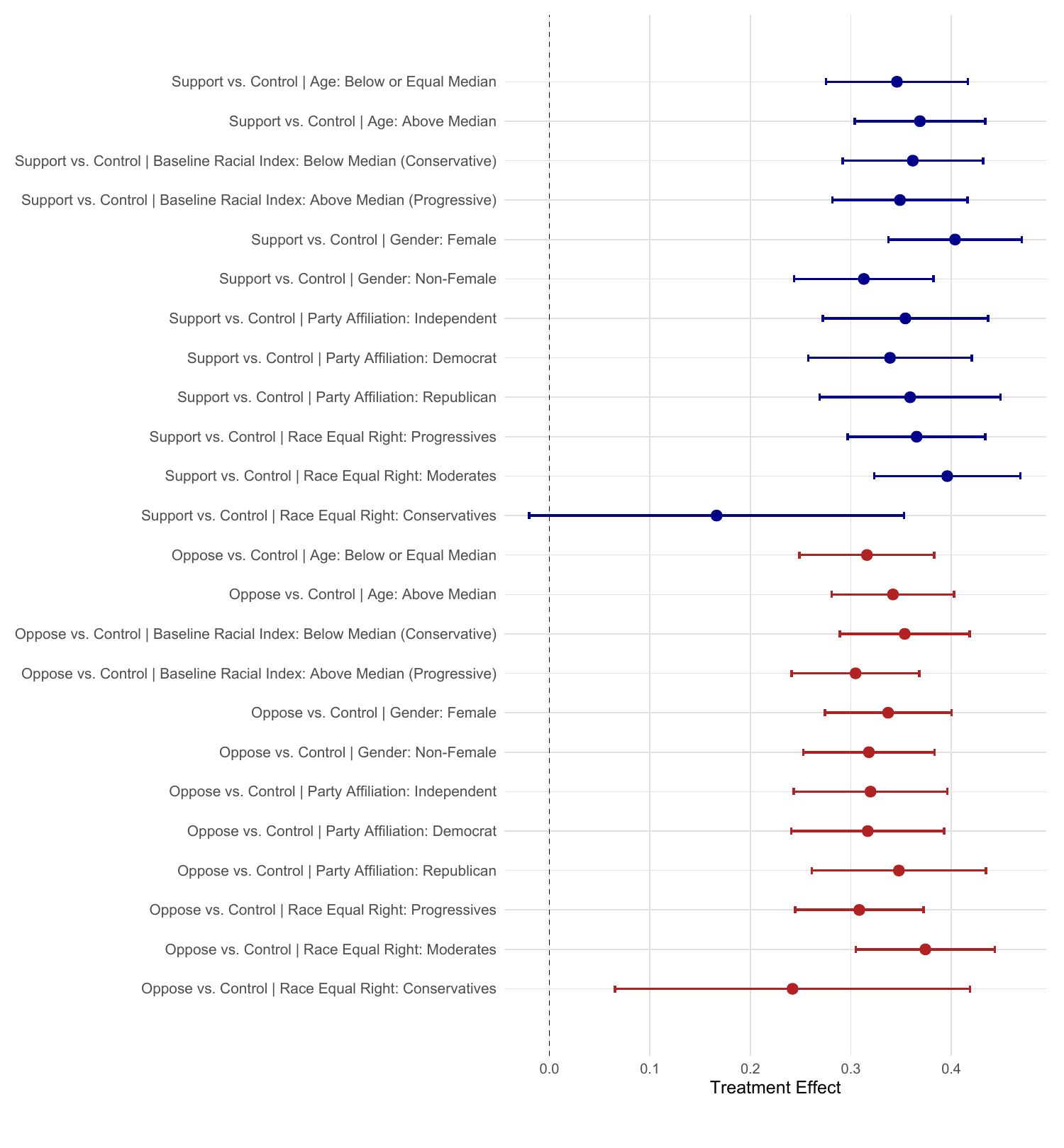}
\begin{minipage}{\textwidth}
\scriptsize
\noindent \textit{Notes:} This figure reports heterogeneous treatment effects in the survey experiment for time spent on the post, expressed in $\log(1+t)$ where $t$ is time in seconds. Effects are shown separately by baseline racial attitudes, party affiliation, gender, age, and ideology. Horizontal lines represent 95\% confidence intervals.
\end{minipage}

\end{figure}

\begin{figure}[H]
    \centering
        \caption{Heterogeneous Treatment Effects for Racial Attitudes Index}
    \label{fig:hte_attitudes}
    \includegraphics[width=0.85\textwidth]{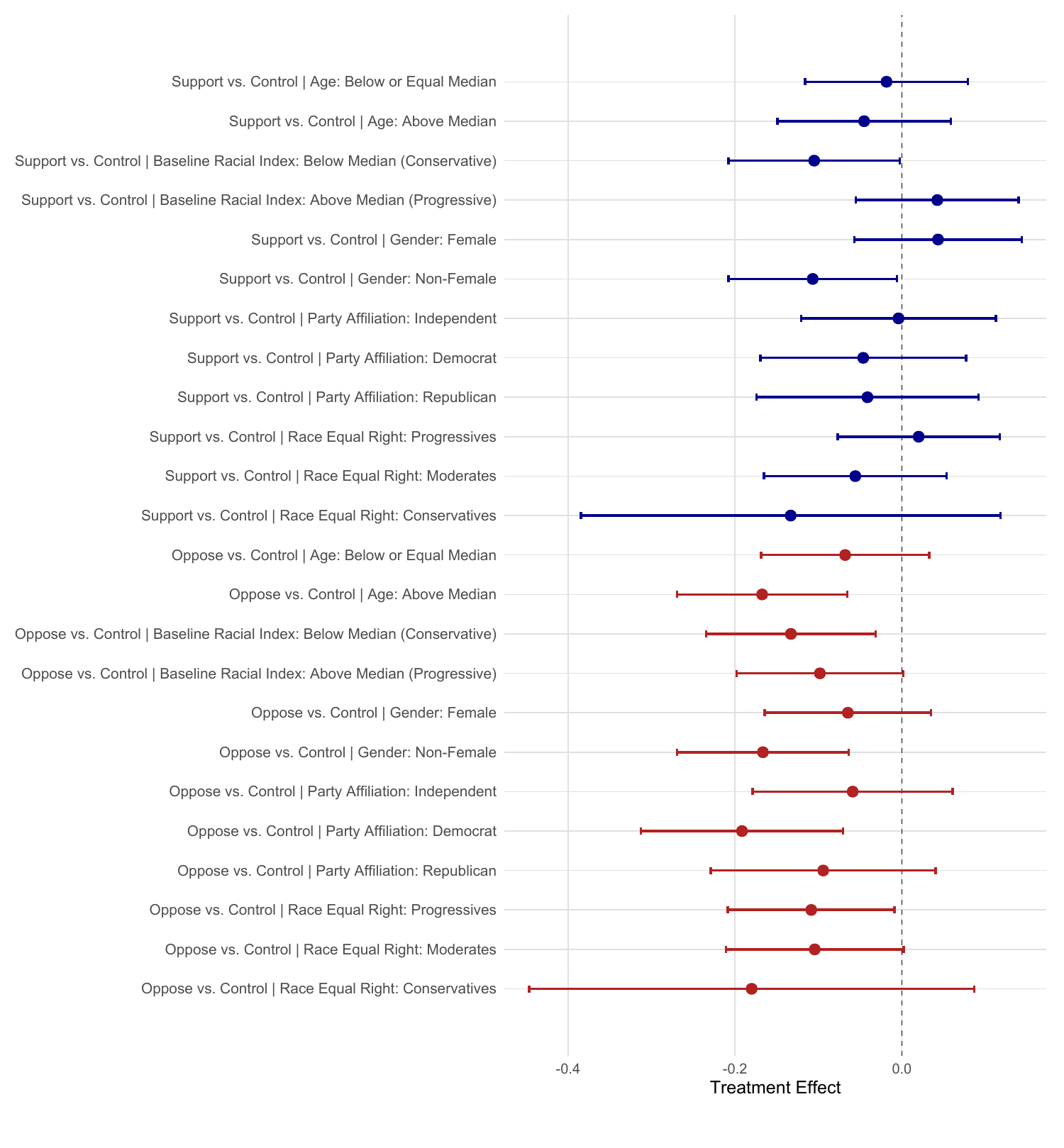}
\begin{minipage}{\textwidth}
\scriptsize
\noindent \textit{Notes:} This figure reports heterogeneous treatment effects in the survey experiment for the racial attitudes index. Effects are shown separately by baseline racial attitudes, party affiliation, gender, age, and ideology. Higher values indicate greater alignment with the organization’s position. Horizontal lines represent 95\% confidence intervals.
\end{minipage}
\end{figure}

\begin{figure}[H]
    \centering
        \caption{Heterogeneous Treatment Effects for Donation (Yes/No)}
    \label{fig:hte_donation}
    \includegraphics[width=0.85\textwidth]{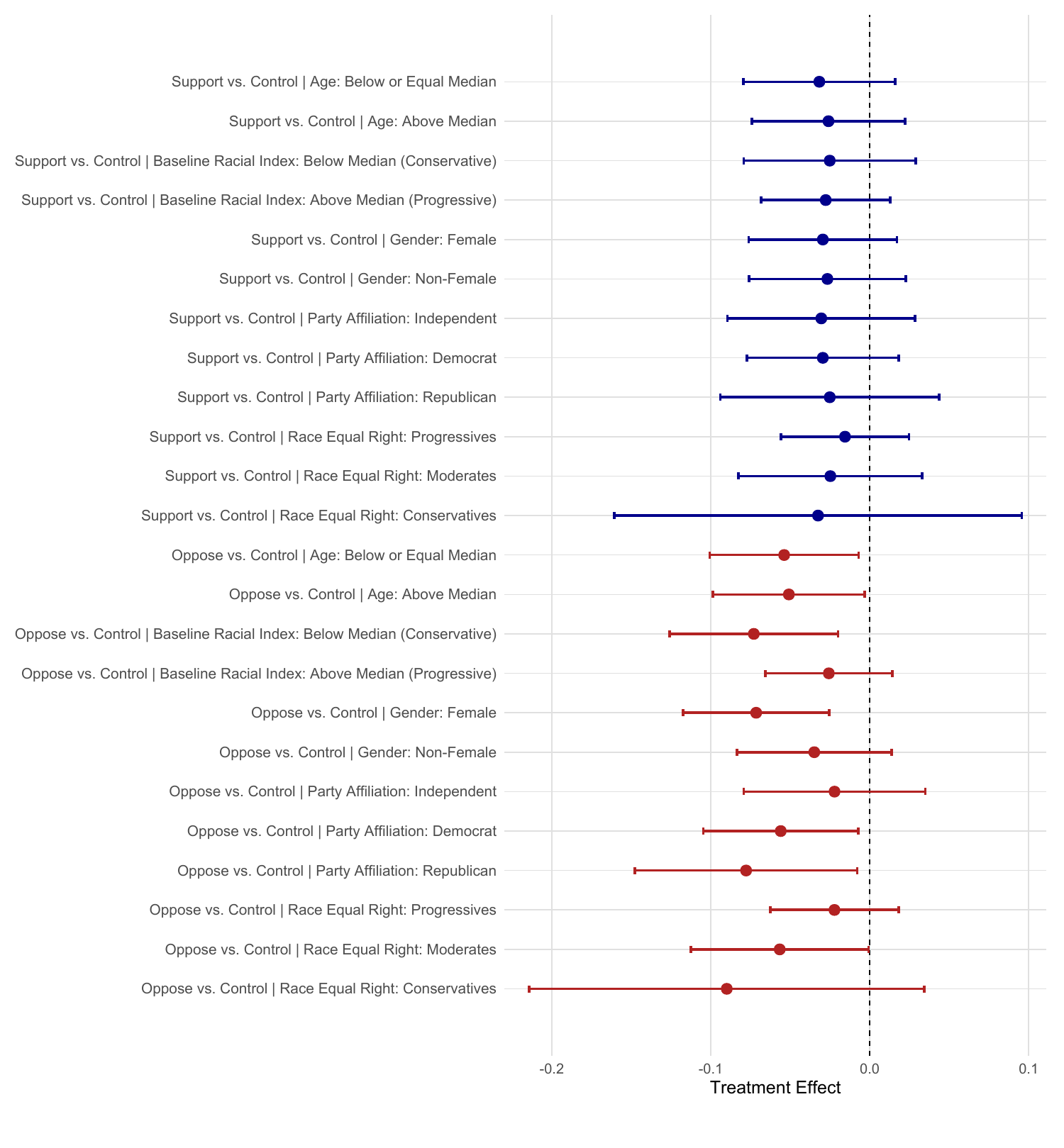}
\begin{minipage}{\textwidth}
\scriptsize
\noindent \textit{Notes:} This figure reports heterogeneous treatment effects in the survey experiment for a binary donation outcome. Effects are shown separately by baseline racial attitudes, party affiliation, gender, age, and ideology. Horizontal lines represent 95\% confidence intervals.
\end{minipage}
\end{figure}

\begin{figure}[H]
    \centering
        \caption{Heterogeneous Treatment Effects for Newsletter Sign-Up (Yes/No)}
    \label{fig:hte_newsletter}
    \includegraphics[width=0.85\textwidth]{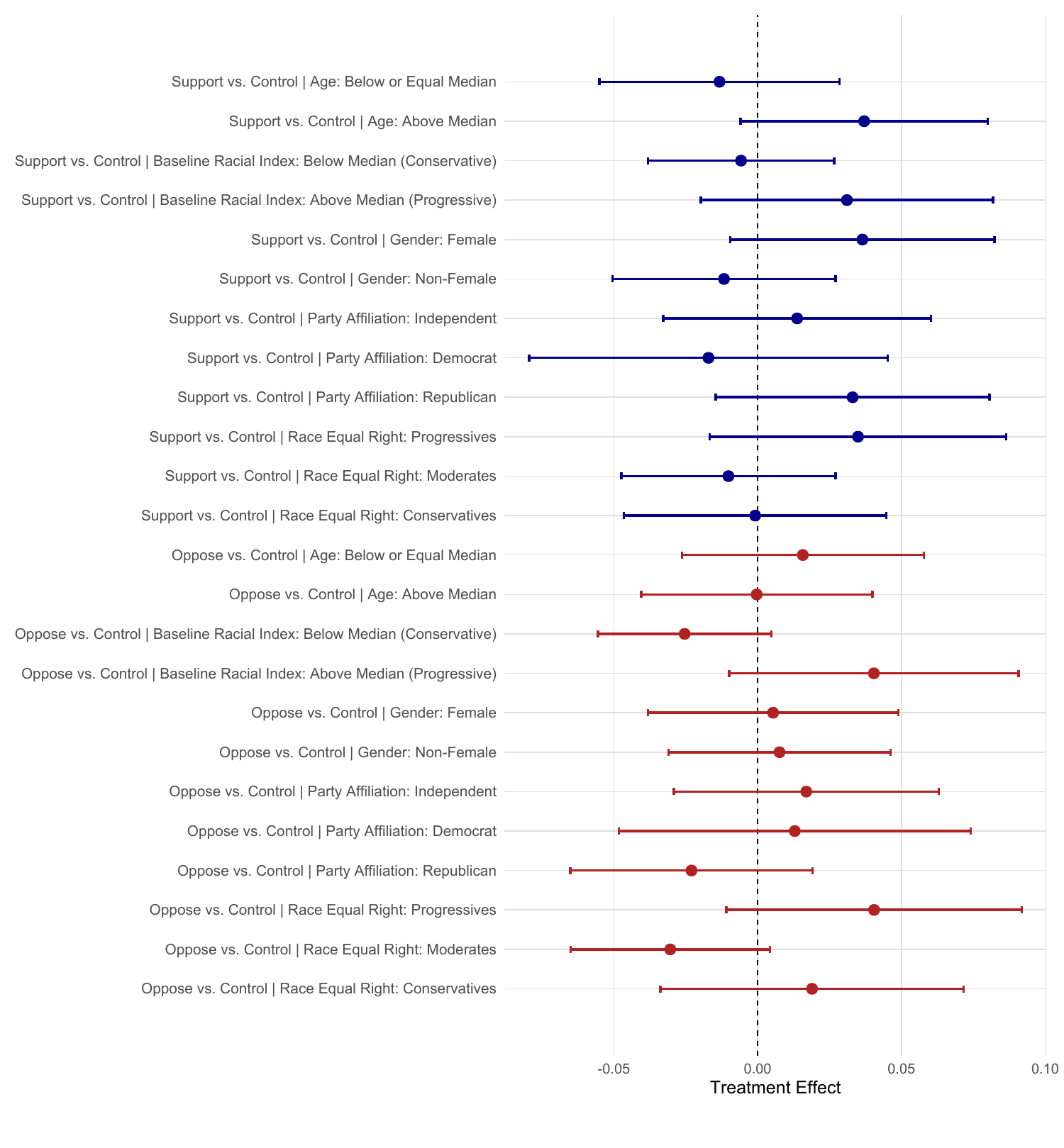}
\begin{minipage}{\textwidth}
\scriptsize
\noindent \textit{Notes:} This figure reports heterogeneous treatment effects in the survey experiment for a binary newsletter sign-up outcome. Effects are shown separately by baseline racial attitudes, party affiliation, gender, age, and ideology. Horizontal lines represent 95\% confidence intervals.
\end{minipage}

\end{figure}

\begin{figure}[H]
    \centering
    \caption{Heterogeneous Treatment Effects on Anger and Annoyance by Party Affiliation}
    \label{fig:survey_emotions}
    \includegraphics[width=0.8\textwidth]{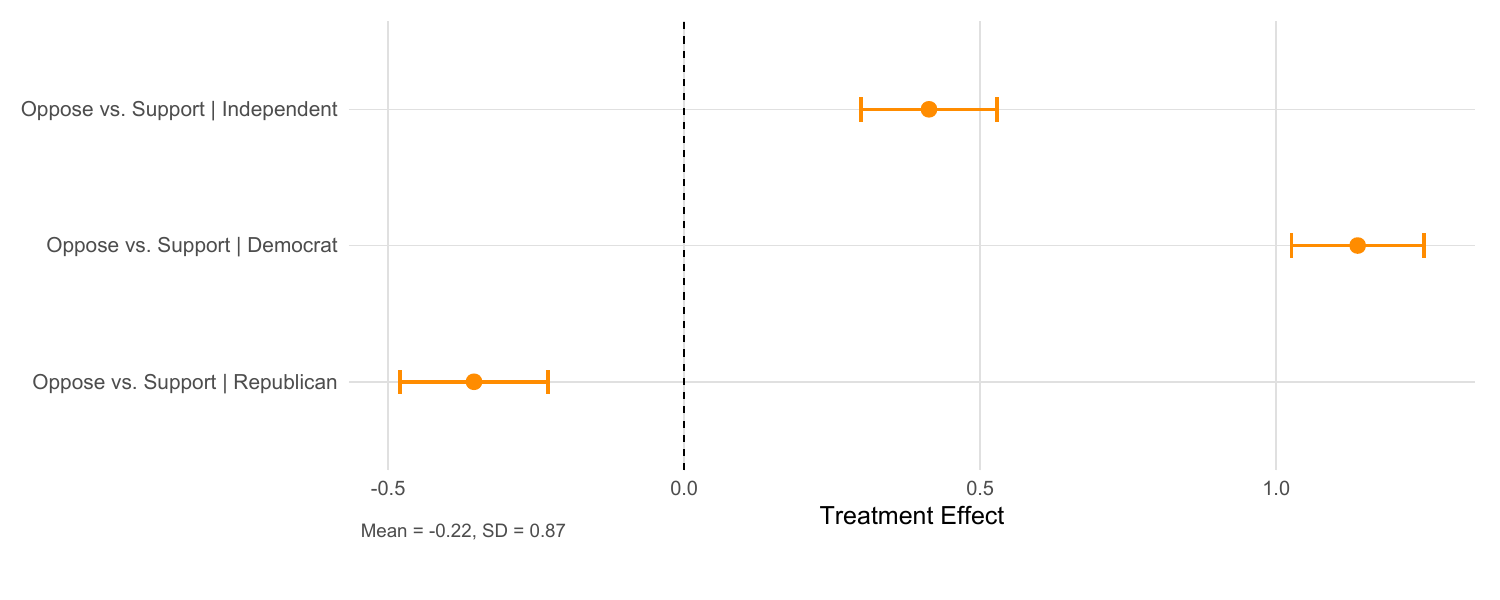}
\begin{minipage}{\textwidth}
\scriptsize
\noindent \textit{Notes:} This figure reports the differential effect of opposing versus supportive comments on an emotional response measure in the survey experiment, separately for Republicans, Democrats, and Independents. The outcome combines self-reported anger and annoyance triggered by the comments, with higher values indicating stronger negative emotional responses. The index is standardized to mean zero and standard deviation one in the pooled sample of the two treatment arms. The mean and standard deviation shown below the figure are for the Supportive arm. Horizontal lines represent 95\% confidence intervals.
\end{minipage}
\end{figure}

\begin{figure}[H]
    \centering
    \caption{Heterogeneous Treatment Effects on Curiosity and Reflection by Party Affiliation}
    \label{fig:cognitive_responses_primary_overall_by_party_affiliation}
    \includegraphics[width=0.85\textwidth]{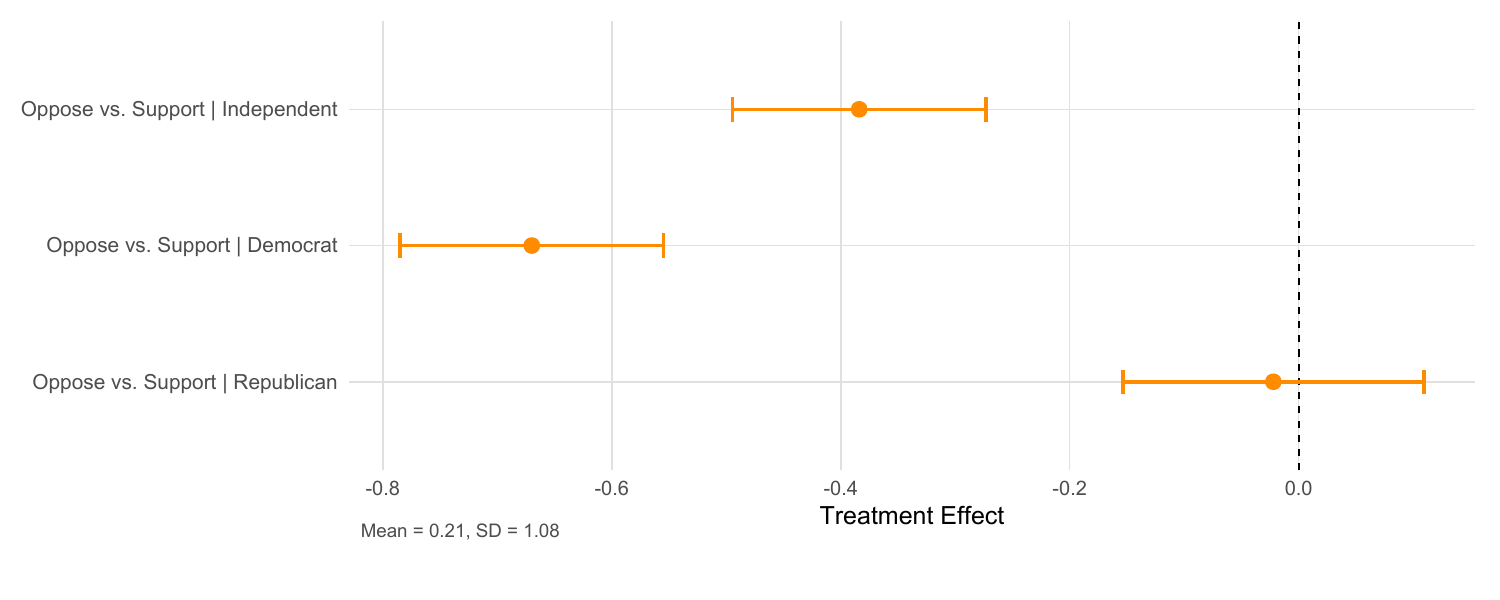}
\begin{minipage}{\textwidth}
\scriptsize
\noindent \textit{Notes:} This figure reports the effect of opposing versus supportive comments on an index of curiosity and reflection in the survey experiment, separately for Republicans, Democrats, and Independents. The outcome combines a Likert-scale measure of participants' interest in seeing additional comments on the post (interpreted as curiosity) with measures of how thought-provoking they find the post and the comments, with higher values indicating greater curiosity and reflection. The last item is asked only of respondents who saw comments, so this version of the index is used only to compare the two treatment conditions. The mean and standard deviation shown below the figure are for the Supportive arm. Horizontal lines represent 95\% confidence intervals.
\end{minipage}
\end{figure}

\begin{figure}[H]
    \centering
    \caption{Perceived Social Norms}
    \label{fig:survey_social_norms}
    \includegraphics[width=0.8\textwidth]{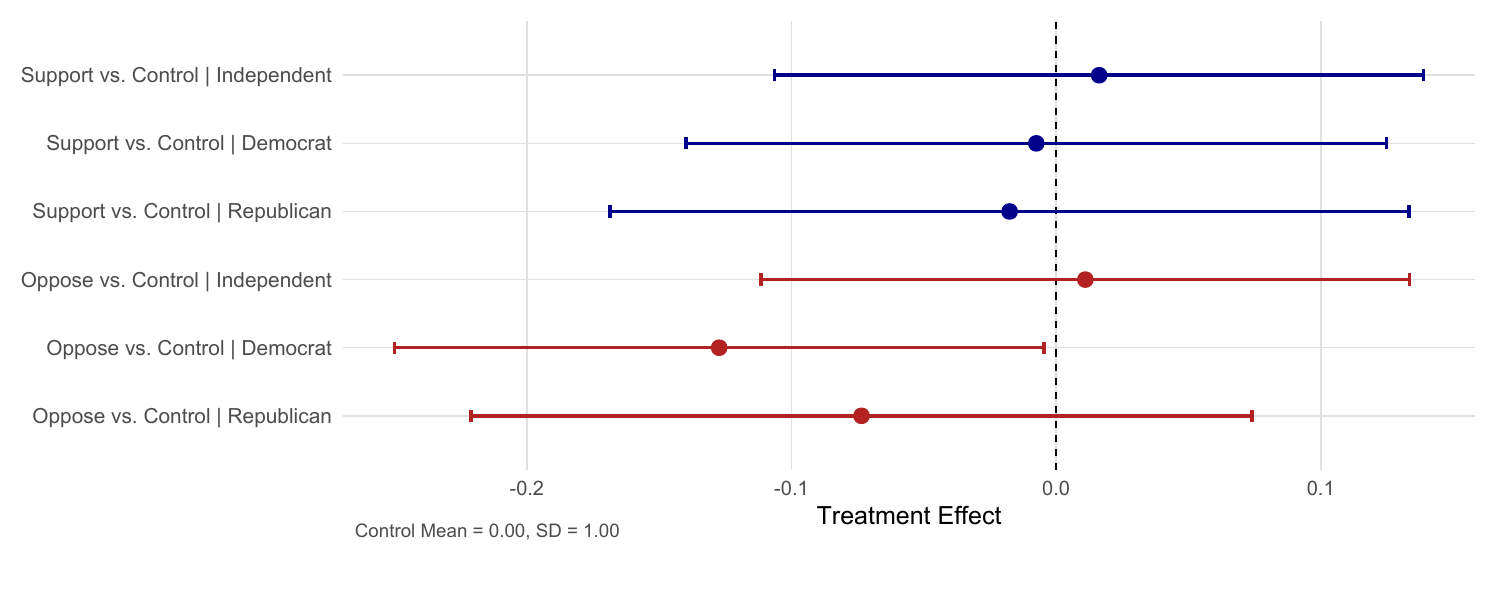}
\begin{minipage}{\textwidth}
\scriptsize
\noindent \textit{Notes:} This figure reports treatment effects of supportive and opposing comments, relative to the control, on perceived social norms in the survey experiment.  The outcome is respondents' incentivized estimate of the share of survey participants who agreed or strongly agreed with a conservative statement, reverse-coded so that higher values indicate more progressive perceived norms. Horizontal lines represent 95\% confidence intervals.
\end{minipage}
\end{figure}

\begin{table}[H]
    \centering
    \caption{Treatment-Arm Effects Across Alternative Control Sets}
    \label{tab:survey_arm_robustness}
    \begin{adjustbox}{max width=\textwidth}
        \begin{tabular}{l c c c c}
\toprule
 & (1) No Controls & (2) Main & (3) LASSO & (4) Full Controls \\
\midrule
\addlinespace[0.3em]
\multicolumn{5}{l}{\textbf{A. Post-Exposure Attitudes}} \\
\hspace{1em}Supportive vs.\ Control & -0.026 & -0.032 & -0.034 & -0.033 \\
 & (0.040) & (0.036) & (0.036) & (0.036) \\
\hspace{1em}Opposing vs.\ Control & -0.125$^{***}$ & -0.118$^{***}$ & -0.118$^{***}$ & -0.121$^{***}$ \\
 & (0.040) & (0.036) & (0.036) & (0.037) \\
\hspace{1em}Opposing vs.\ Supportive & -0.100$^{**}$ & -0.087$^{**}$ & -0.084$^{**}$ & -0.088$^{**}$ \\
 & (0.039) & (0.035) & (0.035) & (0.035) \\
\hspace{1em}Observations & 3,868 & 3,868 & 3,868 & 3,868 \\
\hspace{1em}$R^2$ & 0.0030 & 0.1885 & 0.1914 & 0.1958 \\
\addlinespace[0.3em]
\multicolumn{5}{l}{\textbf{B. Donated}} \\
\hspace{1em}Supportive vs.\ Control & -0.026 & -0.028 & -0.025 & -0.026 \\
 & (0.018) & (0.017) & (0.017) & (0.017) \\
\hspace{1em}Opposing vs.\ Control & -0.054$^{***}$ & -0.053$^{***}$ & -0.050$^{***}$ & -0.051$^{***}$ \\
 & (0.018) & (0.017) & (0.017) & (0.017) \\
\hspace{1em}Opposing vs.\ Supportive & -0.027 & -0.024 & -0.025 & -0.024 \\
 & (0.018) & (0.017) & (0.017) & (0.017) \\
\hspace{1em}Observations & 3,868 & 3,868 & 3,868 & 3,868 \\
\hspace{1em}$R^2$ & 0.0023 & 0.1001 & 0.1049 & 0.1069 \\
\addlinespace[0.3em]
\multicolumn{5}{l}{\textbf{C. Newsletter Sign-Up}} \\
\hspace{1em}Supportive vs.\ Control & 0.015 & 0.012 & 0.012 & 0.012 \\
 & (0.016) & (0.015) & (0.015) & (0.015) \\
\hspace{1em}Opposing vs.\ Control & 0.002 & 0.007 & 0.008 & 0.008 \\
 & (0.016) & (0.015) & (0.015) & (0.015) \\
\hspace{1em}Opposing vs.\ Supportive & -0.013 & -0.006 & -0.004 & -0.004 \\
 & (0.016) & (0.015) & (0.015) & (0.015) \\
\hspace{1em}Observations & 3,868 & 3,868 & 3,868 & 3,868 \\
\hspace{1em}$R^2$ & 0.0003 & 0.1126 & 0.1187 & 0.1199 \\
\bottomrule
\end{tabular}

    \end{adjustbox}
\begin{minipage}{\textwidth}
\scriptsize
\noindent \textit{Notes:} This table reports estimates from equation~\eqref{eq:main_survey} with separate indicators for the Supportive and Opposing conditions. The omitted category is the No Comments control. Column~(1) includes no covariates. Column~(2) is the main specification, and includes the pre-specified baseline controls: age, gender, education, race, ethnicity, party affiliation, political views, baseline racial attitudes, and beliefs about others' views. Column~(3) selects covariates from the full candidate set by LASSO. Column~(4) includes the full candidate set without selection. Panel~A reports the post-exposure attitudes index, aggregated by inverse-covariance weighting following \citet{anderson2008multiple} and standardized to mean zero and standard deviation one in the control group, with higher values indicating greater alignment with the organization's position. Panel~B reports \textit{Donated}, an indicator for committing a positive amount in the incentivized donation task, and Panel~C reports \textit{Newsletter Sign-Up}, an indicator for providing an email address to sign up for a newsletter. Standard errors robust to heteroskedasticity are in parentheses. $^{*}p<0.10$; $^{**}p<0.05$; $^{***}p<0.01$.

\end{minipage}
\end{table}

\newpage
\subsection{Results on Pre-specified Secondary Outcomes}\label{app:SecOutcomes}
This appendix reports results on pre-specified secondary outcomes. Figure~\ref{fig:post_exposure_attitudes_secondary_decomp} reports treatment effects on secondary attitudinal outcomes. The pattern mirrors the primary attitude index: opposing comments produce negative but imprecisely estimated effects on the index, driven primarily by reduced favorability towards Black Lives Matter. Effects on perceived importance of voting and technology issues — which are not directly covered by the stimuli — are small and close to zero. Supportive comments have little effect across all components.
Figure~\ref{fig:cognitive_responses_secondary_main} shows that the secondary curiosity index follows a similar pattern to the primary curiosity and reflection index, with opposing comments reducing curiosity relative to supportive comments. Turning to beliefs about commenters, Figure~\ref{fig:perceive_commenters_progressive_main} shows that supportive comments lead respondents to name progressives as the group more likely to comment on the post, while opposing comments have the opposite effect, consistent with respondents reading the stance of the comments they saw. Finally, Figure~\ref{fig:perceive_commenters_women_main} shows that exposure to a comment section, regardless of stance, leads respondents to name men as the group more likely to comment, relative to the control, even though each post displayed one comment under a male name and one under a female name.

\begin{figure}[H]
    \centering
    \caption{Treatment Effects on Secondary Attitude Index}
    \label{fig:post_exposure_attitudes_secondary_decomp}
    \includegraphics[width=0.85\textwidth]{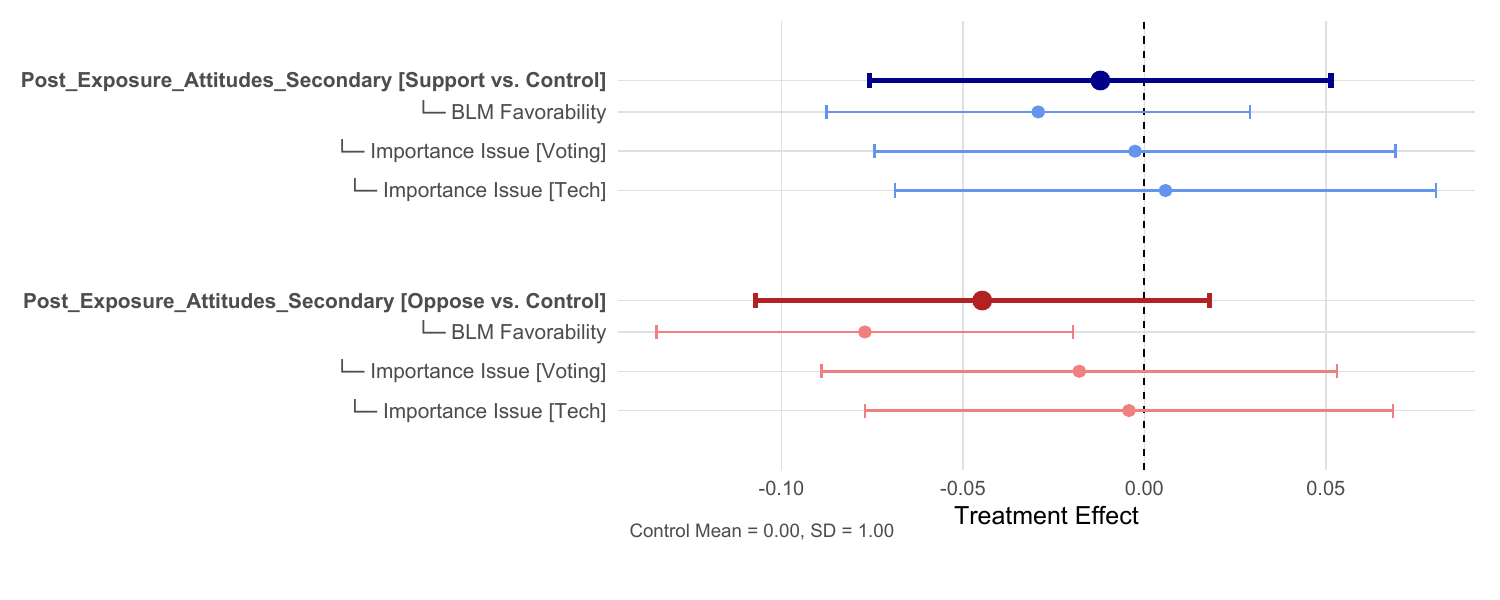}
\begin{minipage}{\textwidth}
\scriptsize
\noindent \textit{Notes:} This figure decomposes treatment effects on post-exposure attitudes (standardized, control mean = 0.00, SD = 1.00) into its component outcomes: favorability toward Black Lives Matter, and perceived importance of two racial justice sub-issues — voting and technology — that are not directly covered by the intervention posts. Effects are shown separately for the Oppose vs.\ Control and Support vs.\ Control comparisons, with the index estimate shown in bold and individual outcomes indented below. Horizontal lines represent 95\% confidence intervals.
\end{minipage}
\end{figure}

\begin{figure}[H]
    \centering
    \caption{Treatment Effects on Secondary Curiosity Index}
    \label{fig:cognitive_responses_secondary_main}
    \includegraphics[width=0.85\textwidth]{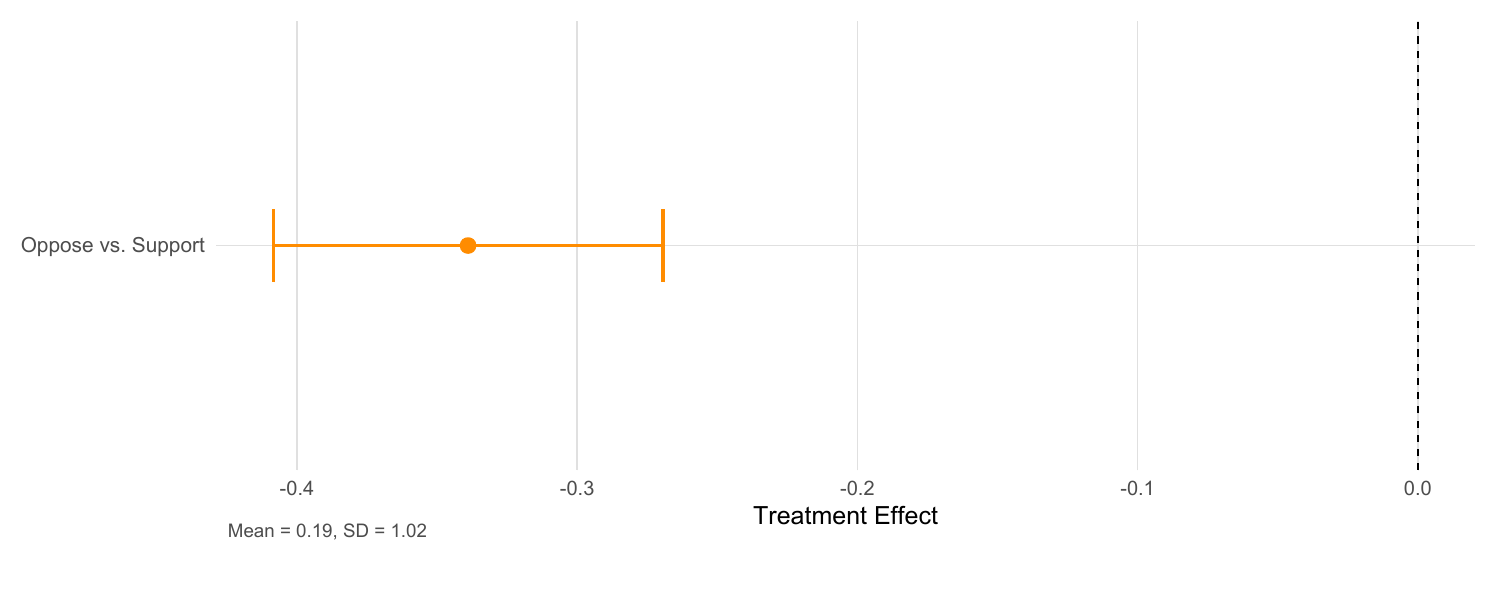}
\begin{minipage}{\textwidth}
\scriptsize
\noindent \textit{Notes:} This figure reports treatment effects on a secondary curiosity index  combining two measures: the extent to which comments increase curiosity about the underlying topic, and the extent to which they increase curiosity about the organization.  The effect is shown for the Oppose vs.\ Support comparison. The mean and standard deviation shown below the figure are for the Supportive arm. Horizontal lines represent 95\% confidence intervals.
\end{minipage}
\end{figure}

\begin{figure}[H]
    \centering
    \caption{Treatment Effects on Perceived Progressiveness of Commenters}
    \label{fig:perceive_commenters_progressive_main}
    \includegraphics[width=0.85\textwidth]{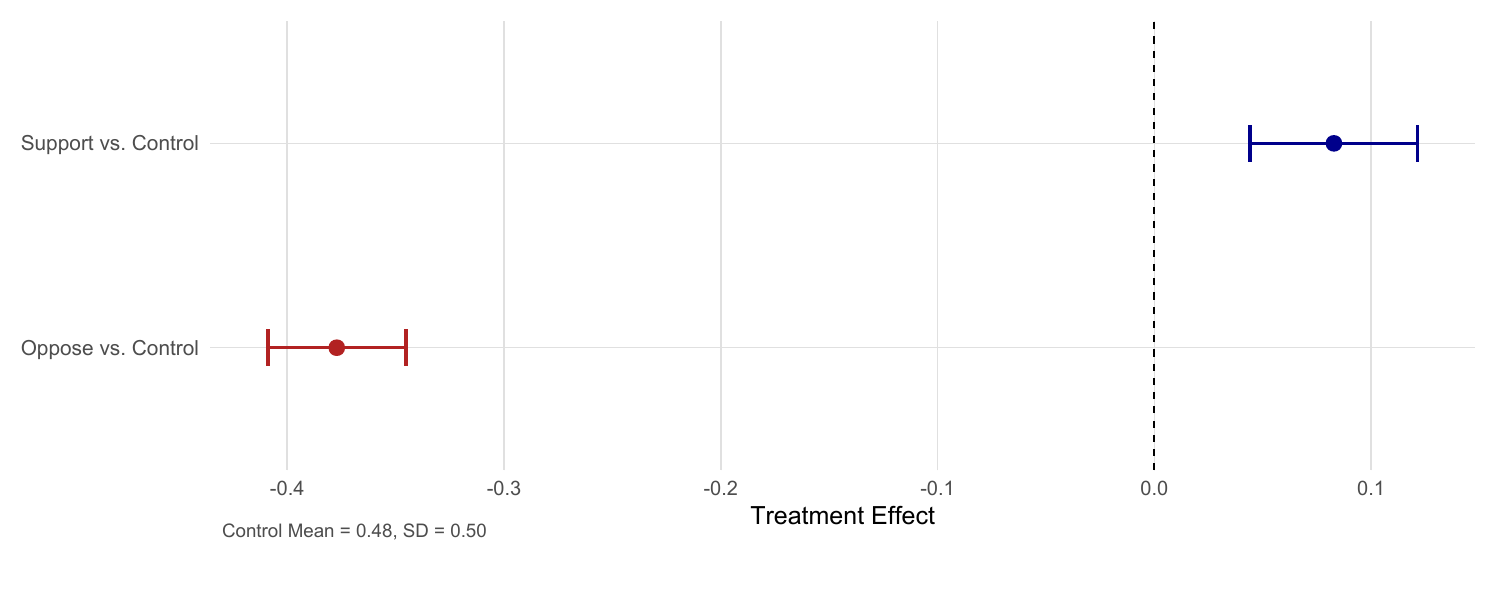}
\begin{minipage}{\textwidth}
\scriptsize
\noindent \textit{Notes:} This figure reports treatment effects on respondents' beliefs about the political ideology of commenters. Respondents were asked which group they believe is more likely to comment on the post; the outcome is coded as one if the respondent perceives progressives as more likely to comment. Effects are shown for the Oppose vs.\ Control and Support vs.\ Control comparisons.  Horizontal lines represent 95\% confidence intervals.
\end{minipage}
\end{figure}

\begin{figure}[H]
    \centering
    \caption{Treatment Effects on Perceived Gender of Commenters}
    \label{fig:perceive_commenters_women_main}
    \includegraphics[width=0.85\textwidth]{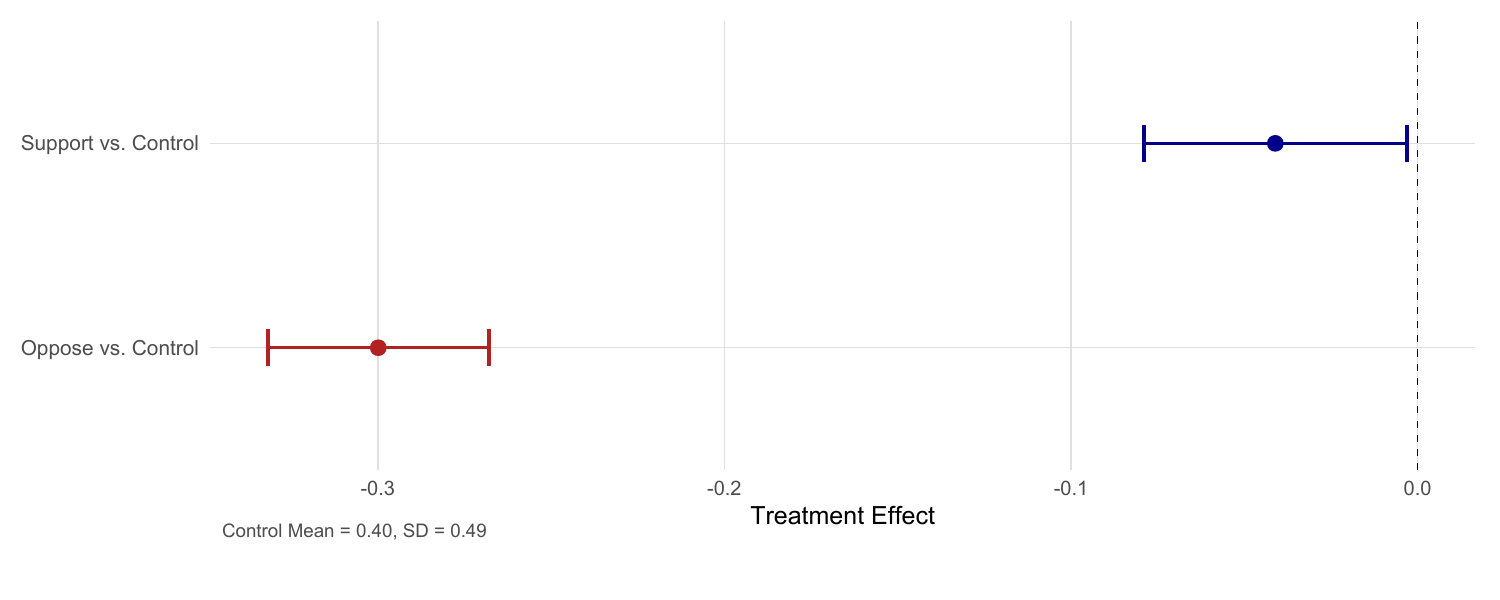}
\begin{minipage}{\textwidth}
\scriptsize
\noindent \textit{Notes:} This figure reports treatment effects on respondents' beliefs about the gender of commenters. Respondents were asked which group they believe is more likely to comment on the post; the outcome is coded as one if the respondent perceives women as more likely to comment. Effects are shown for the Oppose vs.\ Control and Support vs.\ Control comparisons. Horizontal lines represent 95\% confidence intervals.
\end{minipage}
\end{figure}

\newpage
\section{Cost-Benefit Analysis for Fundraising Campaigns \label{sec:cost-benefit}}
\setcounter{figure}{0}
\renewcommand{\thefigure}{G\arabic{figure}}
\renewcommand{\theHfigure}{G\arabic{figure}}

\setcounter{table}{0}
\renewcommand{\thetable}{G\arabic{table}}
\renewcommand{\theHtable}{G\arabic{table}}

We consider a nonprofit that chooses whether to tolerate opposing comments below its ads. We focus on opposing comments because they deliver the sharpest organizational trade-off in our setting: they increase clicks and website traffic, but reduce donations and shift attitudes in a less progressive direction. Let
$a \in \{C,O\}$ denote the comment policy, where $C$ is the control policy (no comments), and $O$ is the
opposing-comments policy. The organization is assumed to maximize expected donations generated by
a campaign with budget $B$:
\begin{equation}
\mathcal{D}_a
=
B \times r_a \times CTR_a \times CVR_a,
\label{eq:donations_general}
\end{equation}
where $r_a$ is the number of users reached per dollar spent, $CTR_a$ is the click-through rate out
of reached users, and $CVR_a$ is the donation conversion rate conditional on click.

Taking the ratio of donations under opposing comments relative to the control gives
\begin{equation}
\frac{\mathcal{D}_O}{\mathcal{D}_C}
=
\underbrace{\frac{r_O}{r_C}}_{\kappa}
\times
\underbrace{\frac{CTR_O}{CTR_C}}_{\tau}
\times
\underbrace{\frac{CVR_O}{CVR_C}}_{\text{conversion effect}}
\label{eq:ratio_start}
\end{equation}

We decompose the conversion effect into two components. First, we use the survey experiment to
proxy for the direct downstream effect of opposing comments on donation decisions once users have
already processed the ad. Let
\begin{equation}
\delta
\equiv
\frac{DonationRate_O}{DonationRate_C}
\label{eq:delta_def}
\end{equation}
Using the estimates in our survey experiment,
\begin{equation}
\delta = \frac{0.677}{0.73} = 0.927
\label{eq:delta_value}
\end{equation}

Second, we introduce a reduced-form ``traffic quality'' parameter, $q$, which captures any
additional change in conversion arising from the composition of users who click or are reached. In
particular, $q<1$ if opposing comments attract lower-intent users, or if the platform's delivery
algorithm shifts exposure toward users who are more likely to engage but less likely to donate.
Conversely, $q>1$ if opposing comments attract or reach users who are more likely to donate.

Combining these pieces,
\begin{equation}
\frac{CVR_O}{CVR_C} = \delta \times q,
\label{eq:cvr_decomp}
\end{equation}
so that
\begin{equation}
\frac{\mathcal{D}_O}{\mathcal{D}_C}
=
\kappa \times \tau \times \delta \times q.
\label{eq:main_ratio}
\end{equation}

The parameter $\kappa$ captures the extent to which opposing comments reduce advertising costs and
therefore increase reach per dollar. Under our benchmark specification, we set $\kappa=1$, which
corresponds to the case in which delivery efficiency is unaffected. This is a natural starting
point given our experimental design, which imposed fixed budgets and aimed to keep reach balanced
across arms. Once these constraints are relaxed, however, delivery efficiency may change. If the
platform treats engagement as a positive signal and reduces cost per reach under opposing comments,
then $\kappa>1$. If instead opposing comments make delivery less efficient, then $\kappa<1$.
In our scenario analysis, we vary $\kappa$ only in a narrow range around one in order to remain
conservative and to reflect the fact that large delivery-cost differences are not part of our benchmark.

The parameter $\tau$ is directly computed from the Facebook field experiment:
\begin{equation}
\tau
=
\frac{CTR_O}{CTR_C}
=
\frac{0.261\%}{0.228\%}
=
1.145.
\label{eq:tau_value}
\end{equation}

Substituting \eqref{eq:delta_value} and \eqref{eq:tau_value} into \eqref{eq:main_ratio} yields
\begin{equation}
\frac{\mathcal{D}_O}{\mathcal{D}_C}
=
1.145 \times 0.927 \times \kappa \times q
=
1.061 \times \kappa \times q.
\label{eq:final_ratio}
\end{equation}

Equation \eqref{eq:final_ratio} is our main back-of-the-envelope formula. It shows that the effect
of tolerating opposing comments depends on two parameters:
\begin{itemize}
    \item $\kappa$: a narrow cost-efficiency parameter, centered at one, capturing small deviations
    from the fixed-budget/fixed-reach benchmark;
    \item $q$: a traffic-quality parameter, capturing whether the users induced to click or reached
    by the platform are more or less likely to convert.
\end{itemize}

\paragraph{Base case}
In the benchmark case, we set
$\kappa = 1$, $q = 1$,
so that opposing comments affect donations only through the observed click effect in Facebook and
the direct downstream donation effect in the survey experiment. In this case,
\begin{equation}
\frac{\mathcal{D}_O}{\mathcal{D}_C} = 1.061,
\end{equation}
implying a $6.1\%$ increase in expected donations.

\paragraph{Scenario analysis}
To assess sensitivity to deviations in the other parameters, we vary $\kappa$ only slightly around one and allow for larger, asymmetric movements in
$q$. Specifically, we center the analysis at $q=1$, allow a modest upside case with $q>1$, and place greater weight on
$q<1$ because our evidence suggests that opposing comments attract relatively more clicks from users in less progressive areas, making a deterioration in traffic quality more plausible than an improvement.  In particular, we consider combinations of: 
$\kappa \in \{0.98, 1.00, 1.02\}$
and 
$q \in \{0.90, 1.00, 1.05\}$.
Table \ref{tab:scenario_tradeoff} shows the changes in campaign donation rates under different scenarios.

\begin{table}[htbp]
\centering
\caption{Changes in Campaign Donation Rates Under Alternative Scenarios}
\label{tab:scenario_tradeoff}
\begin{tabular}{lccc}
\toprule
& \multicolumn{3}{c}{Traffic-quality parameter $q$} \\
\cmidrule(lr){2-4}
Cost-efficiency parameter $\kappa$ & 0.90 & 1.00 & 1.05 \\
\midrule
0.98 & 0.936 \; ($-6.4\%$) & 1.040 \; ($+4.0\%$) & 1.092 \; ($+9.2\%$) \\
1.00 & 0.955 \; ($-4.5\%$) & 1.061 \; ($+6.1\%$) & 1.114 \; ($+11.4\%$) \\
1.02 & 0.974 \; ($-2.6\%$) & 1.082 \; ($+8.2\%$) & 1.136 \; ($+13.6\%$) \\
\bottomrule
\end{tabular}

\vspace{0.2cm}
\begin{minipage}{0.92\textwidth}
\scriptsize
\textit{Notes:} Each cell reports the implied ratio
$\mathcal D_O / \mathcal D_C = 1.061 \times \kappa \times q$.
The benchmark case is $(\kappa,q)=(1,1)$. We keep $\kappa$ in a tight range around one because the
experimental design split budgets evenly across arms and optimized for reach, so large delivery-cost
differences are not part of the maintained benchmark. By contrast, $q$ is allowed to vary more
widely because opposing comments may change the quality of induced traffic and, if delivery
responds to engagement, the composition of users reached. Values greater than one imply that
tolerating opposing comments increases expected donations; values below one imply that it decreases
expected donations.
\end{minipage}
\end{table}

The break-even condition is
\begin{equation}
\kappa \times q > \frac{1}{1.061} \approx 0.943.
\label{eq:breakeven}
\end{equation}
Thus, under the benchmark $\kappa=1$, a deterioration in traffic quality of only about $5.7\%$ is
enough to overturn the baseline gain from higher click-through rates.

\paragraph{Discussion}
The benchmark case $(\kappa,q)=(1,1)$ suggests 
that the overall donation rate increases by 6.1\%.
This benchmark is useful as
a reference point, but it likely corresponds to a relatively optimistic scenario in which the
additional traffic generated by opposing comments has the same propensity to donate as baseline traffic.
Our results suggest that this assumption may be unrealistic: opposing comments generate relatively
more traffic from users in less progressive areas, making it plausible that the induced traffic is
of lower quality for fundraising purposes. For this reason, cases with $q<1$ are likely to be more
informative than the benchmark case, even though we allow both parameters to vary in the scenario
analysis.

This exercise also abstracts from other potential objectives of the organization. In particular, we
do not incorporate effects on attitudes, newsletter sign-ups, or shifts in the platform's delivery algorithm,
even though these may also matter for advocacy organizations. We focus on donations
because they map naturally into a dollar-valued objective and therefore allow for a simple
back-of-the-envelope comparison. The table should therefore be interpreted as a stylized 
exercise focused on fundraising. Nonetheless, it highlights that modest gains in traffic can be offset---or overturned---if opposing comments attract lower-intent
users or shift delivery toward users who are less likely to convert.

\newpage
\section{Survey Instrument \label{app:SurveyInstrument}}

\vspace{12pt}
\emph{Screening}

Welcome! We have a few quick questions before we start.

This should take no more than 20 seconds. We will let you know if you are eligible for the study and provide details on participation payments.

\begin{enumerate}
    \item Do you live in the United States?\\
    \emph{[Yes / No]}

    \item What is your age? \emph{[number entry box]}

    \item What is your gender?\\
    \emph{[Male / Female / Non-binary / other / I prefer not to answer]}

    \item What is your race?\\
    \emph{[American Indian / Alaska Native / Asian / Pacific Islander / Black / African American / White / Other / Mixed Race]}

    \item Are you of Hispanic or Latino origin?\\
    \emph{[Yes / No]}
\end{enumerate}

\emph{[Continue if U.S.\ resident, aged 18--64.]}

\vspace{12pt}
\emph{Consent}
 
\emph{[Consent form]}

\begin{enumerate}
    \item[] I agree to participate, and I promise to read the questions carefully and answer honestly

    \item[] I do not agree to participate, or I cannot promise to read the questions carefully and answer honestly
\end{enumerate}

\vspace{12pt}
\emph{Baseline Opinions}

Thanks for agreeing to participate! We value your opinions and are interested in hearing what you think.

\begin{enumerate}
    \item In politics, as of today, do you consider yourself a Republican, a Democrat or an independent?\\
    \emph{[Republican / Democrat / Independent / Other]}

    \item We hear a lot of talk these days about liberals and conservatives. Which of the following best describe your political view?\\
    \emph{[Very liberal / Liberal / Moderate; middle of the road / Conservative / Very conservative / Haven't thought much about this/don't know]}

    \item When it comes to giving African Americans equal rights with white Americans, do you think our country has\ldots\\
    \emph{[Gone too far / Not gone far enough / Been about right]}

    \emph{[Randomize order of ``Gone too far'' and ``Not gone far enough.'']}

    \item Do you believe that the increased public attention to the history of slavery and racism is generally good or bad for our society?\\
    \emph{[Very good / Somewhat good / Neither good nor bad / Somewhat bad / Very bad]}

    \item \textbf{On the issues of race and racism,} my position is\ldots\\
    \emph{[Very progressive / Progressive / Moderate; middle of the road / Conservative / Very conservative / Haven't thought much about this / don't know]}

    \emph{[Randomly flip the order.]}

    \item Now we'd like to know your best guess about how people in a \textbf{representative sample of adults in the United States} answered this question in 2025.

    ``When it comes to giving African Americans equal rights with white Americans, do you think our country has\ldots''

    Please estimate what percentage of respondents chose each response. Your answers should add up to 100\%.

    \emph{[Constant sum question; entries for: \_\_\_\% Gone too far / \_\_\_\% Not gone far enough / \_\_\_\% Been about right; must sum to 100.]}
\end{enumerate}

\vspace{12pt}
\emph{Social Media Use}

\begin{enumerate}
    \item How much time do you spend on social media (e.g., Facebook, Instagram, TikTok, YouTube) excluding Messenger and WhatsApp, on an average day?\\
    \emph{[Less than 5 minutes a day / Between 5 and 30 minutes a day / Between 30 and 60 minutes a day / Between 1 and 2 hours / Between 2 and 4 hours / More than 4 hours]}

    \item How often do you read or check comments on social media?\\
    \emph{[Never / Rarely / Sometimes / Often / Very often]}
\end{enumerate}

\vspace{12pt}
\emph{Additional Demographics}

\begin{enumerate}
    \item In what zip code do you currently live? \emph{[text entry box; validation: U.S.\ ZIP code]}

    \item What is the highest degree or level of schooling that you have completed?\\
    \emph{[Less than a high school diploma / High school diploma or equivalent (for example: GED) / Some college but no degree / Associate's degree / Bachelor's degree / Graduate degree (for example: MA, MBA, JD, PhD)]}
\end{enumerate}

\vspace{12pt}
\emph{Attention Check}

\begin{enumerate}
    \item In order to facilitate our research, we are interested in knowing certain factors about you. Specifically, we are interested in whether you actually take the time to read the instructions; if not, then the data we collect based on your responses will be invalid. So, in order to demonstrate that you have read the instructions, please ignore the next question, and simply write ``I read the instructions'' in the ``Any comments?'' box below. Thank you very much.

    What is your marital status?\\
    \emph{[Single / Married / Other]}

    Any comments? \emph{[text box]}
\end{enumerate}

\vspace{12pt}
\emph{Intervention}

\emph{[Participants randomized into 3 groups: No Comments, Supportive, Opposing.]}

\emph{[3 posts of the same treatment type are shown, varying order of topics: education, environment, and police.]}

We are interested in your reactions to social media posts about racial justice. You will be shown three posts from Color of Change, a leading U.S.-based racial justice advocacy organization.

\emph{[Next page.]}

Here is a post from Color of Change:

\emph{[Screenshot shown.]}

\emph{[Next page.]}

Now we will ask you some questions about this post.

\emph{[Post shown again.]}

\begin{enumerate}
    \item Which of the following emojis would you react with if you saw this post on Facebook?\\
    \emph{[Like / Love / Care / Haha / Wow / Sad / Angry / I would not react]}

    \item Would you comment on this post if you saw it on Facebook?\\
    \emph{[Yes / No]}

    \item \emph{[If yes to previous question]} What comment would you make? \emph{[open text]}

    \item Would you click on this post to visit the website if you saw it on Facebook?\\
    \emph{[Yes / No]}
\end{enumerate}

\emph{[Screenshots of remaining posts shown; questions above repeated for each post.]}

\vspace{12pt}
\emph{Newsletter}

\begin{enumerate}
    \item Would you like to sign up for the Color of Change newsletter?\\
    \emph{[Yes / No]}
\end{enumerate}

\vspace{12pt}
\emph{Donation}

\begin{enumerate}
    \item You have been automatically enrolled in a lottery to win up to \$100. If you win, you have the option to donate some or all of your winnings to Color of Change.

    The payment will be made to you as a bonus, so no further action is required on your part. If you are one of the lottery winners, you will be paid, in addition to your participation payment, \$100 minus the amount you donated. We will directly pay your desired donation amount to Color of Change.

    How much, if any, would you be willing to donate to Color of Change in case you won \$100? \emph{[number entry]}
\end{enumerate}

\vspace{12pt}
\emph{Post-Exposure Opinion}

\begin{enumerate}
    \item How would you describe your overall opinion of Color of Change?\\
    \emph{[Very unfavorable / Somewhat unfavorable / Neutral / Somewhat favorable / Very favorable]}

    \item How would you describe your overall opinion of Black Lives Matter?\\
    \emph{[Very unfavorable / Somewhat unfavorable / Neutral / Somewhat favorable / Very favorable]}

    \item How willing would you be to discuss political issues with someone who has \textbf{progressive views} on racial issues?\\
    \emph{[Very unwilling / Somewhat unwilling / Neither willing nor unwilling / Somewhat willing / Very willing]}

    \item How willing would you be to discuss political issues with someone who has \textbf{conservative views} on racial issues?\\
    \emph{[Very unwilling / Somewhat unwilling / Neither willing nor unwilling / Somewhat willing / Very willing]}

    \item People differ in how important they consider different racial justice issues.

    How important is each of the following issues to you personally?\\
    \emph{[Matrix; rows: Voter suppression and voting rights / Criminal-justice reform (e.g., policing, sentencing, incarceration) / Education equity (e.g., school funding, achievement gaps) / Environmental justice (e.g., pollution exposure, clean air/water access) / Technology fairness (e.g., algorithmic bias, digital discrimination); 5-point scale: Not at all important / Slightly important / Moderately important / Very important / Extremely important]}

    \item Please tell us the extent to which you agree or disagree with the statement below.

    It's really a matter of some people not trying hard enough. Black people could be just as well off as white people if they would only try harder.\\
    \emph{[Strongly disagree / Disagree / Slightly disagree / Neither agree nor disagree / Slightly agree / Agree / Strongly agree]}

    \item What do you think is the most important issue facing Black people today? \emph{[open text]}
\end{enumerate}

\vspace{12pt}
\emph{Opinion about Others}

Now we are going to ask you to make a guess.

You can earn a Guess Bonus of up to \$1 based on the accuracy of your estimate. We will compare your estimate to the actual percentages observed in this study. The formula we use rewards you more when your estimate is closer to the true value. \textbf{The closer your guess is to the correct percentage, the larger your bonus.}

\textbf{Your best strategy is to give your honest, best estimate.}

(You do \emph{not} need to know the formula to earn the bonus.)

\emph{[Next page.]}

\begin{enumerate}
    \item What percentage of participants in this U.S.\ adult sample do you think \textbf{agreed or strongly agreed} with the following statement?

    ``It's really a matter of some people not trying hard enough. Black people could be just as well off as white people if they would only try harder.''

    Please enter a number between \textbf{0 and 100}.\\
    You can earn a \textbf{bonus} based on how close your estimate is to the true value.

    \textbf{\_\_\_\%} \emph{[number entry]}

    \emph{[Pop-up window with quadratic scoring rule: Guess Bonus = \$1 $-$ ((Your Answer $-$ True Value)/100)$^2$. Small errors reduce your bonus slightly; larger errors reduce it much more. If your estimate is exactly correct, you will earn \$1. Your best strategy is to give your honest, best estimate.]}
\end{enumerate}

\emph{[Environment post shown again.]}

\begin{enumerate}
    \setcounter{enumi}{1}
    \item How thought-provoking do you find this post?\\
    \emph{[Not at all / A little / Somewhat / Very / Extremely]}
\end{enumerate}

\vspace{12pt}
\emph{Opinion about Comments in Post} \emph{[treatment arms only]}

Now consider the \textbf{comments} below the following social media post.

\emph{[Environment post with comments shown again.]}

\begin{enumerate}
    \item To what extent do the comments make you feel:\\
    \emph{[Matrix; rows: Angry / Annoyed; 5-point scale: Not at all / A little / Somewhat / Very / Extremely]}

    \item How thought-provoking do you find these comments?\\
    \emph{[Not at all / A little / Somewhat / Very / Extremely]}

    \item To what extent do the comments make you feel:\\
    \emph{[Matrix; rows: Curious about the topic / Curious about the organization; 5-point scale: Not at all / A little / Somewhat / Very / Extremely]}

    \item How representative do you think these comments are of what people generally think about this issue?\\
    \emph{[Not at all representative / Slightly representative / Moderately representative / Very representative / Extremely representative]}
\end{enumerate}

\vspace{12pt}
\emph{Opinion about Comments in Post} \emph{[all arms]}

\emph{[Environment post shown again.]}

\begin{enumerate}
    \item How interested would you be in opening the comment section to see more comments on this post?\\
    \emph{[Not at all interested / Slightly interested / Moderately interested / Very interested / Extremely interested]}

    \item In your opinion, which group is more likely to comment on this post?\\
    \emph{[Men / Women / Men and women are equally likely / Not sure]}

    \item In your opinion, which group is more likely to comment on this post?\\
    \emph{[Progressives / Conservatives / Both groups are equally likely / Not sure]}
\end{enumerate}

\vspace{12pt}
\emph{Opinion about Comments in General}

\begin{enumerate}
    \item What are the main reasons you read or look at comments on social media? \emph{[open text]}
\end{enumerate}

\vspace{12pt}
\emph{Newsletter Sign-Up}

\emph{[Shown only to participants who answered Yes to the newsletter question.]}

You said that you like to sign up for the Color of Change newsletter.

\begin{enumerate}
    \item Please enter your email address below. Your email will only be shared with Color of Change, and only for the purpose of subscribing you to their newsletter.

    \emph{[text entry box]}
\end{enumerate}

\vspace{12pt}
\emph{AI Use}

\begin{enumerate}
    \item Did you use AI at all to help you fill out this survey?\\
    \emph{[Yes / No]}

    \item \emph{[If yes]} What question did you use AI to help you answer? \emph{[open text]}
\end{enumerate}

\vspace{12pt}
\emph{Final Feedback}

Thank you! We really appreciate you for participating in this research!

Please let us know if you have any other feedback.

\textbf{Make sure to click the next arrow to submit your survey responses.}

\emph{[text box]}

\newpage
\section{Codebook for NLP-Based Measures \label{app:Codebook}}

This appendix documents the NLP-based measures used in the paper. For all GPT-based classifications, we use GPT-4 with temperature set to 0. Toxicity is measured separately using the Google Perspective API.

\subsection{GPT-Based Conversation Measures}

\paragraph{Reciprocity, Justification, and Respect.}
These three measures are coded once for each direct comment.
The input consists of the original Facebook post, one direct comment, and all replies to that comment, so that the comment is assessed in the context of the exchange it started. GPT-4 is instructed to return a JSON object with three binary indicators:
\begin{itemize}
    \item \textbf{Reciprocity} equals 1 when participants engage the existing conversation and remain on topic.
    \item \textbf{Justification} equals 1 when participants who advocate a position provide reasons or arguments.
    \item \textbf{Respect} equals 1 when the exchange remains civil and free of threats, insults, humiliating language, or silencing expressions.
\end{itemize}

\subsection{GPT-Based Comment-Level Measures}

\paragraph{Political Stance (1--5).}
GPT-4 classifies each comment on a five-point ideological scale relative to the issue discussed in the ad:
\begin{enumerate}
    \item Strongly progressive or left-leaning
    \item Slightly or moderately progressive
    \item Centrist, unclear, or no explicit stance
    \item Slightly or moderately conservative
    \item Strongly conservative or right-leaning
\end{enumerate}

\paragraph{Political Stance (binary).}
A second GPT-4 measure collapses ideological stance into a binary variable:
\begin{itemize}
    \item 1 = progressive or left-leaning
    \item 2 = conservative or right-leaning
\end{itemize}
When the comment is neutral or ambiguous, GPT-4 is instructed to choose the closest side based on the language used.

\paragraph{Sentiment.}
GPT-4 classifies each comment from 1 to 5 according to its sentiment toward the main statement of the ad:
\begin{enumerate}
    \item Highly negative
    \item Somewhat negative
    \item Neutral or mixed
    \item Somewhat positive
    \item Very positive
\end{enumerate}

\paragraph{Informativeness.}
GPT-4 classifies each comment from 1 to 3 according to how much useful or fact-based information it provides:
\begin{enumerate}
    \item Not informative
    \item Somewhat informative
    \item Highly informative
\end{enumerate}

\paragraph{Offensiveness.}
GPT-4 classifies each comment from 1 to 3 according to the presence of offensive or insulting language:
\begin{enumerate}
    \item Not offensive
    \item Mildly offensive
    \item Highly offensive
\end{enumerate}

\paragraph{Agreement.}
GPT-4 classifies each comment from 1 to 3 according to whether it agrees with the message of the ad:
\begin{enumerate}
    \item Clear agreement
    \item Neutral or unclear
    \item Clear disagreement
\end{enumerate}

\subsection{API-Based Measure}

\paragraph{Toxicity Score.}
Toxicity is measured using the Google Perspective API, requesting the \texttt{TOXICITY} attribute for each comment. The API returns a continuous score between 0 and 1, where higher values indicate a higher likelihood that the comment is perceived as toxic. 

\end{document}